\def\BibTeX{{\rm B\kern-.05em{\sc i\kern-.025em b}\kern-.08em
    T\kern-.1667em\lower.7ex\hbox{E}\kern-.125emX}}
\def \ourtool{FASVERIF\xspace}
\newcommand{\kwl}[1]{\mathbf{#1}}
\newcommand{\kwf}[1]{\mathsf{#1}}
\newcommand{\kwc}[1]{\mathcal{#1}}
\newcommand{\kwp}[1]{\mathtt{#1}}
\newcommand{\kwt}[1]{\mathtt{#1}}
\newcommand{\kwe}[1]{\mathsf{#1}}
\newcommand{\marginX}{\marginnote{\huge{\quad\quad\textbf{!}\quad\quad}}}
\newcommand{\jj}[1]{\mbox{}{\color{green!50!black}\marginX{}\textbf{[Jiao}: #1]}}
\newcommand{\ksl}[1]{\mbox{}{\color{green!50!blue}\marginX{}\textbf{[Shuanglong}: #1]}}
\newcommand{\lsw}[1]{\mbox{}{\color{orange}\marginX{}\textbf{[Shang-Wei}: #1]}}
\newcommand{\instructions}[1]{{\color{red}\marginX{}\textbf{[Instructions: ``#1'']}}}
\newcommand{\reviewer}[2]{\mbox{}{\color{red}\marginX{}\textbf{[Reviewer #1}: ``#2'']}}
\newcommand{\todo}[1]{\mbox{}{\color{blue}{\marginX{}\textbf{TODO}\ifx#1\\\else:\ \fi #1}}} 
\newcommand{\instructions}[1]{}
\newcommand{\jj}[1]{}
\newcommand{\ksl}[1]{}
\newcommand{\lsw}[1]{}
\newcommand{\reviewer}[2]{}
\newcommand{\todo}[1]{}
\newcommand{\chuhao}{\fontsize{8pt}{\baselineskip}\selectfont}
\def\code#1{\texttt{\chuhao#1}}
\def\rulename#1{\textsc{\textcolor{blue}{#1}}}
\newcommand\RWrule[3]{
\scaleleftright[1ex]{<}{
\begin{tabular}{c}
#1\\
\cline{1-1}
#2\\
\end{tabular}\,$\ldots$}
{>}$_{\textit{#3}}$
}
\newcommand\RWruleN[3]{
\scaleleftright[1ex]{<}{
\begin{tabular}{c}
#1\\
\cline{1-1}
#2\\
\end{tabular}} 
{>}$_{\textit{#3}}$
}
\newcommand\RWruleM[3]{
\scaleleftright[1ex]{<}{
$\ldots$
\begin{tabular}{c}
#1\\
\cline{1-1}
#2\\
\end{tabular}\,$\ldots$}
{>}$_{\textit{#3}}$
}
\newcommand\Newrule[2]{
\begin{minipage}{1.0\textwidth}
~\\
{\small RULE \rulename{\textsc{#1}}}\\
{\footnotesize #2}\\
\end{minipage}
}
\newcommand\Rrule[2]{
\scaleleftright[1ex]{<}{$\ldots$\,
\begin{tabular}{c}
#1\\
\end{tabular}\,$\ldots$}
{>}$_{\textit{#2}}$
}
\newcommand\RruleM[2]{
\scaleleftright[1ex]{<}{
\begin{tabular}{c}
#1\\
\end{tabular}} 
{>}$_{\textit{#2}}$
}
\newcommand\RruleS[2]{
\scaleleftright[1ex]{<}{
\begin{tabular}{c}
#1\\
\end{tabular}\,$\ldots$}
{>}$_{\textit{#2}}$
}
\newcommand\Mrule[2]{
\langle #1\rangle_{\textit{#2}}
}
\newcommand\Mrules[2]{
\langle\_ \ #1 \ \_\rangle_{\textit{#2}}
}
\newcommand\karrowwx{
    \rightarrow_\mathcal{K}
}
\newcommand\karroww{
    \rightarrow_{\mathcal{K}_s}
}
\newcommand\karrowx[1]{
    \xrightarrow{\rho_{#1}}_\mathcal{K}
}
\newcommand\newkarrowx[1]{
    \xrightarrow{\rho_{#1}}_{\mathcal{K}_s}
}
\newcommand\karrow[1]{
    \xrightarrow{\rho_{#1}}_{\mathcal{K}_s}
}
\newcommand\csucc{ \kwl{succ}_c }
\newcommand\ksucc{ \kwl{succ}_k }
\newcommand\ks{ \mathcal{K}_s }
\newcommand\rs{R_s}
\newcommand{\bi}{\leftrightarrow_{v}}
\newcommand{\ins}{\in^\#}
\newcommand{\cups}{\cup^\#}
\newcommand{\typed}{\textit{typed}}
\newcommand{\myterms}{\textit{terms}}
\newcommand{\vars}{\textit{vars}}
\newcommand{\gvars}{\textit{gvars}}
\newcommand{\evars}{\textit{evars}}
\newcommand{\ind}{\textit{ind}}
\newcommand{\IND}{\textit{IND}}
\newtheorem{remark}{Remark}
\newcommand\smatch{\Mapsto}
\newcommand\tracek[1]{ \mathit{traces}^{\mathcal{K}_s}(#1)  }
\newcommand\traceks[1]{ \mathit{traces}^{\mathcal{K}_s}(#1)  }
\newcommand{\exe}[1]{\stackrel{#1}{\longrightarrow}{}_{R}}
\newcommand{\exer}[1]{\xrightarrow{#1}_{\rs}}
\newcommand{\exeri}[1]{\xrightarrow{#1}_{R_{inv}}}
\newcommand{\exere}[1]{\xrightarrow{#1}_{R_{equ}}}
\begin{document}

\title{An Automated Analyzer for Financial Security of Ethereum Smart Contracts
} 

\author{Wansen Wang\textsuperscript{1} \ \ \  Wenchao Huang\textsuperscript{1}\ \ \    Zhaoyi Meng\textsuperscript{2}\ \ \    Yan Xiong\textsuperscript{1}\ \ \    Fuyou Miao\textsuperscript{1}\ \ \ Xianjin Fang\textsuperscript{3}\ \ \  Caichang Tu\textsuperscript{1}\ \ \  Renjie Ji\textsuperscript{1}}
\affiliation{\textsuperscript{1}University of Science and Technology of China, Anhui, China}
\affiliation{\textsuperscript{2}Anhui University, Anhui, China}
\affiliation{\textsuperscript{3}Anhui University of Science and Technology, Anhui, China}

\begin{abstract}
At present, millions of Ethereum smart contracts are created per year and attract financially motivated attackers.
However, existing analyzers do not meet the need to precisely analyze the financial security of large numbers of contracts.
In this paper, we propose and implement \ourtool, an automated analyzer for fine-grained analysis of smart contracts' financial security.
On the one hand, \ourtool automatically generates models to be verified against security properties of smart contracts.
On the other hand, our analyzer automatically generates the security properties, which is different from existing formal verifiers for smart contracts.
As a result, \ourtool can automatically process source code of smart contracts, and uses formal methods whenever possible to simultaneously maximize its accuracy.

We evaluate \ourtool on a vulnerabilities dataset by comparing it with other automatic tools.
Our evaluation shows that \ourtool greatly outperforms the representative tools using different technologies, with respect to accuracy and coverage of types of vulnerabilities.
\end{abstract}



\keywords{Smart Contract; Formal Verification} 

\maketitle

\section{Introduction} 
\label{sec:introduction}
Smart contracts on Ethereum have been applied in many fields such as financial industry \cite{application1}, and manage assets worth millions of dollars \cite{abs-2008-02712}, while the market cap of the Ethereum cryptocurrency, \textit{i.e.}, ethers, grows up to \$177 billions on July 27, 2022 \cite{ethercap}.
Unfortunately, this makes smart contracts become attractive targets for attackers.
The infamous vulnerability in the DAO contract led to losses of \$150M in June 2016 \cite{DAO}.
In July 2017, \$30M worth of ethers were stolen from Parity wallet due to a wrong function \cite{parity}.
Most recently, there were \$27M worth of ethers stolen from the Poly Network contract in August 2021~ \cite{poly}. 
It is therefore necessary to guarantee the financial security of smart contracts, 
{\color{black}\textit{i.e.}, the ethers and tokens of contracts are not lost in unexpected ways.}

Nevertheless, existing analyzers are not sufficient to analyze the financial security of numerous contracts accurately.
Current security analyzers for smart contracts can be divided into the following three categories: automated bug-finding tools, semi-automated verification frameworks, and automated verifiers. 
The bug-finding tools \cite{LuuCOSH16}\cite{mythril}\cite{0001LC18} support automated analysis on a great amount of smart contracts, motivated by the fact that 10.7 million contracts are created in 2020 \cite{etherscan2020}.
However, the analysis is based on pre-defined patterns and is not accurate enough~\cite{stephens2021smartpulse}.
The verification frameworks target to formally verify the correctness or security of smart contracts, with the requirement of manually defined properties~\cite{stephens2021smartpulse} or user assistance in verification~\cite{PermenevDTDV20}\cite{DBLP:conf/cpp/Concert}.
It is therefore difficult for these analyzers to analyze a large number of contracts.
The automated verifiers try to provide sound and automated verification of pre-defined properties for smart contracts.
To the best of our knowledge, there are three automated verifiers eThor \cite{DBLP:conf/ccs/SchneidewindGSM20}, SECURIFY \cite{TsankovDDGBV18} and ZEUS \cite{DBLP:conf/ndss/KalraGDS18}.
However, eThor does not aim for the financial security of smart contracts, and only detects reentrancy vulnerabilities \cite{SWC-reentrancy} and checks assertions automatically.
SECURIFY does not support solving numerical constraints and cannot detect numerical vulnerabilities, \textit{e.g.}, overflow.
ZEUS has soundness issues \cite{DBLP:conf/ccs/SchneidewindGSM20} in transforming contracts into IR and thus cannot analyze smart contracts accurately.

We propose and implement \ourtool, a system of automated inference~\cite{SagivRW02}\cite{ChinDNQ12}, \textit{i.e.}, a static reasoning mechanism where the properties are expected to be automatically derived, {\color{black}for achieving full automation on fine-grained financial security analysis of Ethereum smart contracts.
Firstly, \ourtool automatically generates two kinds of finance-related security properties along with the corresponding models for verification.
Secondly, \ourtool can verify these finance-related security properties automatically.}
Overall, the goal of \ourtool is to analyze the financial security of numerous contracts accurately, whereby the security properties are generated automatically based on our statistical analysis,
{\color{black} the soundness of modeling is proven and the verification is implemented using the formal tools Tamarin prover~\cite{MeierSCB13} and Z3~\cite{z3}.}
Moreover, \ourtool generates properties based on the financial losses caused by vulnerabilities instead of known vulnerability patterns, thus covering various vulnerabilities and suitable for the analysis of financial security.
We collect a vulnerabilities dataset consisting of 549 contracts from other works \cite{0001LC18}\cite{DBLP:conf/sp/SoLPLO20}\cite{DBLP:conf/issta/KolluriNSHS19}\cite{DBLP:conf/icse/dataset-icse}, and evaluate \ourtool on it with other automatic tools.
Our evaluation shows that \ourtool greatly outperforms the representative tools using different technologies, in which it achieves higher accuracy and F1 values in detection of various types of vulnerabilities. 
We also evaluate \ourtool on 1700 contracts randomly selected from a real-world dataset.
\ourtool finds 13 contracts deployed on Ethereum with exploitable bugs, 
including 10 contracts with vulnerabilities of transferMint~\cite{transfer} that can evade the detection of current automatic tools to the best of our knowledge.

In summary, this paper makes the following contributions:

{\color{black}

1) We propose a novel framework for achieving automated inference, where finance-related security properties and corresponding models are generated from the source code of a smart contract and used for automated verification.

2) We propose a method for property generation based on a statistical analysis of 30577 smart contracts. We design two types of properties, financial invariant properties and transactional equivalence properties, which correspond to various finance-related vulnerabilities such as transferMint \cite{transfer}, and we abbreviate them as invariant properties and equivalence properties, respectively.

3) We propose modeling methods for our invariant properties and equivalence properties and prove the soundness of verifying these two types of properties using our translated model based on a custom semantics of Solidity \cite{DBLP:conf/sp/JiaoK0S0020}.

4) We implement \ourtool for supporting property generation, modeling and verification, where we embed Z3 into Tamarin prover,  the state-of-the-art tool for verifying security protocols, to use trace properties of reachability and numerical constraint solving for verifying finance-related properties.

5) We evaluate the effectiveness of \ourtool and find 13 contracts with exploitable vulnerabilities using \ourtool.}

\section{Preliminaries} 
\label{sec:background}

\subsection{Smart contracts on Ethereum}
\label{subsec:ethereum}


Ethereum is a blockchain platform that supports two types of accounts: contract accounts, and external accounts.
Each account has an ether balance and a unique address.
A contract account is associated with a piece of code called a smart contract, which controls the behaviors of the account, and a storage that stores global variables denoting the state of the account.
External accounts are controlled by humans without associated code or global variables.

Functions in the smart contracts can be invoked by transactions sent by external accounts.
A transaction is packed into a block by the miner and when that block is published into the blockchain, the function invoked by the transaction is executed.
Functions can also be invoked by internal transactions sent by contract accounts and the sending of an internal transaction can only be triggered by another transaction or internal transaction.

\subsection{Solidity programming language}
\label{subsec:solidty}

\begin{figure}[ht]
\centering
\includegraphics[scale=0.95]{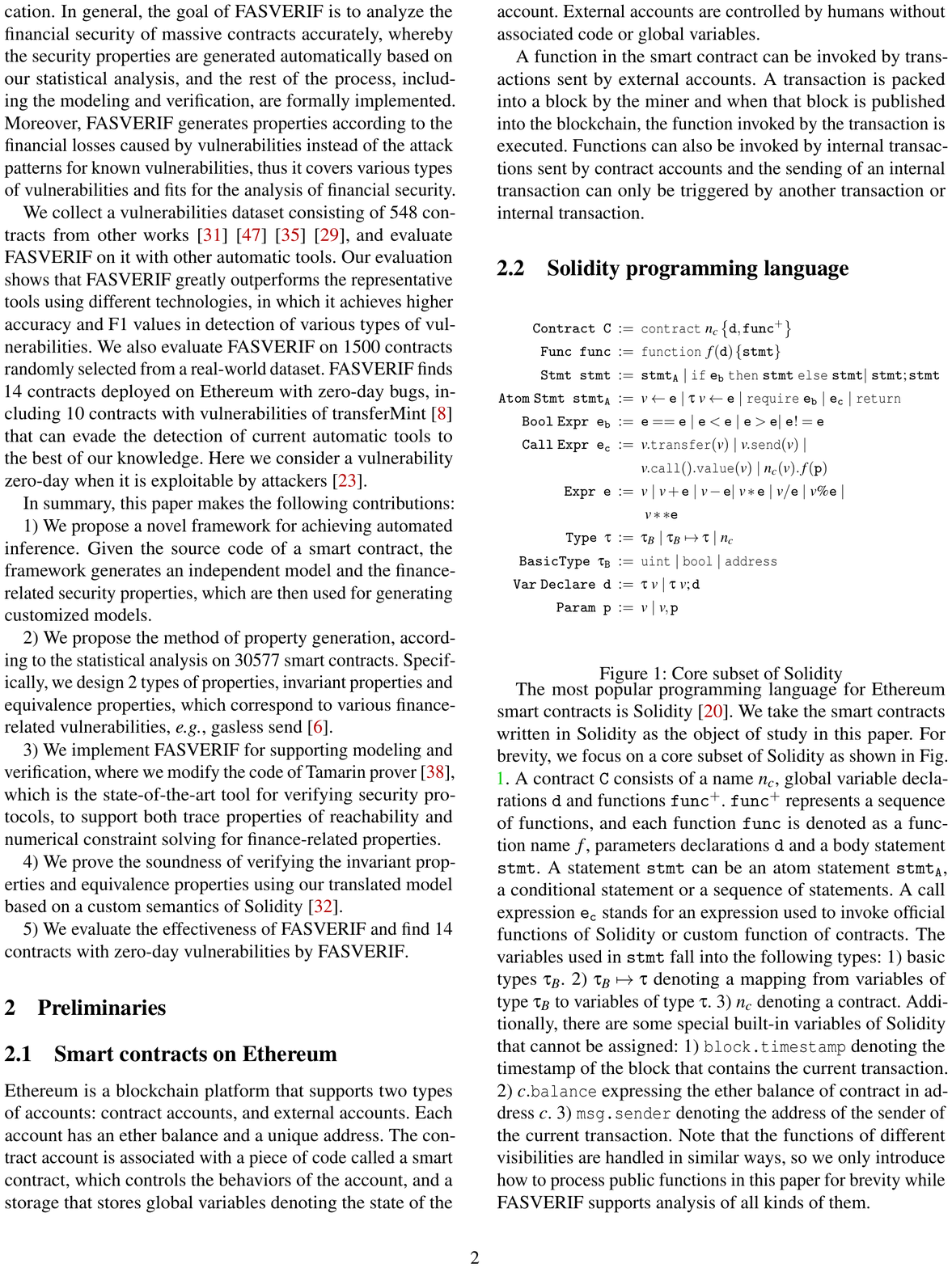}
\vspace{-0.1in}
\caption{Core subset of Solidity}
\label{fig:solidity}
\vspace{-0.08in}
\end{figure}

\begin{figure}[ht]
\centering
\includegraphics[scale=0.25]{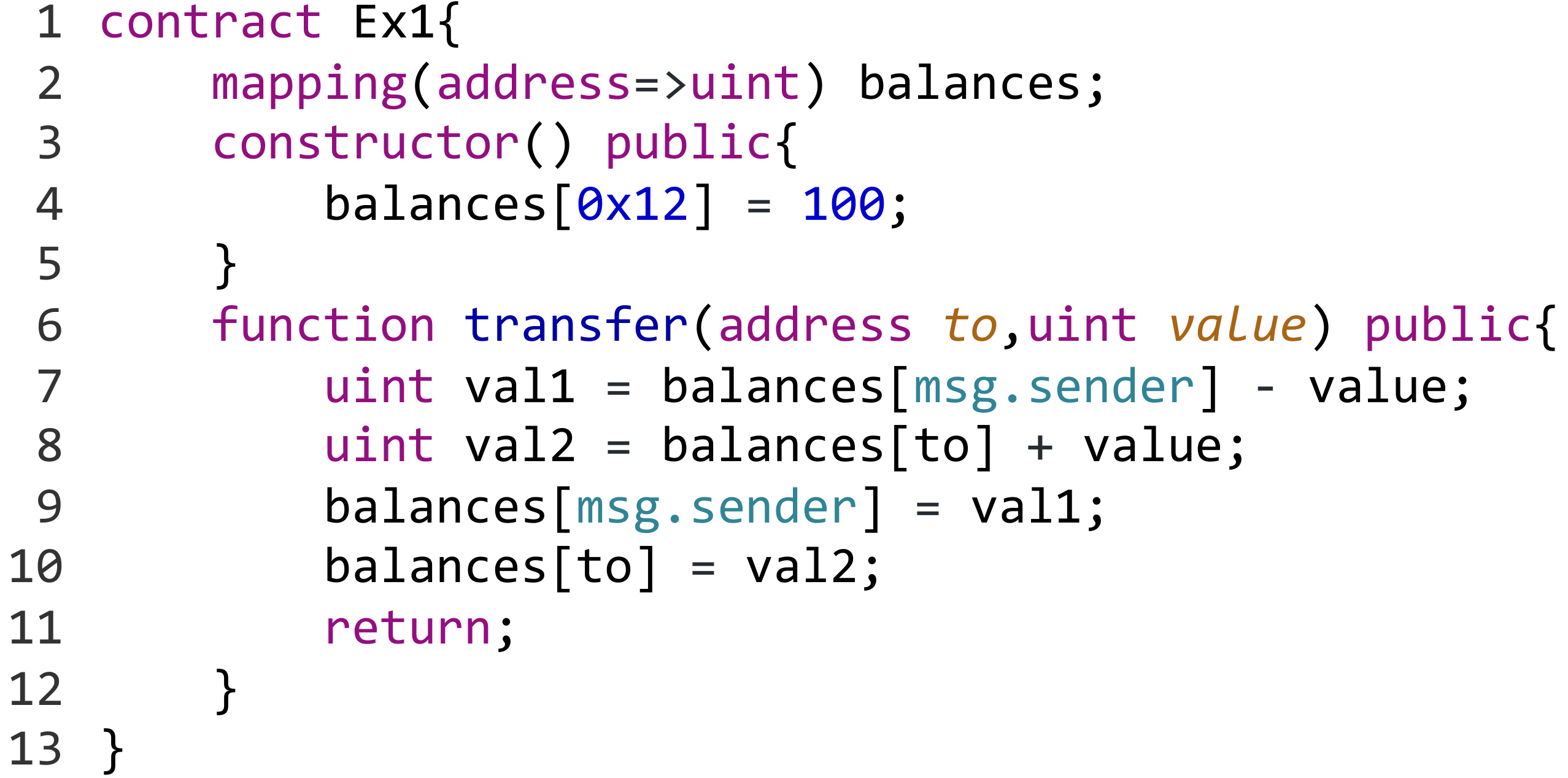}
\vspace{-0.1in}
\caption{Example contract Ex1.}
\label{fig:example1}
\vspace{-0.1in}
\end{figure}

The most popular programming language for Ethereum smart contracts is Solidity \cite{solidity}.
We take the smart contracts written in Solidity as the object of study in this paper.
For brevity, we focus on a core subset of Solidity as shown in Fig.~\ref{fig:solidity}. 
{\color{black}Taking the contract Ex1 in Fig.~\ref{fig:example1} for example, 
a contract consists of declarations of global variables (Line~2) and functions (Line~3 to 12).
Here, \texttt{constructor} is a special function used to initialize global variables.
The function bodies consist of atom statements $\kwp{stmt_{A}}$ and conditional statements.
Taking the function \texttt{transfer} as an example, $\kwp{stmt_{A}}$ can be a declaration statement on Line~7, an assignment statement on line~9, or a return statement on line~11, etc.
Specially, there is a kind of atom statements $\kwp{e_{c}}$ which are used to invoke official functions of Solidity or custom functions of contracts.}
The variables used in contracts fall into the following types: 1) basic types $\tau_{B}$. 2) $\tau_{B}\mapsto \tau$ denoting a mapping from variables of type $\tau_{B}$ to variables of type $\tau$, \textit{e.g.}, \texttt{balances} in Fig.~\ref{fig:example1}.  3) $n_c$ denoting a contract. 
Additionally, there are some special built-in variables of Solidity that cannot be assigned: 1) $\texttt{block.timestamp}$ denoting the timestamp of the block that contains the current transaction. 2) $c.\texttt{balance}$ expressing the ether balance of contract in address $c$. 3) $\texttt{msg.sender}$ denoting the address of the sender of the current transaction. 
Note that the functions of different visibilities are handled in similar ways, so we only introduce how to process public functions in this paper for brevity while \ourtool supports analysis of all kinds of them.

Currently, there is no official formal semantics of Solidity to the best of our knowledge.
{\color{black} Instead, we design \ourtool and prove the soundness of our translation based on a custom semantics of Solidity, named KSolidity \cite{DBLP:conf/sp/JiaoK0S0020}.}
KSolidity is defined using K-framework \cite{Rosu2010}, and the definition of KSolidity consists of 3 parts: Solidity syntax, the runtime configuration, and a set of rules constructed based on the syntax and the configuration.
Configurations form of cells that store information related to the executions of contracts, \textit{e.g.}, the variables of contracts.
The rules specify the transitions of configurations.

\subsection{Multiset rewriting system}
\label{subsec:msr_background}
\ourtool leverages the multiset rewriting system in Tamarin prover \cite{MeierSCB13} to model smart contracts and attackers.
Each state of a multiset rewriting system is a multiset of facts, denoted as $F(t_1, \dots, t_n)$, where $F$ is a fact symbol, and $t_1, \dots, t_n$ are terms.
The transitions of states are defined by labeled rewriting rules.
A labeled rewriting rule is denoted as $l -[a]\rightarrow r$, where $l$, $a$ and $r$ are three parts called premise, action, and conclusion, respectively.
The rule is applicable to state~$s$, {\color{black}if a ground instance $l\sigma$ (where $\sigma$ is a substitution~\cite{DBLP:phd/basesearch/Meier13}) to be a subset of $s$. 
To obtain the successor state $s'$, the ground instance $l\sigma$ is removed and $r\sigma$ is added.}
The action $a$ is also a multiset of facts representing the label of the rule.
Meanwhile, global restrictions on facts in $a$ can be made such that the execution of the protocol can be further restrained.
\section{OVERVIEW} 
\label{sec:overview}

\subsection{Design of \ourtool}
\label{subsec:design}

As shown in Fig. \ref{fig:design}, \ourtool contains 4 modules: 

\begin{figure}[ht]
\centering
\includegraphics[scale=0.34]{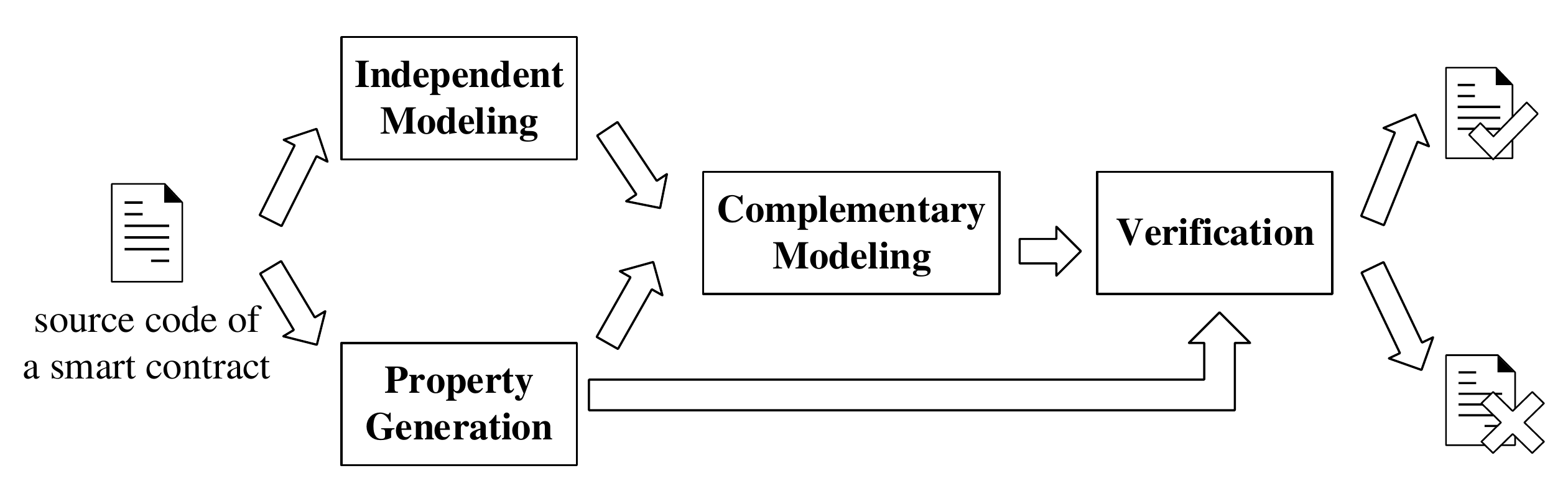}
\caption{Design of \ourtool.}
\label{fig:design}
\vspace{-0.05in}
\end{figure}

\textbf{Independent modeling}: given the source code of a smart contract as input, the module generates a partial model of the contract, which gives the initial state of the running contract and general rules for state transitions.
It translates the contract, as well as the possible behaviors of adversaries, into the model, which is independent of specific security properties.
Note that this partial model cannot be verified directly.

\textbf{Property generation}: \ourtool then generates a set of security properties that the smart contract should satisfy.

\textbf{Complementary modeling}: the module outputs additional rules for each property to complement the partial model, and tries to reduce the size of the model for different properties. 

\textbf{Verification}: we finally design the method of verification to determine whether the properties are valid. We also modify the code of Tamarin prover for supporting the verification where numerical constraint solving is additionally required.

\subsection{Adversary model}
\label{subsec:adversary}
We assume that the adversaries can launch attacks by leveraging the abilities of three types of entities: external accounts, contract accounts and miners.
The concerned attacks on a smart contract are processes that affect the variables related to the smart contract and thus the results of the smart contract executions.
The variables that can be changed by the adversary fall into two categories: some global variables of contracts and \texttt{block.timestamp}.
An external account or a contract account needs to invoke functions in victim contracts to change the values of their global variables, while a miner can manipulate \texttt{block.timestamp} in a range \cite{bestpractice}\cite{15rule}.
In summary, we assume that the adversary can perform the following operations: 
\textbf{C1}. Sending a transaction to invoke any function in victim contracts with any parameters.
\textbf{C2}. Implementing a fallback function to send an internal call message. This message can invoke any function in victim contracts with any parameters. 
\textbf{C3}. Increasing the timestamp of a block by up to 15 seconds~\cite{bestpractice}\cite{15rule}.
{\color{black}
Besides, the changes in exchange rates between tokens and ethers are not considered in \ourtool.}


\section{INDEPENDENT MODELING} 
\label{sec:independent modeling}

Given a smart contract, the module of independent modeling automatically outputs general rules for modeling the executions of the contract and the behaviors of external accounts and the adversaries.
The rules in the multiset rewriting system correspond to the sequences of transitions of the configurations of KSolidity. 
Therefore, we firstly define the terms used in the rules, and sequences using the terms. 
Then, we show the processes of modeling the behaviors using the terms.
Finally, a comprehensive example is given to illustrate the usage of the rules, and discussions are made on technical challenges of property generation and complementary modeling based on the independent modeling.

\begin{figure*}[!]
\centering
\setlength{\abovedisplayskip}{0pt} 
\setlength{\belowcaptionskip}{-0.15in}
\begin{footnotesize}
\begin{align*}
&\kwc{R}(\texttt{function}\ f(\texttt{d})\{\kwp{stmt}\},\varnothing,\omega_0) = \kwc{R}(\kwp{stmt},1,\llbracket\left\langle \sigma_a(f),\kwf{T_c},\kwf{R_o},\kwf{E_n} \right\rangle ,\left\langle \sigma_v(c_b),\kwf{T_v},\kwf{R_l},\kwf{E_n} \right\rangle, \left\langle \sigma_v(\textit{calltype}),\kwf{T_v},\kwf{R_l},\kwf{E_n} \right\rangle, \left\langle \sigma_v(\textit{depth},\kwf{T_v},\kwf{R_l},\kwf{E_n} \right\rangle\rrbracket\\&\qquad\qquad\qquad\qquad\qquad\qquad\qquad\ \  \cdot \omega_0 \cdot  seq(\texttt{d}))  \cup \{ [\kwf{Fr}(\sigma_v(c_b)),\kwf{FR}(\sigma(seq(\texttt{d})))]-[]\rightarrow  [\kwf{Call_e}( \llbracket\omega_0[1],\sigma_a(f),\sigma_v(c_b)\rrbracket\cdot \sigma(seq(\texttt{d})))],\qquad\quad\ \ \ (\kwt{ext\_call}) \\&\ [\kwf{Call_e}( \llbracket\omega_0[1],\sigma_a(f), \sigma_v(c_b)\rrbracket \cdot\sigma(seq(\texttt{d}))),\kwf{Evar}(e(\omega_0)),\kwf{Gvar}(\llbracket \omega_0[1] \rrbracket \cdot g(\omega_0)\backslash e(\omega_0))]-[]\rightarrow [\kwf{Var}_1( \llbracket\sigma_a(f),\sigma_v(c_b),\kwf{EXT},0\rrbracket\cdot\sigma(\omega_0)\cdot \sigma(seq(\texttt{d})))],\ \dots\}\quad\ \\&\qquad\qquad\qquad\qquad\qquad\qquad\qquad\qquad\qquad\qquad\qquad\qquad\qquad\qquad\qquad\qquad\qquad\qquad\qquad\qquad\qquad\qquad\qquad\qquad\qquad\qquad\qquad\qquad\ \  (\kwt{recv\_ext}) \\
&\kwc{R}(v_1\leftarrow v_2;\kwp{stmt},i,\omega) = \kwc{R}(\kwp{stmt},i\circ 1,\omega) \cup \{ [\kwf{Var}_i(\sigma(\omega))]-[]\rightarrow [\kwf{Var}_{i\circ 1}(\sigma(\omega)|\frac{\sigma_v(v_1)}{\sigma_v(v_2)})]\}\qquad\qquad\quad\qquad\qquad\quad\qquad\qquad\quad\qquad\quad\quad (\kwt{var\_assign})  \\
&\kwc{R}(\tau\ v_1\leftarrow v_2;\kwp{stmt},i,\omega) = \kwc{R}(\kwp{stmt},i\circ 1,\omega\cdot \llbracket\left\langle \sigma_v(v_1),\kwf{T_v},\kwf{R_l},\kwf{E_n} \right\rangle\rrbracket ) \cup \{[\kwf{Var}_i(\sigma(\omega))]-[]\rightarrow [\kwf{Var}_{i\circ 1}(\sigma(\omega)\cdot \llbracket\sigma_v(v_2)\rrbracket)]\}\qquad\qquad\qquad\qquad (\kwt{var\_declare})\\
&\kwc{R}(\texttt{return},i,\omega) = \{ [\kwf{Var}_i(\sigma(\omega))]  -[\kwf{Pred\_eq}(\omega[3],\kwf{EXT})]  \rightarrow  [\kwf{Gvar}(\llbracket \omega[5] \rrbracket \cdot g(\omega)\backslash e(\omega)),\kwf{Evar}(e(\omega))],\ \dots\}  \qquad\qquad\qquad\qquad\qquad\qquad\qquad(\kwt{ret\_ext}) \\
\end{align*}
\end{footnotesize}
\vspace{-0.42in}
\caption{Parts of the translation of functions and statements.}
\label{fig:translation}
\end{figure*}

\subsection{Terms and sequences}
\label{subsec:terms}

The terms in multiset rewriting system are translated from the names in Solidity language. 
There are two types of terms: constant terms and variable terms.
Correspondingly, as shown in Fig. \ref{fig:solidity}, a name $v$ in Solidity may represent a contract, a function, a variable, or a constant.
Therefore, given a name $v$, we compute a tuple $\left\langle name,type,range,ether \right\rangle$.
Here, $name$ is a term used in multiset rewriting system, which corresponds to $v$.
Term $type \in \{\kwf{T_v}, \kwf{T_c}\}$. 
If $v$ is a variable, $type=\kwf{T_v}$; otherwise, $type=\kwf{T_c}$.
Term $range \in \{\kwf{R_g},\kwf{R_l}, \kwf{R_o}\}$.
If $v$ is a global variable and a local variable, \textit{i.e.}, a variable defined inside a function, $range=\kwf{R_g}$ and $\kwf{R_l}$, respectively; otherwise, {\color{black} \textit{e.g.}, $v$ is a constant, $range=\kwf{R_o}$.}
If $v$ is a variable representing the ether balance of an account, $ether=\kwf{E_y}$; {\color{black} otherwise $ether=\kwf{E_n}$.
Note that we consider the variables denoting ether balances as global.}
Since value of $v$ is unchanged if $type=\kwf{T_c}$, in this case $name$ is assigned with the value of $v$; otherwise, $name=v$.

Denote $\llbracket e_1, e_2, ..., e_n \rrbracket$ as a sequence, where each element $e_i$ has the same type, \textit{i.e.}, a term, a name, or the aforementioned tuple.
$T_1 \cdot T_2$ represents the concatenation of sequence $T_1$ and $T_2$. 
$T|\frac{t}{t'}$ is a sequence obtained by replacing element $t$ of sequence $T$ with another element $t'$.
$T_1\backslash T_2$ represents a new sequence by removing all the elements in sequence $T_1$ that are the same as those in sequence $T_2$.
We additionally define operations for a tuple sequence $\omega$.
Here, $\omega[j]$ indicates $name$ of the $j$th tuple in $\omega$.
$\sigma(\omega)$ outputs a term sequence consisting of all $name$ in $\omega$.
$g(\omega)$, $e(\omega)$ outputs a term sequence by obtaining the $name$ of all tuples in $\omega$ whose $range=\kwf{R_g}$ and $ether=\kwf{E_y}$, respectively.
The order of terms in $\sigma(\omega), g(\omega), e(\omega),$ are in accordance of the order of terms in $\omega$.



Furthermore, to translate names into terms, we define and implement two functions $\sigma_v,\sigma_a$.
$\sigma_v$ translates a variable name into a variable term, and $\sigma_a$ translates a name that represents a contract, a function, or a constant into a constant term.
\subsection{Modeling the behaviors}
\label{subsec:translation}
Based on the above notations, we propose to model the initialization of contracts and transitions of configurations of KSolidity. 
Specifically, given a contract account of address $c$, we will introduce how to model the executions of functions in the contract codes of the account.
For brevity, we will refer to the account of address $c$ as account $c$ in the following paper.


\textbf{Modeling the initialization.}
 Assume that the contract of account $c$ is deployed on blockchain and the following data will be initialized in the corresponding configuration of KSolidity:
1) the ether balances of account $c$; 2) the global variables of account $c$. 
Besides, the ether balances of other accounts also need to be initialized since they may be modified during the executions of codes of account $c$.  
We use $\omega_0$ to model the configuration of KSolidity after initialization of account $c$.
There are three kinds of tuples in $\omega_0$ in order: 1) {\color{black}$\left\langle \sigma_a(c),\kwf{T_c},\kwf{R_o},\kwf{E_n} \right\rangle$} that represents the address of account $c$; 2) tuple sequence $g(\omega_0)\backslash e(\omega_0)$ denoting the global variables of account $c$ except the variable denoting the ether balance of $c$; {\color{black}3) tuple sequence $e(\omega_0)$ denoting the ether balance of account $c$ and the ether balances of all accounts who have ether exchanges with $c$.}
Therefore, $\omega_0[1] = \sigma_a(c)$.
The tuples in $\omega_0$ are then used to determine the order of parameters of facts in generated rules. Hence we define the following rules to model the initialization:


$[\kwf{FR}(e(\omega_0))]-[\kwf{Init_E}()]\rightarrow [\kwf{Evar}(e(\omega_0))]$
$(\kwt{init\_evars})$

$[\kwf{FR}(g(\omega_0)\backslash e(\omega_0) )]-[\kwf{Init_G}(\omega_0[1])]\rightarrow [\kwf{Gvar}(\llbracket \omega_0[1] \rrbracket \cdot g(\omega_0)\backslash e(\omega_0))]$
$\qquad\qquad\qquad\qquad\qquad\quad\ (\kwt{init\_gvars})$

Here, $\kwf{Evar}$ represents the current ether balances of all accounts on blockchain in initialization.
$\kwf{Gvar}$ represents the current global variables of account $c$. 
For brevity, we use $\kwf{FR}(e(\omega_0))$ to denote a sequence that consists of $\kwf{Fr}(t)$ for all elements $t$ in $e(\omega_0)$.
$\kwf{Fr}(t)$ here is a built-in fact of Tamarin prover \cite{MeierSCB13} that denotes a freshly generated name, 
we use it to denote that term $t$ is with arbitrary initial values.
In practice, the ether balances of all accounts can be initialized once and the global variables can be initialized once for every contract account.
Thus, the restrictions requiring that $\kwt{init\_evars}$ and $\kwt{init\_gvars}$ can be only applied once are added. 

\textbf{Translation of functions.}
After initialization, external accounts can send transactions to invoke any function in the contract of $c$.
To model the invocation of functions, we define $\kwc{R}$ partly shown in Fig. \ref{fig:translation} to recursively translate a function in the contract into rules.
Generally, in each recursive step, $\kwc{R}$ translates a fragment of codes into a rule or multiple rules and leaves the translation of the rest in the next steps. 
The first argument of $\kwc{R}$ represents the codes to be translated.
If the first argument is a sequence of statements, the second argument $i$ is a string encoding the position of the sequence in its function and $i\circ  a$ denotes a string obtained by concatenating $i$ and a string $a$;
otherwise, if the first argument is a function, the second argument is an empty string $\varnothing$.
The third argument is a tuple sequence $\omega$.

\begin{figure}[ht]
\centering
\includegraphics[scale=0.25]{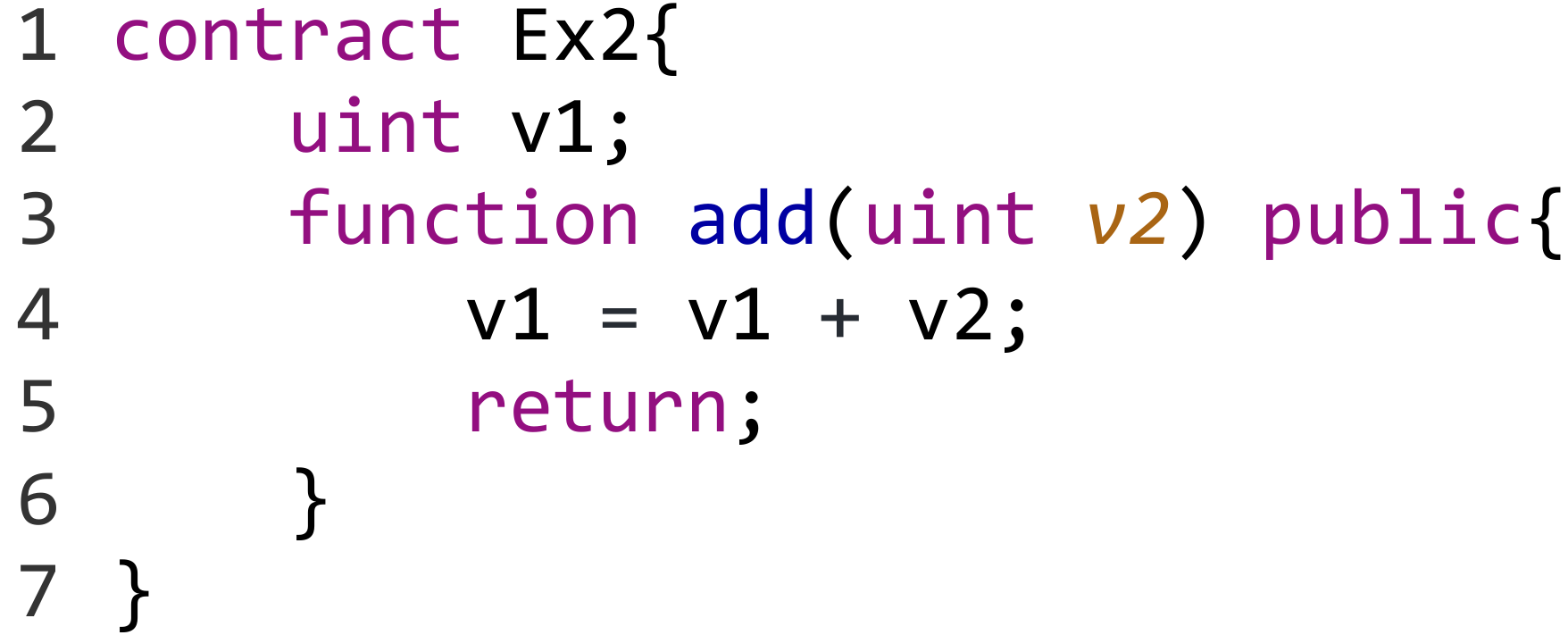}
\vspace{-0.1in}
\caption{Example contract Ex2.}
\label{fig:example2}
\vspace{-0.02in}
\end{figure}

{\color{black}
In the following, we introduce how $\kwc{R}$ translates a function into rules using the function $\texttt{add}$ in Fig.~\ref{fig:example2}  as an example.
Since function $\texttt{add}$ does not modify the ether balance of any account, we omit $\kwf{Evar}$ fact in the rules.

First, $\kwc{R}(\texttt{function}\ \texttt{add}(\texttt{uint}\ v2) \{\kwp{stmt}\},\varnothing,\omega_0)$ is applied and two rules are output, which correspond to  $\kwt{ext\_call}$ and $\kwt{recv\_ext}$ in  Fig. \ref{fig:translation}, respectively: 
\begin{multline}
 [\kwf{Fr}(\sigma_v(c_b)),\kwf{Fr}(\sigma_v(v2))]-[]\rightarrow  [\kwf{Call_e}( \sigma_a(c),\sigma_a(f), \sigma_v(c_b)\\ \qquad\qquad\qquad\qquad\qquad\qquad\qquad\qquad\qquad, \sigma_v(v2))]
\\ [\kwf{Call_e}( \sigma_a(c),\sigma_a(f),\sigma_v(c_b), \sigma_v(v2)),\kwf{Gvar}(\sigma_v(v1))]-[]\\ \rightarrow  [\kwf{Var}_1(\sigma_a(f), \sigma_v(c_b), \kwf{EXT}, 0, \sigma_a(c) \sigma_v(v1), \sigma_v(v2))]\nonumber
\end{multline}}
The first rule denotes an event that an external account $c_b$ sends a transaction to invoke $\texttt{add}$.
Here $seq(\texttt{d}) = \llbracket \sigma_v(v2) \rrbracket$  is a term sequence generated according to the parameter of $\texttt{add}$.
According to C1 in adversary model, $c_b$ and $\sigma_v(v2)$ are initialized by using $\kwf{Fr}$ facts.
The second rule denotes the reception of a transaction.
The $\kwf{Var_1}$ fact represents all the values required in executing $\texttt{add}$, whereby terms in $\kwf{Call_e}$, $\kwf{Gvar}$ are merged into terms in $\kwf{Var_1}$.
Therefore, $\kwc{R}$ also updates $\omega_0$ with a sequence of the corresponding tuples.
Here, $\textit{calltype}$ $\in \{\kwf{EXT}, \kwf{IN}\}$ indicates whether $c_b$ is an external account or a contract account and $\textit{depth}$ denotes current call depth.

Then, $\kwc{R}$ translates the statements in the function into rules for modeling the execution of the function $\texttt{add}$.
{\color{black}
The assignment statement in line 4 is translated into the following rule, which corresponds to $\kwt{var\_assign}$ in Fig. \ref{fig:translation}:
\begin{align}
 &[\kwf{Var}_1(\sigma_a(f), \sigma_v(c_b), \sigma_v(\textit{calltype}), \sigma_v(\textit{depth}), \sigma_a(c), \sigma_v(v1),\nonumber \\ & \sigma_v(v2))]
 -[]\rightarrow  [\kwf{Var}_{11}(\sigma_a(f), \sigma_v(c_b), \sigma_v(\textit{calltype}), \sigma_v(\textit{depth})\nonumber\\ &\qquad\qquad\qquad\qquad\qquad, \sigma_a(c), \sigma_v(v1)\oplus \sigma_v(v2), \sigma_v(v2))]\nonumber
\end{align}
The term $\sigma_v(v_1)$ is replaced by $\sigma_v(v_1)\oplus \sigma_v(v_2)$ when applying the rule.
Here $\oplus$ is translated from the operator $+$ and introduced in Appendix \ref{subsec:appendix1}.}

{\color{black}
Additionally, the return statement in line 5 is translated into the following rule corresponding to $\kwt{ret\_ext}$:
\begin{align}
  &[\kwf{Var}_{11}(\sigma_a(f), \sigma_v(c_b), \sigma_v(\textit{calltype}), \sigma_v(\textit{depth}), \sigma_a(c), \sigma_v(v1),\nonumber \\ &  \sigma_v(v2))]\ -[\kwf{Pred\_eq}(\sigma_v(\textit{calltype}),\kwf{EXT})]\rightarrow  [\kwf{Gvar}(\sigma_v(v1))]\nonumber
\end{align}}
The term $\sigma_v(v1)$ denoting the global variable of contract \texttt{Ex2} is put into $\kwf{Gvar}$ facts. 
The local variables will no longer be used and the corresponding terms will not be maintained.
Here, $\kwf{Pred\_eq}$ is a fact denoting equality between terms \cite{MeierSCB13}. 
We use it to determine whether $\sigma_v(\textit{calltype})$ is equal to $\kwf{EXT}$, corresponding to the case that the function is invoked by external accounts.
Similarly, this statement can be translated into a rule denoting the case that the function is invoked by contract accounts as shown in Appendix \ref{subsec:appendix1}.

\textbf{Adversaries.}
Here we introduce the modeling of the capability C1 and C2 of adversaries mentioned in Section~\ref{subsec:adversary}, and 
the modeling of C3 will be introduced in Appendix \ref{subsec:appendix2}.

\textbf{C1:} The operation that an adversary, besides normal participants, sends transactions can also be modeled by  $\kwt{ext\_call}$. Therefore, no additional rules for the operation are provided. 

\textbf{C2:} For each function $f$ in the contract of account $c$, multiple rules are generated to indicate that if the fallback function of the adversary is triggered by the execution of the contract of $c$, the adversary can send an internal transaction to invoke any function $f$ in the contract of $c$. The details of these rules are shown in Appendix~\ref{subsec:appendix1}.

\begin{figure}[]
\setlength{\belowcaptionskip}{-0.02in}
\setlength{\abovecaptionskip}{-0.02in}
\includegraphics[scale=0.33]{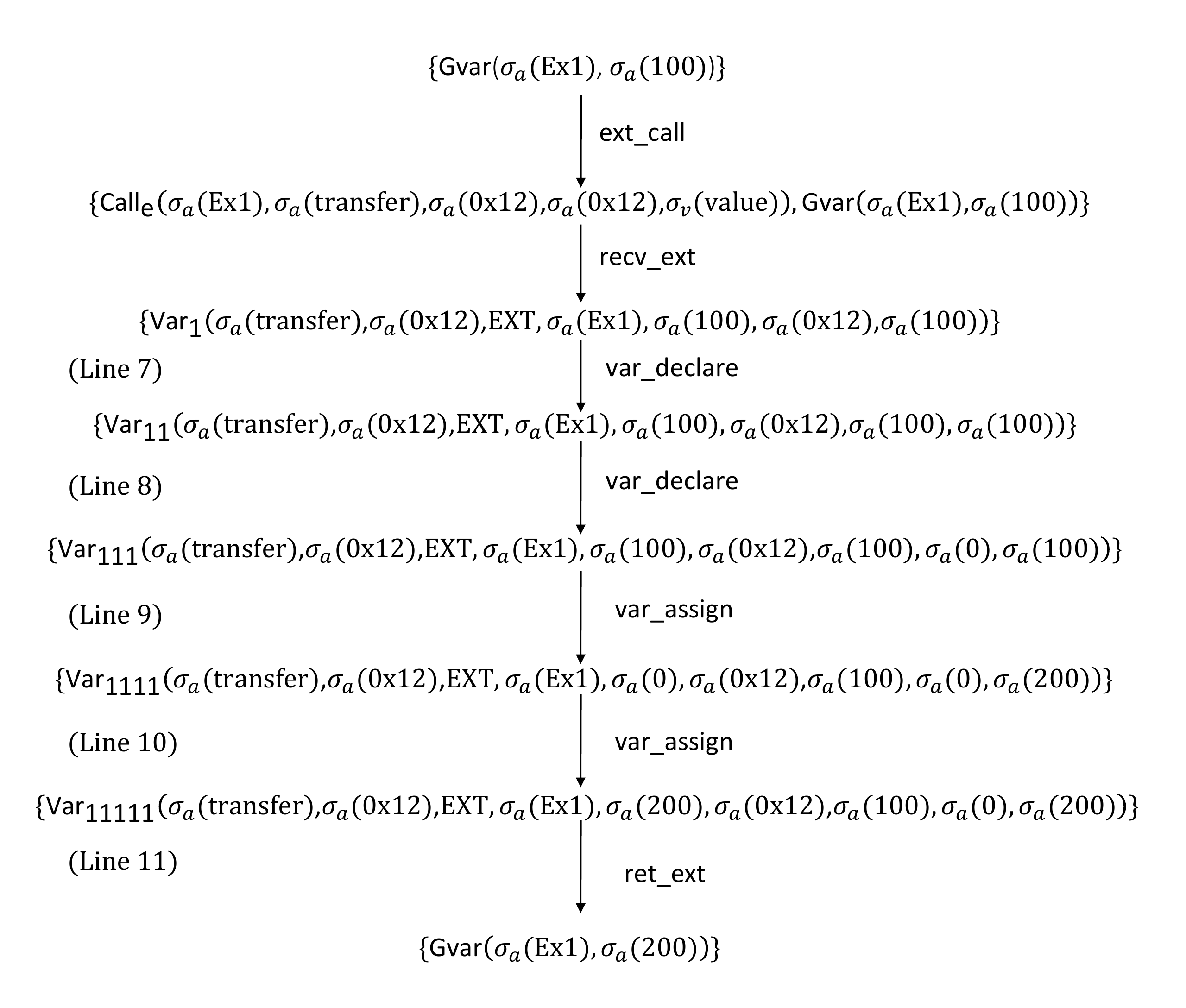}
\caption{The execution that models an attack on Ex1.}
\label{fig:path1}
\end{figure}


\subsection{An illustrative example}
\label{subsec:illu example}


Fig. \ref{fig:example1} shows a simplified version of a practical smart contract, which is with a vulnerability of transferMint \cite{transfer}. 
The global variable $\texttt{balances}$ denotes the token balances of accounts.
{\color{black} When the function \texttt{transfer} is invoked,  the token balance of \texttt{msg.sender} is supposed to decrease when the token balance of \texttt{to} increases.
However, assume that the account on address $0x12$ invokes \texttt{transfer} with the parameter \texttt{to} = $0x12$, $\texttt{balances}[0x12]$ will increase while the balance of no other account will decrease.
By exploiting this vulnerability, the account on address $0x12$ can mint tokens for profit or eventually make this type of tokens valueless through repeated attacks.}
An execution of the model that corresponds to the attack is shown in Fig. \ref{fig:path1}.
We use the contract name $\texttt{Ex1}$ to denote the address of the account who owns this contract.
Since function $\texttt{transfer}$ does not modify the ether balance of any account, we omit $\kwf{Evar}$ fact in the figure.
Hence, in the execution, the initial state is $\{ \kwf{Gvar}(\sigma_a(\texttt{Ex1}),\sigma_a(100)) \}$ where $\omega_0[1]  = \sigma_a(\texttt{Ex1})$ and $g(\omega_0)\backslash e(\omega_0)$ $= \llbracket \sigma_a(100) \rrbracket$.
Next, an external account invokes $\texttt{transfer}$ whereby the rule $\kwt{ext\_call}$ is applied such that $\kwf{Call_e}(\sigma_a(\texttt{Ex1}), \sigma_a(\texttt{transfer}),$ $\sigma_v(c_b), \sigma_v(\texttt{to}),\sigma_v(\texttt{value}))$ is added to the new state. 
Since $\sigma_v(c_b)$, $\sigma_v(\texttt{to})$ and $\sigma_v(\texttt{value})$ can be arbitrary values, 
in this execution, they can be instantiated as $\sigma_a(0x12)$, $\sigma_a(0x12)$ and $\sigma_c(100)$ respectively.
In the following steps, the state is updated in similar ways. 
When the transaction invoking $\texttt{transfer}$ finishes,  the state is $\kwf{Gvar}(\sigma_a(\texttt{Ex1}),\sigma_a(200))$, which implies that $\texttt{balances}[0x12]$ changes into 200 in an unexpected way.
Note that the numerical instantiation cannot be supported by the original Tamarin prover. Moreover, the independent model cannot be verified directly to find an attack as shown in Fig. \ref{fig:path1}, since several technical challenges need to be addressed.

\subsection{Technical challenges and main solutions}
\label{subsec:challenge}
Since the module of independent modeling only provides a framework that automatically generates models of smart contracts partially, 
we have to address the following technical challenges to complement the model for the verification.

\textbf{Challenge 1: recognizing security requirements.} 
Given an execution shown in Fig. \ref{fig:path1}, a corresponding property is still needed for the verifier to recognize this execution as an instance of some vulnerabilities.
However, there is no uniform standard for the security requirements of contracts in practical scenarios, which makes the precise generation of security properties difficult. 
There are automated bug-finding tools and verifiers defining patterns or properties according to known vulnerabilities \cite{TsankovDDGBV18}~\cite{0001LC18}~\cite{DBLP:conf/ndss/KalraGDS18}.
However, the vulnerabilities covered by these tools are limited to known ones, and a variant of a known vulnerability may evade their detection~\cite{RodlerLKD19}.


To address this challenge,
we perform statistical analysis on 30577 real-world smart contracts and obtain an observation: most of the smart contracts (91.11\%) are finance-related, \textit{i.e.}, the executions of these contracts may change the cryptocurrencies of themselves and others.
Therefore, we divide the smart contracts into different categories according to the cryptocurrencies that they use and propose security properties to check whether the cryptocurrencies may be lost unexpectedly.

\textbf{Challenge 2: contract-oriented automated reasoning.}
Given an independent model, the rule $\kwt{ext\_call}$ can be applied repeatedly, which is corresponding to the practical scenarios that a function can be invoked any times.
This may lead to non-termination of verification.
Besides, the independent model is insufficient for verifying 2-safety properties \cite{DBLP:conf/csfw/BartheDR04}.



We address the challenge based on the fact that a transaction is atomic and cannot be interfered by other transactions. Therefore, the independent model can be reduced for different types of properties: (1) the properties that should be maintained for a single transaction; (2) the properties that may be affected by other transactions. 
For the first type, we propose to automatically generate invariant properties and the corresponding reduced model that the behaviors of other transactions are ignored.
For the second type, since a transaction is atomic, the rest way to trigger an attack is to leverage different results of a sequence of transactions caused by different orders of the transactions or different block variables.  
Hence, we propose the equivalence properties and also the modeling method to achieve effective automated reasoning. 
We also modify the code of Tamarin for supporting the verification where numerical constraint solving is additionally required.

\section{PROPERTY GENERATION} 
\label{sec:property generation}



To address Challenge 1, we divide finance-related smart contracts into three categories according to the type of cryptocurrencies they use: ether-related, token-related, and indirect-related.
An ether-related contract may transfer or receive ethers, \textit{i.e.}, the official cryptocurrency of Ethereum.
Similarly, a token-related contract may send or receive tokens, \textit{i.e.}, the cryptocurrency implemented by the contract itself.  
An indirect-related contract is used in the former two contracts to provide additional functionality.
Hence, to check whether the cryptocurrencies may be lost in unexpected ways, we focus on generating security properties for ether-related and token-related contracts.
We propose to recognize the category and key variables related to cryptocurrencies from the codes of smart contracts and use the information to generate the security properties.
Note that we analyze the indirect-related contracts in an indirect way and do not generate properties for the indirect-related contracts.
For example, given a token contract $C_1$ and an indirect-related contract $C_2$, assume that $C_1$ implements authentication by invoking functions in $C_2$.
In this case, we can specify $C_1$ to be analyzed and then generate a model for it, which considers the interaction of $C_1$ and $C_2$.

\subsection{Recognizing categories and key variables}
\label{subsec:category of contracts}

\textbf{Ether-related contracts.} 
The ethers can be transferred by using official functions, \textit{e.g.}, $\texttt{transfer}$, $\texttt{send}$ and $\texttt{call}$.
The modifier $\texttt{payable}$ is only used in ether-related contracts for receiving ethers.
Therefore, \ourtool recognize an ether-related contract by determining if there are keywords, \textit{i.e.}, $\texttt{transfer}$, $\texttt{send}$, $\texttt{call}$ or $\texttt{payable}$ in the contract.
If a contract is recognized as an ether-related contract, then we use the built-in variable $\texttt{balance}$ as the key variable, which denotes the ether balance of an account.

\textbf{Token-related contracts.}
The token-related contracts can be divided into token contracts and token managing contracts.
A token contract is used to implement a kind of customized cryptocurrency, \textit{i.e.}, tokens, which can be traded and have financial value.
A token managing contract, \textit{e.g}, an ICO contract \cite{ico}, is used to manage the distribution or sale of tokens.

We propose a method to recognize token contracts based on another observation from our statistical result in Section~\ref{subsec:statistical}: developers tend to use similar variable names to represent the token balance of an account.
Therefore, a contract is identified as a token contract, if there is a variable of type $\texttt{mapping(address=>uint)}$ with a name similar to two commonly used names: $\texttt{balances}$ or $\texttt{ownedTokenCount}$.
Specifically, we calculate the similarity of names using Python package fuzzywuzzy\cite{fuzzywuzzy}. 
When the similarity is larger than 85, we consider two names similar. 
The threshold 85 is set based on our evaluation in Section \ref{subsec:statistical}.
For the contracts using uncommon names for token balances, \ourtool also supports the users to provide their own variable names.
In addition to $\texttt{balances}$, we observe that some token-related contracts define a variable of $\texttt{uint}$ type to record the total number of tokens.
Similarly, we use the most common variable name $\texttt{totalSupply}$ to match the variables representing the total amount of tokens. 
This kind of variables are not used to recognize the token contract, but rather for the subsequent generation of properties.
After the recognition of token contracts, we search for contracts instantiating token contracts and regard them as token managing contracts.

\subsection{Generating security properties}
\label{subsec:generating properties}
As mentioned in Section \ref{subsec:challenge}, we propose two kinds of properties: invariant properties and equivalence properties.




\textbf{Invariant properties.}
The invariant property requires that for any transaction a proposition (a statement that denotes the relationship between values of variables) $\phi$ holds when the transaction finishes, if $\phi$ holds when the transaction starts executing.
Since a transaction is atomic, 
\ourtool checks invariant properties in single transactions instead of the total executions to achieve effective automated reasoning.
Here, we design the invariants to ensure that the token balances in token-related contracts are calculated in an expected way. 
Note that we do not design invariant properties for ether-related contracts, since the calculation of ether balances is performed by the EVM and its correctness is guaranteed~\cite{evmgo}.




For a token contract with key variable $\kwe{balances}$, the following invariant is generated: 

$\qquad\quad\ \ \ \ \ \sum_{a\in A_1}\kwe{balances}(a) = C_1\qquad\qquad (\kwt{token\_inv})$

Since a transaction can only affect a limited number of accounts, $A_1$ is the set of addresses of the accounts whose token balances may be modified in the transaction. $C_1$ is an arbitrary constant value and the invariant implies that the sum of token balances of all accounts should be unchanged after a transaction.
If the invariant is broken, it indicates an error in the process of recording token balances, which would make this kind of tokens worthless \cite{tokenzero}.
Here, $\kwe{balances}$ can be replaced by any variable name denoting the token balances.
If there are multiple variables denoting token balances of different types, all of them will be used.
Specially, if there is a key variable $\kwe{totalSupply}$ denoting the total amount of tokens in the token contract, the constant $C_1$ in $\kwt{token\_inv}$ will be replaced by $\kwe{totalSupply}$.
For a token managing contract, the invariant $\kwt{token\_inv}$ is generated for the token contract that it manages.
\ourtool also supports the users to provide customized invariants to check the security of contracts.

\textbf{Equivalence properties.}
We define the equivalence property as follows: The equivalence of a global variable $v$ holds for a transaction sequence $T$, if the value of $v$ after $T$'s execution is always the same.
Here we study the equivalence of the token or ether balance of the adversary.
Given two sequences $T_A$ and $T_B$ that have the same transactions,
we propose the following property:

$\kwe{balances}_A(c_{adv}) = \kwe{balances}_B(c_{adv}) \land  $

$\quad\ \ \kwe{balance}_A(c_{adv}) = \kwe{balance}_B(c_{adv}) \qquad(\kwt{equivalence})$ 

Here, denote $\kwe{balances}_A(c_{adv})$ and $\kwe{balance}_A(c_{adv})$ as the token balance and ether balance of the adversary after execution of $T_A$, respectively.
Similarly, $\kwe{balances}_B(c_{adv})$ and $\kwe{balance}_B(c_{adv})$ represent the corresponding balances for $T_B$. 
$\kwt{equivalence}$ requires that the adversary cannot change its own balances by changing the orders of transactions or other conditions; otherwise, the difference of the balances may be the illegal profit of the adversary.

\subsection{Relationship between properties and common vulnerabilities}
\label{subsec:relationship}

\begin{figure}[ht]
\centering
\includegraphics[scale=0.25]{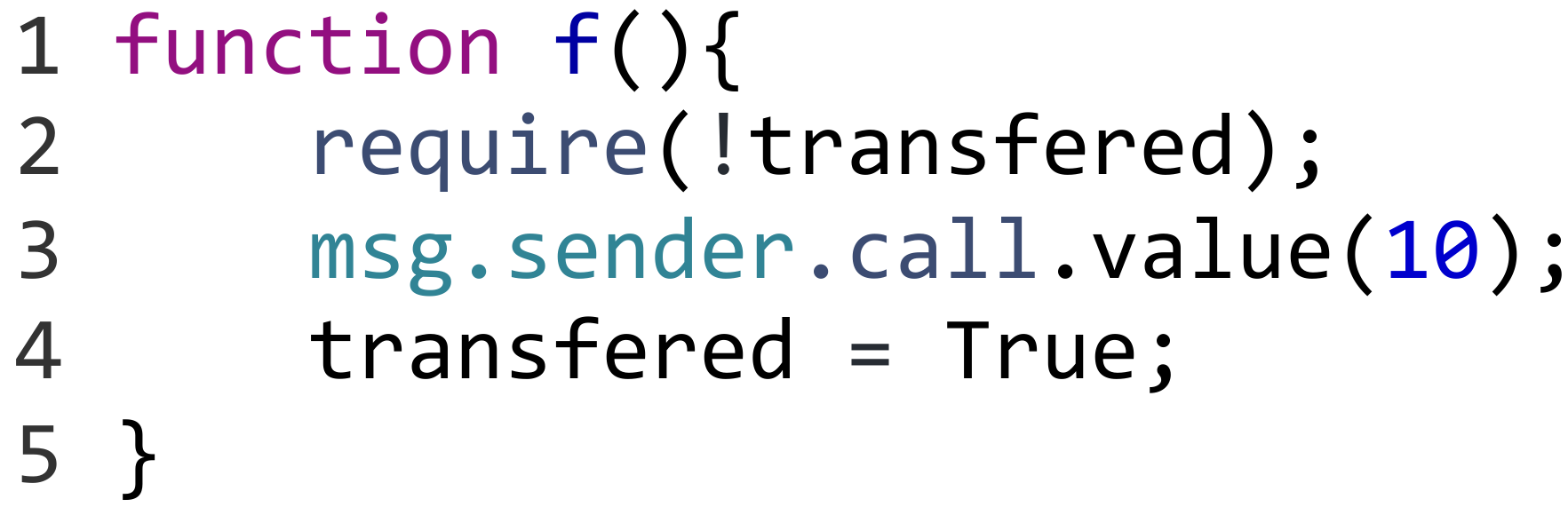}
\caption{An example with reentrancy vulnerability.}
\label{fig:exre}
\vspace{-0.05in}
\end{figure}

The properties of \ourtool are designed with a basic idea: leveraging the phenomenon that the loss of ethers and tokens is one of the popular intentions of attackers \cite{DBLP:journals/csur/ChenPNX20}.
\ourtool generates properties based on key variables denoting the token balances or ether balances. 
We aim to cover vulnerabilities causing financial losses.
As a result, \ourtool covers 6 types of vulnerabilities, including \textit{transferMint} not supported by existing automatic tools, through the two properties.
Note that these vulnerabilities do not necessarily cause financial loss and those that do not are ignored by \ourtool as they do not affect the financial security of contracts.

To explain the usage of our properties, we provide examples of several common vulnerabilities, detailing how contracts with these vulnerabilities violate the above two properties.

\textbf{Gasless send.}
During the executions of official functions \texttt{send} and \texttt{call}, if the gas is not enough, the transaction will not be reverted and a result will return.
If a contract does not check the execution results of \texttt{send} or \texttt{call}, it may mistakenly assume that the execution was successful.
Given two sequences of same transactions that invoke functions with gasless send vulnerability, one with sufficient gas and one without sufficient gas, the results of them will be different.
Therefore, the equivalence property is violated.


\textbf{Reentrancy.}
Taking the contract in Fig. \ref{fig:exre} as an example, suppose that the adversary sends a transaction to invoke \texttt{f} and the statement on line 4 sends ethers to the adversary.
According to Section \ref{subsec:adversary}, the adversary can then send an internal transaction through the fallback function to call \texttt{f} again, and since the code on line 5 is not executed, the check on line 3 will still be passed, allowing the adversary to get ethers one more time.
Assume that there are two sequences $T_1$ and $T_2$, $T_1$ consisting of two transactions invoking \texttt{f} and $T_2$ consisting of one transaction invoking \texttt{f} and one internal transaction invoking \texttt{f} through a reentrancy vulnerability.
We treat the internal transaction sent by the adversary as a transaction and consider $T_1$ and $T_2$ as consisting of the same transactions.
In this case, 
the ether balances of adversary after $T_1$ and $T_2$  are different, which means that the equivalence property is broken.

\textbf{TD\&TOD.}
When some statements are control dependent on the  \texttt{block.timestamp}, the adversary can control the execution of these statements by modifying \texttt{block.timestamp} in a range, which is called TD.
Given two same sequences of transactions that invoke a function with TD, the execution results may be different with different \texttt{block.timestamp}, which violates the equivalence property.
Similarly, the contracts with TOD vulnerability violate our equivalence property when the order of transactions changes.

\textbf{Overflow/underflow.}
Overflow/underflow is a kind of arithmetic error.
Since the goal of \ourtool is to analyze the financial security of contracts, \ourtool detects overflow/underflow vulnerabilities that can change the number of tokens.
For the remaining overflow/underflow vulnerabilities, 
\ourtool can also support them through custom invariants.

{\color{black}

Certain new vulnerability can be detected directly by \ourtool if it is covered by our properties, such as the transferMint vulnerability. 
If the vulnerability is not covered, we need to propose new properties or modify the rules in our models to support more features. 
For example, the airdrop hunting vulnerability~\cite{DBLP:conf/uss/ZhouYXCY020}, which is used by attackers to collect bonuses from airdrop contracts, is not currently supported by \ourtool. 
To extend \ourtool to cover airdrop hunting, we can propose a new invariant requiring the number of contract accounts to remain zero. 
However, it is challenging to model the identification of contract accounts.
We would like to study the extension of \ourtool in our future work.}
\section{COMPLEMENTARY MODELING AND VERIFICATION} 
\label{sec:complementary modeling and verification}

In this section, we introduce how we address Challenge 2. 
According to different properties of a contract, we propose the method of complementary modeling to generate customized models built upon the independent models with rules replaced or added.
Besides, we propose a solution to check whether a customized model satisfies the corresponding property. 

\subsection{Complementary modeling}
\label{subsec:complementary}
{\color{black}The goal of complementary modeling is to generate a customized model, which satisfies that the invariant property or equivalence property is not valid in the KSolidity Semantics, only if there exists an execution in the model that breaks the property. }
Besides, to support automated verification, the model is added with more constraints such that each execution that reaches a certain state breaks the property. 
Then, the property is not valid if and only if the state is reachable.
Hence, we design the method for invariant property and equivalence property as follows.

\textbf{Invariant properties.}
The generated model for invariant properties has the following features:

\textbf{i)} The invariant holds at the beginning of any execution. 
\textbf{ii)} An execution simulates the execution of one transaction.
\textbf{iii)} The invariant is assumed to be broken at the end of any execution, which corresponds to the state that breaks the property.


To make the generated model conform to feature i), we first replace the rule $\kwt{init\_gvars}$ with rule $\kwt{init\_gvars\_inv}$.
In rule $\kwt{init\_gvars\_inv}$, a fact $\theta_e(\phi)$ is added to denote that the invariant $\phi$ holds after the initialization. 
Here, $\phi$ is the invariant  $\kwt{token\_inv}$ in Section \ref{subsec:generating properties} and $\theta_e(e)$ is a function translating mathematical expressions into numerical facts in rules.
Numerical facts denote the relationships between numeric variables and are processed in the verification module. 
Similarly, we define $\theta_{ne}(e)$ to translate the negation of $e$.


Then we replace the rule $\kwt{ext\_call}$ with  $\kwt{ext\_call\_inv}$, added with an action and a restriction requiring that the rule can be applied only once, to achieve feature ii).


Finally, we modify the rule $\kwt{ret\_ext}$ into $\kwt{ret\_ext\_inv}$.
$\kwt{ret\_ext}$ $\kwt{\_inv}$ has additional facts $\theta_{ne}(\phi)$ and $\kwf{End}()$ compared to $\kwt{ret\_ext}$, which together achieve feature iii).
$\kwf{End}()$ serves as an indicator that an execution of the model reaches the end of the transaction if rule $\kwt{ret\_ext\_inv}$ is applied, and $\theta_{ne}(\phi)$  means that invariant $\phi$ is broken at the same time.

\textbf{Equivalence properties.}
The generated model for equivalence properties has the following features:
\textbf{i)} An execution of the model simulates the executions of two sequences $T_A$ and $T_B$ consisting of the same transactions but possibly with different orders.
\textbf{ii)} Before the executions of $T_A$ and $T_B$, the values of global variables and ether balances of all accounts are the same.
\textbf{iii)} The ether or token balances of the adversary are assumed to be different at the end of any execution, which corresponds to the state that breaks the equivalence property.


Firstly, to achieve feature ii), we replace  $\kwt{init\_evars}$ and $\kwt{init\_}$ $\kwt{gvars}$ with $\kwt{init\_evars\_AB}$ and $\kwt{init\_}$$\kwt{gvars\_AB}$, respectively. 
In $\kwt{init\_gvars\_AB}$, the $\kwf{Gvar}$ fact is duplicated into $\kwf{Gvar_A}$ and $\kwf{Gvar_B}$ facts, which indicates that the global variables are the same before $T_A$ and $T_B$.
Similarly, the $\kwf{Evar}$ fact is duplicated in $\kwt{init\_evars\_AB}$.

Then, we replace $\kwt{ext\_call}$ with $\kwt{ext\_call\_AB}$, in which $\kwf{Call_e}$ fact is duplicated, indicating that two transactions with same parameters and same sender are sent.



Except rules $\kwt{init\_evars}$, $\kwt{init\_gvars}$, $\kwt{ext\_call}$, each of the remaining rules in the model is replicated into two rules, and the facts of the two rules are added with different subscripts $A$ and $B$ to represent the execution of transactions in sequences $T_A$ and $T_B$, respectively. 
For example, the rule $\kwt{recv\_ext}$ is replaced with $\kwt{recv\_ext\_A}$ and $\kwt{recv\_ext\_B}$.
Specially, actions and restrictions are added into $\kwt{recv\_ext\_A}$ and $\kwt{recv\_ext\_B}$ to achieve feature i).
The complete form of the above rules is shown in Appendix \ref{subsec:appendix2}.

Finally, to achieve feature iii), we add a rule $\kwt{compare\_AB}$ to compare the ether balances and token balances of the adversary,
where $\theta_{ne}(\phi_{equ})$ and $\kwf{End}()$ are added for subsequent verification. 
$\phi_{equ}$ is property $\kwt{equivalence}$ in Section \ref{subsec:generating properties}.

\subsection{Verification}
\label{subsec:verification}
The verification module is implemented by modifying the source code of Tamarin prover \cite{MeierSCB13} to achieve modeling using multiset rewriting rules with additional support for numerical constraint solving by Z3~\cite{z3}.
Taking a generated property and the corresponding model as input, the workflow of this module is as follows:
1) Search for an execution that reaches $\kwf{End}()$ without considering the numerical constraints.
2) If the search fails, the module terminates and outputs that the property is valid; otherwise, go to step 3).
3) Collect the numerical constraints that the execution must satisfy and solve the constraints by Z3.
4) If the set of constraints is satisfied, which indicates that the execution that violates the property exists, the module terminates and outputs the execution as a counterexample;
otherwise, add a constraint to the model that the execution does not exist, and go to step 1). 

\subsection{Formal guarantee}
\label{subsec:proof_intro}
 {\color{black}We prove the soundness of translation from Solidity language to our models based on KSolidity \cite{DBLP:conf/sp/JiaoK0S0020},} which is claimed to fully cover the high-level core language features specified by the official Solidity documentation and be consistent with the official Solidity compiler.
 {\color{black}However, the completeness of our translation is not guaranteed due to two reasons: 1) the initialization of global variables and ether balances in rules $\kwt{init\_evars}$ and $\kwt{init\_gvars}$ assumes the initial values of global variables and ether balances to be arbitrary, which may over-approximate the range of values for these variables. 2) the specific values of the block timestamps are not considered.}
Specifically, we prove Theorem 1 (informal description). 
Note that Theorem 1 only holds for the contracts supported by \ourtool (See Section \ref{sec:limitations}).
The precise description of Theorem 1 is presented by Theorem 2 and Theorem 3 in Appendix \ref{sec:proof}.



\begin{theorem}(Soundness).
    \label{theorem: soundness}
    If an invariant property (or equivalence property) holds in the complementary model of \ourtool,  it holds in real-world transactions interpreted by KSolidity semantics.
\end{theorem}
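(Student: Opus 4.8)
The plan is to establish a simulation relation between executions in the complementary multiset rewriting model and transaction executions under KSolidity semantics, then use the contrapositive form of the soundness claim. Concretely, I would prove that whenever a property is violated in the real-world KSolidity execution, there exists a corresponding execution in the complementary model that reaches the $\kwf{End}()$ state while violating the property. Since the statement is phrased as ``if the property holds in the model, it holds in reality,'' the natural route is the contrapositive: every KSolidity transaction sequence that breaks the invariant (or equivalence) property must be matched by a model execution that breaks it too. This reduces the soundness of verification to the faithfulness of the translation $\kwc{R}$ together with the initialization rules $\kwt{init\_evars}$, $\kwt{init\_gvars}$ and their complementary variants.

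First I would set up a correspondence between KSolidity configurations and the tuple sequences $\omega$ (equivalently, the arguments of the $\kwf{Var}_i$, $\kwf{Gvar}$, and $\kwf{Evar}$ facts). The key technical lemma is a \emph{step-simulation} result: for each KSolidity rewrite rule that transitions one configuration to another during the execution of a single statement, the corresponding rule produced by $\kwc{R}$ (e.g. $\kwt{var\_assign}$, $\kwt{var\_declare}$, $\kwt{ret\_ext}$) transforms the matching model state into the matching successor state. I would prove this by structural induction over the statement forms in the core Solidity subset of Fig.~\ref{fig:solidity}, handling atomic statements, conditionals, and the function-invocation rules $\kwt{ext\_call}$/$\kwt{recv\_ext}$ as base and inductive cases. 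The tuple bookkeeping (the $name, type, range, ether$ components and the operators $\sigma(\omega)$, $g(\omega)$, $e(\omega)$) must be shown to preserve the invariant that $\kwf{Gvar}$ always carries exactly the global variables and $\kwf{Evar}$ the ether balances after each step.

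Given the per-statement simulation, I would lift it to whole transactions and then to the two property types separately. For invariant properties, the reduced model simulates a single transaction (feature ii), so I would argue that if the invariant $\phi$ holds at transaction start (enforced by $\theta_e(\phi)$ in $\kwt{init\_gvars\_inv}$) but fails at the end in KSolidity, then the simulating model execution applies $\kwt{ret\_ext\_inv}$, asserts $\theta_{ne}(\phi)$, and reaches $\kwf{End}()$ — hence the property is detected as invalid. For equivalence properties, I would run the simulation twice, once for each transaction sequence $T_A$ and $T_B$, using the duplicated $\kwf{Gvar_A}/\kwf{Gvar_B}$ and $\kwf{Evar_A}/\kwf{Evar_B}$ facts to show that identical initial states (feature ii) followed by the same transactions with differing orders/block-variables yield the compared balances, so that a genuine divergence in the adversary's balances is captured by $\kwt{compare\_AB}$ via $\theta_{ne}(\phi_{equ})$ and $\kwf{End}()$. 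Because soundness (not completeness) is claimed, the over-approximation introduced by arbitrary initial values in $\kwf{Fr}$ facts and the abstracted block timestamps only enlarges the set of model executions, which is harmless for this direction.

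The main obstacle I anticipate is the faithfulness of the adversary and reentrancy modeling: proving that the model rules for C1--C3 (in particular the fallback-triggered internal calls of C2 and the duplicated $\kwt{recv\_ext\_A}$/$\kwt{recv\_ext\_B}$ rules) generate \emph{all} real KSolidity behaviors that an attacker can induce, including nested internal transactions tracked by the $\textit{depth}$ and $\textit{calltype}$ components. Establishing that the $\kwf{Call_e}$/$\kwf{Var}_1$ merging correctly captures control flow across re-entrant invocations — and that the restrictions forcing $\kwt{init\_evars}$/$\kwt{init\_gvars}$ to fire once do not exclude any reachable configuration — is the delicate part, since a missed adversarial interleaving would break soundness. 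I would isolate this as a separate lemma about the adversary transition system and discharge the remaining statement cases by the routine structural induction sketched above.
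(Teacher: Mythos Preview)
Your proposal is correct and follows essentially the same route as the paper: establish a simulation from KSolidity executions to model executions (so that any real violation is reflected in the model), argue per-statement via case analysis, then specialize to the invariant and equivalence settings; the over-approximation from fresh initial values is noted as harmless for soundness in both.

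The one structural difference worth flagging is that the paper interposes the \emph{independent model} $R_s$ as an explicit intermediate: the heavy simulation lemma (their Lemma~\ref{lemma:var_soundness}) relates KSolidity traces to $R_s$ executions once, and then two lighter lemmas show that any $R_s$-violation can be replayed in $R_{inv}$ (resp.\ $R_{equ}$). Your sketch goes directly from KSolidity to each complementary model. The paper's factorization avoids duplicating the large case analysis across the two property types and cleanly separates ``translation is faithful'' from ``complementary restrictions do not lose violations''; your direct route is equally valid but would benefit from the same split. A second, smaller point: the paper's induction is over the length of the KSolidity execution (tracking a finite set of ``state-typed'' configurations where one Solidity statement spans several K-rules), rather than structural induction on Solidity syntax; this is the natural granularity once internal calls and reentrancy make the call graph dynamic, and is what your inductive case for function invocation would have to unfold into. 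Finally, your instinct that the fallback/adversary rules are the delicate part is well placed: the paper in fact explicitly omits the cases for $\kwt{ether\_succ}$, $\kwt{ether\_fail}$, and $\kwt{fb\_call}$ because KSolidity lacks corresponding rules, so that obstacle is acknowledged rather than discharged.
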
 

\begin{proof}
     See Appendix \ref{sec:proof}.
\end{proof}

\section{EVALUATION} 
\label{sec:evaluation}

\begin{table*}[]
 \setlength\tabcolsep{4pt}
 \centering 
\caption{A comparison of representative automated analyzers for smart contracts. (Acc and F1 outside brackets correpsond to the finance-vulnerable contracts, while those inside brackets correpsond to the vulnerable contracts, * denote automated verifiers)}  
\label{comp}
\vspace{-0.1in}
\resizebox{\linewidth}{!}{
\begin{tabular}{|c|cc|cc|cc|cc|cc|cc|cc|cc|cc|ccc|}
\hline
\multirow{2}{*}{\begin{tabular}[c]{@{}c@{}}Types of \\ Vulnerabilities\end{tabular}} & \multicolumn{2}{c|}{Osiris}                                            & \multicolumn{2}{c|}{SECURIFY*}                                          & \multicolumn{2}{c|}{Mythril}                                            & \multicolumn{2}{c|}{OYENTE}                                            & \multicolumn{2}{c|}{VERISMART}      & \multicolumn{2}{c|}{SmartCheck}                                        & \multicolumn{2}{c|}{Slither}                   & \multicolumn{2}{c|}{Manticore}                         & \multicolumn{2}{c|}{eThor*}                                             & \multicolumn{3}{c|}{\ourtool*}                                                                     \\ \cline{2-22} 
                                                                                     & \multicolumn{1}{c|}{Acc(\%)}                  & F1                     & \multicolumn{1}{c|}{Acc(\%)}                  & F1                     & \multicolumn{1}{c|}{Acc(\%)}                  & F1        & \multicolumn{1}{c|}{Acc(\%)}                  & F1              & \multicolumn{1}{c|}{Acc(\%)}                  & F1                     & \multicolumn{1}{c|}{Acc(\%)} & F1   & \multicolumn{1}{c|}{Acc(\%)}                  & F1                     & \multicolumn{1}{c|}{Acc(\%)}                  & F1                     & \multicolumn{1}{c|}{Acc(\%)}                  & F1                     & \multicolumn{1}{c|}{Acc(\%)}                  & \multicolumn{1}{c|}{F1}                     & U  \\ \hline
TOD-eth                                                                              & \multicolumn{1}{c|}{/}                        & /                      & \multicolumn{1}{c|}{96.43}                    & 0.98                   & \multicolumn{1}{c|}{/}                        & /                       & \multicolumn{1}{c|}{42.86}                    & 0.6                    & \multicolumn{1}{c|}{/}       & /    & \multicolumn{1}{c|}{/}                        & /      & \multicolumn{1}{c|}{/}                        & /                 & \multicolumn{1}{c|}{/}                        & /                      & \multicolumn{1}{c|}{/}                        & /                      & \multicolumn{1}{c|}{100}                      & \multicolumn{1}{c|}{1}                      & 10 \\ \hline
TOD-token                                                                            & \multicolumn{1}{c|}{/}                        & /                      & \multicolumn{1}{c|}{/}                        & /                      & \multicolumn{1}{c|}{/}                        & /                  & \multicolumn{1}{c|}{/}                        & /      & \multicolumn{1}{c|}{/}                        & /                      & \multicolumn{1}{c|}{/}       & /    & \multicolumn{1}{c|}{/}                        & /                      & \multicolumn{1}{c|}{/}                        & /                      & \multicolumn{1}{c|}{/}                        & /                      & \multicolumn{1}{c|}{100}                      & \multicolumn{1}{c|}{1}                      & 0  \\ \hline
TD                                                                                   & \multicolumn{1}{c|}{\makecell{71.60\\(70.37)}} & \makecell{0.83\\(0.82)} & \multicolumn{1}{c|}{/}                        & /                      & \multicolumn{1}{c|}{\makecell{45.68\\(44.44)}} & \makecell{0.62\\(0.62)} & \multicolumn{1}{c|}{\makecell{76.54\\(75.31)}} & \makecell{0.87\\(0.86)} & \multicolumn{1}{c|}{/}       & /    & \multicolumn{1}{c|}{/}                        & /                      & \multicolumn{1}{c|}{\makecell{16.05\\(14.81)}} & \makecell{0.26\\(0.25)} & \multicolumn{1}{c|}{\makecell{24.69\\(23.46)}} & \makecell{0.38\\(0.38)}  & \multicolumn{1}{c|}{/}                        & /                      & \multicolumn{1}{c|}{\makecell{95.06\\(93.83)}} & \multicolumn{1}{c|}{\makecell{0.97\\(0.96)}} & 33 \\ \hline
reentrancy                                                                           & \multicolumn{1}{c|}{\makecell{66.67\\(69.05)}} & \makecell{0.79\\(0.81)} & \multicolumn{1}{c|}{\makecell{78.57\\(76.19)}} & \makecell{0.85\\(0.84)} & \multicolumn{1}{c|}{\makecell{71.42\\(69.04)}} & \makecell{0.81\\(0.8)}   & \multicolumn{1}{c|}{\makecell{73.81\\(76.19)}} & \makecell{0.85\\(0.86)} & \multicolumn{1}{c|}{/}       & /    & \multicolumn{1}{c|}{\makecell{73.81\\(76.19)}} & \makecell{0.85\\(0.86)} & \multicolumn{1}{c|}{\makecell{85.71\\(83.33)}} & \makecell{0.91\\(0.90)} & \multicolumn{1}{c|}{\makecell{38.09\\(35.71)}} & \makecell{0.41\\(0.40)} & \multicolumn{1}{c|}{\makecell{83.72\\(86.05)}} & \makecell{0.92\\(0.93)} & \multicolumn{1}{c|}{\makecell{90.48\\(88.10)}} & \multicolumn{1}{c|}{\makecell{0.94\\(0.93)}} & 2  \\ \hline
gasless send                                                                         & \multicolumn{1}{c|}{/}                        & /                      & \multicolumn{1}{c|}{92.19}                    & 0.95                   & \multicolumn{1}{c|}{82.35}                    & 0.67                    & \multicolumn{1}{c|}{/}                        & /                      & \multicolumn{1}{c|}{/}       & /    & \multicolumn{1}{c|}{92.19}                    & 0.95                   & \multicolumn{1}{c|}{85.94}                    & 0.91         & \multicolumn{1}{c|}{29.69}                    & 0.26          & \multicolumn{1}{c|}{/}                        & /                      & \multicolumn{1}{c|}{100}                    & \multicolumn{1}{c|}{1}                   & 7  \\ \hline
overflow/underflow                                                                   & \multicolumn{1}{c|}{\makecell[c]{81.20\\(81.20)}}                    & \makecell[c]{0.89\\(0.89)}                   & \multicolumn{1}{c|}{/}                        & /                      & \multicolumn{1}{c|}{\makecell[c]{95.30\\(95.30)}}                    & \makecell[c]{0.97\\(0.97)}                    & \multicolumn{1}{c|}{\makecell[c]{90.27\\(90.27)}}                    & \makecell[c]{0.95\\(0.95)}                   & \multicolumn{1}{c|}{\makecell[c]{98.99\\(98.99)}}   & \makecell[c]{0.99\\(0.99)} & \multicolumn{1}{c|}{/}                    & \makecell[c]{/}                   & \multicolumn{1}{c|}{/}                  & \makecell[c]{/}     & \multicolumn{1}{c|}{\makecell[c]{19.40\\(19.40)}}                    & \multicolumn{1}{c|}{\makecell[c]{0.11\\(0.11)}}                & \multicolumn{1}{c|}{/}                        & /                      & \multicolumn{1}{c|}{\makecell[c]{99.33\\(99.33)}}                    & \multicolumn{1}{c|}{\makecell[c]{0.99\\(0.99)}}                   & 4  \\ \hline
transferMint                                                                         & \multicolumn{1}{c|}{/}                        & /                      & \multicolumn{1}{c|}{/}                        & /                      & \multicolumn{1}{c|}{/}                        & /                       & \multicolumn{1}{c|}{/}                        & /                      & \multicolumn{1}{c|}{/}       & /    & \multicolumn{1}{c|}{/}                        & /           & \multicolumn{1}{c|}{/}                        & /            & \multicolumn{1}{c|}{/}                        & /                      & \multicolumn{1}{c|}{/}                        & /                      & \multicolumn{1}{c|}{100}                      & \multicolumn{1}{c|}{1}                      & 0  \\ \hline
\end{tabular}}
\vspace{-0.1in}
\end{table*}

In this section, we firstly make preparations on the experimental setup, including the types of vulnerabilities, datasets, and representative tools that we choose. 
Then, we report the experimental results and analyze the effectiveness of \ourtool. 
Finally, we verify real-world contracts using \ourtool and demonstrate the exploitable bugs that \ourtool finds.

\subsection{Experimental setup}
\label{subsec:setup}
\textbf{Types of Vulnerabilities.}
First, we introduce the vulnerabilities that \ourtool currently targets.
We divide the 37 types of vulnerabilities in SWC Registry~\cite{swc}, a library consisting of smart contracts' vulnerabilities, into 
three categories: a) vulnerabilities that can be detected through syntax checking, \textit{e.g.}, outdated compiler version. 
b) vulnerabilities that do not have clear consequences, \textit{e.g.}, dangerous delegatecall.
c) vulnerabilities that can cause losses of ethers or tokens.
\ourtool targets the vulnerabilities in category c) as they can cause financial loss and are difficult to detect.
There are 6 types of vulnerabilities that \ourtool currently supports:
1) \textit{transaction order dependency (TOD)}; 2) \textit{timestamp dependency (TD)}; 3) \textit{reentrancy};  4) \textit{gasless send}; 5) \textit{overflow/underflow}; 6) \textit{transferMint} \cite{transfer}.
The relationship between these vulnerabilities and our properties has been mentioned in Section \ref{subsec:relationship}.
We divide the \textit{TOD} vulnerabilities into two groups: \textit{TOD-eth} changing ether balances of accounts, \textit{TOD-token} changing token balances of accounts, since SECURIFY and OYENTE only support the detection of the former.

\textbf{Datasets.}
We use two datasets \cite{dataset} of smart contracts to evaluate \ourtool. 
The first dataset, called \textit{vulnerability dataset}, is used to test the performance of \ourtool in detecting different types of vulnerabilities compared with other automated tools.
We collect 611 smart contracts with vulnerabilities in category c) mentioned above from public dataset of other works \cite{0001LC18}\cite{DBLP:conf/sp/SoLPLO20}\cite{DBLP:conf/issta/KolluriNSHS19}\cite{DBLP:conf/icse/dataset-icse}.
We filter out 6 smart contracts whose codes are incomplete and 56 smart contracts that \ourtool does not support.
We illustrate the number of contracts unsupported by \ourtool in the last column of Table \ref{comp}, and the reasons that \ourtool does not support them will be introduced in Section \ref{sec:limitations}.
Finally we get \textit{vulnerability dataset} with 549 contracts.
The second dataset, called \textit{real-world dataset}, is used to evaluate the effectiveness of \ourtool in detecting real-world smart contracts.
We crawl 46453 Solidity source code files from Etherscan \cite{etherscan}, and then filter the contracts to remove duplicates.
We calculate the similarity of two files using difflib \cite{difflib} package of Python, and considered two contracts as duplicates when their similarity is larger than 90\%.
Finally, we obtained 17648 Solidity files containing 30577 contracts as \textit{real-world dataset}.
We add 11 smart contracts with the vulnerability of \textit{transferMint} from the \textit{real-world dataset} to the \textit{vulnerability dataset}, since the previous datasets have not gathered this type of contracts.

\textbf{Tools.}
We compare \ourtool with the following representative automatic tools: OYENTE \cite{LuuCOSH16}, Mythril\cite{mythril}, SECURIFY (version~2)\cite{TsankovDDGBV18}, ContractFuzzer~\cite{0001LC18}, Osiris~\cite{DBLP:conf/acsac/osiris}, Slith\-er~\cite{DBLP:conf/icse/slither}, SmartCheck~\cite{DBLP:conf/icse/smartcheck}, VERISMART~\cite{DBLP:conf/sp/SoLPLO20}, Manticore~\cite{manticore} and eThor~\cite{DBLP:conf/ccs/SchneidewindGSM20}.




We do not compare \ourtool with ZEUS \cite{DBLP:conf/ndss/KalraGDS18}, another automated verifier, since it is not publicly available.
Besides, we do not compare \ourtool with semi-automated verification frameworks while they need manual input of properties, which require certain expertise and is labor-intensive when evaluating hundreds of contracts. Meanwhile, how to express the properties in different specification languages equivalently becomes a problem and may affect the fairness of comparison. 


\textbf{Experimental Environment.}
We experiment on a server with 2.50GHz CPU, 128G memory and 64-bit Ubuntu 16.04.

\subsection{Statistical analysis}
\label{subsec:statistical}
We first perform statistical analysis on \textit{real-world dataset}.
We manually classify the contract as finance-related or others taking the following parts of contracts into account and try our best to avoid misclassification:
1) Contract names. The usage of some contracts can be shown in their name.  
2) Contract annotations. The annotations of contracts can provide us some information, \textit{e.g.}, the contracts' usage.
3) Inheritance of contracts. The children of token contracts can possibly be token contracts.
4) Contract creation statements. The contracts creating token contracts can possibly be token managing contracts.
5) Ether transfer statements. The contracts transferring ethers are ether-related.
Note that the contracts that are difficult to distinguish their usage are classified as others.


After the above classification, we find 27858 finance-related contracts, including 6307 ether-related contracts (20.63\%), 7661 token-related contracts (25.05\%),  5994 contracts both ether-related and token-related (19.60\%) and 7896 indirect-related contracts in total (25.82\%).
The remaining contracts account for 8.89\%.
Hence, finance-related contracts make up a major portion (91.11\%) of the real-world contracts, which validates the goal of generating properties aiming to protect cryptocurrencies shown in Section \ref{subsec:generating properties}.

During the classification, we find that since the official ERC20 \cite{erc20} standard of Ethereum recommends using variable name $\texttt{balances}$ to denote token balances, most token contracts use names similar to $\texttt{balances}$ to denote token balances.
Besides, there are also token contracts using names similar to $\texttt{ownedTokenCount}$ due to ERC721~\cite{erc721} standard.


To validate our observation and evaluate the effectiveness of our methods to identify token contracts, we perform an evaluation on \textit{real-world dataset}.
We search for contracts with variables of type $\texttt{mapping(address=>uint)}$ that have names similar to $\texttt{balances}$ or $\texttt{ownedTokenCount}$, while the similarity of two names is calculated based on fuzzywuzzy\cite{fuzzywuzzy}, and the thresholds are set to 70, 75, 80, 85 and 90, respectively.
We collect the following data under different thresholds: 1) \textit{TP}: the number of token contracts correctly identified. 2) \textit{FN}: the number of token contracts that are missed. 3) \textit{FP}: the number of contracts misclassified as token contracts. 4) \textit{TN}: the number of contracts that are not token contracts correctly classified. 5) \textit{Accuracy}: $Acc = \frac{TP+TN}{TP+TN+FP+FN}$. 6) \textit{F1}: \textit{F1} $= \frac{2TP}{2TP+FP+FN}$.
\begin{table}[]
\centering 
\caption{The effectiveness of our method for identifying token contracts.}  
\label{token-finding}
\vspace{-0.1in}
\resizebox{2.8in}{!}{
\begin{tabular}{|c|c|c|c|c|c|}
\hline
threshold & 70    & 75    & 80    & 85    & 90    \\ \hline
Acc(\%)   & 98.31 & 98.32 & 98.32 & 98.50 & 98.46 \\ \hline
F1(\%)    & 98.13 & 98.14 & 98.14 & 98.31 & 98.27 \\ \hline
\end{tabular}}
\end{table}
We only show 5) and 6) in Table \ref{token-finding} due to the page limit.
According to Table \ref{token-finding}, our method achieves Accuracy and F1-score higher than 98\% under different thresholds.
We choose 85 as our threshold finally.



\subsection{Comparison}
\label{subsec:comparison}
Unlike other automatic tools, \ourtool detects the effect of the vulnerabilities, \textit{i.e.}, whether causing the financial loss.
To fairly compare \ourtool and other automatic tools, we run the tools on \textit{vulnerability dataset}, and collect two sets of results as shown in TABLE \ref{comp}.
Here we call a contract with a vulnerability as a vulnerable contract, and call a contract with a vulnerability causing financial loss as a finance-vulnerable contract.
We regard the number of contracts correctly recognized as a finance-vulnerable / vulnerable contract as \textit{TP}, and regard the number of contracts correctly recognized as a contract that is not finance-vulnerable / vulnerable as \textit{TN}.
The calculation formulas of accuracy and F1 are mentioned above.
Due to the page limit, we only show the accuracy and \textit{F1} of tools in TABLE~\ref{comp}.
Note that \textit{TOD-eth}, \textit{TOD-token}, \textit{gasless send} and \textit{transferMint} always cause financial loss, thus the two sets of results for them are the same.

Totally, \ourtool outperforms the representative tools that it achieves higher accuracy and F1 values in the detection of vulnerable and finance-vulnerable contracts in \textit{vulnerability dataset}.
Meanwhile, \ourtool is the only one that is able to detect all the types of vulnerabilities in TABLE \ref{comp} among the automated tools mentioned above.
Note that we fail to make ContractFuzzer report any findings.
Though being in contact with the authors, we are unable to fix the issue and both sides eventually give up.
SECURIFY can output alerts for all contracts using timestamp, but is not targeted to detect \textit{TD}, so we do not compare its ability to detect \textit{TD} with \ourtool.

We analyze the reason for the false results produced by different automatic tools shown in Table \ref{comp}.

\textbf{\textit{TOD-eth},\textit{TD}, \textit{gasless send}}: 
The above automatic tools detect these types of vulnerabilities based on their pre-defined patterns and their accuracy depends on the patterns.
On one hand, progressive patterns can result in false negatives. 
For example, SECURIFY decides if a contract is secure against \textit{gasless send} by matching the pattern whether each return value of \texttt{send} is checked. 
However, a contract checks the result of \texttt{send} but does not handle the exception, which evades the detection of SECURIFY.
On the other hand, conservative patterns can lead to false positives.
For example, OYENTE and SECURIFY detect \textit{TOD-eth} according to the pattern that when the transaction orders changes, the recipient of ethers may also change.
A contract returns ethers to their senders and the first sender will be the first receiver.
For this case, both OYENTE and SECURIFY falsely report \textit{TOD-eth} vulnerability.
However, all senders eventually receive ethers, \textit{i.e.}, the result is not changed with the transaction order, whereas our equivalence property holds.
Besides, the tools using symbolic execution, \textit{e.g.}, OYENTE and Mythril, may produce false negatives as they explore a subset of contracts' behaviors.

\textbf{\textit{reentrancy}}: 
EThor defines a property: an internal transaction can only be initiated by the execution of a \texttt{call} instruction, which over-approximates the property that a contract free from \textit{reentrancy} should satisfy.
Therefore, eThor gets more false positives than \ourtool in detection of \textit{reentrancy}.
The reasons for the false reports of the other tools in the detection of \textit{reentrancy} are still inaccurate patterns.

\textbf{\textit{Overflow/underflow}}: 
OYENTE, Mythril, Osiris assume that the values of all the variables are arbitrary and output FPs for this category.
Differently, FASVERIF and VERISMART consider additional constraints of variables, e.g., for the variables whose values are constant, their values should be equal to the initial values.
VERISMART outputs 2 false positives due to its assumption: every function can be accessed.

\ourtool also produces 9 false negatives due to the error of property generation.
Specifically, \ourtool fails to detect 2 contracts with \textit{overflow}.
In these two contracts, the variable \texttt{allowance} may overflow. 
We currently do not design the invariants for this variable. 
So we manually define a new invariant according to the two contracts and \ourtool successfully discovers the vulnerabilities.
\ourtool also misses 3 contracts with \textit{TD} and 4 contracts with \textit{reentrancy}.
These contracts use uncommon variable names to denote token balances.
We manually specify the key variable names and finally find out the missed vulnerabilities.

To compare the efficiency of the above tools, we calculate the average time taken by them to analyze one contract in \textit{vulnerability dataset} as follows:
Slither (2.16 s), SmartCheck (4.93 s), eThor (11.95 s), OYENTE (20.81 s), Mythril (55.00 s), VERISMART (63.45 s), Osiris (73.52 s), SECURIFY (222.99 s), \ourtool (829.61 s).

\subsection{Security analysis of real-world smart contracts}
\label{subsec:realworld}


To evaluate the effectiveness of \ourtool in real-world contracts, we conduct an experiment on randomly-selected 1700 contracts from \textit{real-world dataset}.
FASVERIF reports 15 contracts with vulnerabilities, of which 11 violates the invariant property and 4 violates the equivalence property.
{\color{black}We simulate attacks on these contracts on a private chain of Ethereum and check the exploitability of the vulnerabilities in them with on-chain states.
We eventually find that among the 15 contracts, there is one contract destroyed and another contract with non-exploitable vulnerabilities, whereas the vulnerabilities in the remaining 13 contracts are exploitable.}
Among the exploitable bugs, there are 10 of \textit{transferMint} vulnerabilities, which cannot be detected by existing automatic tools as shown in Table \ref{comp}. 
Considering the proportion of vulnerable contracts found and the vulnerabilities in them causing financial losses, we hope our work can raise security concerns.
{\color{black}
The unexploitable contract is a crowdsale contract selling tokens.
The contract specifies that users who buy tokens within a certain time frame can get bonuses.
However, the bonuses are no longer available after September 7, 2017, thus the vulnerability in this contract is not exploitable but misclassified due to the incompleteness of \ourtool.}

\textbf{Ethical Considerations.} 
{\color{black}
As Ethereum accounts are anonymous, we attempt to identify the owners of the vulnerable contracts by checking the contract code, the addresses of the contract creators, and 685 bug bounty programs~\cite{bug-bounty}. We also use a chat software \cite{blockscan} to send messages to the addresses of the contract creators but do not receive replies after waiting for 40 days.
To avoid the abuse of these vulnerabilities, we do not provide the addresses of the vulnerable contracts or open-source \ourtool. 
Instead, we present a simplified version of the destroyed contract and provide a website with an interface to use the restricted version of \ourtool~\cite{fasverif_online}.
Also, our tool is available upon request for researchers with validated identities for academic purposes.}



\textbf{Example.}
The contract Ex1  shown in Fig. \ref{fig:example2} is a contract with an exploitable bug.
Note that this contract is simplified. 
In practice there are conditional statements to avoid numerical operations causing \textit{overflow/underflow}.
\ourtool recognizes Ex1 as a token contract and chooses the invariant $\kwt{token\_inv}$.
Specifically, assume that the sum of token balances of $\texttt{to}$ and $\texttt{msg.sender}$ before the transaction, \textit{i.e.}, $\texttt{balances}_0[\texttt{to}]+\texttt{balances}_0[\texttt{msg.sender}]$ is value $C_1$, \ourtool checks whether the sum after the transaction, i.e., $\texttt{balances}_1[\texttt{to}]+\texttt{balances}_1[\texttt{msg.sender}]$ can be different from $C_1$.
In the verification, \ourtool finds an execution that reaches $\kwf{End}()$ and has a constraint $\kwf{Pred\_eq}(\texttt{to},\texttt{msg.sender})$.  
According to the constraint, \texttt{msg.sender} in all expressions are replaced with \texttt{to}.
Moreover, since the rules $\kwt{var\_declare}$ and $\kwt{var\_assign}$ are in the execution, $\texttt{balances}_1[\texttt{to}]$ is replaced by $\texttt{balances}_1[\texttt{to}]+\texttt{value}$. 
Hence, the constraints $\texttt{balances}_0[\texttt{to}]+\texttt{balances}_0[\texttt{to}] = C_1$, $\texttt{balances}_0[\texttt{to}]+\texttt{value}+\texttt{balances}_0[\texttt{to}]+\texttt{value} \neq C_1$ are added to Z3.
As a result, the constraints are satisfied with $\texttt{value} \neq 0$, which indicates the invariant $\kwt{token\_inv}$ is broken and 
\ourtool decides this contract as vulnerable.
Comparatively, SECURIFY, OYENTE and Mythril fail to detect this type of vulnerability with unknown patterns.
VERISMART cannot detect this vulnerability that does not cause \textit{overflow/underflow}.

{\color{black}
We set the verification timeout as 5 hours but 12 contracts cannot be verified within that time.
The remaining contracts take an average of 2 hours and 40 minutes to verify.
During the verification, we manually set variable names for 14 contracts in which \ourtool cannot find key variables.
}

{\color{black}
Besides, we compare TeEther  \cite{DBLP:conf/uss/KruppR18} with \ourtool on the 1700 real-world contracts. 
TeEther aims to reveal critical parts of code that can be abused to get ethers and assumes that if the attacker as an external account can obtain ethers from a contract, the contract is vulnerable. 
TeEther considers two contracts vulnerable, while \ourtool considers them non-vulnerable.
For the first contract, the attacker can destroy it whereby get ethers, but \ourtool cannot detect this vulnerability which is not covered by our properties. 
For the second contract, the attacker cannot disrupt its execution and can only get ethers in normal ways. 
\ourtool considers this contract safe since no ether or tokens will be lost unexpectedly.}


\section{RELATED WORK} 
\label{sec:related work}

\subsection{Automated bug-finding tools for contracts}
\label{subsec:autotool}
Automated bug-finding tools fall into two categories: tools using symbolic execution and tools using other technologies.
Among the tools using symbolic execution, 
OYENTE~\cite{LuuCOSH16} executes EVM bytecode symbolically and checks for vulnerability patterns in execution traces.
Mythril \cite{mythril} uses taint analysis and symbolic execution to find vulnerability patterns.
Osiris \cite{DBLP:conf/acsac/TorresSS18} is specially designed for detecting arithmetic bugs. 
In the tools using other technologies,
ContractFuzzer \cite{0001LC18} instruments EVM to search for executions that match patterns.
SmartCheck~\cite{DBLP:conf/icse/TikhomirovVITMA18} searches for specific patterns in the XML syntax trees of contracts.
VERISMART~\cite{DBLP:conf/sp/SoLPLO20} generates and checks invariants to find the overflow in smart contracts.

Compared with the above tools, there are differences between \ourtool and them:
1) \ourtool provides a proof of our translation and implements the verification using formal tools.
2) The vulnerabilities detected by these tools are in a particular category or dependent on pre-defined known patterns. Comparatively, \ourtool generates security properties on demand and covers various types of vulnerabilities.

\subsection{Verification frameworks for contracts}
\label{subsec:extool}
Verification frameworks formally verify the properties of contracts. 
SMARTPULSE \cite{stephens2021smartpulse} is used to check given temporal properties of smart contracts. 
Similarly, VerX \cite{PermenevDTDV20} performs a semi-automatic verification of temporal safety specifications. 
ConCert \cite{DBLP:conf/cpp/Concert} is a proof framework for functional smart contract languages. 
These tools can verify functional properties of contracts, which are not currently supported by \ourtool, 
but need human involvement to produce results.
Differently, \ourtool can generate and verify finance-related properties for contracts automatically.
Besides, according to their literature, the above tools cannot verify our equivalence properties.
{\color{black}
CFF \cite{DBLP:journals/corr/clockwork} is a formal verification framework for reasoning about the economic security properties of DeFi contracts.
CFF proposes \textit{extractable value} (EV), which is similar to our equivalence property.
Specifically, the equivalence property is used to check whether an adversary can obtain profits through operations such as reordering transactions, while EV is used to quantify the profits an adversary can obtain.
However, CFF takes into account more financial features, e.g., changes in exchange rates, which are not considered in \ourtool.

\subsection{Automated verifiers for contracts}
\label{subsec:hybridtool}
To the best of our knowledge, there are three automated verifiers for smart contracts: eThor \cite{DBLP:conf/ccs/SchneidewindGSM20}, SECURIFY~\cite{TsankovDDGBV18} and ZEUS \cite{DBLP:conf/ndss/KalraGDS18}.
eThor is a sound static analyzer that abstracts the semantics of EVM bytecode into Horn clauses. 
As the literature of eThor states, it can only detect \textit{reentrancy} or check assertions automatically.
In addition, eThor cannot verify our equivalence property.
SECURIFY detects specified patterns extracted from control flows of contracts.
SECURIFY cannot solve numerical constraints and thus cannot detect \textit{overflow} and \textit{transferMint}.
ZEUS transforms smart contracts into LLVM bitcode and uses existing symbolic model checkers. 
The transformations are claimed to be semantics preserving which however are refuted by \cite{DBLP:conf/ccs/SchneidewindGSM20}.
Besides, ZEUS uses pre-defined policies based on known patterns.
Thus, ZEUS may miss unknown vulnerabilities or variants of known vulnerabilities, \textit{e.g.},  \textit{transferMint} supported by \ourtool. 




\subsection{Generating properties for other verifiers}
\label{subsec:combine}
We investigate whether our properties can be used by other verifiers. 
We study the following verifiers that can verify properties automatically: ZEUS, VerX, SECURIFY, eThor and SMARTPULSE.
Among them, ZEUS is not publicly available, and VerX only provides a website that is no longer maintained.
SECURIFY cannot solve numerical constraints and thus cannot verify our properties. 
eThor analyzes the bytecode of contracts, ignoring semantic information like variable names, so it is non-trivial to convert our properties, which require variable names as part of them, into a form that eThor can verify.
Besides, we fail to make SMARTPULSE \cite{smartpulse-codes} work by following the instructions on its webpage.
\section{LIMITATIONS AND DISCUSSION} 
\label{sec:limitations}



\textbf{Limitations.} We summarize limitations of \ourtool as follows:
{\color{black}

\textbf{1)} The average time to analyze a contract using \ourtool is longer than the one using other automated tools. {\color{black} According to the experiment on \textit{vulnerability dataset}, \ourtool take an average of 829.61 seconds to analyze a contract, while the most time-consuming one of the other automated tools take an average of 222.99 seconds.}

\textbf{2)} \ourtool currently cannot detect vulnerabilities that do not cause financial losses, \textit{e.g.}, the overflow vulnerabilities that lead to DoS, which is supported by some automated tools.

\textbf{3)} {\color{black}
\ourtool can only support vulnerabilities that are covered by our properties under our assumptions. 
Specifically, we do not consider the exchange rates and focus only on vulnerabilities that result in abnormal token amounts or that allow attackers to gain differently with different transaction orders or block timestamps.
Thus, the economical security property (considering the exchange rates) proposed in \cite{DBLP:journals/corr/clockwork}, the airdrop hunting and self-destruction vulnerabilities (not covered by our properties) are unsupported by \ourtool.}

\textbf{4)} Solidity language is not fully supported.
Due to the Turing-completeness of Solidity \cite{solidity}, it is challenging to fully support its features.
Thus, we add the following restrictions to define a fragment of Solidity supported by \ourtool:
\begin{itemize}
\setlength{\itemsep}{0pt}
\setlength{\parsep}{0pt}
\setlength{\parskip}{0pt}

\item Loops. \ourtool supports unrolling of bounded loops, i.e., the execution times of loops are constant, where the loop statement is replaced by equivalent statements without loops.
The unbounded loops, whose execution times cannot be determined statically, are not supported.
{\color{black} We find 2988 contracts (9.77\%) with unbounded loops in \textit{real-world dataset} and omit them in our analysis.}

\item Revert. \ourtool verifies the properties under the assumption that all transactions can be executed to completion. For transactions where a revert occurs, we assume that the executions of the transactions do not result in the modification of any variables. 

\item Contract creation. \ourtool supports the case of static creation of contracts in the constructors.
To trade off efficiency and coverage for Solidity features, we omit the contracts creating contracts via function calls.
{\color{black} However, we only find 4.67\% (1428/30577) of contracts in \textit{real-world dataset} that create contracts via function calls.} 

\item Function call. Given a set of contracts with Solidity codes, \ourtool requires them not to invoke functions in contracts outside the set whose codes are unknown.
\ourtool can only analyze codes given beforehand, which is an inherent defect of static analyzers~\cite{DBLP:journals/tissec/OnwuzurikeMACRS19}.
{\color{black} We also count the contracts calling unknown codes in \textit{real-world dataset} and finally find 1754 contracts (5.74\%).}

\end{itemize}
{\color{black}
In summary, even with the above restrictions, \ourtool can still cover 82.41\% (25197/30577) real-world contracts. 

\textbf{5)} \ourtool may get incorrect key variables or invariants.
Though our method of identifying key variables achieves accuracy higher than 98\% in \textit{real-world dataset}, it still may misidentify some key variables.
Additionally, the correctness of the generated invariants is also not guaranteed.
As a result, incorrect variables or invariants can lead to legitimate contracts being ruled out.
Thus, we offer users the option to manually set invariants and key variables instead.}

{\color{black}
\textbf{6)} The incompleteness of \ourtool may lead to misclassifying safe contracts as vulnerable, \textit{e.g.}, the online contract that is unexploitable mentioned in Section \ref{subsec:realworld}.  }

{\color{black}
\textbf{Discussion.} We choose Tamarin due to its well-supported modeling of concurrent systems~\cite{DBLP:phd/basesearch/Meier13}.
Using Tamarin gives us the flexibility to add or modify rules in our models to verify hyperproperties~\cite{DBLP:conf/cav/hyper1} like the equivalence properties requiring simultaneous reasoning of multiple executions.
In comparison, using other tools may introduce more difficulties when modeling and verifying hyperproperties~\cite{DBLP:conf/cav/hyper1}\cite{DBLP:conf/uss/hyper2}.
However, our extensions to Tamarin are specific to finance-related properties and some features of Tamarin are not used.
It is interesting to further extend Tamarin in the future.}

\section{CONCLUSION} 
\label{sec:conclusion}

We propose and implement \ourtool, which can automatically generate finance-related properties and the corresponding models for smart contracts, and verify the properties automatically.
\ourtool outperforms other automatic tools in detecting finance-related vulnerabilities in accuracy and coverage of types of vulnerabilities, and it finds 13 contracts with exploitable bugs, including 10 contracts evading the detection of other automated tools to the best of our knowledge.

\bibliographystyle{ACM-Reference-Format}
\bibliography{main}
\appendix
\section{APPENDIX} 
\label{sec:appendix}

\subsection{Complete definition of function $\kwc{R}$ and rules mentioned in Section \ref{subsec:translation}}
\label{subsec:appendix1}

\textbf{Definition of function $\kwc{R}$}

The function $\kwc{R}$ shown in Fig. \ref{fig:translation_a1} translates five categories of statements into rules.
Since the translation of assignment statements and return statements are introduced previously, we introduce the translation of statements in the remaining three categories here:

\begin{figure*}[!]
\centering
\setlength{\abovedisplayskip}{0pt} 
\setlength{\belowdisplayskip}{0pt}
\begin{small}
\begin{align*}
&\kwc{R}(\texttt{function}\ f(\texttt{d})\{\kwp{stmt}\},\varnothing,\omega_0) = \kwc{R}(\kwp{stmt},1,\llbracket\left\langle \sigma_a(f),\kwf{T_c},\kwf{R_o},\kwf{E_n} \right\rangle ,\left\langle \sigma_v(c_b),\kwf{T_v},\kwf{R_l},\kwf{E_n} \right\rangle, \left\langle \sigma_v(calltype),\kwf{T_v},\kwf{R_l},\kwf{E_n} \right\rangle, \left\langle \sigma_v(\kwf{depth}),\kwf{T_v},\kwf{R_l},\kwf{E_n} \right\rangle\rrbracket\\&\qquad\qquad\qquad\qquad\qquad\qquad\ \  \cdot \omega_0 \cdot  seq(\texttt{d}))  \cup \{ [\kwf{Fr}(\sigma_v(c_b)),\kwf{FR}(\sigma(seq(\texttt{d})))]-[]\rightarrow  [\kwf{Call_e}( \llbracket\omega_0[1],\sigma_a(f),\sigma_v(c_b)\rrbracket\cdot \sigma(seq(\texttt{d})))],\quad (\kwt{ext\_call}) \\&\qquad\qquad\quad\qquad\qquad\quad\qquad\qquad\quad [\kwf{Call_e}( \llbracket\omega_0[1],\sigma_a(f), \sigma_v(c_b)\rrbracket \cdot\sigma(seq(\texttt{d}))),\kwf{Evar}(e(\omega_0)),\kwf{Gvar}(\llbracket \omega_0[1] \rrbracket \cdot g(\omega_0)\backslash e(\omega_0))]-[]\rightarrow\\&\qquad\qquad\quad\qquad\qquad\quad\qquad\qquad\quad\qquad\qquad\quad\ \ \ [\kwf{Var}_1( \llbracket\sigma_a(f),\sigma_v(c_b),\kwf{EXT},0\rrbracket\cdot\sigma(\omega_0)\cdot \sigma(seq(\texttt{d})))],\qquad\qquad\qquad\qquad\qquad\ (\kwt{recv\_ext}) 
\\&\qquad\qquad\quad\qquad\qquad\quad\qquad\qquad\quad [\kwf{Call_{in}}(\llbracket\omega_0[1],\sigma_a(f),\sigma_v(c_b),\sigma_v(\kwf{depth})\rrbracket\cdot \sigma(seq(\texttt{d}))),\kwf{Evar}(e(\omega_0)),\kwf{Gvar}(\llbracket \omega_0[1] \rrbracket \cdot g(\omega_0)\backslash e(\omega_0))]-[]\rightarrow\\&\qquad\qquad\quad\qquad\qquad\quad\qquad\qquad\quad\qquad\qquad\quad\ \ \ [\kwf{Var}_1( \llbracket\sigma_a(f),\sigma_v(c_b),\kwf{IN},\sigma_v(\kwf{depth})\rrbracket\cdot\sigma(\omega_0)\cdot \sigma(seq(\texttt{d})))]\}\qquad\qquad\qquad\qquad\ (\kwt{recv\_in})  \\
&\kwc{R}(v_1\leftarrow v_2;\kwp{stmt},i,\omega) = \kwc{R}(\kwp{stmt},i\circ 1,\omega) \cup \{ [\kwf{Var}_i(\sigma(\omega))]-[]\rightarrow [\kwf{Var}_{i\circ 1}(\sigma(\omega)|\frac{\sigma_v(v_1)}{\sigma_v(v_2)})]\}\qquad\qquad\quad\qquad\qquad\quad\qquad\qquad\quad (\kwt{var\_assign})  \\
&\kwc{R}(\tau\ v_1\leftarrow v_2;\kwp{stmt},i,\omega) = \kwc{R}(\kwp{stmt},i\circ 1,\omega\cdot \llbracket\left\langle \sigma_v(v_1),\kwf{T_v},\kwf{R_l},\kwf{E_n} \right\rangle\rrbracket ) \cup \{[\kwf{Var}_i(\sigma(\omega))]-[]\rightarrow [\kwf{Var}_{i\circ 1}(\sigma(\omega)\cdot \llbracket\sigma_v(v_2)\rrbracket)]\}\qquad\qquad(\kwt{var\_declare})\\
&\kwc{R}(\texttt{if}\ e_{b}\ \texttt{then}\ \kwp{stmt}_1\ \texttt{else}\ \kwp{stmt}_2;\kwp{stmt}_3,i,\omega) = \kwc{R}(\kwp{stmt}_1;\kwp{stmt}_3,i\circ 1,\omega) \cup \kwc{R}(\kwp{stmt}_2;\kwp{stmt}_3,i\circ 2,\omega) \cup \\&\qquad\qquad\qquad\qquad\qquad\qquad\qquad\qquad\qquad\quad \{ [\kwf{Var}_i(\sigma(\omega))] -[\theta_e(e_b)]\rightarrow [\kwf{Var}_{i\circ 1}(\sigma(\omega))], \qquad\qquad\qquad\qquad\qquad\qquad \qquad  \qquad  (\kwt{if\_true})\\
& \qquad\qquad\qquad\qquad\qquad\qquad\qquad\qquad\qquad\quad\ \ [\kwf{Var}_i(\sigma(\omega))] -[\theta_{ne}(e_b)]\rightarrow [\kwf{Var}_{i\circ 2}(\sigma(\omega))]\} \qquad\qquad\qquad\qquad\qquad\qquad \qquad \quad\ \   (\kwt{if\_false})\\
&\kwc{R}(\texttt{require}\ e_{b}\;\kwp{stmt}_1,i,\omega) = \kwc{R}(\kwp{stmt}_1,i\circ 1,\omega) \cup \{ [\kwf{Var}_i(\sigma(\omega))] -[\theta_e(e_b)]\rightarrow [\kwf{Var}_{i\circ 1}(\sigma(\omega))], \qquad\qquad\qquad\qquad\qquad\qquad   (\kwt{require\_true})\\
& \qquad\qquad\qquad\qquad\qquad\qquad\qquad\qquad\quad\ \ [\kwf{Var}_i(\sigma(\omega))] -[\theta_{ne}(e_b)]\rightarrow []\} \qquad\qquad\qquad\qquad\qquad\qquad \qquad \qquad \qquad \qquad (\kwt{require\_false})\\
&\kwc{R}(\texttt{return},i,\omega) = \{ [\kwf{Var}_i(\sigma(\omega))]  -[\kwf{Pred\_eq}(\omega[3],\kwf{EXT})]  \rightarrow  [\kwf{Gvar}(\llbracket \omega[5] \rrbracket \cdot g(\omega)\backslash e(\omega)),\kwf{Evar}(e(\omega))],  \qquad\qquad\qquad\qquad\qquad\quad\ \ (\kwt{ret\_ext}) \\
&\qquad \qquad \qquad \qquad \quad \  [\kwf{Var}_i(\sigma(\omega))]  -[\kwf{Pred\_eq}(\omega[3],\kwf{IN})]  \rightarrow  [\kwf{Return}(\llbracket \omega[5],\omega[1],\omega[2],\omega[4]\rrbracket,\kwf{Gvar}(\llbracket \omega[5] \rrbracket \cdot g(\omega)\backslash e(\omega)),\kwf{Evar}(e(\omega)) )]\} \\& \qquad\qquad\qquad\qquad\qquad\qquad\qquad\qquad\qquad\qquad\qquad\qquad\qquad\qquad\qquad\qquad\qquad\qquad\qquad\qquad\qquad\qquad\qquad\qquad\qquad\qquad\quad\   (\kwt{ret\_in}) \\
&\kwc{R}(x(c_x).f_x(p);\kwp{stmt},i,\omega) = \kwc{R}(\kwp{stmt},i\circ 1 \circ 1,\omega) \cup \{ [\kwf{Var}_i(\sigma(\omega))] -[] \rightarrow  [\kwf{Call_{in}}( \llbracket\sigma_a(c_x),\sigma_a(f_x),\omega[5],\omega[4] \oplus 1\rrbracket\cdot \sigma_s(p)), \kwf{Var}_{i\circ 1}(\sigma(\omega)\backslash g(\omega)),\\&\qquad\qquad\qquad\qquad\qquad\qquad\qquad\qquad\qquad\qquad\qquad\qquad\qquad\qquad\qquad\qquad\qquad\qquad\qquad\qquad \kwf{Gvar}(\llbracket \omega[5] \rrbracket \cdot g(\omega)\backslash e(\omega)),\kwf{Evar}(e(\omega))],\\& \qquad\qquad\qquad\qquad\qquad\qquad\qquad\qquad\qquad\qquad\qquad\qquad\qquad\qquad\qquad\qquad\qquad\qquad\qquad\qquad\qquad\qquad\qquad\qquad\qquad\qquad\ \   (\kwt{in\_call}) \\
&\qquad \qquad \qquad \qquad \qquad \qquad [\kwf{Return}(\llbracket \sigma_a(c_x),\sigma_a(f_x),\omega[5],\sigma_v(\kwf{r\_depth})\rrbracket ),\kwf{Var}_{i\circ 1}(\sigma(\omega)\backslash g(\omega)),\kwf{Gvar}(\llbracket\omega[5] \rrbracket \cdot g(\omega)\backslash e(\omega)), \kwf{Evar}(e(\omega))]\\& \qquad \qquad \qquad \qquad \qquad \qquad \qquad \qquad \qquad \qquad \qquad \qquad \qquad \qquad \qquad \qquad \qquad -[\kwf{Pred\_eq}(\sigma_v(\kwf{r\_depth}),\omega[4] \oplus 1)] \rightarrow [\kwf{Var}_{i\circ 1 \circ 1}(\sigma(\omega))]\} \\&\qquad\qquad\qquad\qquad\qquad\qquad\qquad\qquad\qquad\qquad\qquad\qquad\qquad\qquad\qquad\qquad\qquad\qquad\qquad\qquad\qquad\qquad\qquad\qquad\qquad\qquad    (\kwt{recv\_ret}) \\
&\kwc{R}(c_x.\texttt{transfer}(v_1);\kwp{stmt},i,\omega) = \kwc{R}(\kwp{stmt},i\circ 1,\omega)   \cup\\&\ \ \{ [\kwf{Var}_i(\sigma(\omega))] - []\rightarrow  [\kwf{Var}_{i\circ 1}(\sigma(\omega)|\frac{\sigma_v(c_x)}{\sigma_v(c_x)\oplus \sigma_v(v_1)}|\frac{\sigma_v(c)}{\sigma_v(c)\ominus \sigma_v(v_1)})],\qquad\qquad\qquad\qquad\qquad\qquad\qquad\qquad\qquad\qquad\quad (\kwt{transfer\_succ})  \\
&\ \  [\kwf{Var}_i(\sigma(\omega))] - []\rightarrow [],\ \ \ \ \qquad \qquad \qquad \qquad \qquad \qquad \qquad \qquad\qquad \qquad\qquad\qquad\qquad\qquad \qquad\qquad\qquad\qquad\qquad\quad\ \ (\kwt{transfer\_fail})  \\
&\kwc{R}(c_x.\texttt{send}(v_1);\kwp{stmt},i,\omega) = \kwc{R}(\kwp{stmt},i\circ 1,\omega) \cup \kwc{R}(\kwp{stmt},i\circ 2,\omega)  \cup\\&\ \ \{ [\kwf{Var}_i(\sigma(\omega))] - []\rightarrow  [\kwf{Var}_{i\circ 1}(\sigma(\omega)|\frac{\sigma_v(c_x)}{\sigma_v(c_x)\oplus \sigma_v(v_1)}|\frac{\sigma_v(c)}{\sigma_v(c)\ominus \sigma_v(v_1)})],\qquad\qquad\qquad\qquad\qquad\qquad\qquad\qquad\qquad\qquad\qquad\quad (\kwt{send\_succ})  \\
&\ \  [\kwf{Var}_i(\sigma(\omega))] - []\rightarrow [\kwf{Var}_{i\circ 2}(\sigma(\omega))],\ \ \ \ \qquad \qquad \qquad \qquad \qquad \qquad \qquad \qquad\qquad \qquad\qquad\qquad\qquad\qquad \qquad\qquad\qquad\qquad (\kwt{send\_fail})  \\
&\kwc{R}(c_x.\texttt{call}().\texttt{value}(v_1);\kwp{stmt},i,\omega) = \kwc{R}(\kwp{stmt},i\circ 1,\omega) \cup \kwc{R}(\kwp{stmt},i\circ 2,\omega) \cup \kwc{R}(\kwp{stmt},i\circ 3\circ 1,\omega) \cup\\&\ \ \{ [\kwf{Var}_i(\sigma(\omega))] - []\rightarrow  [\kwf{Var}_{i\circ 1}(\sigma(\omega)|\frac{\sigma_v(c_x)}{\sigma_v(c_x)\oplus \sigma_v(v_1)}|\frac{\sigma_v(c)}{\sigma_v(c)\ominus \sigma_v(v_1)})],\qquad\qquad\qquad\qquad\qquad\qquad\qquad\qquad\qquad\qquad\quad\qquad  (\kwt{ether\_succ})  \\
&\ \  [\kwf{Var}_i(\sigma(\omega))] - []\rightarrow [\kwf{Var}_{i\circ 2}(\sigma(\omega))],\ \ \ \  \qquad \qquad \qquad \qquad \qquad \qquad \qquad\qquad \qquad\qquad\qquad\qquad\qquad \qquad\qquad\qquad\qquad\qquad (\kwt{ether\_fail})  \\
&\ \ [\kwf{Var}_i(\sigma(\omega))] - []\rightarrow [\kwf{Var}_{i\circ 3}(l(\omega)), \kwf{Fallback}(\llbracket\omega[5],\omega[1]\rrbracket), \kwf{Gvar}(\llbracket \omega[5] \rrbracket \cdot g(\omega)\backslash e(\omega)),\kwf{Evar}(e(\omega)|\frac{\sigma_v(c_x)}{\sigma_v(c_x)\oplus \sigma_v(v_1)}|\frac{\sigma_v(c)}{\sigma_v(c)\ominus \sigma_v(v_1)})],(\kwt{fb\_call}) \\
&\ \ [\kwf{ReturnFallback}(\llbracket\omega[5],\omega[1]\rrbracket ), \kwf{Gvar}(\llbracket \omega[5] \rrbracket \cdot g(\omega)\backslash e(\omega)),\kwf{Evar}(e(\omega)), \kwf{Var}_{i\circ 3}(l(\omega))] -[]\rightarrow[\kwf{Var}_{i\circ 3\circ 1}(\sigma(\omega))]\} \qquad \qquad \qquad\quad(\kwt{recv\_fb\_ret})\\
\end{align*}
\end{small}
\vspace{-0.35in}
\caption{Complete definition of function $\kwc{R}$.}
\label{fig:translation_a1}
\end{figure*}

\begin{figure*}[!]
\centering
\setlength{\abovedisplayskip}{0pt} 
\setlength{\belowdisplayskip}{0pt}
\begin{small}
\begin{align*}
&\kwc{R'}(\texttt{function}\ f(\texttt{d})\{\kwp{stmt}\},\varnothing,\omega_0,\varnothing) = \kwc{R'}(\kwp{stmt},1,\llbracket\left\langle \sigma_a(f),\kwf{T_c},\kwf{R_o},\kwf{E_n} \right\rangle ,\left\langle \sigma_v(c_b),\kwf{T_v},\kwf{R_l},\kwf{E_n} \right\rangle, \left\langle \sigma_v(calltype),\kwf{T_v},\kwf{R_l},\kwf{E_n} \right\rangle,\\&\qquad\qquad\qquad  \left\langle \sigma_v(\kwf{depth}),\kwf{T_v},\kwf{R_l},\kwf{E_n} \right\rangle\rrbracket\cdot \omega_0 \cdot  seq(\texttt{d}),\kwf{A}) \cup \kwc{R'}(\kwp{stmt},1,\llbracket\left\langle \sigma_a(f),\kwf{T_c},\kwf{R_o},\kwf{E_n} \right\rangle ,\left\langle \sigma_v(c_b),\kwf{T_v},\kwf{R_l},\kwf{E_n} \right\rangle, \left\langle \sigma_v(calltype),\kwf{T_v},\kwf{R_l},\kwf{E_n} \right\rangle, \\&\qquad\qquad\qquad \left\langle \sigma_v(\kwf{depth}),\kwf{T_v},\kwf{R_l},\kwf{E_n} \right\rangle\rrbracket\cdot \omega_0 \cdot  seq(\texttt{d}),\kwf{B})  \cup \{ [\kwf{Fr}(\sigma_v(c_b)),\kwf{FR}(\sigma(seq(\texttt{d})))]-[]\rightarrow  [\kwf{Call_{Ae}}( \llbracket\omega_0[1],\sigma_a(f),\sigma_v(c_b)\rrbracket\cdot \sigma(seq(\texttt{d}))),\\&\qquad\qquad\qquad\kwf{Call_{Be}}( \llbracket\omega_0[1],\sigma_a(f),\sigma_v(c_b)\rrbracket\cdot \sigma(seq(\texttt{d})))]\qquad\qquad\qquad\qquad\qquad\qquad\qquad\qquad\qquad\qquad\qquad\qquad\qquad\ \ \ \ (\kwt{ext\_call\_AB}) 
\\& [\kwf{Call_{Ae}}( \llbracket\omega_0[1],\sigma_a(f), \sigma_v(c_b)\rrbracket \cdot\sigma(seq(\texttt{d}))),\kwf{Evar_A}(e(\omega_0)),\kwf{Gvar_A}(\llbracket \omega_0[1] \rrbracket \cdot g(\omega_0)\backslash e(\omega_0))]-[\kwf{Exc_A}(\sigma_v(c_b)),\sigma_a(f))]\rightarrow \\& \qquad\qquad\qquad\qquad\qquad\qquad\qquad\qquad\qquad\qquad\qquad\qquad [\kwf{Var_{A1}}( \llbracket\sigma_a(f),\sigma_v(c_b),\kwf{EXT}\rrbracket\cdot\sigma(\omega_0)\cdot \sigma(seq(\texttt{d})))]\qquad\qquad\qquad\   (\kwt{recv\_ext\_A}) 
\\& [\kwf{Call_{Be}}( \llbracket\omega_0[1],\sigma_a(f), \sigma_v(c_b)\rrbracket \cdot\sigma(seq(\texttt{d}))),\kwf{Evar_B}(e(\omega_0)),\kwf{Gvar_B}(\llbracket \omega_0[1] \rrbracket \cdot g(\omega_0)\backslash e(\omega_0))]-[\kwf{Exc_B}(\sigma_v(c_b)),\sigma_a(f))]\rightarrow\\& \qquad\qquad\qquad\qquad\qquad\qquad\qquad\qquad\qquad\qquad [\kwf{Var_{B1}}( \llbracket\sigma_a(f),\sigma_v(c_b),\kwf{EXT}\rrbracket\cdot\sigma(\omega_0)\cdot \sigma(seq(\texttt{d})))] \qquad\qquad\qquad\qquad\qquad\   (\kwt{recv\_ext\_B}) 
\\&[\kwf{Call_{Ain}}(\llbracket\omega_0[1],\sigma_a(f),\sigma_v(c_b),\sigma_v(\kwf{depth})\rrbracket\cdot \sigma(seq(\texttt{d}))),\kwf{Evar_A}(e(\omega_0)),\kwf{Gvar_A}\llbracket \omega_0[1] \rrbracket \cdot g(\omega_0)\backslash e(\omega_0))]-[]\rightarrow\\&\qquad\qquad\quad\qquad\qquad\quad\qquad\qquad\quad\qquad\qquad\quad\ \ \ \ [\kwf{Var_{A1}}( \llbracket\sigma_a(f),\sigma_v(c_b),\kwf{IN},\sigma_v(\kwf{depth})\rrbracket\cdot\sigma(\omega_0)\cdot \sigma(seq(\texttt{d})))]\}\qquad\qquad\quad \ \ (\kwt{recv\_in\_A}) 
\\&[\kwf{Call_{Bin}}(\llbracket\omega_0[1],\sigma_a(f),\sigma_v(c_b),\sigma_v(\kwf{depth})\rrbracket\cdot \sigma(seq(\texttt{d}))),\kwf{Evar_B}(e(\omega_0)),\kwf{Gvar_B}\llbracket \omega_0[1] \rrbracket \cdot g(\omega_0)\backslash e(\omega_0))]-[]\rightarrow\\&\qquad\qquad\quad\qquad\qquad\quad\qquad\qquad\quad\qquad\qquad\quad\ \ \ \ [\kwf{Var_{B1}}( \llbracket\sigma_a(f),\sigma_v(c_b),\kwf{IN},\sigma_v(\kwf{depth})\rrbracket\cdot\sigma(\omega_0)\cdot \sigma(seq(\texttt{d})))]\}\qquad\qquad\quad \ \ (\kwt{recv\_in\_B})  \\
&\kwc{R'}(v_1\leftarrow v_2;\kwp{stmt},i,\omega,\kwf{A}) = \kwc{R'}(\kwp{stmt},i\circ 1,\omega, \kwf{A}) \cup \{ [\kwf{Var}_{\kwf{A} \circ i}(\sigma(\omega))]-[]\rightarrow [\kwf{Var}_{\kwf{A}\circ i\circ 1}(\sigma(\omega)|\frac{\sigma_v(v_1)}{\sigma_v(v_2)})]\}\quad\qquad\qquad\quad\qquad\ (\kwt{var\_assign\_A})  \\
&\kwc{R'}(\tau\ v_1\leftarrow v_2;\kwp{stmt},i,\omega,\kwf{A}) = \kwc{R'}(\kwp{stmt},i\circ 1,\omega\cdot \llbracket\left\langle \sigma_v(v_1),\kwf{T_v},\kwf{R_l},\kwf{E_n} \right\rangle\rrbracket,\kwf{A} ) \cup \{[\kwf{Var}_{\kwf{A} \circ i}(\sigma(\omega))]-[]\rightarrow [\kwf{Var}_{\kwf{A}\circ i\circ 1}(\sigma(\omega)\cdot \llbracket\sigma_v(v_2)\rrbracket)]\}\\& \qquad\qquad\qquad\qquad\qquad\qquad\qquad\qquad\qquad\qquad\qquad\qquad\qquad\qquad\qquad\qquad\qquad\qquad\qquad\qquad\qquad\qquad\qquad\qquad\quad  (\kwt{var\_declare\_A})\\
&\kwc{R'}(\texttt{if}\ e_{b}\ \texttt{then}\ \kwp{stmt}_1\ \texttt{else}\ \kwp{stmt}_2;\kwp{stmt}_3,i,\omega,\kwf{A}) = \kwc{R'}(\kwp{stmt}_1;\kwp{stmt}_3,i\circ 1,\omega,\kwf{A}) \cup \kwc{R'}(\kwp{stmt}_2;\kwp{stmt}_3,i\circ 2,\omega,\kwf{A}) \cup \\&\qquad\qquad\qquad\qquad\qquad\qquad\qquad\qquad\qquad\quad \{ [\kwf{Var}_{\kwf{A}\circ i}(\sigma(\omega))] -[\theta_e(e_b)]\rightarrow [\kwf{Var}_{\kwf{A}\circ i \circ 1}(\sigma(\omega))], \qquad\qquad\qquad\qquad \qquad \qquad   \ \  (\kwt{if\_true\_A})\\
& \qquad\qquad\qquad\qquad\qquad\qquad\qquad\qquad\qquad\quad\ \ [\kwf{Var}_{\kwf{A}\circ i}(\sigma(\omega))] -[\theta_{ne}(e_b)]\rightarrow [\kwf{Var}_{\kwf{A}\circ i\circ 2}(\sigma(\omega))]\} \qquad\qquad\qquad\qquad \qquad   \quad \ \ (\kwt{if\_false\_A})\\
&\kwc{R'}(\texttt{require}\ e_{b}\;\kwp{stmt}_1,i,\omega,\kwf{A}) = \kwc{R'}(\kwp{stmt}_1,i\circ 1,\omega,\kwf{A}) \cup \{ [\kwf{Var}_{\kwf{A}\circ i}(\sigma(\omega))] -[\theta_e(e_b)]\rightarrow [\kwf{Var}_{\kwf{A}\circ i \circ 1}(\sigma(\omega))], \qquad\qquad\quad  \quad \  (\kwt{require\_true\_A})\\
& \qquad\qquad\qquad\qquad\qquad\qquad\qquad\qquad\qquad\quad\ \ [\kwf{Var}_{\kwf{A}\circ i}(\sigma(\omega))] -[\theta_{ne}(e_b)]\rightarrow []\} \qquad\qquad\qquad \qquad \qquad \qquad \qquad \quad \ \ (\kwt{require\_false\_A})\\
&\kwc{R'}(\texttt{return},i,\omega,\kwf{A}) = \{ [\kwf{Var}_{\kwf{A}\circ i}(\sigma(\omega))]  -[\kwf{Pred\_eq}(\omega[3],\kwf{EXT})]  \rightarrow  [\kwf{Gvar}_{\kwf{A}}\llbracket \omega[5] \rrbracket \cdot g(\omega)\backslash e(\omega)),\kwf{Evar}_{\kwf{A}}(e(\omega))],  \qquad\qquad\qquad\quad \ (\kwt{ret\_ext\_A}) \\
&\qquad \qquad \qquad \qquad \quad \  [\kwf{Var}_{\kwf{A}\circ i}(\sigma(\omega))]  -[\kwf{Pred\_eq}(\omega[3],\kwf{IN})]  \rightarrow  [\kwf{Return}_{\kwf{A}}(\llbracket \omega[5],\omega[1],\omega[2],\omega[4]\rrbracket,\kwf{Gvar}_{\kwf{A}}\llbracket \omega[5] \rrbracket \cdot g(\omega)\backslash e(\omega)),\kwf{Evar}_{\kwf{A}}(e(\omega)) )]\} \\&\qquad\qquad\qquad\qquad\qquad\qquad\qquad\qquad\qquad\qquad\qquad\qquad\qquad\qquad\qquad\qquad\qquad\qquad\qquad\qquad\qquad\qquad\qquad\qquad\qquad \qquad (\kwt{ret\_in\_A}) \\
&\kwc{R'}(x(c_x).f_x(p);\kwp{stmt},i,\omega,\kwf{A}) = \kwc{R'}(\kwp{stmt},i\circ 1 \circ 1,\omega,\kwf{A}) \cup \{ [\kwf{Var}_{\kwf{A}\circ i}(\sigma(\omega))] -[] \rightarrow  [\kwf{Call}_{\kwf{Ain}}( \llbracket\sigma_a(c_x),\sigma_a(f_x),\omega[5],\omega[4] \oplus 1\rrbracket\cdot \sigma_s(p)),\\&\qquad\qquad\qquad\qquad\qquad\qquad\qquad\qquad\qquad\qquad\qquad\qquad\qquad\qquad \kwf{Var}_{\kwf{A}\circ i \circ 1}(\sigma(\omega)\backslash g(\omega)), \kwf{Gvar}_{\kwf{A}}\llbracket \omega[5] \rrbracket \cdot g(\omega)\backslash e(\omega)),\kwf{Evar}_{\kwf{A}}(e(\omega))],\\& \qquad\qquad\qquad\qquad\qquad\qquad\qquad\qquad\qquad\qquad\qquad\qquad\qquad\qquad\qquad\qquad\qquad\qquad\qquad\qquad\qquad\qquad\qquad\qquad\qquad \quad  (\kwt{in\_call\_A}) \\
&\qquad \qquad \qquad \qquad \qquad \qquad [\kwf{Return}_{\kwf{A}}(\llbracket \sigma_a(c_x),\sigma_a(f_x),\omega[5],\sigma_v(\kwf{r\_depth})\rrbracket ),\kwf{Var}_{\kwf{A}\circ i \circ 1}(\sigma(\omega)\backslash g(\omega)),\kwf{Gvar}_{\kwf{A}}\llbracket\omega[5] \rrbracket \cdot g(\omega)\backslash e(\omega)), \kwf{Evar}_{\kwf{A}}(e(\omega))]\\& \qquad \qquad \qquad \qquad \qquad \qquad \qquad \qquad \qquad \qquad \qquad \qquad \qquad \qquad \qquad  \qquad -[\kwf{Pred\_eq}(\sigma_v(\kwf{r\_depth}),\omega[4] \oplus 1)] \rightarrow [\kwf{Var}_{\kwf{A}\circ i\circ 1 \circ 1}(\sigma(\omega))]\} \\&\qquad\qquad\qquad\qquad\qquad\qquad\qquad\qquad\qquad\qquad\qquad\qquad\qquad\qquad\qquad\qquad\qquad\qquad\qquad\qquad\qquad\qquad\qquad\qquad\qquad\quad   (\kwt{recv\_ret\_A}) \\
&\kwc{R'}(c_x.\texttt{transfer}(v_1);\kwp{stmt},i,\omega,\kwf{A}) = \kwc{R'}(\kwp{stmt},i\circ 1,\omega,\kwf{A})   \cup \{ [\kwf{Var}_{\kwf{A}\circ i}(\sigma(\omega))] - []\rightarrow  [\kwf{Var}_{\kwf{A}\circ i \circ 1}(\sigma(\omega)|\frac{\sigma_v(c_x)}{\sigma_v(c_x)\oplus \sigma_v(v_1)}|\frac{\sigma_v(c)}{\sigma_v(c)\ominus \sigma_v(v_1)})],\\& \qquad\qquad\qquad\qquad\qquad\qquad\qquad\qquad\qquad\qquad\qquad\qquad\qquad\qquad\qquad\qquad\qquad\qquad\qquad\qquad\qquad\qquad\qquad\qquad\ \  (\kwt{transfer\_succ\_A})  \\
&\ \  [\kwf{Var}_{\kwf{A}\circ i}(\sigma(\omega))] - []\rightarrow [],\ \ \ \ \qquad \qquad \qquad \qquad \qquad \qquad \qquad\qquad \qquad\qquad\qquad\qquad\qquad \qquad\qquad\qquad\qquad\qquad\quad (\kwt{transfer\_fail\_A})  \\
&\kwc{R'}(c_x.\texttt{send}(v_1);\kwp{stmt},i,\omega,\kwf{A}) = \kwc{R'}(\kwp{stmt},i\circ 1,\omega,\kwf{A}) \cup \kwc{R'}(\kwp{stmt},i\circ 2,\omega,\kwf{A})  \cup\\&\ \ \{ [\kwf{Var}_{\kwf{A}\circ i}(\sigma(\omega))] - []\rightarrow  [\kwf{Var}_{\kwf{A}\circ i \circ 1}(\sigma(\omega)|\frac{\sigma_v(c_x)}{\sigma_v(c_x)\oplus \sigma_v(v_1)}|\frac{\sigma_v(c)}{\sigma_v(c)\ominus \sigma_v(v_1)})],\qquad\qquad\qquad\qquad\qquad\qquad\qquad\qquad\qquad\quad\quad (\kwt{send\_succ\_A})  \\
&\ \  [\kwf{Var}_{\kwf{A}\circ i}(\sigma(\omega))] - []\rightarrow [\kwf{Var}_{\kwf{A}\circ i\circ 2}(\sigma(\omega))],\ \ \ \ \qquad  \qquad \qquad \qquad \qquad \qquad \qquad\qquad \qquad\qquad\qquad\qquad\qquad \qquad\qquad\qquad\ \ (\kwt{send\_fail\_A})  \\
\end{align*}
\end{small}
\vspace{-0.35in}
\caption{Complete definition of function $\kwc{R'}$ (Part 1).}
\label{fig:translation_a1_A}
\end{figure*}

\begin{figure*}[!]
\centering
\setlength{\abovedisplayskip}{0pt} 
\setlength{\belowdisplayskip}{0pt}
\begin{small}
\begin{align*}
&\kwc{R'}(c_x.\texttt{call}().\texttt{value}(v_1);\kwp{stmt},i,\omega,\kwf{A}) = \kwc{R'}(\kwp{stmt},i\circ 1,\omega,\kwf{A}) \cup \kwc{R'}(\kwp{stmt},i\circ 2,\omega,\kwf{A}) \cup \kwc{R'}(\kwp{stmt},i\circ 3\circ 1,\omega,\kwf{A}) \cup\\&\ \ \{ [\kwf{Var}_{\kwf{A}\circ i}(\sigma(\omega))] - []\rightarrow  [\kwf{Var}_{\kwf{A}\circ i \circ 1}(\sigma(\omega)|\frac{\sigma_v(c_x)}{\sigma_v(c_x)\oplus \sigma_v(v_1)}|\frac{\sigma_v(c)}{\sigma_v(c)\ominus \sigma_v(v_1)})],\qquad\qquad\qquad\qquad\qquad\qquad\qquad\qquad\qquad\quad (\kwt{ether\_succ\_A})  \\
&\ \  [\kwf{Var}_{\kwf{A}\circ i}(\sigma(\omega))] - []\rightarrow [\kwf{Var}_{\kwf{A}\circ i\circ 2}(\sigma(\omega))],\ \ \ \ \qquad \qquad \qquad \qquad \qquad \qquad \qquad \qquad\qquad \qquad\qquad\qquad\qquad\qquad \qquad\quad\ \ \ \ (\kwt{ether\_fail\_A})  \\
&\ \ [\kwf{Var}_{\kwf{A}\circ i}(\sigma(\omega))] - []\rightarrow [\kwf{Var}_{\kwf{A}\circ i\circ 3}(l(\omega)), \kwf{Fallback}_\kwf{A}(\llbracket\omega[5],\omega[1]\rrbracket), \kwf{Gvar}_{\kwf{A}}\llbracket \omega[5] \rrbracket \cdot g(\omega)\backslash e(\omega)),\kwf{Evar}_{\kwf{A}}(e(\omega)|\frac{\sigma_v(c_x)}{\sigma_v(c_x)\oplus \sigma_v(v_1)}|\frac{\sigma_v(c)}{\sigma_v(c)\ominus \sigma_v(v_1)})], \\& \qquad\qquad\qquad\qquad\qquad\qquad\qquad\qquad\qquad\qquad\qquad\qquad\qquad\qquad\qquad\qquad\qquad\qquad\qquad\qquad\qquad\qquad\qquad\qquad\qquad\quad\ \ (\kwt{fb\_call\_A}) \\
&\ \ [\kwf{ReturnFallback}_\kwf{A}(\llbracket\omega[5],\omega[1]\rrbracket ), \kwf{Gvar}_{\kwf{A}}\llbracket \omega[5] \rrbracket \cdot g(\omega)\backslash e(\omega)),\kwf{Evar}_{\kwf{A}}(e(\omega)), \kwf{Var}_{\kwf{A}\circ i\circ 3}(l(\omega))] -[]\rightarrow[\kwf{Var}_{\kwf{A}\circ i\circ 3\circ 1}(\sigma(\omega))]\}  \quad(\kwt{recv\_fb\_ret\_A})\\
&\kwc{R'}(v_1\leftarrow v_2;\kwp{stmt},i,\omega,\kwf{B}) = \kwc{R'}(\kwp{stmt},i\circ 1,\omega, \kwf{B}) \cup \{ [\kwf{Var}_{\kwf{B} \circ i}(\sigma(\omega))]-[]\rightarrow [\kwf{Var}_{\kwf{B}\circ i\circ 1}(\sigma(\omega)|\frac{\sigma_v(v_1)}{\sigma_v(v_2)})]\}\quad\qquad\qquad\quad\qquad\ (\kwt{var\_assign\_B})  \\
&\kwc{R'}(\tau\ v_1\leftarrow v_2;\kwp{stmt},i,\omega,\kwf{B}) = \kwc{R'}(\kwp{stmt},i\circ 1,\omega\cdot \llbracket\left\langle \sigma_v(v_1),\kwf{T_v},\kwf{R_l},\kwf{E_n} \right\rangle\rrbracket,\kwf{B} ) \cup \{[\kwf{Var}_{\kwf{B} \circ i}(\sigma(\omega))]-[]\rightarrow [\kwf{Var}_{\kwf{B}\circ i\circ 1}(\sigma(\omega)\cdot \llbracket\sigma_v(v_2)\rrbracket)]\}\\& \qquad\qquad\qquad\qquad\qquad\qquad\qquad\qquad\qquad\qquad\qquad\qquad\qquad\qquad\qquad\qquad\qquad\qquad\qquad\qquad\qquad\qquad\qquad\qquad\quad  (\kwt{var\_declare\_B})\\
&\kwc{R'}(\texttt{if}\ e_{b}\ \texttt{then}\ \kwp{stmt}_1\ \texttt{else}\ \kwp{stmt}_2;\kwp{stmt}_3,i,\omega,\kwf{B}) = \kwc{R'}(\kwp{stmt}_1;\kwp{stmt}_3,i\circ 1,\omega,\kwf{B}) \cup \kwc{R'}(\kwp{stmt}_2;\kwp{stmt}_3,i\circ 2,\omega,\kwf{B}) \cup \\&\qquad\qquad\qquad\qquad\qquad\qquad\qquad\qquad\qquad\quad \{ [\kwf{Var}_{\kwf{B}\circ i}(\sigma(\omega))] -[\theta_e(e_b)]\rightarrow [\kwf{Var}_{\kwf{B}\circ i \circ 1}(\sigma(\omega))], \qquad\qquad\qquad\qquad \qquad \qquad   \ \  (\kwt{if\_true\_B})\\
& \qquad\qquad\qquad\qquad\qquad\qquad\qquad\qquad\qquad\quad\ \ [\kwf{Var}_{\kwf{B}\circ i}(\sigma(\omega))] -[\theta_{ne}(e_b)]\rightarrow [\kwf{Var}_{\kwf{B}\circ i\circ 2}(\sigma(\omega))]\} \qquad\qquad\qquad\qquad \qquad   \quad \ \ (\kwt{if\_false\_B})\\
&\kwc{R'}(\texttt{require}\ e_{b}\;\kwp{stmt}_1,i,\omega,\kwf{B}) = \kwc{R'}(\kwp{stmt}_1,i\circ 1,\omega,\kwf{B}) \cup \{ [\kwf{Var}_{\kwf{B}\circ i}(\sigma(\omega))] -[\theta_e(e_b)]\rightarrow [\kwf{Var}_{\kwf{B}\circ i \circ 1}(\sigma(\omega))], \qquad\qquad\quad  \quad \  (\kwt{require\_true\_B})\\
& \qquad\qquad\qquad\qquad\qquad\qquad\qquad\qquad\qquad\quad\ \ [\kwf{Var}_{\kwf{B}\circ i}(\sigma(\omega))] -[\theta_{ne}(e_b)]\rightarrow []\} \qquad\qquad\qquad \qquad \qquad \qquad \qquad \quad \ \ (\kwt{require\_false\_B})\\
&\kwc{R'}(\texttt{return},i,\omega,\kwf{B}) = \{ [\kwf{Var}_{\kwf{B}\circ i}(\sigma(\omega))]  -[\kwf{Pred\_eq}(\omega[3],\kwf{EXT})]  \rightarrow  [\kwf{Gvar}_{\kwf{B}}\llbracket \omega[5] \rrbracket \cdot g(\omega)\backslash e(\omega)),\kwf{Evar}_{\kwf{B}}(e(\omega))],  \qquad\qquad\qquad\quad \ (\kwt{ret\_ext\_B}) \\
&\qquad \qquad \qquad \qquad \quad \  [\kwf{Var}_{\kwf{B}\circ i}(\sigma(\omega))]  -[\kwf{Pred\_eq}(\omega[3],\kwf{IN})]  \rightarrow  [\kwf{Return}_{\kwf{B}}(\llbracket \omega[5],\omega[1],\omega[2],\omega[4]\rrbracket,\kwf{Gvar}_{\kwf{B}}\llbracket \omega[5] \rrbracket \cdot g(\omega)\backslash e(\omega)),\kwf{Evar}_{\kwf{B}}(e(\omega)) )]\} \\&\qquad\qquad\qquad\qquad\qquad\qquad\qquad\qquad\qquad\qquad\qquad\qquad\qquad\qquad\qquad\qquad\qquad\qquad\qquad\qquad\qquad\qquad\qquad\qquad\qquad \qquad (\kwt{ret\_in\_B}) \\
&\kwc{R'}(x(c_x).f_x(p);\kwp{stmt},i,\omega,\kwf{B}) = \kwc{R'}(\kwp{stmt},i\circ 1 \circ 1,\omega,\kwf{B}) \cup \{ [\kwf{Var}_{\kwf{B}\circ i}(\sigma(\omega))] -[] \rightarrow  [\kwf{Call}_{\kwf{Bin}}( \llbracket\sigma_a(c_x),\sigma_a(f_x),\omega[5],\omega[4] \oplus 1\rrbracket\cdot \sigma_s(p)),\\&\qquad\qquad\qquad\qquad\qquad\qquad\qquad\qquad\qquad\qquad\qquad\qquad\qquad\qquad \kwf{Var}_{\kwf{B}\circ i \circ 1}(\sigma(\omega)\backslash g(\omega)), \kwf{Gvar}_{\kwf{B}}\llbracket \omega[5] \rrbracket \cdot g(\omega)\backslash e(\omega)),\kwf{Evar}_{\kwf{B}}(e(\omega))],\\& \qquad\qquad\qquad\qquad\qquad\qquad\qquad\qquad\qquad\qquad\qquad\qquad\qquad\qquad\qquad\qquad\qquad\qquad\qquad\qquad\qquad\qquad\qquad\qquad\qquad \quad  (\kwt{in\_call\_B}) \\
&\qquad \qquad \qquad \qquad \qquad \qquad [\kwf{Return}_{\kwf{B}}(\llbracket \sigma_a(c_x),\sigma_a(f_x),\omega[5],\sigma_v(\kwf{r\_depth})\rrbracket ),\kwf{Var}_{\kwf{B}\circ i \circ 1}(\sigma(\omega)\backslash g(\omega)),\kwf{Gvar}_{\kwf{B}}\llbracket\omega[5] \rrbracket \cdot g(\omega)\backslash e(\omega)), \kwf{Evar}_{\kwf{B}}(e(\omega))]\\& \qquad \qquad \qquad \qquad \qquad \qquad \qquad \qquad \qquad \qquad \qquad \qquad \qquad \qquad \qquad  \qquad -[\kwf{Pred\_eq}(\sigma_v(\kwf{r\_depth}),\omega[4] \oplus 1)] \rightarrow [\kwf{Var}_{\kwf{B}\circ i\circ 1 \circ 1}(\sigma(\omega))]\} \\&\qquad\qquad\qquad\qquad\qquad\qquad\qquad\qquad\qquad\qquad\qquad\qquad\qquad\qquad\qquad\qquad\qquad\qquad\qquad\qquad\qquad\qquad\qquad\qquad\qquad\quad   (\kwt{recv\_ret\_B}) \\
&\kwc{R'}(c_x.\texttt{transfer}(v_1);\kwp{stmt},i,\omega,\kwf{B}) = \kwc{R'}(\kwp{stmt},i\circ 1,\omega,\kwf{B})   \cup \{ [\kwf{Var}_{\kwf{B}\circ i}(\sigma(\omega))] - []\rightarrow  [\kwf{Var}_{\kwf{B}\circ i \circ 1}(\sigma(\omega)|\frac{\sigma_v(c_x)}{\sigma_v(c_x)\oplus \sigma_v(v_1)}|\frac{\sigma_v(c)}{\sigma_v(c)\ominus \sigma_v(v_1)})],\\& \qquad\qquad\qquad\qquad\qquad\qquad\qquad\qquad\qquad\qquad\qquad\qquad\qquad\qquad\qquad\qquad\qquad\qquad\qquad\qquad\qquad\qquad\qquad\qquad\ \  (\kwt{transfer\_succ\_B})  \\
&\ \  [\kwf{Var}_{\kwf{B}\circ i}(\sigma(\omega))] - []\rightarrow [],\ \ \ \ \qquad \qquad \qquad \qquad \qquad \qquad \qquad\qquad \qquad\qquad\qquad\qquad\qquad \qquad\qquad\qquad\qquad\qquad\quad (\kwt{transfer\_fail\_B})  \\
&\kwc{R'}(c_x.\texttt{send}(v_1);\kwp{stmt},i,\omega,\kwf{B}) = \kwc{R'}(\kwp{stmt},i\circ 1,\omega,\kwf{B}) \cup \kwc{R'}(\kwp{stmt},i\circ 2,\omega,\kwf{B})  \cup\\&\ \ \{ [\kwf{Var}_{\kwf{B}\circ i}(\sigma(\omega))] - []\rightarrow  [\kwf{Var}_{\kwf{B}\circ i \circ 1}(\sigma(\omega)|\frac{\sigma_v(c_x)}{\sigma_v(c_x)\oplus \sigma_v(v_1)}|\frac{\sigma_v(c)}{\sigma_v(c)\ominus \sigma_v(v_1)})],\qquad\qquad\qquad\qquad\qquad\qquad\qquad\qquad\qquad\quad\quad (\kwt{send\_succ\_B})  \\
&\ \  [\kwf{Var}_{\kwf{B}\circ i}(\sigma(\omega))] - []\rightarrow [\kwf{Var}_{\kwf{B}\circ i\circ 2}(\sigma(\omega))],\ \ \ \ \qquad  \qquad \qquad \qquad \qquad \qquad \qquad\qquad \qquad\qquad\qquad\qquad\qquad \qquad\qquad\qquad\ \ (\kwt{send\_fail\_B})  \\
&\kwc{R'}(c_x.\texttt{call}().\texttt{value}(v_1);\kwp{stmt},i,\omega,\kwf{B}) = \kwc{R'}(\kwp{stmt},i\circ 1,\omega,\kwf{B}) \cup \kwc{R'}(\kwp{stmt},i\circ 2,\omega,\kwf{B}) \cup \kwc{R'}(\kwp{stmt},i\circ 3\circ 1,\omega,\kwf{B}) \cup\\&\ \ \{ [\kwf{Var}_{\kwf{B}\circ i}(\sigma(\omega))] - []\rightarrow  [\kwf{Var}_{\kwf{B}\circ i \circ 1}(\sigma(\omega)|\frac{\sigma_v(c_x)}{\sigma_v(c_x)\oplus \sigma_v(v_1)}|\frac{\sigma_v(c)}{\sigma_v(c)\ominus \sigma_v(v_1)})],\qquad\qquad\qquad\qquad\qquad\qquad\qquad\qquad\qquad\quad (\kwt{ether\_succ\_B})  \\
&\ \  [\kwf{Var}_{\kwf{B}\circ i}(\sigma(\omega))] - []\rightarrow [\kwf{Var}_{\kwf{B}\circ i\circ 2}(\sigma(\omega))],\ \ \ \ \qquad \qquad \qquad \qquad \qquad \qquad \qquad \qquad\qquad \qquad\qquad\qquad\qquad\qquad \qquad\quad\ \ \ \ (\kwt{ether\_fail\_B})  \\
&\ \ [\kwf{Var}_{\kwf{B}\circ i}(\sigma(\omega))] - []\rightarrow [\kwf{Var}_{\kwf{B}\circ i\circ 3}(l(\omega)), \kwf{Fallback}_\kwf{B}(\llbracket\omega[5],\omega[1]\rrbracket), \kwf{Gvar}_{\kwf{B}}\llbracket \omega[5] \rrbracket \cdot g(\omega)\backslash e(\omega)),\kwf{Evar}_{\kwf{B}}(e(\omega)|\frac{\sigma_v(c_x)}{\sigma_v(c_x)\oplus \sigma_v(v_1)}|\frac{\sigma_v(c)}{\sigma_v(c)\ominus \sigma_v(v_1)})], \\& \qquad\qquad\qquad\qquad\qquad\qquad\qquad\qquad\qquad\qquad\qquad\qquad\qquad\qquad\qquad\qquad\qquad\qquad\qquad\qquad\qquad\qquad\qquad\qquad\qquad\quad (\kwt{fb\_call\_B}) \\
&\ \ [\kwf{ReturnFallback}_\kwf{B}(\llbracket\omega[5],\omega[1]\rrbracket ), \kwf{Gvar}_{\kwf{B}}\llbracket \omega[5] \rrbracket \cdot g(\omega)\backslash e(\omega)),\kwf{Evar}_{\kwf{B}}(e(\omega)), \kwf{Var}_{\kwf{B}\circ i\circ 3}(l(\omega))] -[]\rightarrow[\kwf{Var}_{\kwf{B}\circ i\circ 3\circ 1}(\sigma(\omega))]\}  \quad\ \ (\kwt{recv\_fb\_ret\_B})\\
\end{align*}
\end{small}
\vspace{-0.35in}
\caption{Complete definition of function $\kwc{R'}$ (Part 2).}
\label{fig:translation_a1_B}
\end{figure*}

\textbf{1)} Conditional statements. 
Two rules $\kwt{if\_true}$ and $\kwt{if\_false}$ are generated for statement $\texttt{if}\ e_{b}\ \texttt{then}\ \kwp{stmt}_1\ \texttt{else}\ \kwp{stmt}_2$.
Here, $\theta_e(e_b)$ is a conditional fact representing that the value of $e_b$ is true.
Similarly, $\theta_{ne}(e_b)$ denotes that the value of $e_b$ is false.
$\kwf{Var}_{i\circ 1}$, $\kwf{Var}_{i\circ 2}$ correspond to the states when $\kwp{stmt}_1$, $\kwp{stmt}_2$ start to be executed, respectively.
The definition of the function $\theta_e(e)$ is as follows:
$$\theta_e(e) = \left\{
\begin{array}{lll}
\sigma_a(e)  & {\rm if}\ e\ {\rm is\ a\ constant} \\
\sigma_v(e)      & {\rm if}\ e\ {\rm is\ a\ variable} \\
\theta_o(e_1,e_2,\lozenge)  & {\rm if}\ e\ {\rm is}\ e_1\ \lozenge \ e_2\\
\kwf{EqNum}(\theta_e(e_1),\theta_e(e_2))  & {\rm if}\ e\ {\rm is}\ e_1 = e_2 \\
\kwf{NeNum}(\theta_e(e_1),\theta_e(e_2))  & {\rm if}\ e\ {\rm is}\ e_1 \neq e_2 \\
\kwf{LessNum}(\theta_e(e_1),\theta_e(e_2))  & {\rm if}\ e\ {\rm is}\ e_1 < e_2
\end{array}
\right.
$$
Here, numerical facts $\kwf{EqNum}, \kwf{NeqNum}, \kwf{LessNum}$ denote the relationships between numeric variables.
$\lozenge \in \{+,-,*,/,\%, **\}$ represents an operator in expressions.
$\theta_o(e_1,e_2,\lozenge)$ represents the term translated from $e_1 \lozenge e_2$.
As tamarin prover does not support numerical operators like $\lozenge$ and to avoid conflicts with existing operators in tamarin prover, 
we convert $\lozenge$ into special forms and modify the source code of tamarin prover to parse them, which are eventually passed into Z3 for processing.
For example, $+$ is translated in to $\oplus$ and $-$ is translated into $\ominus$.

\textbf{2)} Internal call statements. Besides external accounts, contract account $c$ can also invoke the function $f_x$ of another account $c_x$ by executing the statement $x(c_x).f_x(p)$, where $p$ is a sequence of parameters and $x$ is name of the contract of account $c_x$.
In this case, the execution of this statement can be divided into the following steps:
\textbf{a)} An internal transaction is sent by $c$ to invoke $f_x$. To denote this step, rule $\kwt{in\_call}$ is generated. 
Because the ether balances may be modified during executions of $f_x$, while local variables will not change, terms denoting local variables in rule $\kwt{in\_call}$ are maintained in $\kwf{Var}_{i\circ 1}$ fact, while terms representing the ether balances of all accounts are put into $\kwf{Evar}$ and terms denoting global variables of account $c$ are put into $\kwf{Gvar}$.
\textbf{b)} The codes in $f_x$ are executed, which have been recursively modeled as shown in Fig. \ref{fig:translation_a1}.
\textbf{c)} $f_x$ returns, which has been modeled in return statement category.
\textbf{d)} The next statement of $f_x$ is prepared to be executed. Rule $\kwt{recv\_ret}$ is generated for this step, indicating that $c$ receives the return message from $c_x$ and $f$ is ready to continue executing.

\textbf{3)} Ether transfer statements. 
The misuse of ether transfer statements using $\texttt{call}$ is one of the reasons that cause attacks.
Consider a statement $c_x.\texttt{call}().\texttt{value}(v_1)$, which means that the account $c$ who invokes the $\texttt{call}()$ is to transfer ether $v_1$ to the account $c_x$, where $v_1$ is assumed as a local variable.
There are three cases for the execution of the statement:
\textbf{a)} the transfer succeeds. The ether balance of  $c$ is reduced by $v_1$ and the ether balance of $c_x$ is increased by $v_1$.
\textbf{b)} the transfer fails and the ether balances of $c$ and $c_x$ are not modified.
\textbf{c)} the transfer succeeds with ether balances changed in the same way as case \textbf{a)}, but the fallback function is called probably in an unexpected way.
The rules $\kwt{ether\_succ}$, $\kwt{ether\_fail}$, $\kwt{fb\_call}$ are generated for the 3 cases respectively. 
In these rules, we use $\sigma_v(c)$ and $\sigma_v(c_x)$ to denote the ether balances of $c$ and $c_x$.
In rule $\kwt{fb\_call}$, fact $\kwf{Fallback}$ indicates that the fallback function is called.
According to assumption C2 in the adversary model, 
the global variables of $c$ and ether balances of all accounts may be modified due to execution of the fallback function,
thereby the terms denoting these variables are put into $\kwf{Gvar}$ and $\kwf{Evar}$ facts, while terms representing local variables are maintained in $\kwf{Var}_{i \circ 3}$.
Here $l(\omega)$ outputs a term sequence by obtaining the $name$ of all tuples in $\omega$ whose $range=\kwf{R_l}$.
In rule $\kwt{recv\_fb\_ret}$, $\kwf{ReturnFallback}$ implies that the fallback function finishes executing and a return message is sent to $c$.
The terms in $\kwf{Var}_{i \circ 3}$, $\kwf{Gvar}$ and $\kwf{Evar}$ are merged back into terms in $\kwf{Var}_{i \circ 3 \circ 1}$, which indicates that the function $f$ continues executing.


\textbf{Complete form of rules mentioned in Section \ref{subsec:translation}}



As mentioned in Section \ref{subsec:translation}, multiple rules are generated for each function $f$ in the contract of account $c$ to model the adversary model C2:
\begin{multline}
[\kwf{Fallback}(\llbracket\sigma_a(c),\sigma_a(f)\rrbracket)]-[]\rightarrow\\ [\kwf{Call_{in}}(\llbracket \sigma_a(c),\sigma_a(f'),\sigma_a(c_{adv})\rrbracket \cdot \sigma(seq(d')) )]\  (\kwt{fb\_in\_call})\nonumber
\vspace{-0.1in}
\end{multline}
\begin{multline}
[\kwf{Return}(\llbracket\sigma_a(c),\sigma_a(f'),\sigma_a(c_{adv})\rrbracket )]-[]\rightarrow\\ \qquad\qquad[\kwf{ReturnFallback}(\llbracket\sigma_a(c),\sigma_a(f)\rrbracket)]\qquad(\kwt{ret\_fb})\nonumber
\end{multline}
Here, $c_{adv}$ represents the address of the contract account owned by the adversary. 
$f'$ denotes an arbitrary function in the contract of account $c$, and $d'$ denotes the parameters of $f'$.
After the fallback function is triggered, an internal transaction is sent which invokes function $f'$.  
Therefore, the rule $\kwt{fb\_in\_call}$ indicates that the fallback function of account $c_{adv}$ is triggered by function $f$ in the contract of account $c$.
The rule $\kwt{ret\_fb}$ indicates that the adversary gets a return message after the execution of $f'$ and sends a message denoting that the fallback function finishes executing.

\subsection{Complete definition of rules mentioned in Section \ref{subsec:complementary} and function $\kwc{R}'$}
\label{subsec:appendix2}

\textbf{Definition of function $\kwc{R}'$}

We define a function $\kwc{R}'$ shown in Fig. \ref{fig:translation_a1_A} and \ref{fig:translation_a1_B} to generate rules of the complementary models for the invariant property and the equivalence property.
$\kwc{R}'$ is similar to $\kwc{R}$ and the difference is that $\kwc{R}'$ takes an additional argument compared to $\kwc{R}$.
This argument is a string denoting the subscript for facts, and the value of this argument is $\kwf{A}$ or $\kwf{B}$.
Given a rule $r$ translated from a sequence of statements by using $\kwc{R}$, we define a function $f_R(r,s)$ shown in Table \ref{table:rr_correspondence} to output the rule generated by using $\kwc{R}'$ from the same sequence.
Here $s$ denotes the subscript $\kwf{A}$ or $\kwf{B}$.
This function will be used in our subsequent proofs.

\begin{table}[]
\centering
    \caption{The correspondence between rules generated by $\mathcal{R}$ and $\mathcal{R'}$}
    \label{table:rr_correspondence}
\resizebox{\linewidth}{!}{
\begin{tabular}{|c|c|c|}
\hline
r         & $f_R(r,\kwf{A})$       & $f_R(r,\kwf{B})$       \\ \hline
 $\kwt{var\_assign}$ &  $\kwt{var\_assign\_A}$ & $\kwt{var\_assign\_B}$  \\ \hline
$\kwt{var\_declare}$ & $\kwt{var\_declare\_A}$ & $\kwt{var\_declare\_B}$ \\ \hline
$\kwt{if\_true}$ & $\kwt{if\_true\_A}$ & $\kwt{if\_true\_B}$ \\ \hline
$\kwt{if\_false}$ & $\kwt{if\_false\_A}$ & $\kwt{if\_false\_B}$ \\ \hline
 $\kwt{require\_true}$ &  $\kwt{require\_true\_A}$ &  $\kwt{require\_true\_B}$ \\ \hline
$\kwt{require\_false}$ & $\kwt{require\_false\_A}$ & $\kwt{require\_false\_B}$ \\ \hline
$\kwt{ret\_ext}$ & $\kwt{ret\_ext\_A}$ & $\kwt{ret\_ext\_B}$ \\ \hline
$\kwt{ret\_in}$ & $\kwt{ret\_in\_A}$ & $\kwt{ret\_in\_B}$ \\ \hline
 $\kwt{in\_call}$ &  $\kwt{in\_call\_A}$ &  $\kwt{in\_call\_B}$ \\ \hline
 $\kwt{recv\_ret}$ &  $\kwt{recv\_ret\_A}$ &  $\kwt{recv\_ret\_B}$ \\ \hline
$\kwt{transfer\_succ}$ & $\kwt{transfer\_succ\_A}$ & $\kwt{transfer\_succ\_B}$   \\ \hline
 $\kwt{transfer\_fail}$ & $\kwt{transfer\_fail\_A}$ & $\kwt{transfer\_fail\_B}$  \\ \hline 
 $\kwt{send\_succ}$ & $\kwt{send\_succ\_A}$ & $\kwt{send\_succ\_B}$   \\ \hline
$\kwt{send\_fail}$ & $\kwt{send\_fail\_A}$ & $\kwt{send\_fail\_B}$  \\ \hline
 $\kwt{ether\_succ}$ & $\kwt{ether\_succ\_A}$ & $\kwt{ether\_succ\_B}$  \\ \hline
 $\kwt{ether\_fail}$ & $\kwt{ether\_fail\_A}$  & $\kwt{ether\_fail\_B}$  \\ \hline
 $\kwt{fb\_call}$ & $\kwt{fb\_call\_A}$ & $\kwt{fb\_call\_B}$ \\ \hline
$\kwt{recv\_fb\_ret}$ & $\kwt{recv\_fb\_ret\_A}$ & $\kwt{recv\_fb\_ret\_B}$ \\ \hline
\end{tabular}}
\end{table}

\textbf{Complete definition of rules mentioned in Section \ref{subsec:complementary}}

\textbf{1) rule $\kwt{init\_gvars\_inv}$}
\setlength\multlinegap{0pt}
\begin{multline}
[\kwf{FR}(g(\omega_0)\backslash e(\omega_0) )]-[\kwf{Init_G}(\omega_0[1]),\theta_e(\phi)]\rightarrow\qquad\qquad\qquad\qquad\\ [\kwf{Gvar}(\llbracket \omega_0[1] \rrbracket \cdot g(\omega_0)\backslash e(\omega_0))]\qquad (\kwt{init\_gvars\_inv})\nonumber
\end{multline}
Here, $\phi$ can be any invariant in Section \ref{subsec:generating properties}.
The fact $\theta_e(\phi)$ denotes that the invariant $\phi$ holds after the initialization. 

\textbf{2) rule $\kwt{ext\_call\_inv}$}
\begin{multline}
[\kwf{Fr}(\sigma_v(c_b)),\kwf{FR}(\sigma(seq(\texttt{d})))]-[\kwf{Start}()]\rightarrow \\ [\kwf{Call_e}( \llbracket\omega_0[1],\sigma_a(f),\sigma_v(c_b)\rrbracket\cdot \sigma(seq(\texttt{d})))]\quad (\kwt{ext\_call\_inv}) \nonumber
\end{multline}
Different from rule $\kwt{ext\_call}$, an action $\kwf{Start}()$ is added into rule $\kwt{ext\_call\_inv}$.
The action is associated with the restriction:
$$All\ \#i\ \#j. \kwf{Start}()@i\ \&\ \kwf{Start}()@j\ =>\ i=j\quad (\kwt{start\_inv})$$
This restriction requires that $\kwf{Start}()$ occurs only once in an execution of the model. 

\textbf{3) rule $\kwt{ret\_ext\_inv}$}
\begin{multline}
[\kwf{Var}_i(\sigma(\omega))]  -[\kwf{Pred\_eq}(\omega[3],\kwf{EXT}),\theta_{ne}(\phi),\kwf{End}()]  \rightarrow\\  [\kwf{Gvar}(\llbracket \omega[4] \rrbracket \cdot g(\omega)\backslash e(\omega)),\kwf{Evar}(e(\omega))]  \quad (\kwt{ret\_ext\_inv}) \nonumber
\end{multline}

\textbf{4) rule $\kwt{init\_evars\_AB}$ and $\kwt{init\_gvars\_AB}$}
\setlength\multlinegap{0pt}
\begin{multline}
[\kwf{FR}(e(\omega_0))]-[\kwf{Init_E}()]\rightarrow [\kwf{Evar_A}(e(\omega_0)),\kwf{Evar_B}(e(\omega_0))]\\ \qquad\qquad\qquad\qquad\qquad\qquad\qquad\qquad\ \ (\kwt{init\_evars\_AB})\nonumber\\
[\kwf{FR}(g(\omega_0)\backslash e(\omega_0) )]-[\kwf{Init_G}(\omega_0[1])]\rightarrow [\kwf{Gvar_A}(\llbracket \omega_0[1] \rrbracket \cdot g(\omega_0)\\ \backslash e(\omega_0)) ,\kwf{Gvar_B}(\llbracket \omega_0[1] \rrbracket \cdot g(\omega_0)\backslash e(\omega_0))]\ \ (\kwt{init\_gvars\_AB})\nonumber
\end{multline}

\textbf{5) rule $\kwt{recv\_ext\_A}$ and $\kwt{recv\_ext\_B}$}

$\kwt{recv\_ext\_A}$ and $\kwt{recv\_ext\_B}$ are associated with the following restrictions:
\setlength\multlinegap{0pt}
\begin{multline}
All\ \#i\ c_b. \kwf{Exc_A}(\sigma_v(c_b)),\sigma_a(f))@i=>\\ \qquad\qquad\qquad\ \ \ Ex\ \#j. \kwf{Exc_B}(\sigma_v(c_b)),\sigma_a(f))@j\qquad (\kwt{exc\_A}) \ \ \\
All\ \#i\ c_b. \kwf{Exc_B}(\sigma_v(c_b)),\sigma_a(f))@i=>\qquad\qquad\quad\\ \qquad\qquad\qquad\ \ \  Ex\ \#j. \kwf{Exc_A}(\sigma_v(c_b)),\sigma_a(f))@j\qquad (\kwt{exc\_B})\ \ \nonumber
\end{multline}

\textbf{6) rule $\kwt{compare\_AB}$}

The following rule is used to compare the ether balances and token balances of the adversary:
\begin{multline}
[\kwf{Gvar_A}(\llbracket \omega[4] \rrbracket \cdot g(\omega)\backslash e(\omega)),\kwf{Evar_A}(e(\omega)),\kwf{Gvar_B}(\llbracket \omega[4] \rrbracket \cdot g(\omega)\\ \backslash e(\omega)), \kwf{Evar_B}(e(\omega))]- [\theta_{ne}(\phi_{equ}),\kwf{End}()]\rightarrow[\kwf{Gvar_A}(\llbracket \omega[4]\\ \rrbracket \cdot g(\omega)\backslash e(\omega)),\kwf{Evar_A}(e(\omega)),\kwf{Gvar_B}(\llbracket \omega[4] \rrbracket \cdot g(\omega)\backslash e(\omega)),\\  \kwf{Evar_B}(e(\omega))]\ (\kwt{compare\_AB})\nonumber
\end{multline}

\textbf{7) rule $\kwt{ext\_call\_bvar\_AB}$}

Specially, if a statement in function $f$ uses $\texttt{block.timestamp}$, rule $\kwt{ext\_call\_bvar\_AB}$ is generated instead of $\kwt{ext\_call\_AB}$.
$\kwf{Bvar_A}$ $(bt_A)$ and $\kwf{Bvar_B}(bt_B)$ facts are added in the conclusion of $\kwt{ext\_call\_}$ $\kwt{bvar\_AB}$, which indicates that the adversary can modify the timestamp of blocks containing transactions in $T_A$ and $T_B$, corresponding to C3 mentioned in Section \ref{subsec:adversary}.
\begin{multline}
[\kwf{Fr}(\sigma_v(c_b)),\kwf{FR}(\sigma(seq(\texttt{d}))),\kwf{Fr}(bt_A),\kwf{Fr}(bt_B)]-[]\rightarrow\\  [\kwf{Bvar_A}(bt_A),\kwf{Bvar_B}(bt_B),\kwf{Call_{Ae}}( \llbracket\omega_0[1],\sigma_a(f),\sigma_v(c_b)\rrbracket\\ \cdot \sigma(seq(\texttt{d}))), \kwf{Call_{Be}}( \llbracket\omega_0[1],\sigma_a(f),\sigma_v(c_b)\rrbracket\cdot \sigma(seq(\texttt{d})))]\\ \kwt{(ext\_call\_bvar\_AB)} \nonumber
\end{multline}
In addition, to process $\kwf{Bvar}$ facts,  \ourtool generates numerical constraints related to timestamps of blocks, which cannot be modeled in multiset rewriting system.  
Specifically, according to adversary model C3, 
if a contract uses $\texttt{block.timestamp}$, when verifying the equivalence property, \ourtool processes $\kwf{Bvar_A}(bt_A)$ and $\kwf{Bvar_B}(bt_B)$ in the execution with the following additional steps:
i) assign indices to each $bt_A$ and $bt_B$ in the order they appear in the execution, starting with 0;
ii) generate numerical constraint $bt_{Ai} > bt_{A(i-1)}$ and $bt_{Bi} > bt_{B(i-1)}$ for every index i except for 0.
ii) generate numerical constraint $bt_{Bi} <= bt_{Ai}+15$ for $bt_{Ai}$ and $bt_{Bi}$ with same index $i$. 
The constraints in step ii) restrict that the timestamp of blocks can only be increased and not decreased.
The constraints in step iii) indicate that the adversary can only increase the timestamp by a maximum of 15 seconds.

\subsection{An Example of vulnerable contract with \textit{TD}.}
\label{subsec:appendix_x}

\begin{figure}[ht]
\centering
\includegraphics[scale=0.26]{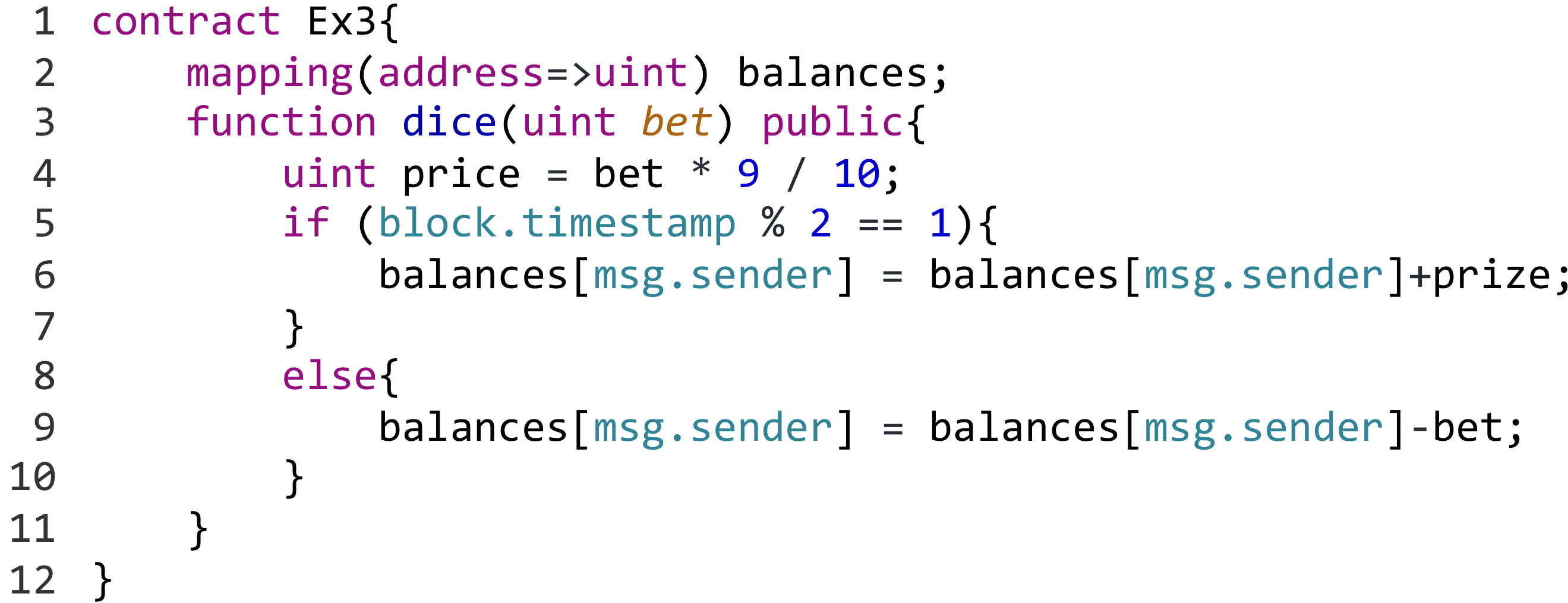}
\caption{Example Ex3: a vulnerable smart contract.}
\label{fig:example3}
\vspace{-0.06in}
\end{figure}

\textbf{Example.}
Fig. \ref{fig:example3} shows a simplified version of a practical contract that violates the equivalence property.
The function $\texttt{dice}$ is used to play a game.
If $\texttt{block.timestamp}$ is odd, $\texttt{msg.sender}$ will get a prize, \textit{i.e.}, his token balances will be increased.
The contract is recognized as a token contract since it defines a variable $\texttt{balances}$ of $\texttt{mapping(address=>uint)}$ type.
An invariant property and an equivalence property are both generated.
In verification of equivalence property, \ourtool can find an execution that simulates two sequences $T_A$ and $T_B$ consisting of a same transaction which invokes $\texttt{dice}$.
Correspondingly, since this two sequences only use $\texttt{block.timestamp}$ once, the terms $bt_A$ and $bt_B$ are transformed into $bt_{A0}$ and $bt_{B0}$ and a numerical constraint $bt_{B0} <= bt_{A0} + 15$ is generated as mentioned in Section \ref{subsec:verification}. 
The token balances after $T_A$ and $T_B$ are obviously different as $bt_A$ and $bt_B$ could be in different parity, which violates the equivalence property.
Therefore, Ex2 is regarded vulnerable by \ourtool.
The vulnerability in Ex2 is a kind of \textit{TD}. Although this is a known type of vulnerabilities, \ourtool still finds 3 contracts with \textit{TD} that are deployed on Ethereum.
\subsection{Soundness of the translations}
\label{sec:proof}    
\begin{figure*}
    \centering
    \input{rule/figure4.tex}
    \centering
    \caption{\label{fig:Configuration}An example of configuration in KSolidity semantics}
\end{figure*}


 To prove the soundness of the translation from Solidity language to our models, we firstly introduce the custom semantics of Solidity~\cite{DBLP:conf/sp/JiaoK0S0020}, namely KSolidity.
We also briefly explain the relations between the modeling in FASVERIF and KSolidity, including the configurations and rules for the configurations.
Then, we explain the notations and adopted theories preparing for the proofs.
Finally, we prove the soundness of \ourtool.

\subsubsection{\textbf{A glance at K-framework and KSolidity \cite{DBLP:conf/sp/JiaoK0S0020}}}
\label{sec:Aglance}
KSolidity is defined using K-framework \cite{Rosu2010}, a rewrite-based executable semantic framework. 
The definition of a language consists of 3 parts: language syntax, the runtime configuration, and a set of rules constructed based on the syntax and the configuration.

Configurations organize the state in units called cells, which are labeled and can be nested.
The cells' contents can be various semantic data, such as trees, lists, maps, etc. 
As shown in Figure \ref{fig:Configuration}, the configuration of KSolidity is composed of six main cells: (1) \textit{k}, the rest of programs to be executed, (2) \textit{controlStacks}, a collection of runtime stacks, (3) \textit{contracts}, a set of contract definitions, (4) \textit{functions}, a set of function definitions (5) \textit{contractInstances}, a set of contract instances and (6) \textit{transactions}, information for the runtime transactions.


When an event occurs in the blockchain, e.g., a statement in a function of a smart contract is executed, the contents in the configuration are updated, i.e., a new configuration is yielded.
The rules in the K-framework describe how the configuration is yielded upon different events. 
For example, a configuration $c$ can be yielded to a new configuration
by applying rule \textsc{ReadAddress-LocalVariables}, 
if the first fragment of code in $c$'s \textit{k} cell  matches pattern \code{readAddress( Addr:Int,String2Id("Local"))} and the top of $c$ \textit{contractStack} records integer value \code{N}, which corresponds to \code{ctId} of $c$'s \textit{contractInstance} cell that records  \code{Addr |-> V} in subcell \textit{Memory}.  
As a result, the first fragment in $c$ is substituted by \code{gasCal(\#read,String2Id("Local")) $\curvearrowright$ V}, while the rest of the configuration stays unchanged.
In practice, the rule means that when \code{readAddress} is executed, the value $\code{V}$ is read from contract $\code{N}$'s local variable whose address is $\code{Addr}$, before which some gas has been consumed. The notation $\curvearrowright$ is a list constructor (read "followed by").
The detailed explanations of the configurations and the rules can be found in~\cite{DBLP:conf/sp/JiaoK0S0020}. 


\Newrule{ReadAddress-LocalVariables}{
\RWrule{
\code{readAddress(Addr:Int,String2Id("Local"))}
}{
\code{gasCal(\#read,String2Id("Local")) $\curvearrowright$ V:Value}
}{k}\\
\RruleS{
\code{ListItem(N:Int)}
}{contractStack}\\
\RruleS{
\RruleM{\code{N}}{ctId}\\
\Rrule{\code{Addr |-> V}}{Memory}
}{contractInstance}
}

\textbf{Relationship between facts and configurations}:  The configurations
In FASVERIF are mainly represented by the multiset of facts. We give examples of the relationship between the facts in FASVERIF and configurations in the K-framework, as well as the relationship between the corresponding rules, 
to facilitate the understanding of proofs in Appendix~\ref{subsec:proof}.

Suppose that a function is currently executing,
and the PC is currently at a certain point, denoted as index \ensuremath{i}.
As mentioned in Section \ref{sec:independent modeling},
fact $\mathsf{Var}_{i}$ represents the current configuration
in the model generated by FASVERIF. We demonstrate the corresponding
configuration in KSolidity as follows:
\begin{itemize}
\item The sequence of the first three terms in $\mathsf{Var}_{i}$, denoted
as $\llbracket\sigma_{a}(f),\sigma_{v}(c_{b}),\textit{calltype}\rrbracket,$
represents the current executed function $f$, contract account $c_{b}$
who invokes the function, and the type $calltype$ indicating whether
$c_{b}$ is an external account or a contract account. Correspondingly
in KSolidity, the name $f$ and $c_{b}$ is statically
recorded in sub-cell \textit{fName }and\textit{ cName}, respectively;
and the runtime information about $f$ and $c_{b}$ is stored in cell
\textit{controlStacks}. In detail, the \textit{fId} of $f$ is stored
on top of the stack\textit{ functionStack }, and the \textit{cId }of
$c_{b}$ is the second from the top of the stack \textit{contractStack}.
\item The 4th term in $\mathsf{Var}_{i}$, denoted as $\sigma_{a}(c)$,
represents the contract account whose function $f$ is being executed.
In KSolidity, the \textit{cId }of $c$ is top of the
stack \textit{contractStack}. 
\item The sequence after the 4th terms in $\mathsf{Var}_{i}$ represents
the concatenation of three sequences: the global variables of account
$c$, ether balances of all accounts, and the current local variables.
In KSolidity, global variables, balances, local variables
are stored in \textit{ctStorage}, \textit{Balance, and} \textit{Memory}, respectively. 
\end{itemize}

Consider the case of initializations in the independent model of FASVERIF, which prepares for a new transaction.
Assume that the contract account $c$'s function $f$ is to be invoked
by $c_{b}$ at the beginning of the transaction. Note that the contract
accounts, e.g., $c$, have been created, and it is also assumed that
some transactions may have finished execution before the initialization.
The facts $\mathsf{Evar}(e(\omega_{0}))$ and $\mathsf{Gvar}(\llbracket\sigma_{a}(c)\rrbracket\cdot g(\omega_{0})\backslash e(\omega_{0}))$
are generated by applying the rules \textsf{init\_evars} and \textsf{init\_gvars},
which means the initial values of balances and global variables of
account $c$ are set arbitrarily. The initial configuration can be
represented by $\llbracket\sigma_{a}(f), $ $\sigma_{v}  (c_{b}),$ $\textit{calltype=\ensuremath{\mathsf{EXT}}},\sigma_{a}(c)\rrbracket\cdot g(\omega_{0})\backslash e(\omega_{0})\cdot e(\omega_{0})$,
since there are no local variables. 

The first configuration in KSolidity corresponds to
the one when the blockchain starts running. For instance, the number
of created contract accounts is 0, as counted in \textit{cntContracts}.
Hence, the initial configuration in FASVERIF corresponds to a certain
configuration, not its first configuration, in KSolidity.
In the derivation of the configuration in the K-framework, contract
accounts have been created, i.e., rule \textsc{New-Contract-Instance-Creation}
has been applied for each creation, and transactions may also be executed,
i.e, multiple corresponding rules may be applied as well. Finally,
the application of the rules results in changing the value of the
cells, e.g., \textit{cntContracts}, in the configuration. 

\subsubsection{\textbf{Notations and theories}} 
We recall definitions and theories in Tamarin prover \cite{MeierSCB13,KremerK14}, K-framework \cite{Rosu2017} and matching logic~\cite{Rosu2017} necessarily needed for proofs in Appendix~\ref{subsec:proof} as follows.

\textbf{Tamarin \cite{MeierSCB13,KremerK14}}: 
Given a set $S$ we denote by $S^*$ the set of finite sequences of elements from $S$ and by $S^\#$ the set of finite multisets of elements from $S$.
We use the superscript $\#$ to annotate usual multiset operation, e.g., $S_1 \cup^\# S_2$ denotes the multiset union of multisets $S_1,S_2$, $\vert S\vert^{\#}$ denotes the number of elements in $S$.
Set membership modulo $E$ is denoted by $\in_E$ and defined as $e \in_E S$ if $\exists e' \in S.e' =_E e$.
Define the set of facts as the set $\mathcal{F}$ consisting of all facts $F(t_1,\dots,t_k)$, where $t_i$ are the terms.
Denote $\textit{names}(F)$ as the multiset of signature names of the facts in $\mathcal{F}$.
For a fact $f$ we denote by $\textit{ginsts}(f)$ the set of ground instances, i.e. instances that do not contain variables, of $f$.



\begin{definition}{(Multiset rewrite rule).}
    \textit{
    A labelled multiset rewrite rule $ri$ is a triple $(l,a,r)$, $l,a,r \in \mathcal{F}^*$, written $l -[a] \rightarrow r$. We call $l = \mathit{prems}(ri)$ the premises, $a = \mathit{actions}(ri)$ the actions, and $r = \mathit{conclusions}(ri)$ the conclusions of the rule.
    }
\end{definition}

\begin{definition}{(Labelled multiset rewriting system).} 
    \textit{
    A labelled multiset rewriting system is a set of labelled multiset rewrite rules $R$, such that each rule $l -[a] \rightarrow r \in R$ satisfies the following conditions:
    \begin{itemize}
        \item $ l, a, r$ do not contain fresh names
        \item $r$ does not contain $\kwf{Fr}$-facts 
    \end{itemize}
    }
\end{definition}

We define one distinguished rule $\kwt{Fresh}$ which is the only rule allowed to have $\kwf{Fr}$-facts on the right-hand side
\begin{equation}
  [] - [] \rightarrow [\kwf{Fr}(x : fresh)]  \tag{$\kwt{Fresh}$}
\end{equation}

\begin{definition}{(Labelled transition relation).}
    \label{lemma:trans}
    \textit{
    Given a multiset rewriting system $R$, define the labeled transition relation $\rightarrow_R \subseteq \mathcal{G}^\# \times \mathcal{P(G)}^\# \times \mathcal{G}^\# $ as
    $$S \xrightarrow{n}_R ((S \backslash ^\# \mathit{lfacts}(l)) \cup^\# r)$$
    if and only if $l -[a] \rightarrow r \in_E  \mathit{ginsts}(R \cup  \kwt{Fresh})$, $\mathit{lfacts}(l) \subseteq^\# S$ and $\mathit{pfacts}(l) \subseteq^\# S$.
    }
\end{definition}
Here, we denote $\mathcal{G}$ as the set of all ground facts, i.e.,  facts that do not contain variables, $n$ is the name of a rule in $R$. Given a sequence or set of facts $S$ we denote by $\textit{lfacts}(S)$ the multiset of all linear facts in $S$ and $\textit{pfacts}(S)$ the set of all persistent facts in $S$. Since the persistent facts are not used in FASVERIF, we trivially conclude the following lemma.

\begin{lemma}
    \label{lemma:simpletrans}
    [Simplified labelled transition relation]
    \textit{
    Given a multiset rewriting system $R$ that does not have persistent facts,
    $$S \xrightarrow{n}_R ((S \backslash ^\# l) \cup^\# r)$$
    where $n$ is the name of a rule in $R$, 
    if and only if $l -[a] \rightarrow r \in_E  \mathit{ginsts}(R \cup  \kwt{Fresh})$, $l \subseteq^\# S$.
    }
\end{lemma}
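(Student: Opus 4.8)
The plan is to derive Lemma~\ref{lemma:simpletrans} directly from Definition~\ref{lemma:trans} by specializing the general labelled transition relation to the subclass of multiset rewriting systems that contain no persistent facts. First I would recall the standard partition of any fact set or sequence into its linear and persistent parts, so that for a sequence of facts $S$ we have $\mathit{lfacts}(S)$ (the multiset of linear facts) and $\mathit{pfacts}(S)$ (the set of persistent facts). The whole point is that when $R$ together with $\kwt{Fresh}$ introduces no persistent facts, every fact occurring in any ground instance $l -[a] \rightarrow r \in_E \mathit{ginsts}(R \cup \kwt{Fresh})$ is linear, and likewise every fact reachable in a state $S$ is linear. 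The key observations to make precise are therefore: (i) $\mathit{pfacts}(l) = \emptyset$ for the premise $l$ of every such rule instance, and (ii) $\mathit{lfacts}(l) = l$ as a multiset, since $l$ has no persistent part to drop.

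The main steps, in order, would be as follows. I would start from the defining condition of $\xrightarrow{n}_R$ in Definition~\ref{lemma:trans}, namely that $S \xrightarrow{n}_R ((S \backslash^\# \mathit{lfacts}(l)) \cup^\# r)$ holds iff $l -[a] \rightarrow r \in_E \mathit{ginsts}(R \cup \kwt{Fresh})$, $\mathit{lfacts}(l) \subseteq^\# S$, and $\mathit{pfacts}(l) \subseteq^\# S$. Then I would substitute the two simplifications from the no-persistent-facts hypothesis: the side condition $\mathit{pfacts}(l) \subseteq^\# S$ becomes $\emptyset \subseteq^\# S$, which holds vacuously and can be dropped; and $\mathit{lfacts}(l)$ can be replaced by $l$ throughout, both in the side condition $\mathit{lfacts}(l) \subseteq^\# S$ (giving $l \subseteq^\# S$) and in the successor-state expression $S \backslash^\# \mathit{lfacts}(l)$ (giving $S \backslash^\# l$). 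After these substitutions the equivalence reads exactly as the statement of the lemma, so the result follows.

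The one point requiring a little care, rather than a genuine obstacle, is justifying that no persistent facts can appear \emph{anywhere} relevant: it is not quite enough that the rules of $R$ mention no persistent facts, because the transition relation also folds in $\kwt{Fresh}$ and operates on arbitrary reachable states $S$. I would address this by noting that $\kwt{Fresh}$ itself produces only the linear fact $\kwf{Fr}(x : \mathit{fresh})$ on its right-hand side and has an empty premise, so it contributes no persistent facts; hence $\mathit{ginsts}(R \cup \kwt{Fresh})$ is persistent-fact-free whenever $R$ is. Since the initial states in FASVERIF and every successor produced by $\cup^\#$ with persistent-fact-free conclusions remain persistent-fact-free, $\mathit{pfacts}(l) \subseteq^\# S$ is automatically satisfied and carries no information. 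With this preservation remark in place, the substitution argument above is fully justified and Lemma~\ref{lemma:simpletrans} is an immediate corollary of Definition~\ref{lemma:trans}.
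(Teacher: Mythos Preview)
Your proposal is correct and follows exactly the approach the paper takes: the paper does not give a formal proof at all, but simply remarks that ``since the persistent facts are not used in FASVERIF, we trivially conclude the following lemma,'' i.e., it treats the result as an immediate specialization of Definition~\ref{lemma:trans}. Your write-up merely makes explicit the two substitutions ($\mathit{pfacts}(l)=\emptyset$ and $\mathit{lfacts}(l)=l$) that the paper leaves implicit.
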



\begin{definition}{(MSR-executions)}
\textit{
    Given a multiset rewriting system $R$, define its set of executions as
}
    \begin{align*}
    \textit{exec}^{m s r}(R)=&\{\emptyset{\stackrel{r_{1}}{\longrightarrow}}_{R} \ldots{\stackrel{r_{n}}{\longrightarrow}}_{R} S_{n} \mid 
    \forall a, i, j: 0 \leq i \neq j<n .\\
   &(S_{i+1} \backslash^\# S_{i})=\{\operatorname{Fr}(a)\} \Rightarrow(S_{j+1} \backslash^{\#} S_{j}) \neq\{\operatorname{Fr}(a)\}\}
    \end{align*}
    The definition indicates that the rule $\kwt{Fresh}$ is at most fired once for each name in the transition sequence.
    $r_i$ is the name of a rule in $R$.
\end{definition}

\begin{definition}{(MSR-traces)}
\textit{
    The set of traces is defined as  
}
\begin{align*}
 \mathit{traces}^{msr}(R)=&\{[r_{1}, \ldots, r_{n}] \mid \forall 0 \leq i \leq n . r_{i} \neq \emptyset\\
 &\textit{and } \emptyset{\stackrel{r_{1}}{\longrightarrow}}_{R} \ldots{\stackrel{r_{n}}{\longrightarrow}}_{R} S_{n} \in \textit{exec}^{m s r}(R)\}
\end{align*}
\end{definition}

\textbf{K-framework \cite{Rosu2010} and matching logic \cite{Rosu2017}}: 

A signature $\Sigma$ is a pair $(S,F)$ where $S$ is a set of sorts and $F$ is a set of operations $f :w\rightarrow s$, where $f$ is an operation symbol, $w \in S^*$ is its arity, and $s \in S$ is its result sort.
If $w$ is the empty word $\epsilon$ then $f$ is a constant.
The universe of terms $T_\Sigma$ associated with a signature $\Sigma$ contains all the terms which can be formed by iteratively applying symbols in $F$ over existing terms (using constants as basic terms, to initiate the process), matching the arity of the symbol being applied with the result sorts of the terms it is applied to.
Given an S-sorted set of variables $\mathcal{X}$, the universe of terms $T_\Sigma(\mathcal{X})$  with operation symbols from $F$ and variables from $\mathcal{X}$ consists of all terms in  $T_\Sigma(\mathcal{X})$, where $\Sigma(\mathcal{X})$ is the signature obtained by adding the variables in $\mathcal{X}$ as constants to $\Sigma$, each to its corresponding sort. 
One can associate a signature with any context-free language (CFG), so that well-formed words in the  CFG language are associated with corresponding terms in the signature.

A \textit{substitution} is a mapping yielding terms (possibly with variables) for variables.
Any substitution $\psi : \mathcal{X} \rightarrow T_\Sigma(\mathcal{Y})$  naturally extends to terms, yielding a homonymous mapping $\psi : T_\Sigma(\mathcal{X}) \rightarrow T_\Sigma(\mathcal{Y})$.
When $\mathcal{X}$ is finite and small, the application of substitution $\psi :\{x_1,\dots,x_n\}  \rightarrow T_\Sigma(y)$ to term $t$ can be written as $t[\psi(x_1)/x_1,...,$ $\psi(x_n)/x_n] ::= \psi(t)$.
This notation allows one to use substitutions by need, without formally defining them.

Given an ordered set of variables, $\mathcal{W} = \{\square_1, \dots, \square_n\}$, named \textit{context variables}, or \textit{holes}, a $\mathcal{W}$-context over $\Sigma(\mathcal{X})$ (assume that $\mathcal{X} \cap\mathcal{W} = \emptyset$) is a term $ C \in T_\Sigma (\mathcal{X} \cup \mathcal{W})$ which is linear in $W$ (i.e., each hole appears exactly once).
The instantiation of a $\mathcal{W}$-context C with an n-tuple $\bar{t} =(t_1,\dots,t_n)$,written $C[\bar t]$ or $C[t_1,\dots,t_n]$, is the term $C[t_1/\square_1,\dots,t_n/\square_n]$.
One can alternatively regard $\bar t$ as a substitution $\bar t : \mathcal{W} \rightarrow T_\Sigma(X)$, defined by $\bar t(\square_i) = t_i$, in which case $C[\bar t] = \bar t(C)$.
A $\Sigma$-context is a $\mathcal{W}$ context over $\Sigma$ where $\mathcal{W}$ is a singleton.

A \textit{rewrite system} over a term universe $T_\Sigma$ consists of rewrite rules, which can be locally matched and applied at different positions in a $\Sigma$-term to gradually transform it.
 For simplicity, we only discuss unconditional rewrite rules. A $\Sigma$-rewrite rule is a triple $(X,l,r)$, written $(\forall \mathcal{X} ) l \rightarrow r$, where $\mathcal{X}$ is a set of variables and $l$ and $r$ are $T_\Sigma(X)$-terms, named the \textit{left-hand-side} \textit{(lhs)} and the \textit{right-hand-side} \textit{(rhs)} of the rule, respectively.
 A rewrite rule $(\forall \mathcal{X}) l \rightarrow r$ matches a $\Sigma$-term $t$ using $\mathcal{W}$-context $C$ and substitution $\theta$, iff $t = C[\theta[l]]$.
If that is the case, then the term t rewrites to $C[\theta[r]]$.
A ($\Sigma$-)rewrite-system $\mathcal{R} = (\Sigma, R)$ is a set $R$ of $\Sigma$-rewrite rules.

\begin{definition}{(K rule, K-system).}
    \label{def:k-rule-k-system}
\textit{
    A \textbf{K-rule} $\rho : (\forall \mathcal{X})p[\dfrac{L}{R}]$ over a signature $\Sigma=(S,F)$ is a tuple $(\mathcal{X},p,L,R)$, where: 
    \begin{itemize}
        \item  $\mathcal{X}$ is an $S$-sorted set, called the \textbf{variables} of the rule $\rho$;
        \item $p$ is a $\mathcal{W}$-context over $\Sigma(\mathcal{X})$, called the \textbf{rule pattern}, where $\mathcal{W}$ are the holes of $p$; $p$ can be thought of as the “read-only”  part of $\rho$;
        \item  $L, R : \mathcal{W} \rightarrow T_\Sigma(\mathcal{X} )$ associate to each hole in $\mathcal{W}$ the \textbf{original term} and its \textbf{replacement term}, respectively; $L$, $R$ can be thought of as the “read-write” part of $\rho$.
    \end{itemize}
    We may write $(\forall \mathcal{X})p[ \dfrac{l_1}{r_1},\dots,\dfrac{l_n}{r_n} ]$ instead of $\rho : (\forall \mathcal{X})p[\dfrac{L}{R}]$ whenever $\mathcal{W} =\{\square_1,\dots,\square_n\}$ and $L(\square_i)=l_i$ and $R(\square_i)=r_i$; this way, the holes are implicit and need not be mentioned.
    A set of K rules $\mathcal{K}$ is called a \textbf{K system}.
} 
\end{definition}

Recall rule \textsc{ReadAddress-LocalVariables} in Appendix~\ref{sec:Aglance}.
When formalizing the rule as $\rho_r : (\forall \mathcal{X}_r)p_r[\dfrac{L_r}{R_r}]$,
according to Definition~\ref{def:k-rule-k-system},
 we obtain 
 \begin{small}
 \begin{align*}
    \rho_r : \ & \Mrule{  \dfrac{\code{readAddress(Addr,String2Id("Local"))}}{\code{gasCal(\#read,String2Id("Local")) $\curvearrowright$ V}} \  \_ }{k} \\
    &\Mrule{\code{N} \ \_}{contractStack}\ 
    \Mrule{\ 
        \Mrule{\code{N}}{ctId}\ 
        \Mrule{\_ \ \code{Addr} \mapsto \code{V} \ \_}{Memory}\ 
    }{contractInstance}
 \end{align*}
  \end{small}

 If we want to identify the anonymous variables, the rule could be alternatively written as:
 \begin{small}
 \begin{align*}
    \rho_r : \ & \Mrule{  \dfrac{\code{readAddress(Addr,String2Id("Local"))}}{\code{gasCal(\#read,String2Id("Local")) $\curvearrowright$ V}} \  a }{k} \\
    &\Mrule{\code{N} \ b}{contractStack}\ 
    \Mrule{\ 
        \Mrule{\code{N}}{ctId}\ 
        \Mrule{c \ \code{Addr} \mapsto \code{V} \ d}{Memory}\ 
    }{contractInstance}
 \end{align*}
 \end{small}

Here, we have:
\begin{align*}
     \mathcal{X}_r&=\{ \code{Addr},  \code{V}, a, N, b,c,d \} \\
     \mathcal{W}_r&=\{\square\} \\
     p_r&=  
    \Mrule{ \square \  a }{k} \ 
    \Mrule{\code{N} \ b}{contractStack}\ \\
    &\ \ \ \ \   \Mrule{\ 
        \Mrule{\code{N}}{ctId}\ 
        \Mrule{c \ \code{Addr} \mapsto \code{V} \ d}{Memory}\ 
    }{contractInstance}\\
    L_r(\square) &= \code{readAddress(Addr,String2Id("Local"))} \\
    R_r(\square) &= \code{gasCal(\#read,String2Id("Local")) $\curvearrowright$ V}
\end{align*}

\Newrule{AllocateAddress-LocalVariables}{
\RWrule{
\code{allocateAddress(N:Int, Addr:Int,}\\ 
\code{String2Id("Local"), V:Value)}
}{
\code{gasCal(\#allocate,String2Id("Local")) $\curvearrowright$ V}
}{k}\\
\RruleS{
\RruleM{\code{N}}{ctId}\\
\RWruleN{
\code{MEMORY:Map}
}{
\code{MEMORY (Addr |-> V) }
}{Memory}
}{contractInstance}
}

Another example is rule \textsc{AllocateAddress-LocalVariables}.
The difference from the first example is that there are 2 cells to be subistituted in the configuration.
When formalizing the rule as $\rho_w : (\forall \mathcal{X}_w)p_w[\dfrac{L_w}{R_w}]$,
according to Definition~\ref{def:k-rule-k-system},
 we obtain 
\begin{align*}
    \rho_w :\  & 
    \Mrule{
        \dfrac{\code{allocateAddress(N, Addr,}
                \code{String2Id("Local"), V)}}
                {\code{gasCal(\#allocate,String2Id("Local")) $\curvearrowright$ V}}
                \ \_
    }{k} \\
    & \Mrule{
        \Mrule{\code{N}}{ctId}\ 
        \Mrule{
            \dfrac{\code{MEMORY}}{\code{MEMORY} \  (\code{Addr} \mapsto \code{V}) }\ \_
            }{Memory}
    }{contractInstance}
\end{align*}
 If we want to identify the anonymous variables, the rule could be alternatively written as:
\begin{align*}
    \rho_w :\  & 
    \Mrule{
        \dfrac{\code{allocateAddress(N, Addr,}
                \code{String2Id("Local"), V)}}
                {\code{gasCal(\#allocate,String2Id("Local")) $\curvearrowright$ V}}
                \ a
    }{k} \\
    & \Mrule{
        \Mrule{\code{N}}{ctId}\ 
        \Mrule{
            \dfrac{\code{MEMORY}}{\code{MEMORY}\  (\code{Addr} \mapsto \code{V}) }\ b
            }{Memory}
    }{contractInstance}
\end{align*}
Here, we have:
\begin{align*}
     \mathcal{X}_w&=\{\code{N}, \code{Addr},  \code{V}, a,  b, \code{MEMORY} \} \\
     \mathcal{W}_w&=\{\square_1, \square_2\} \\
     p_w&=  
    \Mrule{ \square_1 \  a }{k} \ 
    \Mrule{\ 
        \Mrule{\code{N}}{ctId}\ 
        \Mrule{\square_2 \  b}{Memory}\ 
    }{contractInstance}\\
    L_w(\square_1) &= \code{allocateAddress(N, Addr,}
                \code{String2Id("Local"), V)} \\
    R_w(\square_1) &=   \code{gasCal(\#allocate,String2Id("Local")) $\curvearrowright$ V} \\
    L_w(\square_2) &=  \code{MEMORY} \\
    R_w(\square_2) &= \code{MEMORY} \ (\code{Addr} \mapsto \code{V})
\end{align*}



The matching logic \cite{Rosu2017} can serve as a logic foundation of K system. 
We give some definitions and properties that we use as follows.

\begin{definition}{(Pattern).}
    \label{def:pattern}
\textit{
    A matching logic formula, or a \textbf{pattern}, is a first-order logic (FOL) formula.
    Let  $\mathcal{T}$ denote the elements of $T_\Sigma(\mathcal{X})$ of a distinguished sort, called configurations.
    Define satisfaction $ (\gamma, \psi) \vDash \varphi $ over configurations $\gamma \in \mathcal{T}$, valuations (can also be seen as substitutions) $\psi : T_\Sigma(\mathcal{X}) \rightarrow T_\Sigma(\mathcal{Y})$ and patterns $\varphi$ as follows (among the FOL constructs, we only show $\exists$):
    \begin{itemize}
        \item $(\gamma, \psi) \vDash \exists X \varphi $ iff $ (\gamma, \psi') \vDash \varphi $ for some $\psi' : T_\Sigma(\mathcal{X}) \rightarrow T_\Sigma(\mathcal{Y})$ with $\psi'(y)=\psi(y)$ for all $y \in T_\Sigma(\mathcal{X}) \backslash X$.
        \item $(\gamma, \psi) \vDash \pi$ iff $\gamma = \psi(\pi)$, where $\pi \in \mathcal{T} $.
    \end{itemize}
    We write $\vDash \varphi$ when $ (\gamma, \psi) \vDash \varphi $   for all $\gamma \in \mathcal{T}$  and all $\psi : T_\Sigma(\mathcal{X}) \rightarrow T_\Sigma(\mathcal{Y})$.
}
\end{definition}
\textit{Example}: in rule \textsc{AllocateAddress-LocalVariables}, $p_w$ is an abbreviated form of FOL logic formula:
\begin{equation*}
    \exists \square_1,a,N,\square_2,b.
    \Mrule{
    \Mrule{ \square_1 \  a }{k} \ 
    \Mrule{\ 
        \Mrule{\code{N}}{ctId}\ 
        \Mrule{\square_2 \  b}{Memory}\ 
    }{contractInstance}
    }{\textit{Cfg}}
\end{equation*}
Here, $\Mrule{\dots}{Cfg}$ represents a configuration pattern.

\begin{lemma}{(Structural Framing)}
    \label{lemma:structuralFraming}
     If $\sigma \in \Sigma_{s_1,\dots,s_n,s} $, 
     and $\varphi_i,\varphi_i' \in \mathrm{Pattern}_{s_i} $
        such that $\vDash \varphi_i $ $\rightarrow \varphi_i'$  for all $i \in 1\dots n$, then $\vDash \sigma(\varphi_1,\dots,\varphi_n)$ $ \rightarrow  \sigma(\varphi_1',\dots,\varphi_n') $.

    Let $T_{\Sigma,s}(\textit{Var})$  be the set of $\Sigma$-terms of sort $s$, and $\mathrm{Pattern}_s$ be the $s$-sorted set of patterns. Therefore, think of $\Sigma_{s_1,\dots,s_n,s}$ as the pattern $\sigma(x_1:s_1,\dots,x_n:s_n)$.
\end{lemma}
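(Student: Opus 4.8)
The plan is to prove Structural Framing as a direct semantic monotonicity argument over the satisfaction relation $\vDash$ of Definition~\ref{def:pattern}, with \emph{no} induction on pattern structure needed: a single unfolding of the symbol-application clause suffices. First I would record the two readings of validity on which the proof rests. By the standard FOL semantics of implication underlying the excerpt's ``among the FOL constructs, we only show $\exists$'' convention, the hypothesis $\vDash \varphi_i \to \varphi_i'$ unfolds to: for every element $a$ of sort $s_i$ and every valuation $\psi$, if $(a,\psi)\vDash \varphi_i$ then $(a,\psi)\vDash \varphi_i'$. Likewise the goal $\vDash \sigma(\varphi_1,\dots,\varphi_n)\to\sigma(\varphi_1',\dots,\varphi_n')$ unfolds to: for every $\gamma$ of sort $s$ and every $\psi$, $(\gamma,\psi)\vDash\sigma(\varphi_1,\dots,\varphi_n)$ implies $(\gamma,\psi)\vDash\sigma(\varphi_1',\dots,\varphi_n')$. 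So the whole statement reduces to a pointwise implication, quantified over an arbitrary but fixed $\gamma$ and $\psi$.

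The core step is to make the symbol-application clause explicit, since Definition~\ref{def:pattern} lists satisfaction only for the $\exists$ case and the bare-term ($\pi\in\mathcal{T}$) case. Under the matching-logic interpretation a symbol $\sigma\in\Sigma_{s_1,\dots,s_n,s}$ is lifted pointwise to patterns, giving the clause
\[
(\gamma,\psi)\vDash\sigma(\varphi_1,\dots,\varphi_n)\iff \exists a_1,\dots,a_n.\;\gamma=\sigma(a_1,\dots,a_n)\ \wedge\ \bigwedge_{i=1}^{n}(a_i,\psi)\vDash\varphi_i,
\]
which specializes the clause $\gamma=\psi(\pi)$ when each $\varphi_i$ is a bare term. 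Fixing $\gamma$ and $\psi$ and assuming $(\gamma,\psi)\vDash\sigma(\varphi_1,\dots,\varphi_n)$, I extract witnesses $a_1,\dots,a_n$ with $\gamma=\sigma(a_1,\dots,a_n)$ and $(a_i,\psi)\vDash\varphi_i$ for each $i$. Applying the unfolded hypothesis to each $a_i$ yields $(a_i,\psi)\vDash\varphi_i'$, and since the witnessing equality $\gamma=\sigma(a_1,\dots,a_n)$ is untouched, the same clause in its $\Leftarrow$ direction (with the primed patterns) gives $(\gamma,\psi)\vDash\sigma(\varphi_1',\dots,\varphi_n')$. As $\gamma$ and $\psi$ were arbitrary, this establishes the goal.

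The one subtlety, and the step I expect to require the most care, is pinning down the precise semantics of $\sigma$ applied to patterns, because the abbreviated Definition~\ref{def:pattern} does not list it. In the free (term) model used here $\sigma$ is deterministic, so $\sigma(a_1,\dots,a_n)$ denotes the single term $\sigma(a_1,\dots,a_n)$ and the clause above is exact; I would also note that the argument is insensitive to this choice, since for a general set-valued interpretation $\sigma^M:\prod_i M_{s_i}\to\mathcal{P}(M_s)$ the relevant membership is $\gamma\in\bigcup\{\sigma^M(\bar a):(a_i,\psi)\vDash\varphi_i\}$, and enlarging each argument set (as the hypotheses permit) can only enlarge this union. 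Either way the lemma is exactly the assertion that pointwise symbol lifting is monotone in each argument under inclusion of denotations, so the remaining work is purely to phrase this monotonicity in the paper's $(\gamma,\psi)\vDash$ notation rather than in terms of denotation sets (whose usual bracket notation $\llbracket\cdot\rrbracket$ is already reserved here for sequences), and the proof then closes without any structural induction on the $\varphi_i$.
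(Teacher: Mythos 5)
Your proof is correct, and it is essentially the standard argument: the pointwise-lifting clause for symbol application is precisely the matching-logic semantics, and monotonicity of the resulting union when each argument's denotation is enlarged yields the implication immediately, with no structural induction. Note, however, that the paper itself gives no proof of this lemma at all — it is recalled as background from the matching-logic literature~\cite{Rosu2017} and followed only by a worked example — so your argument supplies the proof the paper delegates to its citation, and it coincides with how the framing rule is justified there. One point you handled correctly but that deserves emphasis: the paper's Definition~\ref{def:pattern} defines satisfaction $(\gamma,\psi)\vDash\varphi$ only for $\gamma$ in the distinguished configuration sort, whereas the lemma quantifies patterns of arbitrary sorts $s_i$; your per-sort reading of $\vDash \varphi_i \rightarrow \varphi_i'$ (ranging over elements of sort $s_i$) is the necessary many-sorted generalization, consistent with the cited source, and your closing remark that the argument is insensitive to whether $\sigma$ is interpreted deterministically (term model) or as a powerset-valued function is exactly the right observation, since in either case the lemma reduces to monotonicity of the pointwise extension in each argument.
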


\textit{Example}: assume that
$$p= \Mrule{ 
    \code{gasCal(\#allocate,String2Id("Local"))  \_}
    }{k}\ 
    \Mrule{
        5
    }{ctId}
$$ 
Since  
$
\vDash
    \Mrule{
        5
    }{ctId}
    \rightarrow \top
$, 
where intuitively $\top$ is a pattern that is matched by all elements,
 we can get $\vDash p \rightarrow p'$ by Lemma~\ref{lemma:structuralFraming}, where
$$p'= \Mrule{ 
    \code{gasCal(\#allocate,String2Id("Local"))  \_}
    }{k}\ 
$$ 

\begin{lemma}
    \label{lemma:imply}
   If $ (\gamma, \psi) \vDash p $, and $\vDash p \rightarrow p'$, then $ (\gamma, \psi) \vDash p' $
\end{lemma}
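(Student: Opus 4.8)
The plan is to prove Lemma~\ref{lemma:imply} by directly unfolding the semantic definitions, since it is simply the soundness of the implication connective at the level of the satisfaction relation of Definition~\ref{def:pattern}. First I would invoke the meaning of the global validity judgment $\vDash p \rightarrow p'$: by the last clause of Definition~\ref{def:pattern}, the notation $\vDash \varphi$ abbreviates ``$(\gamma',\psi') \vDash \varphi$ for all $\gamma' \in \mathcal{T}$ and all valuations $\psi' : T_\Sigma(\mathcal{X}) \rightarrow T_\Sigma(\mathcal{Y})$''. Applying this with $\varphi = p \rightarrow p'$, and instantiating the universally quantified $\gamma'$ and $\psi'$ with the particular $\gamma$ and $\psi$ fixed in the hypotheses, I obtain $(\gamma,\psi) \vDash p \rightarrow p'$.

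Next I would appeal to the first-order semantics of the implication connective. Definition~\ref{def:pattern} spells out explicitly only the clauses for $\exists$ and for a configuration term $\pi$, but it also declares that a pattern \emph{is} a FOL formula, so the Boolean connectives carry their usual Tarskian meaning. In particular, $(\gamma,\psi) \vDash p \rightarrow p'$ holds if and only if $(\gamma,\psi) \not\vDash p$ or $(\gamma,\psi) \vDash p'$. Combining this equivalence with the remaining hypothesis $(\gamma,\psi) \vDash p$ rules out the first disjunct and forces the second, which is exactly the desired conclusion $(\gamma,\psi) \vDash p'$.

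The argument is essentially a single modus ponens carried out at the semantic level, so there is no genuine obstacle. The only point requiring care is that the interpretation of $\rightarrow$ is not written out in Definition~\ref{def:pattern}; I would therefore state at the outset that the proof uses the inherited FOL reading of the connectives, so that both the instantiation step and the modus ponens step are justified by the definitions already in place. Keeping the reasoning at the level of the satisfaction relation is also precisely what the downstream uses require, since the lemma is applied together with Lemma~\ref{lemma:structuralFraming} to transport satisfaction across a proved implication $p \rightarrow p'$.
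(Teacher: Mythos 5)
Your proof is correct. The paper states Lemma~\ref{lemma:imply} without any proof at all---it is recalled as background from matching logic alongside Definition~\ref{def:pattern} and Lemma~\ref{lemma:structuralFraming}---and your two-step argument (instantiating the universal quantification over configurations and valuations implicit in $\vDash p \rightarrow p'$ at the given $\gamma$ and $\psi$, then applying the inherited Tarskian semantics of $\rightarrow$ to perform semantic modus ponens) is exactly the standard justification the paper implicitly relies on; your explicit remark that Definition~\ref{def:pattern} spells out only the $\exists$ and configuration-term clauses, so the classical reading of the Boolean connectives must be invoked as the FOL heritage of patterns, correctly identifies the only point where the argument needs care.
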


Following from the definition of reachability system which is based on matching logic \cite{Rosu2012}, we  define the transition system in K systems:

\begin{definition}{(K-transition system)}
    \label{def:transitionsystem}
    \textit{
     The K system $\mathcal{K}$ induces a \textbf{K transition system} ($\mathcal{T},\karrowwx, \gamma_0$) on the configuration model. 
     Here  $\gamma_0 \in \mathcal{T}$ is the initial configuration.
      $\gamma \karrowwx \gamma'$ 
      for 
      $\gamma, \gamma' \in \mathcal{T}$
       iff there is a substitution
         $\psi : T_\Sigma(\mathcal{X}) \rightarrow T_\Sigma(\mathcal{Y})$
            and 
            $\rho : (\forall \mathcal{X})p[\dfrac{L}{R}]$
        in $\mathcal{K}$
          with
           $\gamma=\psi(p[L])$
            and
           $\gamma'=\psi(p[R])$
        also written as 
      $\gamma \karrowx{} \gamma'$.
    }
\end{definition}

\begin{definition}{(K-executions)}
\textit{
    Given a K-transition system ($\mathcal{T},\karrowwx, \gamma_0$) , define its set of executions as
}
    \begin{equation*}
    \mathit{exec}^\mathcal{K}(\gamma_0)=\{\gamma_0 \karrowx{1} \gamma_1 \karrowx{2} \dots \karrowx{n} \gamma_n
            \}
    \end{equation*}
\end{definition}



\begin{definition}{(K-traces).}
    \label{def:ktraces}
    Given a K transition system $(\mathcal{T},\karrowwx, \gamma_0)$, define the set of K traces as
    \begin{equation*}
        \mathit{traces}^\mathcal{K}(\gamma_0)=\{
            [\rho_1,\dots,\rho_n] \mid 
            \gamma_0 \karrowx{1} \gamma_1 \karrowx{2} \dots \karrowx{n} \gamma_n
            \}
    \end{equation*}
\end{definition}

\subsubsection{\textbf{Proof of Soundness}}
\label{subsec:proof}

In this section, we first introduce some definitions, propositions and lemmas that will be used in the subsequent proofs.
We then briefly describe the property we want to prove that \ourtool satisfies, that is, soundness.
Finally, we introduce and prove a theorem and show how it can be used to prove the soundness of \ourtool.
Note that the proofs can improve the faithfulness of \ourtool, but it still does not mean that the results of \ourtool are completely reliable, due to the informal part of \ourtool, \textit{i.e.}, the property generation, and the gap between Solidity and EVM bytecode.
Here we regard the property generation as an informal part since our properties are generated based on our statistical analysis instead of using formal methods.


\begin{table*}[ht]
    \centering
    \caption{Correspondence relationship of translated rules from Solidity codes. }
    \label{table:tableCorrespondence}
    \resizebox{\textwidth}{!}{




\begin{tabular}{|c|c|c|c|c|c|c|c|}
    \hline 
    \multirow{2}{*}{ID} & \multirow{2}{*}{States of solidity process in function $f_{c}$} & \multicolumn{4}{c|}{KSolidity} & \multicolumn{2}{c|}{FASVERIF}\tabularnewline
    \cline{3-8} \cline{4-8} \cline{5-8} \cline{6-8} \cline{7-8} \cline{8-8} 
     &  & Start or key rule & Command in rule & Correspondence & Valuation $\mathcal{E}$ & Key Rule & $F_{c}(i,id)$\tabularnewline
    \hline 
    \multirow{2}{*}{1} & $(\gamma,$ $\phi$, $\texttt{function}\ f_{c}(\texttt{d})\{\kwp{stmt}\}$) & \multirow{2}{*}{$\textsc{Function-Call}$} & $\code{functionCall(C:Int;R:Int;}$ & \multirow{2}{*}{$\code{Es}$ \textasciitilde{} d} & \multirow{2}{*}{$\mathcal{E}(\sigma(seq(\texttt{d})))=\texttt{d}$} & \multirow{2}{*}{$\kwt{recv\_ext}$} & $\{\kwf{Call}_{e},\kwf{GVar}$ \tabularnewline
     & $\Rightarrow(\gamma',1,\kwp{stmt})$  &  & $\code{F:Id;Es:Values;M:Msg)}$ &  &  &  & \multirow{1}{*}{$,\kwf{EVar}\}$}\tabularnewline
    \hline 
    \multirow{3}{*}{2} & local state: $(\gamma,i,x(c_{x}).f_{x}(p);\kwp{stmt}$$)\Rightarrow$ & \multirow{3}{*}{$\textsc{Function-Call}$} & $\code{functionCall(C:Int;R:Int;}$ &  &  & \multirow{3}{*}{$\kwt{in\_call}$} & \multirow{3}{*}{$\{\kwf{Var}_{i}\}$}\tabularnewline
     & Recipient state: $(\gamma',1,\kwp{stmt}_{x})$ &  & $\code{F:Id;Es:Values;M:Msg)}$ & $\code{Es}$ \textasciitilde{} p  & $\mathcal{E}(\sigma(seq(\texttt{p})))=\texttt{p}$ &  & \tabularnewline
     & local state: ...$\Rightarrow(\gamma'',i\circ1\circ1$, $\kwp{stmt}$$)$ &  &  &  &  &  & \tabularnewline
    \hline 
    \multirow{2}{*}{3} & $(\gamma,i,v_{1}\leftarrow v_{2};\kwp{stmt})$ & \multirow{2}{*}{..., $\textsc{{Write}}$} & \multirow{2}{*}{$\code{X:Id=V:Value}$} & $\code{X}\sim v_{1}$  & \multirow{2}{*}{$\mathcal{E}(\sigma_{v}(v_{2}))=v_{2}$} & \multirow{2}{*}{$\kwt{var\_assign}$} & $\{\kwf{Var}_{i}\}$\tabularnewline
    \cline{8-8} 
     & $\Rightarrow(\gamma',i\circ1,\kwp{stmt})$ &  &  & $\code{V}\sim v_{2}$ &  &  & $\{\kwf{Var}_{i}\}$\tabularnewline
    \hline 
    \multirow{2}{*}{4} & $(\gamma,i,\tau\ v_{1}\leftarrow v_{2};\kwp{stmt})$ & \multirow{2}{*}{..., $\textsc{Var-Declaration}$} & \multirow{2}{*}{$\code{T:EleType\ X:Id=V:Value}$} & $\code{X}\sim v_{1}$  & \multirow{2}{*}{$\mathcal{E}(\sigma_{v}(v_{2}))=v_{2}$} & \multirow{2}{*}{$\kwt{var\_declare}$} & \multirow{2}{*}{$\{\kwf{Var}_{i}\}$}\tabularnewline
     & $\Rightarrow(\gamma',i\circ1,\kwp{stmt})$ &  &  & $\code{V}\sim v_{2}$ &  &  & \tabularnewline
    \hline 
    \multirow{2}{*}{5} & $(\gamma,i,\texttt{if}\ e_{b}\ \texttt{then}\ \kwp{stmt}_{1}\ \texttt{else}\ \kwp{stmt}_{2};\kwp{stmt}_{3})$ & \multirow{2}{*}{..., $\textsc{{R5}}$} & $\code{if\ (true)\ S:Statement}$ & \multirow{2}{*}{$\code{true}\sim e_{b}$} & \multirow{2}{*}{$\mathcal{E}(\theta_{e}(e_{b}))=\code{true}$} & \multirow{2}{*}{$\kwt{if\_true}$} & \multirow{2}{*}{$\{\kwf{Var}_{i}\}$}\tabularnewline
     & $\Rightarrow(\gamma',i\circ1,\kwp{stmt}_{1};\kwp{stmt_{3}})$ &  & $\code{\ else\ S1:Statement}$ &  &  &  & \tabularnewline
    \hline 
    \multirow{2}{*}{6} & $(\gamma,i,\texttt{if}\ e_{b}\ \texttt{then}\ \kwp{stmt}_{1}\ \texttt{else}\ \kwp{stmt}_{2};\kwp{stmt}_{3})$ & \multirow{2}{*}{..., $\textsc{{R6}}$} & $\code{if\ (false)\ S:Statement}$ & \multirow{2}{*}{$\code{false}\sim e_{b}$} & \multirow{2}{*}{$\mathcal{E}(\theta_{e}(e_{b}))=\code{false}$} & \multirow{2}{*}{$\kwt{if\_false}$} & \multirow{2}{*}{$\{\kwf{Var}_{i}\}$}\tabularnewline
     & $\Rightarrow(\gamma',i\circ2,\kwp{stmt}_{2};\kwp{stmt_{3}})$ &  & $\code{\ else\ S1:Statement}$ &  &  &  & \tabularnewline
    \hline 
    7 & $(\gamma,i,\texttt{require}\ e_{b}$; $\kwp{stmt})$ & \multirow{2}{*}{..., $\textsc{{Require}}$} & $\code{require(true)}$ & $\code{true}\sim e_{b}$ & $\mathcal{E}(\theta_{e}(e_{b}))=\code{true}$ & $\kwt{require\_true}$ & $\{\kwf{Var}_{i}\}$\tabularnewline
    \cline{1-1} \cline{4-8} \cline{5-8} \cline{6-8} \cline{7-8} \cline{8-8} 
    8 & $\Rightarrow(\gamma',i\circ1,\kwp{stmt})$  &  & $\code{require(false)}$ & $\code{false}\sim e_{b}$ & $\mathcal{E}(\theta_{e}(e_{b}))=\code{false}$ & $\kwt{require\_false}$ & $\{\kwf{Var}_{i}\}$\tabularnewline
    \hline 
    9 & $(\gamma,i,\texttt{return}\ \code{i\}});$ $\Rightarrow(\gamma',\phi,\_)$ & $...,\textsc{Return-Value}$ & $\code{return\ E:Value}$ & $\code{E}\sim i$ &  & $\kwt{ret\_ext}$ & $\{\kwf{Var}_{i}\}$\tabularnewline
    \hline 
    10 & $(\gamma,i,\texttt{return}\ \code{i\}});$ $\Rightarrow...(\textit{caller states})$ & $...,\textsc{Return-Value}$ & $\code{return\ E:Value}$ & $\code{E}\sim i$ &  & $\kwt{ret\_in}$ & $\{\kwf{Var}_{i}\}$\tabularnewline
    \hline 
    11 & $(\gamma,i,c_{x}.\texttt{transfer}(v_{1});\kwp{stmt}$$)$  & \multirow{2}{*}{..., $\textsc{Transfer-Fund-Begin}$} & \multirow{1}{*}{$\code{\#memberAccess(R:Id,F:Id)}$} & $\code{MsgValue}\sim v_{1}$ & \multirow{2}{*}{$\mathcal{E}(\sigma_{v}(v_{1}))=v_{1}$} & $\kwt{transfer\_succ}$ & $\{\kwf{Var}_{i}\}$\tabularnewline
    \cline{1-1} \cline{7-8} \cline{8-8} 
    12 & $\Rightarrow(\gamma',i\circ1,\kwp{stmt})$  &  & $\code{\ensuremath{\curvearrowright}MsgValue:Int}$ & $\code{R}\sim c_{x}$  &  & $\kwt{transfer\_fail}$ & $\{\kwf{Var}_{i}\}$\tabularnewline
    \hline 
    13 & $(\gamma,i,c_{x}.\texttt{send}(v_{1});\kwp{stmt})$  & \multirow{2}{*}{..., $\textsc{Send-Fund-Begin}$} & \multirow{1}{*}{$\code{\#memberAccess(R:Id,F:Id)}$} & $\code{MsgValue}\sim v_{1}$ & \multirow{2}{*}{$\mathcal{E}(\sigma_{v}(v_{1}))=v_{1}$} & $\kwt{send\_succ}$ & $\{\kwf{Var}_{i}\}$\tabularnewline
    \cline{1-1} \cline{7-8} \cline{8-8} 
    14 & $\Rightarrow(\gamma',i\circ1,\kwp{stmt})$ or $\Rightarrow(\gamma',i\circ2,\kwp{stmt})$ &  & $\code{\ensuremath{\curvearrowright}MsgValue:Int}$ & $\code{R}\sim c_{x}$  &  & $\kwt{send\_fail}$ & $\{\kwf{Var}_{i}\}$\tabularnewline
    \hline 
    \end{tabular}

    }
\end{table*}

Given a set of source codes in Solidity language $S=\{c_1,c_2,\dots,c_{n}\}$, we define the K system of KSolidity $\ks$, the independent model of \ourtool $\rs$, the complementary model for the invariant property $R_{inv}$ and the complementary model for the equivalence property $R_{equ}$ as follows. 

\begin{definition}{($\ks$)}
    $\ks=\{$\textsc{Require, Out-of-gas},  $\dots\}
    $, \textit{as shown in Figure~\ref{fig:definerule1},\ref{fig:definerule2},\ref{fig:definerule3},\ref{fig:definerule4},\ref{fig:definerule5},\ref{fig:definerule6},\ref{fig:definerule7}.
    The K system $\ks$ induces a \textbf{K transition system} ($\mathcal{T},\karroww, \gamma_0$) according to Definition~\ref{def:transitionsystem}. Here the $k$ cell in $\gamma_0$ is ${\left\langle c_j \right\rangle}_{k} $ where $c_j \in S$.
    }
\end{definition}
Note that we include a subset of KSolidity rules. 
For instance, rules for $\kwt{while}$ statement and arrays are not included.
To improve readability, we also remove redundant and similar rules for proving.

\begin{definition}{$(R_s)$}
   \textit{
    $\rs= \mathcal{R}(c_1) \cup \mathcal{R}(c_2) \cup \dots \mathcal{R}(c_{n})\cup \{ \kwt{init\_evars,}$ $ \kwt{init\_gvars, fb\_in\_call,ret\_fb}\}$. 
    }
\end{definition}

\begin{definition}{$(R_{inv})$}
    \textit{
     $R_{inv}= \mathcal{R}(c_1) \cup \mathcal{R}(c_2) \cup \dots \mathcal{R}(c_{n})  \backslash$ $ \{ \kwt{ext\_call, ret\_ext}\} \cup \{ \kwt{init\_evars,ret\_fb, fb\_in\_call,init\_}$ $\kwt{gvars\_inv, ext\_call\_inv, ret\_ext\_}$ $\kwt{inv}\}$. 
     }
\end{definition}

\begin{definition}{$(R_{equ})$}
    \textit{
    $R_{equ}= \mathcal{R'}(c_1) \cup \mathcal{R'}(c_2) \cup \dots \mathcal{R'}(c_{n})$ $ \cup \{ \kwt{init\_evars\_AB,init\_gvars\_AB, ret\_fb\_A,ret\_fb\_B,compare\_AB}$ $\kwt{,fb\_in\_call\_A,}$ $\kwt{fb\_in\_call\_B}\}$.
    }
\end{definition}

\begin{table}[]
 \centering
    \caption{The code of state-typed configurations}
    \label{table:code}
\begin{tabular}{|c|}
\hline
code \\ \hline
$\code{functionCall(C:Int;R:Int;F:Id;Es:Values;M:Msg)}$   \\ \hline
$\code{X:Id=V:Value}$   \\ \hline
$\code{T:EleType\ X:Id=V:Value}$    \\ \hline
$\code{if\ (true)\ S:Statement}$   \\ \hline
$\code{if\ (false)\ S:Statement}$    \\ \hline
$\code{require(true)}$    \\ \hline
$\code{require(false)}$    \\ \hline
$\code{return\ E:Value}$ \\ \hline
$\code{\#memberAccess(R:Id,F:Id)}$ \\ \hline
\end{tabular}
\end{table}

\begin{table}[]
 \centering
    \caption{Transitions of states of Solidity process}
    \label{table:states}
\begin{tabular}{|c|}
\hline
transition \\ \hline
$(\gamma,\phi, \texttt{function}\ f_{c}(\texttt{d})\{\kwp{stmt}\})$\\ $\Rightarrow(\gamma',1,\kwp{stmt})$  \\ \hline
local state: $(\gamma,i,x(c_{x}).f_{x}(p);\kwp{stmt})$\\ $\Rightarrow$ Recipient state: $(\gamma',1,\kwp{stmt}_{x})$\\ local state: ...$\Rightarrow(\gamma'',i\circ1\circ1, \kwp{stmt})$ \\ \hline
$(\gamma,i,v_{1}\leftarrow v_{2};\kwp{stmt})\Rightarrow(\gamma',i\circ1,\kwp{stmt})$ \\ \hline
$(\gamma,i,\tau\ v_{1}\leftarrow v_{2};\kwp{stmt})\Rightarrow(\gamma',i\circ1,\kwp{stmt})$ \\ \hline
$(\gamma,i,\texttt{if}\ e_{b}\ \texttt{then}\ \kwp{stmt}_{1}\ \texttt{else}\ \kwp{stmt}_{2};\kwp{stmt}_{3})$\\ $\Rightarrow(\gamma',i\circ1,\kwp{stmt}_{1};\kwp{stmt_{3}})$ \\ \hline
$(\gamma,i,\texttt{if}\ e_{b}\ \texttt{then}\ \kwp{stmt}_{1}\ \texttt{else}\ \kwp{stmt}_{2};\kwp{stmt}_{3})$\\ $\Rightarrow(\gamma',i\circ2,\kwp{stmt}_{2};\kwp{stmt_{3}})$ \\ \hline
$(\gamma,i,\texttt{require}\ e_{b}; \kwp{stmt})$\\ $\Rightarrow(\gamma',i\circ1,\kwp{stmt})$ \\ \hline 
$(\gamma,i,\texttt{return}\ \code{i})\Rightarrow(\gamma',\phi,\_)$  \\ \hline
$(\gamma,i,\texttt{return}\ \code{i})\Rightarrow...(\textit{caller states})$ \\ \hline
$(\gamma,i,c_{x}.\texttt{transfer}(v_{1});\kwp{stmt})$\\ $\Rightarrow(\gamma',i\circ1,\kwp{stmt})$  \\ \hline
$(\gamma,i,c_{x}.\texttt{send}(v_{1});\kwp{stmt})$\\ $\Rightarrow(\gamma',i\circ1,\kwp{stmt})$ or $\Rightarrow(\gamma',i\circ2,\kwp{stmt})$ \\ \hline
\end{tabular}
\end{table}

Note that the semantics of function calls in KSolidity is designed from a general point of view, and the K-rules corresponding to $\kwt{ether\_succ}, \kwt{ether\_fail},$ $\kwt{fb\_call}$ are not implemented, so we omit the proof for the 3 rules.


Let a runtime process of Solidity be a state transition system, named \textit{Solidity process}, that the state is yielded iff a Solidity function is called or a statement in a Solidity function is executed, and the next transition is prepared to be executed.

\begin{definition}{(State-typed configurations)}
\label{def:stateconfig}
Define a configuration $\gamma$ as state-typed, writtern \typed($\gamma$), when the first fragment of code in \textit{k} cell of $\gamma$ is shown in Table \ref{table:code}.
\end{definition}


\begin{definition}{(States of Solidity process)}
\label{def:Soliditystate}
Given a state-typed configuration $\gamma$, define a state of Solidity process as $(\gamma, i, c)$.
Here, $i$ represents the position of a state on the syntax tree of its function, and $c$ is the sequence of codes (including statements and functions) of the current function to be executed. 
$i$ and $c$ can be computed from the configuration $\gamma$ and the correspondence between them is shown in Table~\ref{table:tableCorrespondence}.
\end{definition}

In Table~\ref{table:tableCorrespondence}, the first column represents the type ID we assign to the current state, and we denote $T(s)$ as type ID of state $s$.
Let function $F_c(i,id)$ be the key facts to be consumed when the state transits from the position $i$ and type $id$.
We manually assign the value of $F_c(i,id)$ in the table which is used for latter proving.
Here, the key facts are the ones that can be identified and used to build the correspondence relationship in the proof.

\begin{definition}{(Transition relations of states of Solidity process)}
Define the transition relation $(\gamma_n, i_n, c_n) \Rightarrow (\gamma_m, i_m, c_m)$ representing that a state $(\gamma_n, i_n, c_n)$ is yielded to state $(\gamma_m, i_m, c_m)$ when the following conditions hold:
\begin{enumerate}
\item $\exists [\rho_{n+1},\dots,\rho_m].\ \gamma_n \newkarrowx{n+1} \dots \newkarrowx{m} \gamma_m$
\item $\forall \gamma_j.\ n < j < m \rightarrow \neg \typed(\gamma_j)$.
\end{enumerate}
\end{definition}
The transitions of $i$ and $c$ during the transitions of states are shown in Table \ref{table:states}.
On the ninth row of Table \ref{table:states}, the notation $\_$ means the function of an external transaction to be executed is not concerned (can be an arbitrary one).

Assume the KSolidity semantic is correct and relatively complete as claimed, so the KSolidity rules shown in Table~\ref{table:tableCorrespondence} should correctly correspond to state transition during a transaction.
Hence we propose the following assumption:

\begin{proposition}
    \label{proposition:transitionsystem}
Given a K transition system $(\mathcal{T},\karroww, \gamma_0)$ that satisfies $\ks \vDash \alpha$.
For any 
 $[\rho_1,\dots,\rho_n] \in \tracek{\gamma_0}$, 
 $
            \lambda_0 \newkarrowx{1} \lambda_1 \newkarrowx{2} \dots \newkarrowx{n} \lambda_n
 $
 corresponds to a sequence of state transitions $s_0 \Rightarrow s_1\Rightarrow \dots \Rightarrow s_m$, and
 there exists a strictly monotonically increasing function $f: \{0,\dots, m\} \rightarrow \{0\dots, n\}$ such that $\gamma_j= \lambda_{f(j)}$, where $s_j=(\gamma_j,i_j,c_j)$.
\end{proposition}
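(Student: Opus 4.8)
The plan is to prove the correspondence by reading off the state-typed configurations of the K-execution and using them as the observation points of the Solidity process. First I would fix the execution $\lambda_0 \newkarrowx{1} \lambda_1 \newkarrowx{2} \cdots \newkarrowx{n} \lambda_n$ witnessing $[\rho_1,\dots,\rho_n]\in\tracek{\gamma_0}$, and let $f(0)<f(1)<\cdots<f(m)$ enumerate exactly those indices $k\in\{0,\dots,n\}$ for which $\typed(\lambda_k)$ holds, i.e.\ for which the leading fragment of the $k$ cell matches one of the patterns listed in Table~\ref{table:code} (Definition~\ref{def:stateconfig}). This $f$ is strictly monotonically increasing by construction, and setting $\gamma_j:=\lambda_{f(j)}$ gives $\gamma_j=\lambda_{f(j)}$ directly, with $m+1$ equal to the number of state-typed configurations in the execution. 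Since each $\gamma_j$ is state-typed, Definition~\ref{def:Soliditystate} lets me extract the syntax-tree position $i_j$ and the residual code sequence $c_j$ from $\gamma_j$ through Table~\ref{table:tableCorrespondence}, so that each $s_j=(\gamma_j,i_j,c_j)$ is a well-defined state of the Solidity process.

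Next I would verify that consecutive observation points yield genuine Solidity-process transitions, i.e.\ $s_j\Rightarrow s_{j+1}$ for every $0\le j<m$. Condition~(1) in the definition of the transition relation $\Rightarrow$ is immediate, since $\lambda_{f(j)}\newkarrowx{f(j)+1}\cdots\newkarrowx{f(j+1)}\lambda_{f(j+1)}$ is a contiguous sub-execution witnessed by $[\rho_{f(j)+1},\dots,\rho_{f(j+1)}]$. Condition~(2) — that no intermediate configuration is state-typed — holds precisely because $f(j)$ and $f(j+1)$ are \emph{consecutive} elements of the enumeration, so every $\lambda_k$ with $f(j)<k<f(j+1)$ fails $\typed$. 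Thus the chain $s_0\Rightarrow s_1\Rightarrow\cdots\Rightarrow s_m$ is forced by the segmentation alone, and the strict monotonicity of $f$ together with $\gamma_j=\lambda_{f(j)}$ already delivers the asserted correspondence.

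The remaining and genuinely substantive step is to confirm that each such segment realises one of the admissible Solidity transitions of Table~\ref{table:states}, with the bookkeeping pair $(i,c)$ evolving as prescribed. Here I would proceed by a case analysis on the statement or call form heading the $k$ cell of $\gamma_j$ — assignment, declaration, the two \texttt{if} branches, \texttt{require}, \texttt{return}, \texttt{transfer}, \texttt{send}, and internal call — and for each form trace the intervening K-rules via the ``Start or key rule'', ``Command in rule'' and ``Correspondence'' columns of Table~\ref{table:tableCorrespondence}, checking both that the segment terminates exactly at the next state-typed configuration and that the induced updates to $i$ and $c$ match the corresponding entry of Table~\ref{table:states}. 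This is the main obstacle: it demands that the rules constituting $\ks$ (Figures~\ref{fig:definerule1}--\ref{fig:definerule7}) fire in exactly the expected pattern between two state-typed snapshots and that no intermediate configuration can be spuriously classified as state-typed. I would discharge it by appealing to the claimed correctness and relative completeness of KSolidity — which is why the statement is posed as a proposition resting on that assumption — thereby reducing the argument to a finite, per-construct inspection of the rule set rather than a semantic re-proof of Solidity itself.
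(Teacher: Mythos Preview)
The paper does not prove this proposition at all: immediately before stating it, the authors write ``Assume the KSolidity semantic is correct and relatively complete as claimed \ldots\ Hence we propose the following assumption,'' and Proposition~\ref{proposition:transitionsystem} is then invoked throughout the subsequent lemmas as a black box. Your proposal therefore goes strictly beyond what the paper offers. Your construction of $f$ by enumerating the indices where $\typed(\lambda_k)$ holds, and the verification of conditions (1) and (2) of $\Rightarrow$ from the fact that consecutive enumerated indices leave no intermediate state-typed configuration, are correct and make explicit what the paper leaves implicit in the definitions of state-typed configurations and the transition relation. You also correctly identify that the residual obligation---checking that each segment between state-typed configurations instantiates one of the rows of Table~\ref{table:states} with the right update to $(i,c)$---is exactly what the paper discharges by fiat via the KSolidity correctness assumption rather than by a per-rule analysis. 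In short, your argument is sound and strictly more informative than the paper's treatment; the only caveat is that the final case analysis you describe is precisely the part the authors declined to carry out, so your appeal to the KSolidity assumption at that point is in line with, not a departure from, the paper.
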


\begin{definition}{(K Successors).}
\label{def:ksucc}
   \textit{
    Let $\rho \in \ks$,
     $\ksucc(\rho)$ represents the set of \textbf{K successors} of $\rho$ such that
    \begin{equation*}
        \ksucc(\rho)= \{ \rho_1 \in \ks  \mid  \exists \gamma_1,\gamma_2,\gamma_3. \gamma_1\newkarrowx{}\gamma_2\newkarrowx{1} \gamma_3  \}  
    \end{equation*}
   }
\end{definition}

\begin{definition}{(Code Successors).}
\label{def:csucc}
   \textit{
    Let $\rho \in \ks$,
     $\csucc(\rho)$ represents the set of \textbf{code successors} of $\rho$:
    \begin{align*}
        \csucc(\rho)= &\{ \rho_1 \in \ks \mid  
        \exists \psi,\psi_1.
        \psi(p[R])=\psi_1(p_1[L_1]) \}
    \end{align*}
    where $\rho : (\forall \mathcal{X})p[\dfrac{L}{R}]$ and $\rho_1 : (\forall \mathcal{X}_1)p_1[\dfrac{L_1}{R_1}]$
   }
\end{definition}

\begin{lemma}
    \label{lemma:subsetsucc}
    For each $\rho \in \ks$, we have 
    $$ \ksucc(\rho) \subseteq \csucc(\rho)$$
\end{lemma}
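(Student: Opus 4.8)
The plan is to prove $\ksucc(\rho) \subseteq \csucc(\rho)$ by unfolding the two definitions and showing that every K successor is forced to be a code successor. Concretely, I would fix an arbitrary $\rho \in \ks$ and an arbitrary $\rho_1 \in \ksucc(\rho)$, and aim to exhibit the substitutions $\psi, \psi_1$ demanded by Definition~\ref{def:csucc}. By Definition~\ref{def:ksucc}, membership $\rho_1 \in \ksucc(\rho)$ gives me configurations $\gamma_1, \gamma_2, \gamma_3 \in \mathcal{T}$ with $\gamma_1 \newkarrowx{} \gamma_2$ via $\rho$ and $\gamma_2 \newkarrowx{1} \gamma_3$ via $\rho_1$. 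The whole point is that these two transitions share the \emph{same} intermediate configuration $\gamma_2$, and I want to turn that shared configuration into the equation $\psi(p[R]) = \psi_1(p_1[L_1])$ on patterns.

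First I would apply Definition~\ref{def:transitionsystem} to each of the two transitions separately. The step $\gamma_1 \newkarrowx{} \gamma_2$ by $\rho : (\forall \mathcal{X})p[\frac{L}{R}]$ yields a substitution $\psi : T_\Sigma(\mathcal{X}) \rightarrow T_\Sigma(\mathcal{Y})$ with $\gamma_1 = \psi(p[L])$ and, crucially, $\gamma_2 = \psi(p[R])$. Likewise the step $\gamma_2 \newkarrowx{1} \gamma_3$ by $\rho_1 : (\forall \mathcal{X}_1)p_1[\frac{L_1}{R_1}]$ yields $\psi_1$ with $\gamma_2 = \psi_1(p_1[L_1])$ (and $\gamma_3 = \psi_1(p_1[R_1])$, which I will not need). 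The two expressions for $\gamma_2$ then give exactly the required identity $\psi(p[R]) = \psi_1(p_1[L_1])$, witnessing $\rho_1 \in \csucc(\rho)$. This is the heart of the argument, and it is essentially a rewriting of the shared-configuration condition into a pattern-matching condition.

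The one subtlety I would be careful about is the disjointness of the variable sets $\mathcal{X}$ and $\mathcal{X}_1$, so that choosing $\psi$ and $\psi_1$ independently is legitimate and no variable capture occurs; these can be assumed (or renamed to be) disjoint, since they come from two distinct rule instances, and both map into the common term universe $T_\Sigma(\mathcal{Y})$, so the equality $\psi(p[R]) = \psi_1(p_1[L_1])$ is a well-formed equation between ground-enough terms in that universe. I expect the main (and really only) obstacle to be a bookkeeping one: making sure the intermediate configuration $\gamma_2$ is genuinely the same object in both transition steps as written in Definition~\ref{def:ksucc}, and that Definition~\ref{def:transitionsystem} is being read as giving the matching configuration $\psi(p[R])$ on the \emph{conclusion} side of the first rule rather than just asserting reachability. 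Once that correspondence is pinned down, the inclusion follows immediately with the witnesses $\psi, \psi_1$ produced above, and I would conclude that $\ksucc(\rho) \subseteq \csucc(\rho)$ for every $\rho \in \ks$.
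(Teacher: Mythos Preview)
Your proposal is correct and follows essentially the same approach as the paper: fix $\rho_1 \in \ksucc(\rho)$, extract the shared intermediate configuration $\gamma_2$ from Definition~\ref{def:ksucc}, apply Definition~\ref{def:transitionsystem} to each transition to obtain $\gamma_2 = \psi(p[R])$ and $\gamma_2 = \psi_1(p_1[L_1])$, and conclude $\rho_1 \in \csucc(\rho)$. Your extra remark on variable disjointness is a reasonable precaution but not something the paper's proof explicitly addresses.
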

\begin{proof}
    We prove the lemma by the following sequence:
    \begin{enumerate}
        \item Assume $\rho_1 \in \ksucc(\rho)$.
        \item By Definition~\ref{def:ksucc}, $\exists \gamma_1,\gamma_2,\gamma_3. \gamma_1\karrow{}\gamma_2\karrow{1}\gamma_3$.
        \item Eliminate $\exists$ in (2), i.e.,  $\gamma_1\karrow{}\gamma_2\karrow{1}\gamma_3$ 
        \item By (3) and Definition~\ref{def:transitionsystem},  
        $$\exists \psi. \gamma_2=\psi(p[R])  $$
        \item Eliminate $\exists$ in (4), i.e., $\gamma_2=\psi(p[R])$
        \item By (3) and Definition~\ref{def:transitionsystem}, 
        $$\exists \psi_1.  \gamma_2=\psi_1(p_1[L]) $$
        \item By (5), (7), $$\psi(p[R])=\psi_1(p_1[L])$$
        \item By (7) and Definition~\ref{def:csucc}, $\rho_1 \in \csucc(\rho)$.
        \item By (1) and (8), proved.
    \end{enumerate}
\end{proof}


\begin{definition}{($\smatch$).}
    \label{def:simplematch}
    \textit{
    Let $\gamma$ be a configuration in $\ks$,  and $p$ be a pattern, we write  $ \gamma \smatch p $ iff there exists a substitution $\psi$, such that $( \gamma, \psi) \vDash p$.
    }
\end{definition}

\begin{definition}{(2-tuples and sets in $\ks$.)}
    \label{def:LGE}
\textit{
    In system $\ks$, let local/global variables be $\mathbb{L}(\gamma)/\mathbb{G}(\gamma)$, i.e., the \textit{Memory}/\textit{ctContext} cell of the contract on top of the \textit{contractStack} of configuration $\gamma$.
    Let balances be $\mathbb{E}(\gamma)$, i.e., union of \textit{Balance} cell of all \textit{contractInstance} cells. 
    Define $\mathbb{L}_b(\gamma)$/$\mathbb{G}_b(\gamma)$ which collects the local/global variables from cell \textit{globalContext}, respectively.
    Define $\mathbb{G}(\gamma,x)$/ $\mathbb{G}_g(\gamma,x)$ as global variables of contract $x$ collected by using cell \textit{ctContext}/\textit{globalContext}, respectively, in configuration $\gamma$.
    Define $\mathbb{S}(\gamma)$ as the collection of all contract ID in configuration $\gamma$.
    Define $\mathbb{M}(\gamma)$ as the current transaction of configuration $\gamma$.
     Formally, 
}
\begin{small}
\begin{align*}
    \mathbb{L}&(\gamma)=\{  (a,v) \mid \gamma \smatch 
      \Mrule{c \ \_}{contractStack}\ 
      \Mrule{\ 
      \Mrule{c}{ctId} \ 
      \Mrule{\_\  a \mapsto d\  \_}{ctContext} \ 
    \\
       &\Mrule{\_\ a \mapsto \textit{Local} \_}{ctLocation}\ 
        \Mrule{\_\ d\mapsto v \ \_}{Memory}
         \ 
    }
    {contractInstance}
     \}\\
    \mathbb{L}_b&(\gamma)=\{  (a,v) \mid \gamma \smatch 
      \Mrule{c \ \_}{contractStack}\ 
      \Mrule{\ 
      \Mrule{c}{ctId} \ 
    \\
       &  \Mrule{\_\  a \mapsto d\  \_}{globalContext} \ 
    \Mrule{\_\ a \mapsto \textit{Local} \_}{ctLocation}\ 
        \\
       &\Mrule{\_\ d\mapsto v \ \_}{Memory}
         \ 
    }
    {contractInstance}
     \}\\
     \mathbb{G}&(\gamma)=
      \{ (a,v) \mid \gamma \smatch 
       \Mrule{c \ \_}{contractStack}\ 
       \Mrule{\ 
        \Mrule{c}{ctId} \ 
        \Mrules{a \mapsto d}{ctContext} \ 
    \\&    \Mrules{a \mapsto \textit{Global}}{ctLocation}\ 
      \Mrule{\_\  d\mapsto v \ \_}{ctStorage}
         \ 
    }
    {contractInstance}
     \}\\
     \mathbb{G}&(\gamma,x)=
      \{ (a,v) \mid \gamma \smatch 
       \Mrule{\ 
        \Mrule{x}{ctId} \ 
        \Mrules{a \mapsto d}{ctContext} \ 
    \\&    \Mrules{a \mapsto \textit{Global}}{ctLocation}\ 
      \Mrule{\_\  d\mapsto v \ \_}{ctStorage}
         \ 
    }
    {contractInstance}
     \}\\
     \mathbb{G}_b&(\gamma)=
      \{ (a,v) \mid \gamma \smatch 
       \Mrule{c \ \_}{contractStack}\ 
       \Mrule{\ 
    \\
       &    \Mrule{c}{ctId} \ 
        \Mrules{a \mapsto d}{globalContext} \ 
       \Mrules{a \mapsto \textit{Global}}{ctLocation}\ 
     \\&  \Mrule{\_\  d\mapsto v \ \_}{ctStorage}
         \ 
    }
    {contractInstance}
     \}\\
     \mathbb{G}_g&(\gamma,x)=
      \{ (a,v) \mid \gamma \smatch 
       \Mrule{\ 
        \Mrule{x}{ctId} \ 
        \Mrules{a \mapsto d}{globalContext} \ 
    \\&    \Mrules{a \mapsto \textit{Global}}{ctLocation}\ 
      \Mrule{\_\  d\mapsto v \ \_}{ctStorage}
         \ 
    }
    {contractInstance}
     \}\\
     \mathbb{E}&(\gamma)= \{ (a , v) \mid \gamma \smatch 
     \Mrule{\ 
        \Mrule{a}{ctId} \ 
        \Mrule{v}{Balance}
        }{contractInstance}
        \}\\
        &\mathbb{S}(\gamma)=\{ x \mid \gamma \smatch \Mrule{ \Mrule{x}{ctId} }{contractInstance} \}\\
    \mathbb{M}&(\gamma)= (c,r,f,mem)\ \ \textit{iff}\ \ \gamma \smatch 
     \\&   \Mrule{\code{ListItem}(r) \code{ListItem}(c)}{contractStack}\ 
    \\&    \Mrule{\code{ListItem}(\#\code{state}(\_,f,\_,\_,\_)) }{functionStack}\ 
        \Mrule{0}{GasConsumption}\ 
    \\&  \Mrule{\ 
      \Mrule{r}{ctId} \  
        \Mrule{mem}{Memory}
         \ 
    }
    {contractInstance}
\end{align*}
\end{small}
\end{definition}

We also define information about the contexts in cell \textit{functionStack} as follows.
\begin{definition}(Contexts in function stacks).
    \label{def:contractStacks}
    \begin{align*}
    \mathrm{STACK}(\gamma)&=\{ s \mid \gamma \smatch\\& {\Mrule{\_ \ \code{ListItem}(\#\code{state}(s,\_,\_,\_,\_))  \_}{functionStack}}\}
    \end{align*}
\begin{align*}
    \mathbb{L}_s&(\gamma, s)=\{  (a,v) \mid \exists d. (a \mapsto d\ \in s) \land \gamma \smatch 
    \\
      \Mrule{\ 
          \ 
       &\Mrule{\_\ a \mapsto \textit{Local} \_}{ctLocation}\ 
        \Mrule{\_\ d\mapsto v \ \_}{Memory}
         \ 
    }
    {contractInstance}
     \}
\end{align*}
\begin{equation*}
    \mathbb{L}_A(\gamma)=\{ \mathbb{L}_s(\gamma, s)  \mid s \in \mathrm{STACK}(\gamma)  \}
\end{equation*}

\end{definition}
Here, an element in $\mathrm{STACK}(\gamma)$ is a copy of the cell \textit{ctContext}.
It represents the addressing information for both global variables and local variables, which is used for context switching.
Assume $\mathrm{STACK}(\gamma)=\emptyset$ just when a transaction starts.


\begin{definition}{(\myterms, \vars, \gvars, \evars)}
Given a fact $f$, define $\myterms(f)$ as the sequence of terms in $f$, define $\vars(f), \gvars(f), \evars(f) $ as the sequence of terms representing variables, global variables, and ether balances in $f$, respectively.
\end{definition}

We define formula $\alpha$ to set up the initial configuration and rules of $\ks$ to filter out the executions that we wish to discard. 
\begin{itemize}
    \item $\alpha_{\rho_1}$: $\rho_1$ starts a transaction. It also indicates that the environment, i.e., configuration, for executing a transaction has been prepared.
    For example, the contract accounts, i.e., $\Mrule{}{contractInstance}$, have been created and functions $\Mrule{}{function}$ have been loaded.
    \item $\alpha_{\rho_4}$:  The transaction is external.
    \item $\alpha_{\rho_e}$:  No error is in the trace.
    \item $\alpha_{\rho_n}$:  $\rho_n$  ends a transaction.
    \item $\alpha_{init}$: At the start of a transaction, local variables in all contract instances are empty, and the addressing information for global variables stored in \textit{globalContext} and \textit{ctContext} is the same.
\end{itemize}
\begin{align*}
     \alpha &\equiv \alpha_{\rho_1} \land \alpha_{\rho_4}\land \alpha_{\rho_e} \land \alpha_{\rho_n} \land \alpha_{init}   \\
     \alpha_{\rho_1} &\equiv  \forall [\rho_1,\dots,\rho_n] \in \tracek{\gamma_0}. \rho_1=\textsc{Function-Call}\\
     \alpha_{\rho_4} &\equiv  \forall [\rho_1,\dots,\rho_n] \in \tracek{\gamma_0}.
     \\& \quad  \rho_4=\textsc{Internal-Function-Call}\\
     \alpha_{\rho_e} &\equiv  \forall [\rho_1,\dots,\rho_n] \in \tracek{\gamma_0}. 
     \\& \quad  \forall \rho \in \{\rho_1,\dots,\rho_n\}. \rho \not= \textsc{Propagate-Exception-True}\\
     \alpha_{\rho_n} &\equiv  \forall [\rho_1,\dots,\rho_n] \in \tracek{\gamma_0}. 
     \\& \quad  \rho_n=\textsc{Propagate-Exception-False}\\
    \alpha_{init} &\equiv \forall x\in \mathbb{S}(\gamma_j).\mathbb{L}_g(\gamma_j,x)=\mathbb{L}(\gamma_j,x)=\emptyset \land\\& \mathbb{G}_g(\gamma_j,x)=\mathbb{G}(\gamma_j,x)
\end{align*}

We also define formula $\beta_{R_s}$ according to the restrictions of the independent model to filter out the executions of $R_s$ that we wish to discard.
\begin{itemize}
  \item $\beta_{r_n}$: $r_n$ ends a transaction.
  \item $\beta_{initE}$: The terms representing the ether balances are initialized only once.
  \item $\beta_{initG}$: The terms representing global variables are initialized only once for any contract. 
\end{itemize}




\begin{align*}
     \beta_{R_s}  &\equiv \beta_{r_n} \land \beta_{initE} \land \beta_{initG}   \\
     \beta_{r_n} &\equiv  \forall [r_1,\dots,r_n] \in \mathit{traces}^{msr}(R_s). r_n=\kwt{ret\_ext}\\
     \beta_{initE} &\equiv \forall \emptyset{\stackrel{r_{1}}{\longrightarrow}}_{R_s} \ldots{\stackrel{r_{n}}{\longrightarrow}}_{R_s} F_{n} \in \textit{exec}^{m s r}(R_s). \forall j \in \{1,\dots,n\}.\\&  \forall V,V' \ins F_{j}. 
     \  V \not= V' \land \textit{names}(\{V\})=\{\kwf{Evar} \}\ \rightarrow
     \\&  \textit{names}(\{V'\}) \not= \{\kwf{Evar} \}\\
     \beta_{initG} &\equiv \forall \emptyset{\stackrel{r_{1}}{\longrightarrow}}_{R_s} \ldots{\stackrel{r_{n}}{\longrightarrow}}_{R_s} F_{n} \in \textit{exec}^{m s r}(R_s). \forall j \in \{1,\dots,n\}.\\& \forall V,V' \ins F_{j}. 
        V \not= V' \land \textit{names}(\{V\})=\{\kwf{Gvar}^x \}\rightarrow\\& \textit{names}(\{V'\}) \not= \{\kwf{Gvar}^{x} \} \land \textit{names}(\{V'\}) \not= \{\kwf{Var}_i^{x} \}  \\
\end{align*}

Similarly, we define formulas $\beta_{R_{inv}}$ and  $\beta_{R_{equ}}$ according to the restrictions of the complementary models for the invariant property and the equivalence property, respectively.
\begin{itemize}
  \item $\beta_{r_n}'$: $r_n$ ends a transaction.
  \item $\beta_{initE}'$: The terms representing the ether balances are initialized only once.
  \item $\beta_{initG}'$: The terms representing global variables are initialized only once for any contract.
  \item $\beta_{start}$: Only one transaction is executed.
\end{itemize}
\begin{align*}
     \beta_{R_{inv}}  &\equiv \beta_{start} \land \beta_{r_n}' \land \beta_{initE}' \land \beta_{initG}'   \\
     \beta_{start} &\equiv  \forall [r_1,\dots,r_n] \in \mathit{traces}^{msr}(R_{inv}). \forall r,r' \in \{r_1,\dots,r_n\}\\& r\not=r' \land\ r=\kwt{ext\_call\_inv}\ \rightarrow\ r' \not= \kwt{ext\_call\_inv}\\
     \beta_{r_n}' &\equiv  \forall [r_1,\dots,r_n] \in \mathit{traces}^{msr}(R_{inv}). r_n=\kwt{ret\_ext\_inv}\\
     \beta_{initE}' &\equiv \forall \emptyset{\stackrel{r_{1}}{\longrightarrow}}_{R_{inv}} \ldots{\stackrel{r_{n}}{\longrightarrow}}_{R_{inv}} F_{n} \in \textit{exec}^{m s r}(R_{inv}).\\& \forall j \in \{1,\dots,n\}.  \forall V,V' \ins F_{j}. 
     \  V \not= V' \land \textit{names}(\{V\})=\\&\{\kwf{Evar} \}\ \rightarrow
       \textit{names}(\{V'\}) \not= \{\kwf{Evar} \}\\
     \beta_{initG}' &\equiv \forall \emptyset{\stackrel{r_{1}}{\longrightarrow}}_{R_{inv}} \ldots{\stackrel{r_{n}}{\longrightarrow}}_{R_{inv}} F_{n} \in \textit{exec}^{m s r}(R_{inv}).\\& \forall j \in \{1,\dots,n\}. \forall V,V' \ins F_{j}. 
        V \not= V' \land \textit{names}(\{V\})=\\&\{\kwf{Gvar}^x \}\rightarrow \textit{names}(\{V'\}) \not= \{\kwf{Gvar}^{x} \} \land\\& \textit{names}(\{V'\}) \not= \{\kwf{Var}_i^{x} \}  \\
\end{align*}

\begin{itemize}
  \item $\beta_{excAB}$: Two sequences $T_A$ and $T_B$ consisting of the same transactions are executed.
  \item $\beta_{comp}$: $r_n$ ends two sequences of transactions.
  \item $\beta_{initE_A}/\beta_{initE_B}$: The terms representing the ether balances are initialized only once.
  \item $\beta_{initG_A}/\beta_{initG_B}$: The terms representing global variables are initialized only once for any contract.
\end{itemize}
\begin{align*}
     \beta_{R_{equ}}  &\equiv \beta_{excAB} \land \beta_{comp} \land \beta_{initE_A} \land \beta_{initG_A}\land \beta_{initE_B} \land \beta_{initG_B}   \\
     \beta_{excAB} &\equiv \forall \emptyset{\stackrel{r_{1}}{\longrightarrow}}_{R_{equ}}F_1 \ldots{\stackrel{r_{n}}{\longrightarrow}}_{R_{equ}} F_{n} \in \textit{exec}^{m s r}(R_{equ}). 
     \\& \{\myterms(V) \mid V\ins F_n \land \textit{names}(\{V\})  = \{\kwf{Call_{Ae}}\}  \}^{\#} = \\& \{\myterms(V) \mid V\ins F_n  \land \textit{names}(\{V\}) = \{\kwf{Call_{Be}}\}  \}^{\#}\\
     \beta_{comp} &\equiv  \forall [r_1,\dots,r_n] \in \mathit{traces}^{msr}(R_{equ}). r_n=\kwt{comp\_AB}\\
     \beta_{initE_A} &\equiv \forall \emptyset{\stackrel{r_{1}}{\longrightarrow}}_{R_{equ}}F_1 \ldots{\stackrel{r_{n}}{\longrightarrow}}_{R_{equ}} F_{n} \in \textit{exec}^{m s r}(R_{equ}).\\& \forall j \in \{1,\dots,n\}.  \forall V,V' \ins F_{j}. 
     \  V \not= V' \land \textit{names}(\{V\})\\&= \{\kwf{Evar_A} \}\ \rightarrow
       \textit{names}(\{V'\}) \not= \{\kwf{Evar_A} \}\\
     \beta_{initG_A} &\equiv \forall \emptyset{\stackrel{r_{1}}{\longrightarrow}}_{R_{equ}}F_1 \ldots{\stackrel{r_{n}}{\longrightarrow}}_{R_{equ}} F_{n} \in \textit{exec}^{m s r}(R_{equ}).\\& \forall j \in \{1,\dots,n\}. \forall V,V' \ins F_{j}. 
        V \not= V' \land \textit{names}(\{V\})\\&=\{\kwf{Gvar_A}^x \}\rightarrow \textit{names}(\{V'\}) \not= \{\kwf{Gvar_A}^{x} \} \land\\& \textit{names}(\{V'\}) \not= \{\kwf{Var}_{\kwf{A}i}^{x} \}  \\
     \beta_{initE_B} &\equiv \forall \emptyset{\stackrel{r_{1}}{\longrightarrow}}_{R_{equ}}F_1 \ldots{\stackrel{r_{n}}{\longrightarrow}}_{R_{equ}} F_{n} \in \textit{exec}^{m s r}(R_{equ}).\\& \forall j \in \{1,\dots,n\}.  \forall V,V' \ins F_{j}. 
     \  V \not= V' \land \textit{names}(\{V\})\\&=\{\kwf{Evar_B} \}\ \rightarrow
       \textit{names}(\{V'\}) \not= \{\kwf{Evar_B} \}\\
     \beta_{initG_B} &\equiv \forall \emptyset{\stackrel{r_{1}}{\longrightarrow}}_{R_{equ}}F_1 \ldots{\stackrel{r_{n}}{\longrightarrow}}_{R_{equ}} F_{n} \in \textit{exec}^{m s r}(R_{equ}).\\& \forall j \in \{1,\dots,n\}. \forall V,V' \ins F_{j}. 
        V \not= V' \land \textit{names}(\{V\})\\&=\{\kwf{Gvar_B}^x \}\rightarrow \textit{names}(\{V'\}) \not= \{\kwf{Gvar_B}^{x} \} \land\\& \textit{names}(\{V'\}) \not= \{\kwf{Var}_{\kwf{B}i}^{x} \}  \\
\end{align*}

When a rule in $\rs$ is applied (except $\kwt{fresh}$ rule), the variable terms of facts on the right-hand side of the rule are substituted by constant terms or fresh names.
Therefore, the produced multiset of facts by the rule has no variables terms, which can correspond to a configuration in $\ks$, if the concrete terms and fresh names are assigned with values of data type in $\ks$.
Hence, we build the relationship between the terms and the corresponding values in $\ks$ as follows.
\begin{definition}(Valuation $\mathcal{E}$).
    \label{def:valuation}
\textit{
    Given a term $x$ in a fact, let valuation $\mathcal{E}(x)$ output $x$'s value of data types in $\ks$.
    If $x$ is a fresh name or constant term, the value is assigned beforehand;
    otherwise, if $x$ is a variable term, after the terms in its fact are substituted (i.e., a rule, where the fact is on the right-hand side, is applied), it is evaluated according to the substituted constant terms and fresh names.
}
\end{definition}
The valuation can be seen as a process of calculating the value of a variable term, based on the assignment of the constant terms and fresh names.

    

\begin{definition}{($\bi$).}
\textit{
Let $A$ be a 2-tuple derived from a configuration  $\gamma$ in $\ks$.
Let $B$ be a sequence of terms obtained from $\vars$ in $\rs$.
We write $A \bi B$, if there exists a bijection between $A$ and $B$ such that whenever $ (a,v) \in A$ is mapped to  $x \in B $, we have that
$$ v =\mathcal{E}(x)\ $$
When $A \bi B$, $ (a,v) \in A$ and  $x \in B $, we also write $(a,v) \bi x$ if this bijection maps $(a,v)$ to $x$.}
\end{definition}


\begin{remark}
    \label{remark:bi}
    Note that $\bi$ has the following properties. 
    \begin{itemize}
        \item If $A_1 \bi B_1$ and $A_2 \bi B_2$, then $A_1 \cup A_2 \bi B_1 \cup B_2$ 
        \item If $A_1 \bi B_1$, and $a \bi b$ for $a\in A_1$ and $b \in B_1$,then $A_1 \backslash \{a\} \bi B_1 \backslash \{b\}$ 
    \end{itemize}
    
\end{remark}

Recall that the first and fourth parameter in fact $\kwf{Gvar}$ and $\kwf{Var}$ represents its contract name, respectively.
To differentiate the facts, we denote $\kwf{Gvar}^x$ and $\kwf{Var}^x$ as the fact for the contract with \textit{ctId} $x$ (corresponding to the first and fourth parameter of the fact, respectively) in $K_s$; we also omit the tag $x$, i.e., denote the facts as  $\kwf{Gvar}$ and $\kwf{Var}$, if $x$ is the current running contract in the corresponding $K_s$ (in the following proofs, a state in $\ks$ always corresponds to a multiset of facts). 
Though in most cases of the proof the tag is not used, it is useful when analyzing the case related to context switching.

\begin{lemma}
\label{lemma:equalc}
Given  function (transaction) code  $c$,
define $C_r(c,i)$ as the first parameter of $\mathcal{R}$ when $\mathcal{R}$ is recursively applied on $c$ and $i$ is the second parameter of $\mathcal{R}$;
define $C_p(c,i)$ as the third element of a state when the state is at a transition running $c$ and $i$ is the second parameter of the state.

 Then we have 
 \begin{equation*}
    \vDash \forall i,c.  C_r(c,i)=C_p(c,i)
 \end{equation*}
\end{lemma}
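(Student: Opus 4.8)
The plan is to prove the identity by structural induction on the code fragment $c$, exploiting that the recursive definition of $\mathcal{R}$ in Fig.~\ref{fig:translation_a1} and the state-transition relation $\Rightarrow$ of the Solidity process summarized in Table~\ref{table:states} are syntax-directed in exactly the same manner. The observation I would establish first is that, for every syntactic form of the leading statement of $c$, the two systems decompose $c$ identically: they strip off the same head statement, pass to the same residual code, and advance the position index by the same suffix ($i \circ 1$, $i \circ 2$, or $i \circ 1 \circ 1$). Reading the definition of $\mathcal{R}$ and the transition table side by side makes this parallelism manifest, and it reduces the lemma to checking that the position-to-fragment mapping induced by $\mathcal{R}$ coincides with the one induced by $\Rightarrow$. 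Since the asserted equality is free of the configuration, the validity claim $\vDash \forall i,c.\ C_r(c,i)=C_p(c,i)$ is discharged by proving the syntactic equality for every instantiation of $i$ and $c$, which the induction supplies.

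For the base case I would take $c$ to be an atomic terminal statement, in particular $\texttt{return}$: here $\mathcal{R}(\texttt{return},i,\omega)$ performs no further recursion and the transition $(\gamma,i,\texttt{return})\Rightarrow(\gamma',\phi,\_)$ (and its internal-call variant) terminates the current frame, so $C_r$ and $C_p$ both return $\texttt{return}$ at position $i$ and agree trivially. For the inductive step I would proceed form by form. For the single-successor cases (assignment $v_1\leftarrow v_2$, declaration $\tau\ v_1\leftarrow v_2$, $\texttt{transfer}$, and $\texttt{require}\ e_b$, whose false branch aborts without producing a deeper state), both $\mathcal{R}$ and $\Rightarrow$ move from $c$ at $i$ to the residual $\kwp{stmt}$ at $i\circ 1$, so the head fragment agrees at $i$ and the induction hypothesis on $\kwp{stmt}$ settles every deeper position. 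The two-successor cases $\texttt{send}$ (successors $i\circ 1$ and $i\circ 2$, both carrying the same residual $\kwp{stmt}$) and $\texttt{if}\ e_b\ \texttt{then}\ \kwp{stmt}_1\ \texttt{else}\ \kwp{stmt}_2;\kwp{stmt}_3$ (successor $i\circ 1$ with $\kwp{stmt}_1;\kwp{stmt}_3$ and $i\circ 2$ with $\kwp{stmt}_2;\kwp{stmt}_3$) are handled the same way, applying the induction hypothesis separately to each residual fragment at its matching position. The statement $\texttt{call}$, together with the rules $\kwt{ether\_succ}$, $\kwt{ether\_fail}$, and $\kwt{fb\_call}$, is excluded from consideration by the remark preceding the lemma, so it need not be treated.

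I expect the main obstacle to be the internal-call case $x(c_x).f_x(p);\kwp{stmt}$, where the state transition does not move directly to the residual code but detours through the callee: it enters the recipient frame $(\gamma',1,\kwp{stmt}_x)$, executes the callee body, and only then returns to the caller at position $i\circ 1\circ 1$, matching the recursion $\mathcal{R}(x(c_x).f_x(p);\kwp{stmt},i,\omega)=\mathcal{R}(\kwp{stmt},i\circ 1\circ 1,\omega)\cup\{\ldots\}$. Here I would argue two things: first, that from the caller's viewpoint the only position update affecting $C_r$ and $C_p$ is the suffix $\circ 1\circ 1$, so the caller-side fragments coincide; and second, that the callee body $\kwp{stmt}_x$ is covered by a fresh instance of the induction, since both $\mathcal{R}$ and $\Rightarrow$ treat it as a new top-level code whose position index restarts at $1$. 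Establishing that this frame bookkeeping is consistent — in particular that the returning transition lands at precisely $i\circ 1\circ 1$ and not some other descendant of $i$ — is the delicate point, and I would lean on Proposition~\ref{proposition:transitionsystem} together with the explicit position updates tabulated in Table~\ref{table:tableCorrespondence} to pin it down.
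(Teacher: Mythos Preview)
Your proposal is correct, and the underlying insight is the same as the paper's: both $\mathcal{R}$ and the Solidity-process transition relation $\Rightarrow$ are syntax-directed and decompose the current code in lockstep, advancing the position index by matching suffixes. The difference is in how the induction is organized. The paper inducts on the number of state transitions: the base case is the initial state $s_0$ (with $i_0=\varnothing$, $T(s_0)=1$, so $C_r(c_0,\varnothing)=C_p(c_0,\varnothing)=c_0$), and the inductive step assumes agreement at $i_j$ and derives it at the successor position $i_{j+1}$ by case analysis on $T(s_j)$ against Table~\ref{table:tableCorrespondence}. You instead induct on the structure of the code fragment, with atomic terminal statements as the base and compound statements handled by the IH on their residuals. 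Your route is the more conventional PL-semantics presentation and makes the parallelism between Fig.~\ref{fig:translation_a1} and Table~\ref{table:states} explicit up front; the paper's operational induction stays closer to the trace-level machinery used later in Lemma~\ref{lemma:var_soundness}, where the same step-indexed style is reused. Both arrive at the same per-form case analysis, and your treatment of the internal-call detour (handling the callee as a fresh instance with position restarting at $1$, and the caller resuming at $i\circ 1\circ 1$) matches what the paper needs but leaves implicit in its ``cases $2\ldots 14$ are similar'' remark.
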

\begin{proof}
    We proceed by induction over the number of state transitions.

If a function $c_0$ is running, and a sequence of state transitions on running $c_0$ is
$$s_0 \Rightarrow s_1 \Rightarrow \dots  $$
    where $s_j=(\gamma_j,i_j,c_j)$.

\textit{Base case}.
For $s_0$
\begin{enumerate}
    \item By Table~\ref{table:tableCorrespondence}, $T(s_0)=1$ and $i_0=\emptyset$.
    \item By Definition of $\mathcal{R}$, $C_r(c_0,i_0)=C_r(c_0,\emptyset)=c_0$
    \item By Definition of $C_p$ and Table~\ref{table:tableCorrespondence}, $C_p(c_0,\emptyset)=c_0$
    \item By (2),(3), $C_r(c_0,\emptyset)=C_p(c_0,\emptyset)$
\end{enumerate}

\textit{Inductive Step}. Assume the invariants hold for $i_j$. We have to show that the lemma holds for the successors of $i_j$.

\textbf{Case}: $T(s_{j})=1$, 
\begin{enumerate}
    \item  By Table~\ref{table:tableCorrespondence}, denote $C_p(c_0,i_{j})=\texttt{function}\ f(\texttt{d})\{\kwp{stmt}\}$
    \item By (1) and inductive hypothesis, 
    \begin{equation*}
      C_r(c_0,i_{j})=\texttt{function}\ f(\texttt{d})\{\kwp{stmt}\}
    \end{equation*}
    \item By (1), $i_j=\emptyset$
    \item By (1), (3), and Table~\ref{table:tableCorrespondence}, $C_p(c_0,1)=\kwp{stmt}$
    \item By (2) and Definition of $\mathcal{R}$, 
    \begin{equation*}
        C_r(c_0, 1)=\kwp{stmt}
    \end{equation*}
    \item By (4),(5), $C_r(c_0,1)=C_p(c_0,1)$
\end{enumerate}

\textbf{Case}: $T(s_{j})=2\dots 14$, the proof of the cases is similar to the case $T(s_{j})=1$, and we omit the proof for the cases.
\end{proof}
\vspace*{0.3cm}



\begin{lemma}
    \label{lemma:var_soundness}
    Let   $(\mathcal{T},\karroww, \gamma_0)$ be a K transition system that satisfies $\ks \vDash \alpha$. 
    Let $R_s \vDash \beta_{initE} \land \beta_{initG}$.
    If $$
            \lambda_0 \newkarrowx{1} \lambda_1 \newkarrowx{2} \dots \newkarrowx{n} \lambda_n
$$ 
    where $[\rho_1,\dots,\rho_n] \in \traceks{\gamma_0}$, 
    and it corresponds to a sequence of state transitions according to Proposition~\ref{proposition:transitionsystem}: 
    $$s_0 \Rightarrow s_1 \Rightarrow \dots \Rightarrow s_m $$
    where $s_j=(\gamma_j,i_j,c_j)$, then there are $(r_1,F_1), (r_2,F_2),\dots,(r_{m'},F_{m'}) $, such that
    \begin{equation*}
        \emptyset \exer{r_1} F_1 \exer{r_2} \dots \exer{r_{m'}} F_{m'} 
    \end{equation*}
    and there exists a valuation $\mathcal{E}$ and a monotonic, strictly increasing function $g: \{ 0,\dots,m\}$ $\rightarrow \{ 0,\dots,m'\} $ such that $g(m)=m'$ and for all $j\in \{ 0,\dots,m\} $
    \begin{enumerate}[label=(\alph*)]
        \item $F_c(i_j,T(s_j)) \subseteq  \textit{names}(F_{g(j)}) $
        \item $T(s_j)= 1  \rightarrow  \exists V\ins F_{g(j)}. \\
            \quad \quad \quad \quad  \mathbb{G}(\gamma_j) \bi \gvars(V) \land \textit{names}(\{V\})=\{\kwf{Gvar} \}  $
        \item  $  T(s_j)> 1  \rightarrow \exists i. \exists V\ins F_{g(j)}.  \\
            \quad \quad \quad \quad  \mathbb{G}(\gamma_j) \bi \gvars(V) \land \textit{names}(\{V\})=\{\kwf{Var}_i\}  $
        \item $T(s_j)= 1  \rightarrow  \exists V\ins F_{g(j)}. \\
            \quad \quad \quad \quad  \mathbb{E}(\gamma_j) \bi \evars(V) \land \textit{names}(\{V\})=\{\kwf{Evar} \}  $
        \item $T(s_j)> 1  \rightarrow \exists i. \exists V\ins F_{g(j)}.  \\
            \quad \quad \quad \quad  \mathbb{E}(\gamma_j) \bi \evars(V) \land \textit{names}(\{V\})=\{\kwf{Var}_i\}  $
        \item $T(s_j)= 1  \rightarrow \mathbb{L}(\gamma_j) = \emptyset $
        \item$T(s_j)> 1  \rightarrow \exists i. \exists V\ins F_{g(j)}.  \\
            \quad \quad \quad \quad  \mathbb{L}(\gamma_j) \bi \vars(V)\backslash\gvars(V)\backslash \land \textit{names}(\{V\})$ $=\{\kwf{Var}_i\}  $
        \item $\forall x\in \mathbb{S}(\gamma_j). \mathbb{G}_g(\gamma,x)=\mathbb{G}(\gamma,x)$
        \item $\forall x\in \mathbb{S}(\gamma_j).\mathbb{L}_g(\gamma_j,x)=\emptyset$
        \item $\forall l \in \mathbb{L}_A(\gamma_j). \exists i. \exists V\ins F_{g(j)}. \\ 
            \quad \quad \quad \quad  l \bi \vars(V)\backslash\gvars(V) \land \textit{names}(\{V\})=\{\kwf{Var}_i\}$
        \item $\forall x\in \mathbb{S}(\gamma_j). \exists V\ins F_{g(j)}. \mathbb{G}(\gamma_j,x) \bi \gvars(V) $.
    \end{enumerate}
\end{lemma}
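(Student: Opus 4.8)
The plan is to prove the lemma by induction on the number $m$ of state transitions of the Solidity process, constructing the witnessing MSR execution $\emptyset \exer{r_1} F_1 \exer{r_2} \cdots \exer{r_{m'}} F_{m'}$, the valuation $\mathcal{E}$, and the index function $g$ incrementally alongside the induction. At each stage I maintain the full conjunction of invariants (a)--(k) for every state already processed; since the transition relation $\exer{}$ only ever extends the current multiset, the invariants established for earlier indices $g(j)$ persist automatically, so it suffices to re-establish them for the newly added state. The state-transition sequence $s_0 \Rightarrow \cdots \Rightarrow s_m$ is supplied by Proposition~\ref{proposition:transitionsystem}, and Table~\ref{table:tableCorrespondence} serves as the dictionary pairing each KSolidity key rule with the FASVERIF rule that must be fired.

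For the \textbf{base case} ($j=0$) I would use $\alpha_{\rho_1}$ to conclude that the K-trace opens with \textsc{Function-Call}, so by Lemma~\ref{lemma:equalc} and Table~\ref{table:tableCorrespondence} we have $T(s_0)=1$ and $i_0=\varnothing$. I build the MSR prefix by firing $\kwt{init\_evars}$, $\kwt{init\_gvars}$ and $\kwt{ext\_call}$, obtaining $\kwf{Evar}$, $\kwf{Gvar}$ and $\kwf{Call_e}$. The restrictions $\beta_{initE}$, $\beta_{initG}$ guarantee these initialization facts are unique, so the required bijections are well defined; $\alpha_{init}$ delivers (h) and (i) directly, and (f) holds because no local variable has yet been allocated. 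I then define $\mathcal{E}$ on the freshly generated terms (Definition~\ref{def:valuation}) so that it agrees with the arbitrary initial values recorded in $\gamma_0$, which yields $\mathbb{G}(\gamma_0)\bi\gvars(V)$ and $\mathbb{E}(\gamma_0)\bi\evars(V)$, i.e.\ (b), (d) and (k).

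For the \textbf{inductive step}, assuming (a)--(k) at $s_j$, I case-split on $T(s_j)\in\{1,\dots,14\}$, which by Table~\ref{table:tableCorrespondence} pins down both the driving KSolidity rule and the corresponding FASVERIF rule $r$ (e.g.\ $\kwt{var\_assign}$, $\kwt{var\_declare}$, $\kwt{if\_true}$, $\kwt{if\_false}$, $\kwt{require\_true}$, $\kwt{require\_false}$, $\kwt{ret\_ext}$). In each case invariant (a) together with Lemma~\ref{lemma:simpletrans} shows that the premise $\kwf{Var}_i$ of $r$ already lies in $F_{g(j)}$, so $r$ is applicable; I fire it (for type $1$ I first fire $\kwt{recv\_ext}$ to merge $\kwf{Call_e}$, $\kwf{Gvar}$, $\kwf{Evar}$ into $\kwf{Var}_1$) and set $g(j+1)$ to the new terminal index. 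I extend $\mathcal{E}$ using the Valuation column of the table (e.g.\ $\mathcal{E}(\sigma_v(v_2))=v_2$), and Remark~\ref{remark:bi} lets me push the bijection through the substitution $\sigma(\omega)|\frac{\sigma_v(v_1)}{\sigma_v(v_2)}$ or the concatenation $\sigma(\omega)\cdot\llbracket\sigma_v(v_2)\rrbracket$, re-establishing (b)--(k). Lemma~\ref{lemma:equalc} keeps the code position consumed by $\mathcal{R}$ aligned with $c_{j+1}$, so the correct successor index $i_{j+1}$ is named.

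The hard part will be the internal-call and return cases (types $2$, $9$, $10$), where context switching must be tracked. Here the split of $\kwf{Var}_i$ into $\kwf{Var}_{i\circ1}(\sigma(\omega)\backslash g(\omega))$ together with the handed-off $\kwf{Gvar}$, $\kwf{Evar}$ facts (rule $\kwt{in\_call}$), and their later re-merging on $\kwt{recv\_ret}$, must be matched against the push and pop of the KSolidity \textit{functionStack} and the saved frames captured by $\mathrm{STACK}(\gamma)$, $\mathbb{L}_A(\gamma)$ and $\mathbb{L}_s(\gamma,s)$; this is precisely what invariants (g)--(k) are designed to carry. The two delicate points are keeping the contract tags $\kwf{Gvar}^x$ and $\kwf{Var}^x$ attached to the right instance, so that (k) refers to the caller while (c) refers to the callee, and showing that after the callee returns the merged $\kwf{Var}_{i\circ1\circ1}$ fact restores the caller's local bijection recorded in $\mathbb{L}_A$. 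I would handle a matched call/return pair as a bracketed sub-derivation and verify that the $\bi$-relation on each frame survives the entire callee sub-execution by a secondary induction on the nesting depth.
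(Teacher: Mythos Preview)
Your plan matches the paper's proof in its overall shape: both argue by induction on the number of Solidity-process state transitions, case-split on $T(s_j)\in\{1,\dots,14\}$ via Table~\ref{table:tableCorrespondence}, and build the MSR execution, the valuation $\mathcal{E}$, and the index map $g$ incrementally. Your base case is essentially the paper's, except that the paper fires $\kwt{init\_gvars}$ once \emph{per contract} (so $g(0)=t+2$ for $t$ contracts), which is what makes condition~(k) go through for every $x\in\mathbb{S}(\gamma_0)$; you should add this.

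The one substantive deviation is your treatment of internal call and return (types~2 and~10). You propose to ``handle a matched call/return pair as a bracketed sub-derivation'' via a secondary induction on nesting depth. The paper does \emph{not} do this: it keeps a single flat induction and treats types~2 and~10 as ordinary cases. Invariant~(j) exists precisely to make this work---at every step it records that for \emph{every} saved frame $l\in\mathbb{L}_A(\gamma_j)$ there is already a matching $\kwf{Var}_i$ fact in $F_{g(j)}$. So in case $T(s_k)=2$ the paper fires $\kwt{in\_call}$ (parking the caller's locals in $\kwf{Var}_{i\circ1}^{id_c}$) followed by $\kwt{recv\_in}$, and merely extends~(j) by the new frame; in case $T(s_k)=10$ it fires $\kwt{ret\_in}$ and $\kwt{recv\_ret}$ and invokes the inductive hypothesis~(j) to recover the caller's $\kwf{Var}_{i\circ1}^{id_c}$, from which~(g) follows directly. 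No bracketing or inner induction is needed, and introducing one would force you to restructure the hypothesis since the sequence $s_0\Rightarrow\cdots\Rightarrow s_m$ is flat rather than tree-shaped. Your route can be made to work but is strictly more laborious; the paper's use of~(j) is the cleaner mechanism.

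Two minor corrections. First, ``the transition relation $\exer{}$ only ever extends the current multiset'' is false (premise facts are consumed by Lemma~\ref{lemma:simpletrans}); your conclusion that earlier invariants persist is still correct, but for the trivial reason that $F_{g(j)}$ is fixed once produced. Second, you list $\kwt{require\_false}$ among rules to fire, but in the paper cases $T(s_k)=8$ and $T(s_k)=12$ are discharged vacuously: any K-trace through \textsc{Require}(false) or a failed transfer eventually applies \textsc{Propagate-Exception-True}, contradicting $\alpha_{\rho_e}$, so the lemma's hypothesis fails and nothing need be shown.
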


\begin{proof}
    We proceed by induction over the number of state transitions $k$.
    Denote $\iota(\gamma)$ as the contract ID of on top of cell \textit{contractStack} of $\gamma$.
    Note that to achieve readability of the proof (besides strictness), we omit the details of a derivation if the derivation is similar to one that has been illustrated for another conclusion.

    \textit{Base case}. 
    For $k=0$, we let $g(0)=t+2$, if there are $t$ contracts translated by $\mathcal{R}$.  
    Choose a function of a contract to run and let $F_1,F_2,F_3$ be the multiset obtained by using the rule $\kwt{init\_evars}$, $\kwt{init\_gvars}$, $\kwt{ext\_call}$ in order respectively. 

    \begin{enumerate}
        \item By Definition~\ref{lemma:trans} and Lemma~\ref{lemma:simpletrans}
    \begin{align*}
        F_1=&\{ \kwf{Evar}(e(\omega_0))\}\\
        F_2=& F_1\cups \{\kwf{Gvar}(\llbracket \omega_0[1] \rrbracket \cdot g(\omega_0)\backslash e(\omega_0)) \} \\
        &=\{ \kwf{Evar}(e(\omega_0)), \kwf{Gvar}(\llbracket \omega_0[1] \rrbracket \cdot g(\omega_0)\backslash e(\omega_0)) \}\\
        F_3=& F_2 \cups \{ \kwf{Call_e}( \llbracket\omega_0[1],\sigma_a(f), \sigma_v(c_b)\rrbracket \cdot\sigma(seq(\texttt{d})))\\
        =&\{ \kwf{Evar}(e(\omega_0)), \kwf{Gvar}(\llbracket \omega_0[1] \rrbracket \cdot g(\omega_0)\backslash e(\omega_0)), 
        \\&\kwf{Call_e}( \llbracket\omega_0[1],\sigma_a(f), \sigma_v(c_b)\rrbracket \cdot\sigma(seq(\texttt{d})))
        \}
    \end{align*}
        Similarly, by the definition of $\beta_{initG}$, rule $\kwt{init\_gvars}$ is applied to other contracts for $t-1$ times. 
        For instance, for contract $x$, facts $\kwf{Gvar}^x$ is added to the current multiset of facts. 
        Finally, $F_{t+2}$ is generated.
    
    \item By Definition of $\textit{names}$
         $$\{\kwf{Evar}, \kwf{Gvar}, \kwf{Call_e} \} =names(F_3)\subseteq names(F_{t+2})$$
    \item By Table~\ref{table:tableCorrespondence}, $i_0=\emptyset, T(s_0)=1$,$$F_c(i_0,T(s_0))=\{\kwf{Evar}, \kwf{Gvar}, \kwf{Call_e}\}$$
    \item By (2),(3), $F_c(i_0,T(s_0)) \subseteq  names(F_{t+2}) $, \textit{condition (\textbf{a})} proved.
    \item By (3), $T(s_0) =1$, so \textit{condition (\textbf{c}), (\textbf{e}), (\textbf{g})} hold trivially.
    \item  By $\alpha_{init}$, \textit{Condition $(\textbf{b}),(\textbf{d}),(\textbf{f}), (\textbf{j}), (\textbf{k})$} hold by interpreting $\mathcal{E}$ such that
    \begin{itemize}
        \item $\mathbb{G}(\gamma_0) \bi \vars( \kwf{Gvar})$
        \item $\mathbb{E}(\gamma_0) \bi \vars(\kwf{Evar})$
        \item $\mathbb{L}(\gamma_0) \bi \emptyset$
        \item $\forall x\in \mathbb{S}(\gamma_0).\mathbb{L}(\gamma_0,x)=\emptyset \bi \emptyset$
        \item $\forall x\in \mathbb{S}(\gamma_0). \mathbb{G}(\gamma_0,x)\bi \gvars(\kwf{Gvar}^x)$
    \end{itemize}
    \item By $\alpha_{init}$, \textit{condition (\textbf{h}), (\textbf{i})} holds.
    \end{enumerate}
    
    \textit{Inductive step}. Assume the invariant holds for $k\ge 0$. We have to show that the lemma holds for $k+1$ transitions.
    $$s_0 \Rightarrow s_1 \Rightarrow \dots \Rightarrow s_{k}  \Rightarrow s_{k+1} $$

    \begin{enumerate}[label=(\Alph*)]
        \item By induction hypothesis, we have that there exists a monotonic increasing function $g$ and an execution 
            \begin{equation*}
                \emptyset \exer{r_1} F_1 \exer{r_2} \dots \exer{r_{k'}} F_{k'} 
            \end{equation*}
        such that the conditions hold and $g(k)=k'$.
        \item By Proposition~\ref{proposition:transitionsystem} and Definition \ref{def:Soliditystate}, 
        there exists a strictly monotonically increasing function $f$ such that $\gamma_j= \lambda_{f(j)}$ and $f(0)=0$.
        \item By eliminating $\exists$ on $g$ and $f$, we use the function $g_0$ and $f_0$, respectively.
        \item By (B), $\gamma_k=\lambda_{f_0(k)} $
        \item By (D) and Definition~\ref{def:ktraces}, in $\traceks{\gamma_0}$ the segment of traces  from $s_k$ to $s_{k+1}$ is  
        \begin{equation*}
            [\rho_{f_0(k)+1}, \dots, \rho_{f_0(k+1)}]
        \end{equation*}
        and
        \begin{equation*}
            \lambda_{f_0(k)} \karrowx{f_0(k)+1} \lambda_{f_0(k)+1} \karrowx{f_0(k)+2} \dots \karrowx{f_0(k+1)} \lambda_{f_0(k+1)}
        \end{equation*}
    \end{enumerate}
    
    We now proceed by case distinction over the type of transitions from $s_k$ to $s_{k+1}$.
    We will extend the previous executions by a number of steps, say \textit{step}, from $F_{k'}$ to some $F_{k'+\textit{step}}$, and prove that the conditions hold for $k+1$, and a function $g$, defined as follows:

    \begin{equation*}
    g(j):=
    \begin{cases}
        g_{0}(j) & \textrm{if } i\in \{0,...,k\} \\
        g_{0}(k)+\textit{step} & \textrm{if } i=k+1
    \end{cases}
    \end{equation*}
    
    \textbf{Case:} $T(s_k)=1$.

    \begin{enumerate}
        \item By Definition~\ref{def:csucc} and Lemma~\ref{lemma:subsetsucc}, there is only one possible sub-trace $[\rho_{f_0(k)+1},\dots,\rho_{f_0(k)+t+11}]$ for the state transition in (E) as follows:
        \begin{itemize}
            \item $\rho_{f_0(k)+1}= \textsc{Function-Call}$ 
            \item $\rho_{f_0(k)+2}= \textsc{Switch-Context}$ 
            \item $\rho_{f_0(k)+3}= \textsc{Create-Transaction}$ 
            \item $\rho_{f_0(k)+4}= \textsc{Internal-Function-Call}$ 
            \item $\rho_{f_0(k)+5}= \textsc{Save-Cur-Context}$ 
            \item $\rho_{f_0(k)+6}= \textsc{Call}$ 
            \item $\rho_{f_0(k)+7}= \textsc{Init-Fun-Params}$ 
            \item $\rho_{f_0(k)+8}= \textsc{Bind-Params}$ 
            \item $\rho_{f_0(k)+9}= \textsc{Bind-Params}$ 
            \item ...
            \item $\rho_{f_0(k)+t+9}= \textsc{Bind-Params-End}$ 
            \item $\rho_{f_0(k)+t+10}= \textsc{processFunQuantifiers}$ 
            \item $\rho_{f_0(k)+t+11}= \textsc{Call-Function-Body}$ 
            \item (---------End of state transition---------)
            \item $\rho_{f_0(k)+t+12}= \textsc{Function-Body}$ 
            \item ...
            \item (---------End of statements in function---------)
            \item $\rho_{f_0(k)+t+u+12}= \textsc{Update-Cur-Context}$ 
            \item $\rho_{f_0(k)+t+u+13}= \textsc{Return-Context}$ 
            \item $\rho_{f_0(k)+t+u+14}= \textsc{Clear-Recipient-Context}$ 
            \item $\rho_{f_0(k)+t+u+15}= \textsc{Clear-Caller-Context}$ 
            \item $\rho_{f_0(k)+t+u+16}= \textsc{Propagate-Exception-False}$ 
            \item (---------End of function---------)
        \end{itemize}
        Note that by $\alpha_{\rho_e}$, we exclude the possible trace such that $\rho_{f_0(k)+t+u+16}= \textsc{Propagate-Exception-True}$, which means there are exceptions and the transaction is reverted.
        \item By Proposition~\ref{proposition:transitionsystem}, in (E), the trace $[\rho_{f_0(k)+1},\dots,\rho_{f_0(k)+t+11}]$ in~(1) corresponds to the state transition from $s_{k}$ to $s_{k+1}$. 
        Here,  $t$ is assumed to be the number of parameters in the function. $u$ represents the number of steps on executing the statements and nested functions.
        \item By $T(s_k)$, $i_k=\emptyset$, and let $c_k=\texttt{function}\ f_{c}(\texttt{d})\{\kwp{stmt}\}$
        \item By (3) and Lemma~\ref{lemma:equalc},
            \begin{equation*}
                C_r(c_k,i_k)=C_p(c_k,i_k)=c_k=\texttt{function}\ f_{c}(\texttt{d})\{\kwp{stmt}\}
            \end{equation*}
        \item By (4), Lemma~\ref{lemma:equalc}  and Definition of $\mathcal{R}$, rule $\kwt{recv\_ext}$ is generated.
        \item By inductive hypothesis of condition (a), 
        \begin{small}
        \begin{equation*}
            \{\kwf{Call}_{e},\kwf{Gvar},\kwf{Evar}\} =F_c(\emptyset,1)=F_c(i_k,T(s_k)) \subseteq  \textit{names}(F_{g(k)}) 
        \end{equation*}
        \end{small}
        \item By (5), (6), construct $F_{g(k)+1}$ by applying rule $\kwt{recv\_ext}$, and by Definition of \textit{name}: (It is similar to condition (a).(1)(2) of Base case, and we omit the details.)
        \begin{equation*}
              \{\kwf{Var}_{1}\} \subseteq \textit{names}(F_{g(k)+1})
        \end{equation*}
        \item By (4) and Table~\ref{table:tableCorrespondence},
        \begin{equation*}
            c_{k+1}=\kwp{stmt}, i_{k+1}=1
        \end{equation*}
        \item By (8) and Table~\ref{table:tableCorrespondence}, $ T(s_{k+1})> 1$
        \item By (8), (9) and Table~\ref{table:tableCorrespondence}, $F_c(i_{k+1},T(s_{k+1}))=\{\kwp{Var_1}\}$.
        \item Let $\textit{step}=1$, i.e., $g(k+1)=g(k)+1$
        \item By (7), (10), (11)
        \begin{equation*}
            F_c(i_{k+1},T(s_{k+1})) = \{\kwp{Var_1}\} \subseteq \textit{names}(F_{g(k+1)})
        \end{equation*}
        \textit{Condition (\textbf{a}) proved.}
        \item By (B), $\mathbb{G}(\lambda_{f_0(k)})=\mathbb{G}(\gamma_k)$
        \item By $T(s_k)$, $T(s_k)=1$
        \item By (14) and inductive hypothesis, $\mathbb{G}(\gamma_k) \bi  \textit{gvars}(\kwp{Gvar})$ 
        \item By (5), $\textit{gvars}(\kwp{Gvar})= \textit{gvars}(\kwp{Var}_1)$ 
        \item By assumption of $f$, $\lambda_{f_0(k)+t+11}=\gamma_{k+1}$
        \item By (2), inductive hypothesis (h), and Definition~\ref{def:k-rule-k-system},\ref{def:LGE}
        \begin{small}
        \begin{equation*}
            \mathbb{G}(\lambda_{f_0(k)+t+11})=\dots=\mathbb{G}(\lambda_{f_0(k)+2})= \mathbb{G}_b(\lambda_{f_0(k)}) =\mathbb{G}(\lambda_{f_0(k)})
        \end{equation*}
        \end{small}
        \item By (15), (16), (18), $ \mathbb{G}(\lambda_{f_0(k)+t+11}) \bi \textit{gvars}(\kwp{Var}_1) $
        \item By (17), (19), $ \mathbb{G}(\gamma_{k+1}) \bi \textit{gvars}(\kwp{Var}_1) $
        \item By (9), (20), \textit{condition \textbf{(b), (c)} hold}.
        \item Similar to (15), $\mathbb{E}(\gamma_k) \bi  \textit{evars}(\kwp{Evar})$
        \item Similar to (16), $\textit{evars}(\kwp{Evar})= \textit{evars}(\kwp{Var}_1)$
        \item By Definition~\ref{def:k-rule-k-system},\ref{def:LGE},
        \begin{equation*}
            \mathbb{E}(\lambda_{f_0(k)+t+11})=\dots =\mathbb{E}(\lambda_{f_0(k)})
        \end{equation*}
        \item Similar to (21), \textit{condition (\textbf{d}),(\textbf{e}) hold}
        \item By Definition~\ref{def:k-rule-k-system}, and inductive hypothesis (i)
        \begin{equation*}
            \mathbb{L}(\lambda_{f_0(k)+7})=\dots=\mathbb{L}(\lambda_{f_0(k)+2})=\mathbb{L}_b(\lambda_{f_0(k)})=\emptyset
        \end{equation*}
        \item By Definition~\ref{def:valuation}, let valuation $\mathcal{E}(\sigma(seq(\texttt{d})))=\texttt{d}$
        \item By (26), (27) and Definition~\ref{def:valuation}, let 
        \begin{equation*}
            \mathbb{L}(\lambda_{f_0(k)+t+9})\bi \sigma(seq(\texttt{d}))
        \end{equation*}
        \item By Definition~\ref{def:k-rule-k-system}, 
            $\mathbb{L}(\lambda_{f_0(k)+t+11})=\dots=\mathbb{L}(\lambda_{f_0(k)+t+9})$
        \item By  (7), generated rule $\kwt{recv\_ext}$ satisfies:
         $$\vars(\kwp{Var_1})\backslash\gvars(\kwp{Var_1})=\sigma(seq(\texttt{d}))$$
        \item Similar to (21), by (28), (29), (30),  \textit{condition (\textbf{f}) (\textbf{g})} \textit{hold}.
        \item By Definition~\ref{def:k-rule-k-system}, \textit{condition (\textbf{h}), (\textbf{i})} \textit{hold} trivially.
        \item By Definition~\ref{def:contractStacks} and \ref{def:k-rule-k-system}, 
         \begin{small}
        \begin{equation*}
            \mathbb{L}_A(\lambda_{f_0(k)+t+11})=\dots=\mathbb{L}(\lambda_{f_0(k)+2})=\mathbb{L}_A(\lambda_{f_0(k)})\cup \{\emptyset\}=\{\emptyset\}
        \end{equation*}
         \end{small}
        \item By (2), (17), (33), let $\emptyset \bi \emptyset$, \textit{condition (\textbf{$j$}}) holds.
        \item By (12), (20) and inductive hypothesis (k), \textit{condition (\textbf{k})} hold.
    \end{enumerate}

    \textbf{Case:} $T(s_k)=2$.
    \begin{enumerate}
        \item Similar to (1)(2) in case $T(s_k)=1$, (abbreviated as $Case_1.1\sim2$), the sub-trace for the state transition in (E) is as follows:
        \begin{itemize}
            \item $\rho_{f_0(k)+1}= \textsc{Function-Call}$ 
            \item $\rho_{f_0(k)+2}= \textsc{Switch-Context}$ 
            \item $\rho_{f_0(k)+3}= \textsc{Create-Transaction}$ 
            \item $\rho_{f_0(k)+4}= \textsc{Nested-Function-Call}$ 
            \item $\rho_{f_0(k)+5}= \textsc{Call}$ 
            \item $\rho_{f_0(k)+6}= \textsc{Init-Fun-Params}$ 
            \item $\rho_{f_0(k)+7}= \textsc{Bind-Params}$ 
            \item $\rho_{f_0(k)+8}= \textsc{Bind-Params}$ 
            \item ...
            \item $\rho_{f_0(k)+t+8}= \textsc{Bind-Params-End}$ 
            \item $\rho_{f_0(k)+t+9}= \textsc{processFunQuantifiers}$ 
            \item $\rho_{f_0(k)+t+10}= \textsc{Call-Function-Body}$ 
            \item (---------End of state transition---------)
            \item $\rho_{f_0(k)+t+11}= \textsc{Function-Body}$ 
            \item ...
            \item (---------End of statements in function---------)
            \item $\rho_{f_0(k)+t+u+11}= \textsc{Return-Context}$ 
            \item $\rho_{f_0(k)+t+u+12}= \textsc{Clear-Recipient-Context}$ 
            \item $\rho_{f_0(k)+t+u+13}= \textsc{Clear-Caller-Context}$ 
            \item $\rho_{f_0(k)+t+u+14}= \textsc{Propagate-Exception-False}$ 
            \item (---------End of function---------)
        \end{itemize}

        \item Similar to $Case_1.2$,  $[\rho_{f_0(k)+1},\dots,\rho_{f_0(k)+t+10}]$ in~(1) corresponds to the state transition from $s_{k}$ to $s_{k+1}$.
        Here,  $t$ represents the number of parameters in the function. $u$ represents the number of steps on executing the statements.
        \item Similar to $Case_1.3$, let $c_k=x(c_{x}).f_{x}(p);\kwp{stmt}$
        \item Let the caller and recipient ID be $id_c$, $id_r$, respectively.
        \item Similar to $Case_1.4$, $C_r(c_k,i_k)=x(c_{x}).f_{x}(p);\kwp{stmt}$
        \item Similar to $Case_1.5$, rule  $\kwt{in\_call}$ is generated.
        \item By Definition of $\mathcal{R}$, the first rule for $f_x(p)$ is generated.
        \item Similar to $Case_1.6$, $\{\kwf{Var}_{i}^{id_c}\} \subseteq \textit{names}(F_{g(k)})$
        \item Similar to $Case_1.7$, $\{\kwf{Call}_{in}, \kwf{Var}_{i\circ 1}^{id_c}, \kwf{Gvar}^{id_c}\} \subseteq \textit{names}(F_{g(k)+1})$
        \item Similar to $Case_1.7$, by (7), $\{\kwf{Var}_{1}^{id_r}\} \subseteq \textit{names}(F_{g(k)+2})$
        \item Let \textit{step}=1, i.e., $g(k+1)=g(k)+2$
        \item Similar to $Case_1.12$, \textit{Condition (\textbf{a})} proved.
        \item By (B), $\gamma_k=\lambda_{f_0(k)} $
        \item Similar to $Case_1.15$, $\mathbb{G}(\gamma_k, id_c) \bi \textit{gvars}(\kwf{Var}_{i}^{id_c})$
        \item By inductive hypothesis (k), we have $V$, such that  $$\mathbb{G}(\gamma_k, id_r) \bi \textit{gvars}(V)$$
        \item By assumption of $f$, $\lambda_{f_0(k)+t+10}=\gamma_{k+1}$
        \item Similar to $Case_1.18$,  
         \begin{small}
        \begin{equation*}
            \mathbb{G}(\lambda_{f_0(k)+t+10})=\dots=\mathbb{G}(\lambda_{f_0(k)+2})= \mathbb{G}_g(\gamma_k, id_r)= \mathbb{G}(\gamma_k, id_r) 
        \end{equation*}
         \end{small}
        \item For caller $id_c$, by (5), (6),  $\mathbb{G}(\gamma_{k+1},id_c) \bi  \gvars(\kwf{Gvar}^{id_c}) $
        \item For recipient $id_r$, by  (10), (15), (17),  
        $$\mathbb{G}(\gamma_{k+1}, id_r)=\mathbb{G}(\gamma_{k+1})= \mathbb{G}(\lambda_{f_0(k)+t+10})$$
        $$\mathbb{G}(\gamma_{k+1}, id_r) \bi \kwf{Var}_{1}^{id_r}$$

        \item Similar to $Case_1.21$, by (19), \textit{condition \textbf{(b), (c)} hold}.
        \item Similar to $Case_1.25$, \textit{condition (\textbf{d}),(\textbf{e}) hold}.
        \item Similar to $Case_1.31$, \textit{condition (\textbf{f}) (\textbf{g})} \textit{hold}.
        \item By Definition~\ref{def:k-rule-k-system}, \textit{condition (\textbf{h}), (\textbf{i})} \textit{hold} trivially.
        \item Similar to $Case_1.33$, 
         \begin{small}
        \begin{equation*}
            \mathbb{L}_A(\lambda_{f_0(k)+t+10})=\dots=\mathbb{L}(\lambda_{f_0(k)+2})=\mathbb{L}_A(\lambda_{f_0(k)})\cup \{\mathbb{L}(\gamma_k)\}
        \end{equation*}
         \end{small}
        \item Similar to $Case_1.20$, by (8), 
        \begin{equation*}
            \mathbb{L}(\gamma_k) \bi \vars(\kwf{Var}_{i}^{id_c})\backslash\gvars(\kwf{Var}_{i}^{id_c})
        \end{equation*}
        \item By (5), (6), $\kwf{Var}_{i\circ1}^{id_c} \in \textit{names}(F_{g(k+1)}) $ and
        \begin{equation*}
            \vars(\kwf{Var}_{i}^{id_c})\backslash\gvars(\kwf{Var}_{i}^{id_c})= 
            \vars(\kwf{Var}_{i\circ 1}^{id_c})\backslash\gvars(\kwf{Var}_{i\circ 1}^{id_c}) 
        \end{equation*}
        \item By (16), (24), (25), (26), \textit{condition (\textbf{j})} holds.
        \item Similar to $Case_1.35$, by (10), (18), (19), \textit{condition (\textbf{k})} holds.
    \end{enumerate}

    \textbf{Case:} $T(s_k)=3$.
    \begin{enumerate}
        \item Similar to (1), (2) in case $T(s_k)=1$, the possible sub-traces for the state transition in (E) are as follows:

        Sub-trace 1:
        \begin{itemize}
            \item $\rho_{f_0(k)+1}= \textsc{Function-Body}$ 
            \item $\rho_{f_0(k)+2}= \textsc{Exe-Statement-Main-Contract}$ 
            \item $\rho_{f_0(k)+3}= \textsc{Write}$ 
            \item $\rho_{f_0(k)+4}= \textsc{WriteAddress-GlobalVariables}$ 
            \item $\rho_{f_0(k)+5}= \textsc{Gas-Cal}$ 
            \item (---------End of state transition---------)
            \item $\rho_{f_0(k)+6}= \textsc{Function-Body}$ 
        \end{itemize}

        Sub-trace 2:
        \begin{itemize}
            \item $\rho_{f_0(k)+1}= \textsc{Function-Body}$ 
            \item $\rho_{f_0(k)+2}= \textsc{Exe-Statement-Main-Contract}$ 
            \item $\rho_{f_0(k)+3}= \textsc{Write}$ 
            \item $\rho_{f_0(k)+4}= \textsc{WriteAddress-LocalVariables}$ 
            \item $\rho_{f_0(k)+5}= \textsc{Gas-Cal}$ 
            \item (---------End of state transition---------)
            \item $\rho_{f_0(k)+6}= \textsc{Function-Body}$ 
        \end{itemize}

        Sub-trace 3:
        \begin{itemize}
            \item $\rho_{f_0(k)+1}= \textsc{Function-Body}$ 
            \item $\rho_{f_0(k)+2}= \textsc{Exe-Statement}$ 
            \item $\rho_{f_0(k)+3}= \textsc{Write}$ 
            \item $\rho_{f_0(k)+4}= \textsc{WriteAddress-GlobalVariables}$ 
            \item $\rho_{f_0(k)+5}= \textsc{Gas-Cal}$ 
            \item (---------End of state transition---------)
            \item $\rho_{f_0(k)+6}= \textsc{Function-Body}$ 
        \end{itemize}

        Sub-trace 4:
        \begin{itemize}
            \item $\rho_{f_0(k)+1}= \textsc{Function-Body}$ 
            \item $\rho_{f_0(k)+2}= \textsc{Exe-Statement}$ 
            \item $\rho_{f_0(k)+3}= \textsc{Write}$ 
            \item $\rho_{f_0(k)+4}= \textsc{WriteAddress-LocalVariables}$ 
            \item $\rho_{f_0(k)+5}= \textsc{Gas-Cal}$ 
            \item (---------End of state transition---------)
            \item $\rho_{f_0(k)+6}= \textsc{Function-Body}$ 
        \end{itemize}

    The difference between $\textsc{Exe-Statement-Main-Contract}$ and  $\textsc{Exe-Statement}$ is that the first one occurs in a function of an external call and the second one is the function of an internal call, but the behaviors are the same.
    The difference between $\textsc{WriteAddress-GlobalVariables}$ and \\$\textsc{WriteAddress-LocalVariables}$ is that the first one writes the value to the cell \textit{ctContext} which stores global variables and second one writes the value to the cell \textit{Memory} which stores local variables.
    Therefore, the proofs for the above sub-traces are similar, and we choose \textit{sub-trace 1} for proving as follows.

        \item Similar to $Case_1.2$,  $[\rho_{f_0(k)+1},\dots,\rho_{f_0(k)+5}]$ in~(1) corresponds to the state transition from $s_{k}$ to $s_{k+1}$.
        \item Similar to $Case_1.3$, let $c_k=v_{1}\leftarrow v_{2};\kwp{stmt}$ (For readability, the value of left-hand of the name is assumed to be the ID of the variable in $\ks$)
        \item Similar to $Case_1.4$, $C_r(c_k,i_k)=v_{1}\leftarrow v_{2};\kwp{stmt}$
        \item Similar to $Case_1.5$, rule  $\kwt{var\_assign}$ is generated.
        \item Similar to $Case_1.6$, $\{\kwf{Var}_{i}\} \subseteq \textit{names}(F_{g(k)})$
        \item Similar to $Case_1.10$, $F_c(i_{k+1},T(s_{k+1}))=\{\kwp{Var}_{i\circ 1}\}$
        \item Let \textit{step}=1, i.e., $g(k+1)=g(k)+1$
        \item Similar to $Case_1.12$, \textit{Condition (\textbf{a})} proved.
        \item By (B), $\gamma_k=\lambda_{f_0(k)} $
        \item Similar to $Case_1.15$, $\mathbb{G}(\gamma_k) \bi \textit{gvars}(\kwf{Var}_{i})$
        \item By assumption of $f$, $\lambda_{f_0(k)+5}=\gamma_{k+1}$
        \item Similar to $Case_1.18$,  
            \begin{equation*}
                \mathbb{G}_{\lambda_{f_0(k)+5}} =\mathbb{G}_{\lambda_{f_0(k)+4}}  
            \end{equation*} 
        \item Compute value $x$ which satisfies 
        \begin{align*}
            \lambda_{f_0(k)} 
            & \smatch
                       \Mrule{\iota(\lambda_{f_0(k)}) \ \_}{contractStack}\ 
             \Mrule{
                       \Mrule{\iota(\lambda_{f_0(k)})}{ctId}\ \\&
                       \Mrule{v_1 \mapsto d}{ctContext} \ \ 
                       \Mrule{\_\  d\mapsto x \ \_}{ctStorage}
            }{contractInstance}
        \end{align*}
        \item  By (14), Definition~\ref{def:k-rule-k-system}, 
        \begin{equation*}
                \mathbb{G}_{\lambda_{f_0(k)+5}}=\mathbb{G}_{\lambda_{f_0(k)}} \cup \{(v_1,  v_2)\} \backslash \{(v_1, x)\}
        \end{equation*}
        \item By (4), (11), (15), 
        
        \begin{equation*}
            (v_1, x) \bi \sigma_v(v_1)
        \end{equation*}
        \item By (4), (7)
        \begin{equation*}
            gvars(\{\kwf{Var}_{i\circ 1}\})=
            gvars(\{\kwf{Var}_{i}\})
            \cup \{\sigma_v(v_1))\}
            \backslash \{\sigma_v(v_1))\}
        \end{equation*}
        \item Since $v_2=\mathcal{E}{(\sigma_v(v_2))}$, build mapping
        \begin{equation*} 
            (v_1, v_2) \bi \sigma_v(v_2)
        \end{equation*}
        \item By (10), (11), (15), (16), (17), (18), and Remark~\ref{remark:bi},
        \begin{equation*}
        \mathbb{G}(\gamma_{k+1}) \bi \textit{gvars}(\kwf{Var}_{i\circ 1})
        \end{equation*}
        \item Similar to $Case_1.10$, \textit{condition (\textbf{b}),(\textbf{c}) hold}.
        \item By Definition~\ref{def:k-rule-k-system}, \textit{condition (\textbf{d}), (\textbf{e}), (\textbf{f}), (\textbf{g}), (\textbf{h}), (\textbf{i}), (\textbf{j}), (\textbf{k}) hold} trivially.


    \end{enumerate}

    \textbf{Case:} $T(s_k)=4$.
    \begin{enumerate}
        \item Similar to (1), (2) in case $T(s_k)=1$, the possible sub-traces for the state transition in (E) are as follows:

        Sub-trace 1:
        \begin{itemize}
            \item $\rho_{f_0(k)+1}= \textsc{Function-Body}$ 
            \item $\rho_{f_0(k)+2}= \textsc{Exe-Statement-Main-Contract}$ 
            \item $\rho_{f_0(k)+3}= \textsc{Var-Declaration}$ 
            \item $\rho_{f_0(k)+4}= \textsc{Gas-Cal}$ 
            \item (---------End of state transition---------)
            \item $\rho_{f_0(k)+5}= \textsc{Function-Body}$ 
        \end{itemize}
        \vspace*{.2cm}

        Sub-trace 2:
        \begin{itemize}
            \item $\rho_{f_0(k)+1}= \textsc{Function-Body}$ 
            \item $\rho_{f_0(k)+2}= \textsc{Exe-Statement}$ 
            \item $\rho_{f_0(k)+3}= \textsc{Var-Declaration}$ 
            \item $\rho_{f_0(k)+4}= \textsc{Gas-Cal}$ 
            \item (---------End of state transition---------)
            \item $\rho_{f_0(k)+5}= \textsc{Function-Body}$ 
        \end{itemize}

        Similar to $Case.2$, we choose \textit{sub-trace 1} for proving as follows.
        \item Similar to $Case_1.2$,  $[\rho_{f_0(k)+1},\dots,\rho_{f_0(k)+4}]$ in~(1) corresponds to the state transition from $s_{k}$ to $s_{k+1}$.
        \item Similar to $Case_1.3$, let $ c_k= \tau \ v_{1}\leftarrow v_{2};\kwp{stmt}$
        \item Similar to $Case_1.4$, $C_r(c_k,i_k)=\tau \ v_{1}\leftarrow v_{2};\kwp{stmt}$
        \item Similar to $Case_1.5$, rule  $\kwt{var\_declare}$ is generated.
        \item Similar to $Case_1.6$,  $\{\kwf{Var}_{i}\} \subseteq \textit{names}(F_{g(k)})$
        \item Similar to $Case_1.10$, $F_c(i_{k+1},T(s_{k+1}))=\{\kwp{Var_{i\circ 1}}\}$
        \item Let \textit{step}=1, i.e., $g(k+1)=g(k)+1$
        \item Similar to $Case_1.12$, \textit{Condition (\textbf{a})} proved.
        \item By (B), $\gamma_k=\lambda_{f_0(k)} $
        \item Similar to $Case_1.15$, 
        \begin{equation*}
            \mathbb{L}(\gamma_k) \bi \vars(\kwp{Var_{i}})\backslash\gvars(\kwp{Var_{i}}) 
        \end{equation*}
        \item By assumption of $f$, $\lambda_{f_0(k)+4}=\gamma_{k+1}$
        \item By Definition~\ref{def:k-rule-k-system}
            \begin{equation*}
                \mathbb{L}_{\lambda_{f_0(k)+4}} =\mathbb{L}_{\lambda_{f_0(k)}} \cup \{(v_1,v_2)\}
            \end{equation*} 
        \item By (5), generated rule $\kwt{var\_declare}$ satisfies:
        \begin{align*}
            &\vars(\kwp{Var_{i\circ1}})\backslash\gvars(\kwp{Var_{i\circ 1}})= \\&
            \vars(\kwp{Var_{i}})\backslash\gvars(\kwp{Var_{i}}) \cup \llbracket\sigma_v(v_2)\rrbracket
        \end{align*}
        \item By (13),(14), since Since $v_2=\mathcal{E}{(\sigma_v(v_2))}$, build mapping
        \begin{equation*} 
            (v_1, v_2) \bi \sigma_v(v_2)
        \end{equation*}
        \item By (10), (11), (12), (13), (14),(15), and Remark~\ref{remark:bi},
        \begin{equation*}
            \mathbb{L}(\gamma_{k+1}) \bi \vars(\kwp{Var_{i+1}})\backslash\gvars(\kwp{Var_{i+1}}) 
        \end{equation*}
        \item Similar to $Case_1.10$, \textit{condition (\textbf{f}), (\textbf{g}) hold}.
        \item By Definition~\ref{def:k-rule-k-system}, \textit{condition (\textbf{b}), (\textbf{c}), (\textbf{d}), (\textbf{e}), (\textbf{h}), (\textbf{i}), (\textbf{j}), (\textbf{k}) hold} trivially.
    \end{enumerate}

    \textbf{Case:} $T(s_k)=5$.
    \begin{enumerate}
        \item Similar to (1), (2) in case $T(s_k)=1$, the possible sub-traces for the state transition in (E) are as follows:

        Sub-trace 1:
        \begin{itemize}
            \item $\rho_{f_0(k)+1}= \textsc{Function-Body}$ 
            \item $\rho_{f_0(k)+2}= \textsc{Exe-Statement-Main-Contract}$ 
            \item $\rho_{f_0(k)+3}= \textsc{R5}$ 
            \item (---------End of state transition---------)
            \item $\rho_{f_0(k)+4}= \textsc{Exe-Statement-Main-Contract}$ 
            \item 
            ...
            \item $\rho_{f_0(k)+i+4}= \textsc{Exe-Statement-Main-Contract}$ 
            \item (---------End of branch---------)
            \item $\rho_{f_0(k)+i+4}= \textsc{Function-Body}$ 
        \end{itemize}
        \vspace*{.2cm}

        Sub-trace 2:
        \begin{itemize}
            \item $\rho_{f_0(k)+1}= \textsc{Function-Body}$ 
            \item $\rho_{f_0(k)+2}= \textsc{Exe-Statement}$ 
            \item $\rho_{f_0(k)+3}= \textsc{R5}$ 
            \item (---------End of state transition---------)
            \item $\rho_{f_0(k)+4}= \textsc{Exe-Statement}$ 
            \item
            ...
            \item $\rho_{f_0(k)+i+4}= \textsc{Exe-Statement}$ 
            \item (---------End of branch---------)
            \item $\rho_{f_0(k)+i+4}= \textsc{Function-Body}$ 
        \end{itemize}

        Similar to $Case.2$, we choose \textit{sub-trace 1} for proving as follows.

        \item Similar to $Case_1.2$,  $[\rho_{f_0(k)+1},\dots,\rho_{f_0(k)+3}]$ in~(1) corresponds to the state transition from $s_{k}$ to $s_{k+1}$.
        \item Similar to $Case_1.3$, let 
        $$c_k=\texttt{if}\ e_{b}\ \texttt{then}\ \kwp{stmt}_{1}\ \texttt{else}\ \kwp{stmt}_{2};\kwp{stmt}_{3}$$
        \item Similar to $Case_1.4$, 
        \begin{equation*}
            C_r(c_k,i_k)=
            \texttt{if}\ e_{b}\ \texttt{then}\ \kwp{stmt}_{1}\ \texttt{else}\ \kwp{stmt}_{2};\kwp{stmt}_{3}
        \end{equation*}
        \item Similar to $Case_1.5$, since $e_b=\code{true}$, we get $\mathcal{E}(\theta_{e}(e_{b})) =\mathcal{E}(\sigma_v(e_{b}))=\code{true}$ and rule  $\kwt{if\_true}$ can be applied.
        \item Similar to $Case_1.6$, $\{\kwf{Var}_{i}\} \subseteq \textit{names}(F_{g(k)})$
        \item Similar to $Case_1.10$, $F_c(i_{k+1},T(s_{k+1}))=\{\kwp{Var_{i\circ 1}}\}$
        \item Let \textit{step}=1, i.e., $g(k+1)=g(k)+1$
        \item Similar to $Case_1.12$, \textit{Condition (\textbf{a})} proved.
        \item By Definition~\ref{def:k-rule-k-system}, \textit{condition (\textbf{b}), (\textbf{c}), (\textbf{d}), (\textbf{e}), (\textbf{f}), (\textbf{g}), (\textbf{h}), (\textbf{i}), (\textbf{j}), (\textbf{k}) hold} trivially.
    \end{enumerate}

    \textbf{Case:} $T(s_k)=6$.
        The proof of the case is similar to $Case_5$, and we omit the proof of this case.
    
    \textbf{Case:} $T(s_k)=7$.
    \begin{enumerate}
        \item Similar to (1), (2) in case $T(s_k)=1$, the possible sub-traces for the state transition in (E) are as follows:

        Sub-trace 1:
        \begin{itemize}
            \item $\rho_{f_0(k)+1}= \textsc{Function-Body}$ 
            \item $\rho_{f_0(k)+2}= \textsc{Exe-Statement-Main-Contract}$ 
            \item $\rho_{f_0(k)+3}= \textsc{Require}$ 
            \item (---------End of state transition---------)
            \item $\rho_{f_0(k)+4}= \textsc{Function-Body}$ 
        \end{itemize}
        \vspace*{.2cm}

        Sub-trace 2:
        \begin{itemize}
            \item $\rho_{f_0(k)+1}= \textsc{Function-Body}$ 
            \item $\rho_{f_0(k)+2}= \textsc{Exe-Statement}$ 
            \item $\rho_{f_0(k)+3}= \textsc{Require}$ 
            \item (---------End of state transition---------)
            \item $\rho_{f_0(k)+4}= \textsc{Function-Body}$ 
        \end{itemize}
        Similar to $Case.2$, we choose \textit{sub-trace 1} for proving as follows.

        \item Similar to $Case_1.2$,  $[\rho_{f_0(k)+1},\dots,\rho_{f_0(k)+3}]$ in~(1) corresponds to the state transition from $s_{k}$ to $s_{k+1}$.
        \item Similar to $Case_1.3$, let 
        $$c_k=\texttt{require}\ e_{b};  \kwp{stmt} $$
        \item Similar to $Case_1.4$, $C_r(c_k,i_k)=\texttt{require}\ e_{b};  \kwp{stmt}$
        \item Similar to $Case_1.5$, since $e_b=\code{true}$, we get $\mathcal{E}(\theta_{e}(e_{b})) =\mathcal{E}(\sigma_v(e_{b}))=\code{true}$ and rule  $\kwt{require\_true}$ can be applied.
        \item Similar to $Case_1.6$, $\{\kwf{Var}_{i}\} \subseteq \textit{names}(F_{g(k)})$
        \item Similar to $Case_1.10$, $F_c(i_{k+1},T(s_{k+1}))=\{\kwp{Var_{i\circ 1}}\}$
        \item Let \textit{step}=1, i.e., $g(k+1)=g(k)+1$
        \item Similar to $Case_1.12$, \textit{Condition (\textbf{a})} proved.
        \item By Definition~\ref{def:k-rule-k-system}, \textit{condition (\textbf{b}), (\textbf{c}), (\textbf{d}), (\textbf{e}), (\textbf{f}), (\textbf{g}), (\textbf{h}), (\textbf{i}), (\textbf{j}), (\textbf{k}) hold} trivially.

    \end{enumerate}

    \textbf{Case:} $T(s_k)=8$.
    \begin{enumerate}
        \item Similar to (1), (2) in case $T(s_k)=1$, the possible sub-traces for the state transition in (E) are as follows:

        Sub-trace 1:
        \begin{itemize}
            \item $\rho_{f_0(k)+1}= \textsc{Function-Body}$ 
            \item $\rho_{f_0(k)+2}= \textsc{Exe-Statement-Main-Contract}$ 
            \item $\rho_{f_0(k)+3}= \textsc{Require}$ 
            \item $\rho_{f_0(k)+4}= \textsc{Exception-Propagation}$ 
            \item $\rho_{f_0(k)+5}= \textsc{Update-Exception-State}$ 
            \item ...
            \item $\rho_{f_0(k)+t+5}= \textsc{Update-Cur-Context}$ 
            \item $\rho_{f_0(k)+t+6}= \textsc{Return-Context}$ 
            \item $\rho_{f_0(k)+t+7}= \textsc{Clear-Recipient-Context}$ 
            \item $\rho_{f_0(k)+t+8}= \textsc{Clear-Caller-Context}$ 
            \item $\rho_{f_0(k)+t+9}= \textsc{Propagate-Exception-True}$ 
            \item ...
        \end{itemize}
        \vspace*{.2cm}

        Sub-trace 2:
        \begin{itemize}
            \item $\rho_{f_0(k)+1}= \textsc{Function-Body}$ 
            \item $\rho_{f_0(k)+2}= \textsc{Exe-Statement}$ 
            \item $\rho_{f_0(k)+3}= \textsc{Require}$ 
            \item $\rho_{f_0(k)+4}= \textsc{Exception-Propagation}$ 
            \item $\rho_{f_0(k)+5}= \textsc{Update-Exception-State}$ 
            \item ...
            \item $\rho_{f_0(k)+t+6}= \textsc{Return-Context}$ 
            \item $\rho_{f_0(k)+t+7}= \textsc{Clear-Recipient-Context}$ 
            \item $\rho_{f_0(k)+t+8}= \textsc{Clear-Caller-Context}$ 
            \item $\rho_{f_0(k)+t+9}= \textsc{Propagate-Exception-True}$ 
            \item ...
        \end{itemize}
        Here, the application of rule $\textsc{Update-Exception-State}$ updates a 
        key value in cell \textit{contractStack} which finally results in the application of $\textsc{Propagate-Exception-True}$.
        
        Since both traces violates $\alpha_{\rho_e}$, the cases do not satisfy the precondition of the lemma, i.e., \textit{conditions of the case is proved}.

    \end{enumerate}
    \textbf{Case:} $T(s_k)=9$.
\begin{enumerate}
        \item Similar to (1), (2) in case $T(s_k)=1$, the possible trace for the state transition in (E) is as follows:
        \begin{itemize}
            \item $\rho_{f_0(k)+1}= \textsc{Function-Body}$ 
            \item $\rho_{f_0(k)+2}= \textsc{Exe-Statement-Main-Contract}$ 
            \item $\rho_{f_0(k)+3}= \textsc{Return-Value}$ 
            \item $\rho_{f_0(k)+4}= \textsc{Update-Cur-Context}$ 
            \item $\rho_{f_0(k)+5}= \textsc{Return-Context}$ 
            \item $\rho_{f_0(k)+6}= \textsc{Clear-Recipient-Context}$ 
            \item $\rho_{f_0(k)+7}= \textsc{Clear-Caller-Context}$ 
            \item $\rho_{f_0(k)+8}= \textsc{Propagate-Exception-False}$ 
            \item (---------End of state transition and function---------)
        \end{itemize}
    Note that we omit traces that does not satisfy $\alpha_{\rho_e}$. Hence, we prove the conditions hold for trace.
        \item Similar to $Case_1.2$,  $[\rho_{f_0(k)+1},\dots,\rho_{f_0(k)+8}]$ in~(1) corresponds to the state transition from $s_{k}$ to the end of the function.
        \item Similar to $Case_1.3$, let 
        $$c_k=\texttt{return}\ \code{i}$$
        \item Similar to $Case_1.4$, 
        \begin{equation*}
            C_r(c_k,i_k)=
            c_k=\texttt{return}\ \code{i}
        \end{equation*}

        \item Similar to $Case_1.5$, rule  $\kwt{ret\_ext}$ is generated.
        \item Similar to $Case_1.6$,  $\{\kwf{Var}_{i}\} \subseteq \textit{names}(F_{g(k)})$
        \item Similar to $Case_1.10$, $F_c(i_{k+1},T(s_{k+1}))=\{\kwf{Gvar},\kwf{Evar}\}$
        \item Let \textit{step}=1, i.e., $g(k+1)=g(k)+1$

        \item Similar to $Case_1.15$, 
        \begin{equation*}
            \mathbb{G}(\gamma_k) \bi \gvars(\kwp{Var_{i}}) 
        \end{equation*}
        \begin{equation*}
            \mathbb{E}(\gamma_k) \bi \evars(\kwp{Var_{i}}) 
        \end{equation*}

        \item Similar to $Case_1.16$,
        \begin{equation*}
            \gvars(\kwp{Var_{i}})=\gvars(\kwp{Gvar})
        \end{equation*}
        \begin{equation*}
            \evars(\kwp{Var_{i}})=\evars(\kwp{Evar})
        \end{equation*}

        \item Similar to $Case_1.17$, $\lambda_{f_0(k)+8}=\gamma_{k+1}$

        \item Similar to $Case_1.18$, 
         \begin{small}
        \begin{equation*}
            \mathbb{G}(\lambda_{f_0(k)+8})=\dots=\mathbb{G}(\lambda_{f_0(k)+6})=\mathbb{G}_b(\lambda_{f_0(k)})=\mathbb{G}(\lambda_{f_0(k)})
        \end{equation*}
         \end{small}
        \begin{equation*}
            \mathbb{L}(\lambda_{f_0(k)+8})=\dots=\mathbb{L}(\lambda_{f_0(k)+6})=\mathbb{G}_l(\lambda_{f_0(k)})= \emptyset
        \end{equation*}
        \begin{equation*}
            \mathbb{E}(\lambda_{f_0(k)+8})=\dots=\mathbb{E}(\lambda_{f_0(k)})
        \end{equation*}

        \item By (9), (10), (11), (12), condition \textit{(\textbf{b}), (\textbf{c}), (\textbf{d}), (\textbf{e})} hold.

        \item By (11), (12), condition \textit{(\textbf{f}), (\textbf{g})} hold.
        \item By Definition~\ref{def:k-rule-k-system}, \textit{(\textbf{h}), (\textbf{i})} hold trivially.
        \item Similar to $Case_1.33$,
        
        \begin{equation*}
            \mathbb{L}_A(\lambda_{f_0(k)+8})=\dots=\mathbb{L}(\lambda_{f_0(k)+3})=\mathbb{L}_A(\lambda_{f_0(k)})\backslash \ \{ \mathbb{L}_{some} \}
        \end{equation*}
         
        where $\mathbb{L}_{some}$ is some value. 
        \item By (16) and inductive hypothesis (j), \textit{condition (\textbf{j})} holds.
        \item Similar to $Case_1.35$, by (13), \textit{condition (\textbf{k})} holds.
\end{enumerate}

    \textbf{Case:} $T(s_k)=10$.

    \begin{enumerate}
        \item Similar to (1), (2) in case $T(s_k)=1$, the possible sub-traces for the state transition in (E) are as follows:

        Sub-trace 1:
        \begin{itemize}
            \item $\rho_{f_0(k)+1}= \textsc{Function-Body}$ 
            \item $\rho_{f_0(k)+2}= \textsc{Exe-Statement}$ 
            \item $\rho_{f_0(k)+3}= \textsc{Return-Value}$ 
            \item $\rho_{f_0(k)+4}= \textsc{Return-Context}$ 
            \item $\rho_{f_0(k)+5}= \textsc{Clear-Recipient-Context}$ 
            \item $\rho_{f_0(k)+6}= \textsc{Clear-Caller-Context}$ 
            \item $\rho_{f_0(k)+7}= \textsc{Propagate-Exception-False}$ 
            \item (---------End of state transition and function---------)
        \end{itemize}
        \vspace*{.2cm}

    Note that we omit traces that does not satisfy $\alpha_{\rho_e}$. Hence, we prove the conditions hold for trace.
        \item Let the ID of recipient and caller be $id_r, id_c$, respectively.
        \item Similar to $Case_1.2$,  $[\rho_{f_0(k)+1},\dots,\rho_{f_0(k)+7}]$ in~(1) corresponds to the state transition from $s_{k}$ to the end of the function.
        \item Similar to $Case_1.3$, let 
        $$c_k=\texttt{return}\ \code{i}$$
        \item Similar to $Case_1.4$, 
        \begin{equation*}
            C_r(c_k,i_k)=
            c_k=\texttt{return}\ \code{i}
        \end{equation*}

        \item Similar to $Case_1.5$, rule  $\kwt{ret\_in}$  is generated.
        \item By (6), both the recipient function and the caller function  have been applied by $R_s$, where the caller's rule  $\kwt{in\_call}$ and $\kwt{recv\_ret}$ has been generated.
        \item By (7), let the rule  $\kwt{in\_call}$ is generated at the corresponding state $$(\gamma_y, i_y, x(c_{x}).f_{x}(p);\kwp{stmt} )$$
        \item Similar to $Case_1.6$, by (8), $\{\kwf{Var}_{i_y}^{id_c}\} \subseteq \textit{names}(F_{g(k)})$
        \item Similar to $Case_1.10$, $F_c(i_{k+1},T(s_{k+1}))=\{\kwp{Var^{id_c}_{i_y\circ 1\circ 1}}\}$
        \item Let \textit{step}=2, i.e., $g(k+1)=g(k)+2$, and rule  $\kwt{in\_call}$ and $\kwt{recv\_ret}$ are applied in these steps.
        \item Similar to $Case_1.12$, \textit{condition (\textbf{a})} holds.

        \item Similar to $Case_1.17$, $\lambda_{f_0(k)+7}=\gamma_{k+1}$

        \item By inductive hypothesis (k), we have $V\ins F_{g(k)}$, such that  $$\mathbb{G}(\gamma_k, id_c) \bi \textit{gvars}(V)$$

        \item Similar to $Case_1.15$, 
            $$\mathbb{G}(\gamma_k, id_r) \bi \textit{gvars}(\kwf{Var}_i^{id_r})$$
        
        \item Similar to $Case_1.18$, 
         \begin{small}
        \begin{equation*}
            \mathbb{G}(\lambda_{f_0(k)+7})=\dots=\mathbb{G}(\lambda_{f_0(k)+6})=\mathbb{G}_g(\lambda_{f_0(k)},id_c) 
        \end{equation*}
        \begin{equation*}
            =\mathbb{G}(\lambda_{f_0(k)},id_c)
        \end{equation*}
        \begin{equation*}
            \mathbb{L}(\gamma_{k+1},id_r)=\mathbb{L}(\lambda_{f_0(k)+7},id_r)=\dots=\mathbb{L}(\lambda_{f_0(k)+5},id_r)= \emptyset
        \end{equation*}
        \begin{equation*}
            \mathbb{L}(\lambda_{f_0(k)+7},id_c)=\dots=\mathbb{L}(\lambda_{f_0(k)+6},id_c)= \mathbb{L}(\gamma_k, id_c) \in \mathbb{L}_A(\gamma_k)
        \end{equation*}
        \begin{equation*}
            \mathbb{E}(\lambda_{f_0(k)+7})=\dots=\mathbb{E}(\lambda_{f_0(k)})
        \end{equation*}
        \begin{equation*}
               \mathbb{G}(\gamma_{k+1}, id_c)=\mathbb{G}(\lambda_{f_0(k)+7})
        \end{equation*}
        \begin{equation*}
               \mathbb{G}(\gamma_{k+1}, id_r)=\mathbb{G}(\gamma_{k}, id_r)
        \end{equation*}
         \end{small}
        \item By (16) and inductive hypothesis (j),
        there exists $V_1$, such that   $     \mathbb{L}_A(\gamma_k) \bi \vars(V_1)\backslash\gvars(V_1)) $ and $V_1\ins  F_{g(k)}$
 
        \item By (11), (17), $V_1\ins F_{g(k+1)}$
        \item By (13), (16), (17), (18), conditions \textit{(\textbf{f}), (\textbf{g}) hold}.
        \item By (13), (14), (16), condition \textit{(\textbf{b}), (\textbf{c}) hold}.
        \item By (13), (16), condition \textit{(\textbf{d}), (\textbf{e}) hold}.
        \item By Definition~\ref{def:k-rule-k-system}, \textit{(\textbf{h}), (\textbf{i})} hold trivially.
        \item Similar to $Case_9.17$,  \textit{condition $(\textbf{j})$ holds} trivially.
        \item By (11), (14), $V\ins F_{g(k+1)}$.
        \item By (11), $\textit{gvars}(\kwf{Var}_i^{id_r}) = \textit{gvars}(\kwf{Gvar}^{id_r})$ and $\kwf{Gvar}^{id_r} \ins F_{g(k+1)}$
        \item Similar to $Case_1.35$, by (14), (16), (24), (25), \textit{condition (\textbf{k})} holds.

    \end{enumerate}

    \textbf{Case:} $T(s_k)=11$.

    \begin{enumerate}
        \item Similar to (1), (2) in case $T(s_k)=1$, the possible sub-traces for the state transition in (E) are as follows:

        Sub-trace 1:
        \begin{itemize}
            \item $\rho_{f_0(k)+1}= \textsc{Function-Body}$ 
            \item $\rho_{f_0(k)+2}= \textsc{Exe-Statement-Main-Contract}$ 
            \item $\rho_{f_0(k)+3}= \textsc{Transfer-Fund-Begin}$ 
            \item $\rho_{f_0(k)+4}= \textsc{Gas-Cal}$ 
            \item $\rho_{f_0(k)+5}= \textsc{Transfer-Fund}$ 
            \item (---------End of state transition---------)
            \item $\rho_{f_0(k)+6}= \textsc{Function-Body}$ 
        \end{itemize}
        \vspace*{.2cm}

        Sub-trace 2:
        \begin{itemize}
            \item $\rho_{f_0(k)+1}= \textsc{Function-Body}$ 
            \item $\rho_{f_0(k)+2}= \textsc{Exe-Statement}$ 
            \item $\rho_{f_0(k)+3}= \textsc{Transfer-Fund-Begin}$ 
            \item $\rho_{f_0(k)+4}= \textsc{Gas-Cal}$ 
            \item $\rho_{f_0(k)+5}= \textsc{Transfer-Fund}$ 
            \item (---------End of state transition---------)
            \item $\rho_{f_0(k)+6}= \textsc{Function-Body}$ 
        \end{itemize}
        Similar to $Case.2$, we choose \textit{sub-trace 1} for proving as follows.
            \item Similar to $Case_1.2$,  $[\rho_{f_0(k)+1},\dots,\rho_{f_0(k)+5}]$ in~(1) corresponds to the state transition from $s_{k}$ to $s_{k+1}$.
            \item Similar to $Case_1.3$, let $c_k=c_{x}.\texttt{transfer}(v_{1});\kwp{stmt} $ (For readability, assume in $\ks$, $c_x, c$ is the balance of the receiver and the account of $\gamma$, respectively)
            \item Similar to $Case_1.4$, $C_r(c_k,i_k)=c_{x}.\texttt{transfer}(v_{1});\kwp{stmt}$
            \item Similar to $Case_1.5$, rule  $\kwt{transfer\_succ}$ is generated.
            \item Similar to $Case_1.6$, $\{\kwf{Var}_{i}\} \subseteq \textit{names}(F_{g(k)})$
            \item Similar to $Case_1.10$, $F_c(i_{k+1},T(s_{k+1}))=\{\kwp{Var_{i\circ 1}}\}$
            \item Let \textit{step}=1, i.e., $g(k+1)=g(k)+1$
            \item Similar to $Case_1.12$, \textit{condition (\textbf{a})} holds.
            \item By (B), $\gamma_k=\lambda_{f_0(k)} $
            \item Similar to $Case_1.15$, $\mathbb{E}(\gamma_k) \bi \textit{evars}(\kwf{Var}_{i})$
            \item By assumption of $f$, $\lambda_{f_0(k)+5}=\gamma_{k+1}$
            \item Similar to $Case_1.18$,  
                \begin{equation*}
                    \mathbb{E}_{\gamma_k}=\mathbb{E}_{\lambda_{f_0(k)}}= \dots =\mathbb{E}_{\lambda_{f_0(k)+4}}  
                \end{equation*} 
            \item Let $(id_c,c) \in \mathbb{E}(\gamma_k)$ and $(id_x,c_x) \in \mathbb{E}(\gamma_k)$
            \item By (11),  let $(id_c,c) \bi  \sigma_v(c)$ and  $(id_x,c_x) \bi  \sigma_v(c_x)$
            \item By (12),(13),(14)
            \begin{align*}
                &\mathbb{E}_{\gamma_{k+1}}=\mathbb{E}_{\lambda_{f_0(k)+5}}=\\&=\mathbb{E}_{\gamma_{k}}\cup \{(id_c,c-v_1), (id_x,c_x+v_1)\} \backslash \{ (id_c,c), (id_x,c_x)\}
            \end{align*}
            \item By (15) and Definition~\ref{def:valuation}, 
           $$\mathcal{E}(\sigma_v(c))=c \land \mathcal{E}(\sigma_v(c_x))=c_x$$
           \item By (5), $\textit{evars}(\kwf{Var}_{i\circ 1})=$
           \begin{small}
           \begin{equation*}
            \textit{evars}(\kwf{Var}_{i})  \cup \{\sigma_v(c)\ominus\sigma_v(v_1), \sigma_v(c_x)\oplus \sigma_v(v_1) \} \backslash \{\sigma_v(c), \sigma_v(c_x) \}
           \end{equation*}
           \end{small}
           \item By (17) and Definition~\ref{def:valuation}, let $\mathcal{E}(\sigma_v(v_1))=v_1$, we get
                \begin{align*}
                    &\mathcal{E} (\sigma_v(c)\ominus \sigma_v(v_1))=c-v_1\\&
                    \mathcal{E} (\sigma_v(c_x)\oplus \sigma_v(v_1))=c_x+v_1
                \end{align*}
            \item By (19), let $(id_c,c-v_1) \bi  (\sigma_v(c)\oplus \sigma_v(v_1))$, and let 
        $(id_x,c_x\ominus \sigma_v(v_1)) \bi  \sigma_v(c_x)+v_1$
            \item By (11), (16), (18), (20), and Remark~\ref{remark:bi}, \textit{condition (\textbf{d}), (\textbf{e})} hold
            \item By Definition~\ref{def:k-rule-k-system}, \textit{condition (\textbf{b}), (\textbf{c}), (\textbf{f}), (\textbf{g}), (\textbf{h}), (\textbf{i}), \textbf{(j)}, \textbf{(k)} hold} trivially.
    \end{enumerate}

    \textbf{Case:} $T(s_k)=12$.

    \begin{enumerate}
        \item Similar to (1), (2) in case $T(s_k)=1$, the possible sub-traces for the state transition in (E) are as follows:

        Sub-trace 1:
        \begin{itemize}
            \item $\rho_{f_0(k)+1}= \textsc{Function-Body}$ 
            \item $\rho_{f_0(k)+2}= \textsc{Exe-Statement-Main-Contract}$ 
            \item $\rho_{f_0(k)+3}= \textsc{Transfer-Fund-Begin}$ 
            \item $\rho_{f_0(k)+4}= \textsc{Gas-Cal-Fail}$ 
            \item $\rho_{f_0(k)+5}= \textsc{Exception-Propagation}$ 
            \item $\rho_{f_0(k)+6}= \textsc{Update-Exception-State}$ 
            \item ...
            \item $\rho_{f_0(k)+t+7}= \textsc{Update-Cur-Context}$ 
            \item $\rho_{f_0(k)+t+8}= \textsc{Return-Context}$ 
            \item $\rho_{f_0(k)+t+9}= \textsc{Clear-Recipient-Context}$ 
            \item $\rho_{f_0(k)+t+10}= \textsc{Clear-Caller-Context}$ 
            \item $\rho_{f_0(k)+t+11}= \textsc{Propagate-Exception-True}$ 
            \item ...
        \end{itemize}
        \vspace*{.2cm}

        Sub-trace 2:
        \begin{itemize}
            \item $\rho_{f_0(k)+1}= \textsc{Function-Body}$ 
            \item $\rho_{f_0(k)+2}= \textsc{Exe-Statement}$ 
            \item $\rho_{f_0(k)+3}= \textsc{Transfer-Fund-Begin}$ 
            \item $\rho_{f_0(k)+4}= \textsc{Gas-Cal-Fail}$ 
            \item $\rho_{f_0(k)+5}= \textsc{Exception-Propagation}$ 
            \item $\rho_{f_0(k)+6}= \textsc{Update-Exception-State}$ 
            \item ...
            \item $\rho_{f_0(k)+t+7}= \textsc{Return-Context}$ 
            \item $\rho_{f_0(k)+t+8}= \textsc{Clear-Recipient-Context}$ 
            \item $\rho_{f_0(k)+t+9}= \textsc{Clear-Caller-Context}$ 
            \item $\rho_{f_0(k)+t+10}= \textsc{Propagate-Exception-True}$ 
            \item ...
        \end{itemize}
        Similar to $Case.8$, all conditions holds.
    \end{enumerate} 

    \textbf{Case:} $T(s_k)=13$.
    \begin{enumerate}
        \item Similar to (1), (2) in case $T(s_k)=1$, the possible sub-traces for the state transition in (E) are as follows:

        Sub-trace 1:
        \begin{itemize}
            \item $\rho_{f_0(k)+1}= \textsc{Function-Body}$ 
            \item $\rho_{f_0(k)+2}= \textsc{Exe-Statement-Main-Contract}$ 
            \item $\rho_{f_0(k)+3}= \textsc{Send-Fund-Begin}$ 
            \item $\rho_{f_0(k)+4}= \textsc{Gas-Cal}$ 
            \item $\rho_{f_0(k)+5}= \textsc{Send-Fund-Successful}$ 
            \item (---------End of state transition---------)
            \item $\rho_{f_0(k)+6}= \textsc{Function-Body}$ 
        \end{itemize}
        \vspace*{.2cm}

        Sub-trace 2:
        \begin{itemize}
            \item $\rho_{f_0(k)+1}= \textsc{Function-Body}$ 
            \item $\rho_{f_0(k)+2}= \textsc{Exe-Statement}$ 
            \item $\rho_{f_0(k)+3}= \textsc{Send-Fund-Begin}$ 
            \item $\rho_{f_0(k)+4}= \textsc{Gas-Cal}$ 
            \item $\rho_{f_0(k)+5}= \textsc{Send-Fund-Successful}$ 
            \item (---------End of state transition---------)
            \item $\rho_{f_0(k)+6}= \textsc{Function-Body}$ 
        \end{itemize}
    \end{enumerate} 
    The proof for the case is similar to the one for $Case_{11}$, and we omit the proof.

    \textbf{Case:} $T(s_k)=14$.
    \begin{enumerate}
        \item Similar to (1), (2) in case $T(s_k)=1$, the possible sub-traces for the state transition in (E) are as follows:

        Sub-trace 1:
        \begin{itemize}
            \item $\rho_{f_0(k)+1}= \textsc{Function-Body}$ 
            \item $\rho_{f_0(k)+2}= \textsc{Exe-Statement-Main-Contract}$ 
            \item $\rho_{f_0(k)+3}= \textsc{Send-Fund-Begin}$ 
            \item $\rho_{f_0(k)+4}= \textsc{Gas-Cal}$ 
            \item $\rho_{f_0(k)+5}= \textsc{Send-Fund-Failed}$ 
            \item (---------End of state transition---------)
            \item $\rho_{f_0(k)+6}= \textsc{Function-Body}$ 
        \end{itemize}
        \vspace*{.2cm}

        Sub-trace 2:
        \begin{itemize}
            \item $\rho_{f_0(k)+1}= \textsc{Function-Body}$ 
            \item $\rho_{f_0(k)+2}= \textsc{Exe-Statement}$ 
            \item $\rho_{f_0(k)+3}= \textsc{Send-Fund-Begin}$ 
            \item $\rho_{f_0(k)+4}= \textsc{Gas-Cal}$ 
            \item $\rho_{f_0(k)+5}= \textsc{Send-Fund-Failed}$ 
            \item (---------End of state transition---------)
            \item $\rho_{f_0(k)+6}= \textsc{Function-Body}$ 
        \end{itemize}
        \item Similar to $Case_1.12$, \textit{Condition (\textbf{a})} proved.
        \item By Definition~\ref{def:k-rule-k-system}, Condition \textit{(\textbf{b}), (\textbf{c}), (\textbf{d}), (\textbf{e}), (\textbf{f}), (\textbf{g}), (\textbf{h}), (\textbf{i}), (\textbf{j}), \textbf{(k)}  }\textit{hold} trivially.
    \end{enumerate}

\end{proof}


\begin{lemma}
    \label{lemma:var_soundness2}
    Let $(\mathcal{T},\karroww, \gamma_0)$ be a K transition system that satisfies $\ks \vDash \alpha$.
    Let $R_{s} \vDash \beta_{R_{s}}$.
    If $$
            \lambda_0 \newkarrowx{1} \lambda_1 \newkarrowx{2} \dots \newkarrowx{n} \lambda_n
    $$ 
    where $[\rho_1,\dots,\rho_n] \in \traceks{\gamma_0}$, and it corresponds to a sequence of state transitions according to Proposition~\ref{proposition:transitionsystem}: 
    $$s_0 \Rightarrow s_1 \Rightarrow \dots \Rightarrow s_m $$
    where $s_j=(\gamma_j,i_j,c_j)$, then there are
    \begin{equation*}
        \emptyset \exer{r_1} F_1 \exer{r_2} \dots \exer{r_{m'}} F_{m'} \in \textit{exec}^{m s r}(R_s)
    \end{equation*}
    and there exists a valuation $\mathcal{E}$ and a monotonic, strictly increasing function $g: \{ 0,\dots,m\}$ $\rightarrow \{ 0,\dots,m'\} $ such that 
    \begin{enumerate}[label=(\alph*)]
    \item $ \exists j \in \{1,\dots,m'\}. \forall x \in \mathbb{S}(\lambda_0).\ \exists V\ins F_{j}.\ \mathbb{G}(\lambda_0,x) \bi \gvars(V) $
    \item $ \forall x \in \mathbb{S}(\lambda_n).\ \exists V\ins F_{m'}.\ \mathbb{G}(\lambda_n,x) \bi \gvars(V)$
    \end{enumerate}
\end{lemma}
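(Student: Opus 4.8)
The plan is to derive this lemma as a corollary of Lemma~\ref{lemma:var_soundness}, whose hypotheses are strictly weaker than those assumed here: Lemma~\ref{lemma:var_soundness} requires only $R_s \vDash \beta_{initE} \land \beta_{initG}$, whereas the present statement assumes $R_s \vDash \beta_{R_s}$, which unfolds to $\beta_{r_n} \land \beta_{initE} \land \beta_{initG}$. First I would apply Lemma~\ref{lemma:var_soundness} to the given K-trace $\lambda_0 \newkarrowx{1} \dots \newkarrowx{n} \lambda_n$ and its associated Solidity-process transition sequence $s_0 \Rightarrow \dots \Rightarrow s_m$ supplied by Proposition~\ref{proposition:transitionsystem}. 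This yields an MSR transition sequence $\emptyset \exer{r_1} F_1 \exer{r_2} \dots \exer{r_{m'}} F_{m'}$, a valuation $\mathcal{E}$, and a monotonic strictly increasing $g$ with $g(m)=m'$, together with the eleven invariants (a)--(k) holding for every $j \in \{0,\dots,m\}$. The key observation is that both conclusions (a) and (b) of the present lemma are instances of invariant (k) of Lemma~\ref{lemma:var_soundness}, namely $\forall x \in \mathbb{S}(\gamma_j). \exists V \ins F_{g(j)}. \mathbb{G}(\gamma_j, x) \bi \gvars(V)$, read off at the two endpoints $j=0$ and $j=m$.

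For conclusion (a) I would instantiate invariant (k) at $j=0$. Since $s_0$ is the initial function-call state with $T(s_0)=1$, Proposition~\ref{proposition:transitionsystem} gives $f(0)=0$, hence $\gamma_0 = \lambda_0$ and $\mathbb{S}(\gamma_0) = \mathbb{S}(\lambda_0)$; taking the witness index to be $g(0)$, which equals $t+2$ in the base case of Lemma~\ref{lemma:var_soundness} and therefore lies in $\{1,\dots,m'\}$ (as $g$ is increasing with $g(m)=m'$), yields exactly $\forall x \in \mathbb{S}(\lambda_0). \exists V \ins F_{g(0)}. \mathbb{G}(\lambda_0,x) \bi \gvars(V)$. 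For conclusion (b) I would instantiate invariant (k) at $j=m$, using $g(m)=m'$ so that the witnessing facts $V$ reside in the final multiset $F_{m'}$; what remains is to replace the final Solidity-process configuration $\gamma_m$ by the final K-configuration $\lambda_n$ in the resulting statement.

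The main obstacle, and the only genuinely new work, is this endpoint identification together with the explicit membership $\emptyset \exer{r_1} \dots \exer{r_{m'}} F_{m'} \in \textit{exec}^{msr}(R_s)$. For the former I would use $\alpha_{\rho_n}$, which forces $\rho_n = \textsc{Propagate-Exception-False}$ (the transaction-ending rule), together with $\beta_{r_n}$, which forces $r_{m'} = \kwt{ret\_ext}$; the case analysis for $T(s_k)=9$ inside Lemma~\ref{lemma:var_soundness} already establishes that $\mathbb{G}$ is preserved across the tail K-rules $\textsc{Return-Value}, \dots, \textsc{Propagate-Exception-False}$, so the global variables of the last state-typed configuration coincide with $\mathbb{G}(\lambda_n,\cdot)$, while the $\kwf{Gvar}$ facts produced by $\kwt{ret\_ext}$ carry the corresponding terms into $F_{m'}$, giving $\mathbb{G}(\lambda_n,x) \bi \gvars(V)$ for $V \ins F_{m'}$. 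For the membership claim I would verify the freshness side-condition of MSR-executions: every $\kwf{Fr}$-fact is introduced only through $\kwt{init\_evars}$, $\kwt{init\_gvars}$, and $\kwt{ext\_call}$, and $\beta_{initE}$ and $\beta_{initG}$ guarantee that each such initialization fires at most once, so the $\kwt{Fresh}$ rule is never applied twice for the same name and the constructed sequence indeed belongs to $\textit{exec}^{msr}(R_s)$.
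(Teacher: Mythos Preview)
Your overall strategy is the paper's own: invoke Lemma~\ref{lemma:var_soundness}, read off invariant~(k) at the two endpoints, and bridge $\gamma_m$ to $\lambda_n$ via the tail $\textsc{Return-Value},\dots,\textsc{Propagate-Exception-False}$ forced by $\alpha_{\rho_n}$. That part is fine.

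There is, however, a genuine imprecision in how you handle the final step. Lemma~\ref{lemma:var_soundness} produces an execution $\emptyset \exer{r_1}\cdots\exer{r_{m''}} F_{m''}$ with $g(m)=m''$; the last applied rule $r_{m''}$ is whatever rule handled the transition $s_{m-1}\Rightarrow s_m$ and is in general \emph{not} $\kwt{ret\_ext}$, because the inductive step for $T(s_k)=9$ fires $\kwt{ret\_ext}$ only when there is a successor state $s_{k+1}$. Since we assume $R_s\vDash\beta_{R_s}$ (hence $\beta_{r_n}$), the execution you exhibit must terminate with $\kwt{ret\_ext}$; you therefore need to \emph{append} one $\kwt{ret\_ext}$ step to reach $F_{m'}=F_{m''+1}$, exactly as the paper does in its steps (7)--(9). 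Your write-up conflates the two lengths: you first set $g(m)=m'$ from Lemma~\ref{lemma:var_soundness}, then speak of ``the $\kwf{Gvar}$ facts produced by $\kwt{ret\_ext}$ carry the corresponding terms into $F_{m'}$,'' which is inconsistent unless you explicitly extend the execution. The invocation ``$\beta_{r_n}$ forces $r_{m'}=\kwt{ret\_ext}$'' is backwards: $\beta_{r_n}$ is a filter on admissible executions, not a property you can deduce about the execution handed to you by Lemma~\ref{lemma:var_soundness}; you must \emph{arrange} the last rule to be $\kwt{ret\_ext}$ and then check that invariant~(k) survives that step (it does, because $\kwt{ret\_ext}$ moves $\gvars$ from $\kwf{Var}_i$ into $\kwf{Gvar}$ unchanged and leaves the other $\kwf{Gvar}^x$ facts untouched).

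A minor point: your closing paragraph about the $\kwf{Fr}$ freshness side-condition is off target. $\beta_{initE}$ and $\beta_{initG}$ constrain multiplicities of $\kwf{Evar}/\kwf{Gvar}$ facts in each state, not repeated firings of $\kwt{Fresh}$ on the same name; the freshness condition in the definition of $\textit{exec}^{msr}$ is satisfied by construction (fresh names are distinct), and the paper does not linger on it either. The substantive membership obligation is $\beta_{r_n}$, which is discharged by the explicit $\kwt{ret\_ext}$ extension above.
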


\begin{proof}
    \begin{enumerate}
        \item By Lemma \ref{lemma:var_soundness}, for $R_s \vDash \beta_{initE} \land \beta_{initG}$, there exists $\emptyset \exer{r_1} F_1 \exer{r_2} \dots \exer{r_{m''}} F_{m''}$  and there exists a valuation $\mathcal{E}$ and a monotonic, strictly increasing function $g: \{ 0,\dots,m\}$ $\rightarrow \{ 0,\dots,m''\} $ such that $g(m) = m''$and 
    \begin{enumerate}[label=(\alph*)]
    \item $ \forall x \in \mathbb{S}(\gamma_0). \exists V\ins F_{g(0)}. \mathbb{G}(\gamma_0,x) \bi \gvars(V)$
    \item $ \forall x \in \mathbb{S}(\gamma_m). \exists V\ins F_{m''}. \mathbb{G}(\gamma_m,x) \bi \gvars(V)$
    \end{enumerate}
        \item By Proposition~\ref{proposition:transitionsystem} and Definition \ref{def:Soliditystate}, there exists a strictly monotonically increasing function $f$ such that $\gamma_j= \lambda_{f(j)}$ and $f(0)= 0$.
        \item By (2), $\gamma_0$ = $\lambda_0$ and $ \forall x \in \mathbb{S}(\gamma_0). \mathbb{G}(\gamma_0,x) = \mathbb{G}(\lambda_0,x)$.
        \item By the definitions of $\alpha_{\rho_n}$ and $\ks$, the only possible trace for the transition $\lambda_{n-6} \newkarrowx{n-7} \lambda_{n-7} \dots \newkarrowx{n} \lambda_n$ is as follows:
        \begin{itemize}
            \item $\rho_{n-7}= \textsc{Function-Body}$ 
            \item $\rho_{n-6}= \textsc{Exe-Statement-Main-Contract}$ 
            \item $\rho_{n-5}= \textsc{Return-Value}$ 
            \item $\rho_{n-4}= \textsc{Update-Cur-Context}$ 
            \item $\rho_{n-3}= \textsc{Return-Context}$ 
            \item $\rho_{n-2}= \textsc{Clear-Recipient-Context}$ 
            \item $\rho_{n-1}= \textsc{Clear-Caller-Context}$ 
            \item $\rho_{n}= \textsc{Propagate-Exception-False}$ 
            \item (---------End of state transition and function---------)
        \end{itemize}
        \item By Definition \ref{def:Soliditystate}, \ref{def:k-rule-k-system}, \ref{def:LGE}, $\forall x \in \mathbb{S}(\gamma_m).\ \mathbb{G}(s_m,x) = \mathbb{G}(\lambda_{n-6},x) = \mathbb{G}(\lambda_n,x)$.
        \item By Table \ref{table:states} and Table \ref{table:tableCorrespondence}, $c_m = \texttt{return}\ \code{i}$ and $T(s_m) = 9$.
        \item By (6) and the definition of $\mathcal{R}$, rule $\kwt{ret\_ext}$ is generated.
        \item By (1), (7), there exists an execution $e_r = \emptyset \exer{r_1} F_1 \exer{r_2} \dots \exer{r_{m''}} F_{m''} \exer{\kwt{ret\_ext}} F_{m'} \in \textit{exec}^{m s r}(R_s)$ where $R_s \vDash \ \beta_{initE} \land \beta_{initG} \land \beta_{r_n}$.
        \item By the definition of rule $\kwt{ret\_ext}$, $\{\gvars(V) \mid \forall x \in \mathbb{S}(\gamma_m).\ \exists V\ins F_{m''}.\ \textit{names}(\{V\})=\{\kwf{Var}_i^x \}\} = \{\gvars(V) \mid \forall x \in \mathbb{S}(\gamma_m).\ \exists V\ins F_{m'}.\ \textit{names}(\{V\})=\{\kwf{Gvar}^x \}\}$.
        \item By (1), (5), (8), (9), $ \forall x \in \mathbb{S}(\lambda_n). \exists V\ins F_{m'}. \mathbb{G}(\lambda_n,x) \bi \gvars(V)$.
        \item By (1), (3), (8), condition (a) proved.
        \item By (8), (10), condition (b) proved.
    \end{enumerate}
\end{proof}

\begin{definition}{($\phi_{inv}^{\ks}$)}
    \label{definition: k_satisfy_inv}
    \textit{
    Let   $(\mathcal{T},\karroww, \lambda_0)$ be a K transition system that satisfies $\ks \vDash \alpha$.
    Given a set of global variables $\eta = \{\chi_1, \dots, \chi_m\}$ of contract $x$, define the invariant property for this K transition system $\phi_{inv}^{\ks}(\lambda_0,\eta,x)$ as follows:}


    \textit{
    For all executions $\lambda_0 \newkarrowx{1} \lambda_1 \newkarrowx{2} \dots \newkarrowx{n} \lambda_n \in \mathit{exec}^{\ks}(\gamma_0)$,
    we have that
    $$
       \sum_{v\in G_0}(v) = \sum_{v\in G_n}(v)
    $$
    where $G_i = \{v \mid \chi \in \eta\ \land\ (\chi,v) \in \mathbb{G}(\lambda_i,x) \}$
    }
\end{definition}

\begin{definition}{($\phi_{equ}^{\ks}$)}
    \label{definition: k_satisfy_equ}
    \textit{
    Let   $(\mathcal{T},\karroww, \lambda_0)$ be a K transition system that satisfies $\ks \vDash \alpha$.
    Given a global variable $\chi$ of contract $x$, define the equivalence property for this K transition system $\phi_{equ}^{\ks}(\lambda_0,\chi,x)$ as follows:}

   \textit{
    For any two executions $\lambda_0 \newkarrowx{A_1} \lambda_{A_1} \newkarrowx{A_2} \dots \newkarrowx{A_n} \lambda_{A_n} \in \mathit{exec}^{\ks}(\gamma_0)$,
    and $\lambda_0 \newkarrowx{B_1} \lambda_{B_1} \newkarrowx{B_2} \dots \newkarrowx{B_m} \lambda_{B_m} \in \mathit{exec}^{\ks}(\gamma_0)$,
    if $$
      \{\mathbb{M}(\lambda_{A_i}) \mid  i \in \{ 1, \dots, n\}\}^{\#}\ = \{\mathbb{M}(\lambda_{B_j}) \mid  j \in \{ 1, \dots, m\}\}^{\#} 
    $$
    then we have that
    $$
         v_A = v_B
    $$ where $(\chi,v_A) \in \mathbb{G}(\lambda_{A_n},x), (\chi,v_B) \in \mathbb{G}(\lambda_{B_m},x)$.}
\end{definition}

\begin{definition}(\ind, \IND)
 \textit{Given a variable $\chi$ and a sequence $V$ containing a term denoting $k$, we define \ind$(\chi,V)$ as the term denoting $\chi$ in $V$. Given a set $\eta$ of variables and a sequence $V$ containing terms denoting variables in $\eta$, we define \IND$(\eta,V) = \{\ind(\chi,V) \mid \chi \in \eta\}$.}
\end{definition}




\begin{definition}{($\phi_{inv}^{R_s}$)}
    \label{definition: r_satisfy_inv_ind}
    \textit{
    Let $R_s \vDash \beta_{R_s}$.
    Given a set of global variables $\eta$ of contract $x$ and a function $\IND$, define the invariant property $\phi_{inv}^{R_s}(\eta,\IND,x)$  for $R_s$ system as follows:}

    \textit{
    For all executions $F_0 \exer{r_1} F_1 \exer{r_2} \dots \exer{r_n} F_n \in \textit{exec}^{m s r}(R_s)$, 
    for all valuations $\mathcal{E}$,
    if there exists $j \in \{0,\dots,n\}$ and $$ F_{j} \exer{\kwt{init\_gvars}} F_{j+1}$$ and 
    $$\forall V \ins F_j.\ \textit{names}(\{V\})\not=\{\kwf{Gvar}^x \}\land $$ $$ \exists V' \ins F_{j+1}.\ \textit{names}(\{V'\})=\{\kwf{Gvar}^x \}$$
    then we have that
    $$
       \sum_{t\in \IND(\eta,\gvars(V_{j+1})) }(\mathcal{E}(t)) = \sum_{t\in \IND(\eta,\gvars(V_n)) }(\mathcal{E}(t))
    $$
    where  $V_i \ins F_i, \textit{names}(\{V_i\})=\{\kwf{Gvar}^x \}$.
    }
\end{definition}

\begin{definition}{($\phi^{R_{inv}}$)}
    \label{definition: r_satisfy_inv}
     \textit{
    Let $R_{inv} \vDash \beta_{R_{inv}}$.
    Given a set of global variables $\eta$ of contract $x$ and a function $\IND$, define the invariant property $\phi^{R_{inv}}(\eta,\IND,x)$  for $R_{inv}$ system as follows:}

    \textit{
    For all executions $F_0 \exeri{r_1} F_1 \exeri{r_2} \dots \exeri{r_n} F_n \in \textit{exec}^{m s r}(R_{inv})$,
    for all valuations $\mathcal{E}$, if there exists $j \in \{0,\dots,n\}$ and $$F_{j} \exeri{\kwt{init\_gvars\_inv}} F_{j+1}$$ and 
    $$\forall V \ins F_j.\ \textit{names}(\{V\})\not=\{\kwf{Gvar}^x \}\land $$ $$\exists V' \ins F_{j+1}.\ \textit{names}(\{V'\})=\{\kwf{Gvar}^x \}$$
    then we have that
    $$
       \sum_{t\in \IND(\eta,\gvars(V_{j+1})) }(\mathcal{E}(t)) = \sum_{t\in \IND(\eta,\gvars(V_n)) }(\mathcal{E}(t))
    $$
    where  $V_i \ins F_i, \textit{names}(\{V_i\})=\{\kwf{Gvar}^x \}$. }
\end{definition}

\begin{definition}{(Sub-executions)}
 Given a multiset rewriting system $R$, an execution $e_1 = F_0 \exe{r_1} F_1 \exe{r_2} \dots \exe{r_n} F_n \in \textit{exec}^{m s r}(R)$ and a sequence of transitions $e_2 = F_i \exe{r_{i+1}} \dots \exe{r_j} F_j$ where $0 \leq i < j \leq n$, define $e_2$ as a sub-execution of $e_1$, writtern $e_2 \sqsubseteq e_1$.
\end{definition}



\begin{definition}{($\phi_{equ}^{R_s}$)}
    \label{definition: r_satisfy_equ_ind}
    \textit{
    Let $R_s \vDash \beta_{R_s}$.
    Given a global variable $\chi$ of contract $x$ and a function $\ind$, define the equivalence property $\phi_{equ}^{R_s}(\chi,\ind,x)$ for $R_s$ system as follows:}

    \textit{
    Given any two executions $e_A = F_0 \exer{r_{A_1}} F_{A_1} \exer{r_{A_2}} \dots \exer{r_{A_n}} F_{A_n} \in \textit{exec}^{m s r}(R_s)$ and $e_B = F_0 \exer{r_{B_1}} F_{B_1} \exer{r_{B_2}} \dots \exer{r_{B_m}} F_{B_m} \in \textit{exec}^{m s r}(R_s)$, for all valuations $\mathcal{E}$, if the following conditions hold:
    \begin{enumerate}[label=(\alph*)]
    \item $\{V \mid \forall k \in \{1,\dots,n-1\}.\ V \ins F_{A_k} \land V \not\ins F_{A_{k+1}}\land  \textit{names}(\{V\})=\{\kwf{Call_e} \}    \}^{\#} = \{V \mid \forall k \in \{1,\dots,m-1\}.\ V \ins F_{B_k} \land V \not\ins F_{B_{k+1}}\land  \textit{names}(\{V\})=\{\kwf{Call_e} \}    \}^{\#}$
    \item  $\exists j \in \{1,\dots,n-1\}.\ \exists j' \in \{1,\dots,m-1\}.\ \exists V_A \ins F_{A_{j+1}}.\ \exists V_B \ins F_{B_{j'+1}}.\ V_A \not\ins F_{A_{j}} \land V_B \not\ins F_{B_{j'}} \land \textit{names}(\{V_A\})=\{\kwf{Gvar}^x \} \land \textit{names}(\{V_B\})=\{\kwf{Gvar}^x \} \rightarrow V_A = V_B $
    \end{enumerate}
    then we have that
    $$
       \mathcal{E}(\ind(\chi,\gvars(V_{A_{n}}))) = \mathcal{E}(\ind(\chi,\gvars(V_{B_{m}})))
    $$
    where $V_{A_{n}} \ins F_{A_n}, \textit{names}(\{V_{A_{n}}\})=\{\kwf{Gvar}^x \}, V_{B_{m}} \ins F_{B_m},$ $ \textit{names}(\{V_{B_{m}}\})=\{\kwf{Gvar}^x \}$.}
\end{definition}

\begin{definition}{($\phi^{R_{equ}}$)}
    \label{definition: r_satisfy_equ}
    \textit{
    Let $R_{equ} \vDash \beta_{R_{equ}}$.
    Given a global variable $\chi$ of contract $x$ and a function $\ind$, define the equivalence property $\phi^{R_{equ}}(\chi,\ind,x)$ for $R_s$ system as follows:}

    \textit{
    For all executions $e = F_0 \exere{r_1} F_1 \exere{r_2} \dots \exere{r_n} F_n \in \textit{exec}^{m s r}(R_{equ})$, 
    for all valuations $\mathcal{E}$, 
    $$
        \mathcal{E}(\ind(\chi,\gvars(V_{A_n}))) = \mathcal{E}(\ind(\chi,\gvars(V_{B_n})))
    $$
    where $V_{A_n},V_{B_n} \ins F_n, \textit{names}(\{V_{A_n}\})=\{\kwf{Gvar_A}^x \}, \textit{names}(\{V_{B_n}$ $\})=\{\kwf{Gvar_B}^x \}$.}
\end{definition}

\begin{lemma}
    \label{lemma: soundness}
    Let $R_{inv} \vDash \beta_{R_{inv}}$ and $R_{s} \vDash \beta_{R_{s}}$. 
    Given a set of global variables $\eta$ of contract $x$ and a function $\IND$, if $\phi^{R_{inv}}(\eta,\IND,x)$ holds, then $\phi_{inv}^{R_s}(\eta,\IND,x)$ holds.
\end{lemma}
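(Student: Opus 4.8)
By contraposition, it suffices to show that a failure of $\phi_{inv}^{R_s}(\eta,\IND,x)$ forces a failure of $\phi^{R_{inv}}(\eta,\IND,x)$. By Definition~\ref{definition: r_satisfy_inv_ind}, a failure of $\phi_{inv}^{R_s}$ yields an execution $e_s = F_0 \exer{r_1} \dots \exer{r_n} F_n \in \textit{exec}^{msr}(R_s)$, a valuation $\mathcal{E}$, and an index $j$ with $F_j \exer{\kwt{init\_gvars}} F_{j+1}$ such that $\kwf{Gvar}^x$ first appears at $F_{j+1}$, the $\eta$-sum there equals some constant $C$, yet the $\eta$-sum of the final $\kwf{Gvar}^x$ fact $V_n$ differs from $C$ under $\mathcal{E}$. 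The first observation is that, because $\beta_{initG}$ forces $\kwf{Gvar}^x$ (and $\kwf{Var}^x$) to be produced by initialization only once, while each completed external transaction consumes the single $\kwf{Gvar}^x$ token at $\kwt{recv\_ext}$ and restores exactly one such token at $\kwt{ret\_ext}$, the $\kwf{Gvar}^x$ appearances that delimit completed external transactions carry a well-defined sequence of $\eta$-sums $C = s_0, s_1, \dots, s_K$ with $s_K \neq C$. Hence there is an index $k^\ast$ with $s_{k^\ast-1} \neq s_{k^\ast}$, i.e.\ a single external transaction across which the invariant sum changes.

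First I would isolate that transaction. Using the locality (frame property) of the transition relation in Lemma~\ref{lemma:simpletrans}, the contiguous sub-execution of $e_s$ running from the $\kwf{Gvar}^x$ state with sum $s_{k^\ast-1}$ to the next completing $\kwf{Gvar}^x$ state with sum $s_{k^\ast}$ depends only on the facts it touches (the $\kwf{Gvar}^x$ and $\kwf{Evar}$ tokens, the fresh facts it draws, and the intermediate $\kwf{Var}$, $\kwf{Call}$, $\kwf{Return}$ and $\kwf{Fallback}$ facts), and may therefore be replayed in isolation. This fragment may contain internal-call rules ($\kwt{in\_call}, \kwt{recv\_ret}, \kwt{ret\_in}$) and the adversary's fallback rules ($\kwt{fb\_in\_call}, \kwt{ret\_fb}$); crucially, all of these belong to $R_{inv}$ as well, since $R_{inv}$ differs from $R_s$ only on the three boundary rules $\kwt{init\_gvars}$, $\kwt{ext\_call}$ and $\kwt{ret\_ext}$.

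Next I would lift this fragment to an execution $e_{inv} \in \textit{exec}^{msr}(R_{inv})$. I prepend $\kwt{init\_evars}$ and $\kwt{init\_gvars\_inv}$, choosing the fresh-name values so that the produced $\kwf{Evar}$ and $\kwf{Gvar}^x$ match exactly the balances and global variables entering transaction $k^\ast$; the added action $\theta_e(\phi)$ is consistent with $\mathcal{E}$ by taking the invariant constant $C_1 := s_{k^\ast-1}$, the sum of those initial values. I then replace the opening $\kwt{ext\_call}$ by $\kwt{ext\_call\_inv}$ and the closing $\kwt{ret\_ext}$ by $\kwt{ret\_ext\_inv}$, leaving every statement-level rule unchanged. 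The key correspondence is that $\kwt{init\_gvars\_inv}$, $\kwt{ext\_call\_inv}$ and $\kwt{ret\_ext\_inv}$ rewrite the $\kwf{Gvar}$, $\kwf{Var}$, $\kwf{Evar}$ and $\kwf{Call_e}$ facts identically to their $R_s$ counterparts, adding only the inert action facts $\kwf{Start}()$, $\kwf{End}()$, $\theta_e(\phi)$ and $\theta_{ne}(\phi)$; since actions do not enter the applicability condition of the transition relation in Lemma~\ref{lemma:simpletrans}, the replay is a valid MSR execution carrying the same valuation $\mathcal{E}$. One then checks $R_{inv} \vDash \beta_{R_{inv}}$ on $e_{inv}$: it fires $\kwt{ext\_call\_inv}$ exactly once ($\beta_{start}$), ends with $\kwt{ret\_ext\_inv}$ ($\beta_{r_n}'$), and initializes $\kwf{Evar}$ and $\kwf{Gvar}^x$ once ($\beta_{initE}', \beta_{initG}'$).

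Finally, $e_{inv}$ refutes $\phi^{R_{inv}}$: its initialization establishes $\kwf{Gvar}^x$ with $\eta$-sum $C_1 = s_{k^\ast-1}$, while the final $\kwf{Gvar}^x$ has $\eta$-sum $s_{k^\ast} \neq C_1$ under $\mathcal{E}$, contradicting $\phi^{R_{inv}}(\eta,\IND,x)$ by Definition~\ref{definition: r_satisfy_inv}. The hardest part will be making the transaction decomposition rigorous in the presence of internal and fallback calls, where $\kwf{Gvar}^x$ is temporarily released and re-absorbed, so that the tracked $\kwf{Gvar}^x$ appearances must be matched to completed external transactions rather than to every occurrence, together with the rule-by-rule verification that the shared rules of $R_s$ and $R_{inv}$ act identically on the tracked facts, ensuring that the replayed fragment is faithful and that the valuation $\mathcal{E}$ transfers unchanged.
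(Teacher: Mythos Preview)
Your proposal is correct and follows essentially the same route as the paper: argue by contraposition, locate a single external transaction across which the $\eta$-sum changes, and replay it in $R_{inv}$ by swapping the three boundary rules $\kwt{init\_gvars}$, $\kwt{ext\_call}$, $\kwt{ret\_ext}$ for their $\_inv$ counterparts, relying on the fact that these replacements alter only actions, not premises or conclusions (so Lemma~\ref{lemma:simpletrans} guarantees the replay is valid). Your treatment of the multi-transaction case via the sequence $s_0,\dots,s_K$ and the pigeonhole-style selection of $k^\ast$ is actually more explicit than the paper's corresponding step, which asserts directly that all rules outside one $\kwt{recv\_ext}\dots\kwt{ret\_ext}$ segment lie in $\{\kwt{init\_gvars},\kwt{init\_evars},\kwt{ext\_call}\}$; your decomposition makes the argument robust even when several completed transactions are present in the $R_s$-execution.
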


\begin{proof}
    We prove this theorem by contradiction and firstly propose an assumption: $\phi^{R_{inv}}(\eta,\IND,x)$ holds and $\phi_{inv}^{R_s}(\eta,\IND,x)$ does not hold. 
    \begin{enumerate}
        \item By the above assumption and Definition \ref{definition: r_satisfy_inv_ind}, we have that there exists an execution 
               $e = F_0 \exer{r_1} F_1 \exer{r_2} \dots \exer{r_n} F_n \in \textit{exec}^{m s r}(R_s)$, and a valuation $\mathcal{E}$ and there exists $j \in \{0,\dots,n\}$ and $$ F_{j} \exer{\kwt{init\_gvars}} F_{j+1}$$ and   $$ F_{j} \exer{\kwt{init\_gvars}} F_{j+1}$$ and 
                $$\forall V \ins F_j.\ \textit{names}(\{V\})\not=\{\kwf{Gvar}^x \}\land $$ $$ \exists V' \ins F_{j+1}.\ \textit{names}(\{V'\})=\{\kwf{Gvar}^x \}$$
                and
                $$
                   \sum_{t\in \IND(\eta,\gvars(V_{j+1})) }(\mathcal{E}(t)) \not= \sum_{t\in \IND(\eta,\gvars(V_n)) }(\mathcal{E}(t))
                $$
                where  $V_i \ins F_i, \textit{names}(\{V_i\})=\{\kwf{Gvar}^x \}$.

        \item By the definition of $\beta_{r_n}$, $\exists r \in \mathit{traces}^{msr}(R_s). r = \kwt{ret\_ext}$.

        \item By (2) and the definition of rule $\kwt{ret\_ext}$, $\exists r \in \mathit{traces}^{msr}(R_s). r = \kwt{recv\_ext}$.

        \item By (1), (2), (3) and the definitions of rule $\kwt{ret\_ext}$ and $\kwt{recv\_ext}$, there exists a transition $e' = F_{k-1} \exer{r_{k}}  F_{k} \dots \exer{r_{k+u}} F_{k+u}$ where $r_{k} = \kwt{recv\_ext} \land r_{k+u} = \kwt{ret\_ext} \land \forall r \in \{r_{k+1},\dots,r_{k+u}\}. r\not= \kwt{recv\_ext}$ and $e' \sqsubseteq  e$.


        \item By the definitions of rules in $R_s$, $\forall r \ins \{r_1,\dots,r_n\}^{\#}. r \not\ins \{r_{k},\dots,r_{k+u}\}^{\#} \rightarrow r \in \{\kwt{init\_gvars},\kwt{init\_evars},\kwt{ext\_call}\}$. 


        \item By the definition of $\kwt{ext\_call}$, given a transition $F_{k-1} \exer{r_{k}}  F_{k}$, 
              $r_k = \kwt{ext\_call} \rightarrow $ 
              $ \gvars(V_{k-1}) = \gvars(V_{k}) $
              where $V_i \ins F_i, \textit{names}(\{V_i\})=\{\kwf{Gvar}^x \}$.

        \item By the definition of $\kwt{init\_evars}$, , given a transition $F_{k-1} \exer{r_{k}}  F_{k}$,
              $r_k = \kwt{init\_evars} \rightarrow $ 
              $ \gvars(V_{k-1}) = \gvars(V_{k}) $
              where $V_i \ins F_i, \textit{names}(\{V_i\})=\{\kwf{Gvar}^x \}$.

        \item By (1) and the definition of $\beta_{initG}$, given a a transition $F_{k-1} \exer{r_{k}}  F_{k} (j+1 \leq k \leq n)$,
              $r_k = \kwt{init\_evars} \rightarrow $ 
              $ \gvars(V_{k-1}) = \gvars(V_{k}) $
              where $V_i \ins F_i, \textit{names}(\{V_i\})=\{\kwf{Gvar}^x \}$. 

        \item By (5), (6), (7), (8),  $\gvars(V_{j+1}) = \gvars(V_{k-1})$ and $\gvars(V_{k+u}) = \gvars(V_{n})$  where $V_i \ins F_i, \textit{names}(\{V_i\})=\{\kwf{Gvar}^x \}, j+1 \leq k \leq n$.

        \item By (1), (9),
              $$ \sum_{t\in \IND(\eta,\gvars(V_{j+1})) }(\mathcal{E}(t)) = \sum_{t\in \IND(\eta,\gvars(V_{k-1})) }(\mathcal{E}(t)) \land $$
              $$ \sum_{t\in \IND(\eta,\gvars(V_{k-1})) }(\mathcal{E}(t)) \not= \sum_{t\in \IND(\eta,\gvars(V_{k+u})) }(\mathcal{E}(t)) $$
              where $V_i \ins F_i, \textit{names}(\{V_i\})=\{\kwf{Gvar}^x \}, k \geq j+2$.

        \item By (10), there exists an execution 
              $e'' = F_0 \exer{\kwt{init\_evars}} F_1' \exer{\kwt{init\_gvars}} F_2' \exer{r_3'} F_3' \dots \exer{r_{u+3}'} F_{u+3}' \in \textit{exec}^{m s r}(R_s)$
              where $r_i' = r_{i-k+3} $ and 
              $$ \sum_{t\in \IND(\eta,\gvars(V_{3})) }(\mathcal{E}(t)) \not= \sum_{t\in \IND(\eta,\gvars(V_{u+3})) }(\mathcal{E}(t)) $$
              where $V_i \ins F_i', \textit{names}(\{V_i\})=\{\kwf{Gvar}^x \}$.

        \item By the definitions of $\kwt{init\_gvars}$ and $\kwt{init\_gvars\_inv}$ and Lemma \ref{lemma:simpletrans}, $F \exer{\kwt{init\_gvars}}  F' \land F \exeri{\kwt{init\_gvars\_inv}}  F'' \rightarrow F' = F''$ .

        \item Similar to (12), $F \exer{\kwt{ext\_call}}  F' \land F \exeri{\kwt{ext\_call\_inv}}  F'' \rightarrow F' = F''$ .

        \item Similar to (12), $F \exer{\kwt{ret\_ext}}  F' \land F \exeri{\kwt{ret\_ext\_inv}}  F'' \rightarrow F' = F''$ .

        \item By (11), (12), (13), (14), there exists an execution 
              $e_{inv} = F_0 \exeri{\kwt{init\_evars}} F_1' \exeri{\kwt{init\_gvars\_inv}} F_2' \exeri{\kwt{ext\_call\_inv}} F_3' \dots \exeri{\kwt{ret\_ext\_inv}} F_{u+3}'$
              and we have that
              $$ \sum_{t\in \IND(\eta,\gvars(V_{3})) }(\mathcal{E}(t)) \not= \sum_{t\in \IND(\eta,\gvars(V_{u+3})) }(\mathcal{E}(t)) $$
              where $V_i \ins F_i', \textit{names}(\{V_i\})=\{\kwf{Gvar}^x \}$.

        \item By the definitions of $e_{inv}$ in (15) and $\beta_{R_{inv}}$, we have that $e_{inv} \in \textit{exec}^{m s r}(R_{inv})$ where $R_{inv} \vDash \beta_{R_{inv}}$.

        \item By (15),(16), $\phi^{R_{inv}}(\eta,\IND,x)$ does not hold, which is in contradiction to the assumption. Thus Lemma \ref{lemma: soundness} proved.

    \end{enumerate}

\end{proof}

\begin{lemma}
    \label{lemma: soundness2}
    Let $R_{equ} \vDash \beta_{R_{equ}}$ and $R_{s} \vDash \beta_{R_{s}}$. Given a global variable $\chi$ of contract $x$ and a function $\ind$, if $\phi^{R_{equ}}(\chi,\ind,x)$ holds, then $\phi_{equ}^{R_s}(\chi,\ind,x)$ holds.
\end{lemma}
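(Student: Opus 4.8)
The plan is to mirror the contradiction argument used for Lemma~\ref{lemma: soundness}, but lifted to the two-execution (2-safety) setting. I would assume $\phi^{R_{equ}}(\chi,\ind,x)$ holds while $\phi_{equ}^{R_s}(\chi,\ind,x)$ fails, and derive a single violating execution of $R_{equ}$. Unfolding Definition~\ref{definition: r_satisfy_equ_ind}, the failure hands me two $R_s$-executions $e_A = F_0 \exer{r_{A_1}} \dots \exer{r_{A_n}} F_{A_n}$ and $e_B = F_0 \exer{r_{B_1}} \dots \exer{r_{B_m}} F_{B_m}$ together with a valuation $\mathcal{E}$ that satisfy hypotheses (a) and (b) yet produce $\mathcal{E}(\ind(\chi,\gvars(V_{A_n}))) \neq \mathcal{E}(\ind(\chi,\gvars(V_{B_m})))$. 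The goal is to weave these into one execution of $R_{equ}$ whose terminal state carries both final values, and then invoke $\phi^{R_{equ}}$ for the contradiction.

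The heart of the argument is the translation of rule applications from $R_s$ to $R_{equ}$ via the correspondence $f_R(\cdot,\kwf{A})$ and $f_R(\cdot,\kwf{B})$ of Table~\ref{table:rr_correspondence}. Because every fact produced by $\mathcal{R}'$ along the $A$-branch carries the subscript $\kwf{A}$ and every fact along the $B$-branch carries $\kwf{B}$, the two branches operate on disjoint portions of the state; this disjointness is what lets me replay $e_A$ using $\kwf{A}$-tagged rules and $e_B$ using $\kwf{B}$-tagged rules and then freely interleave the two replays into a single sequence (for instance, all of $e_A$'s steps followed by all of $e_B$'s) without premises ever clashing. I would first apply $\kwt{init\_evars\_AB}$ and $\kwt{init\_gvars\_AB}$, which duplicate the shared initial $\kwf{Evar}$/$\kwf{Gvar}$ facts into their $A$- and $B$-copies; hypothesis~(b) guarantees the two copies agree, supplying the common starting configuration. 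Each $\kwt{ext\_call}$ in $e_A$ and $e_B$ is matched by one $\kwt{ext\_call\_AB}$ emitting both $\kwf{Call_{Ae}}$ and $\kwf{Call_{Be}}$, and hypothesis~(a)---the multiset equality of consumed $\kwf{Call_e}$ facts---is precisely what is needed to satisfy $\beta_{excAB}$ together with the $\kwt{exc\_A}/\kwt{exc\_B}$ restrictions, i.e.\ that $T_A$ and $T_B$ consist of the same transactions. Finally I would close the combined execution with $\kwt{compare\_AB}$ (forced by $\beta_{comp}$), so that the terminal state contains $\kwf{Gvar_A}^x$ and $\kwf{Gvar_B}^x$ whose $\chi$-components, under the same $\mathcal{E}$, equal the respective terminal values of $e_A$ and $e_B$.

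I would establish the step-by-step correctness of this replay by an induction over the length of each branch, paralleling the rule-by-rule equalities $F \exer{r} F' \land F \exere{f_R(r,s)} F'' \rightarrow F' = F''$ that underpinned the closing steps of Lemma~\ref{lemma: soundness}, appealing to Lemma~\ref{lemma:simpletrans} for each matched pair. With both terminal $\kwf{Gvar}$-values faithfully transported, the combined execution lies in $\textit{exec}^{msr}(R_{equ})$ with $R_{equ} \vDash \beta_{R_{equ}}$, and the inequality $\mathcal{E}(\ind(\chi,\gvars(V_{A_n}))) \neq \mathcal{E}(\ind(\chi,\gvars(V_{B_m})))$ becomes a violation of $\phi^{R_{equ}}(\chi,\ind,x)$, contradicting the assumption.

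The step I expect to be the main obstacle is justifying that the interleaving is genuinely a well-formed $R_{equ}$-execution meeting $\beta_{R_{equ}}$ in full---especially reconciling hypothesis~(a) with the symmetric restrictions $\kwt{exc\_A}$ and $\kwt{exc\_B}$, which demand a two-way correspondence between the transactions fired in the two branches, and ensuring the single shared valuation $\mathcal{E}$ can be used consistently across both halves (including fresh-name bookkeeping so that $\beta_{initE_A}$, $\beta_{initG_A}$ and their $B$-counterparts hold). Handling the adversary fallback rules $\kwt{fb\_in\_call}$ and $\kwt{ret\_fb}$, whose $R_{equ}$ analogues likewise split along $A$/$B$, will require care but should follow the same disjointness pattern.
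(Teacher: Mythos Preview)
Your proposal is correct and follows essentially the same route as the paper's own proof: assume a counterexample pair $(e_A,e_B)$ in $R_s$, normalize so the $\kwt{init\_gvars}$ steps come first, translate each branch via $f_R(\cdot,\kwf{A})$ / $f_R(\cdot,\kwf{B})$ into a single sequential $R_{equ}$ execution, close with $\kwt{compare\_AB}$, and read off a violation of $\phi^{R_{equ}}$. The only point the paper makes more explicit than you do is the reordering of all $\kwt{init\_gvars}$ applications to the front of each execution (and the observation that both executions apply it the same number $t$ of times), which is what actually licenses replacing the two separate initializations by a single block of $\kwt{init\_gvars\_AB}$; you gesture at this with ``first apply $\kwt{init\_evars\_AB}$ and $\kwt{init\_gvars\_AB}$'' but should justify the commutation explicitly when you write it out.
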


\begin{proof}
    We prove this theorem by contradiction and firstly propose an assumption: $\phi^{R_{equ}}(\chi,\ind,x)$ holds and $\phi_{equ}^{R_s}(\chi,\ind,x)$ does not hold. 
    We also assume that there are $t$ contracts. 
    \begin{enumerate}
        \item By the above assumption and Definition \ref{definition: r_satisfy_equ_ind}, there exist two executions $e_A = F_0 \exer{r_{A_1}} F_{A_1} \exer{r_{A_2}} \dots \exer{r_{A_n}} F_{A_n} \in \textit{exec}^{m s r}(R_s)$ and $e_B = F_0 \exer{r_{B_1}} F_{B_1} \exer{r_{B_2}} \dots \exer{r_{B_m}} F_{B_m} \in \textit{exec}^{m s r}(R_s)$ and a valuation $\mathcal{E}$ and the following conditions hold:
        \begin{enumerate}[label=(\alph*)]
          \item $\{V \mid \forall k \in \{1,\dots,n-1\}.\ V \ins F_{A_k} \land V \not\ins F_{A_{k+1}}\land  \textit{names}(\{V\})=\{\kwf{Call_e} \}    \}^{\#} = \{V \mid \forall k \in \{1,\dots,n-1\}.\ V \ins F_{B_k} \land V \not\ins F_{B_{k+1}}\land  \textit{names}(\{V\})=\{\kwf{Call_e} \}    \}^{\#}$
    
          \item  $\exists j,j' \in \{1,\dots,n-1\}.\ \exists V_A \ins F_{A_{j+1}}.\ \exists V_B \ins F_{B_{j'+1}}.\ V_A \not\ins F_{A_{j}} \land V_B \not\ins F_{B_{j'}} \land \textit{names}(\{V_A\})=\{\kwf{Gvar}^x \} \land \textit{names}(\{V_B\})=\{\kwf{Gvar}^x \} \rightarrow V_A = V_B $
         
         \item   $
                    \mathcal{E}(\ind(\chi,\gvars(V_{A_{n}}))) \not= \mathcal{E}(\ind(\chi,\gvars(V_{B_{m}})))
                  $

                where $V_{A_{n}} \ins F_{A_n}, \textit{names}(\{V_{A_{n}}\})=\{\kwf{Gvar}^x \}, V_{B_{m}} \ins F_{B_m},$ $ \textit{names}(\{V_{B_{m}}\})=\{\kwf{Gvar}^x \}$.
    \end{enumerate}

    \item By the definition of rule $\kwt{init\_gvars}$, given a transition $F \exer{\kwt{init\_gvars}} F'$, $\exists V \ins F.\ \textit{names}(\{V\})=\{\kwf{Call_e} \} \rightarrow \exists V' \ins F'.\ \textit{names}(\{V'\})=\{\kwf{Call_e} \} \land V = V'$.

    \item By the definition of rule $\kwt{init\_gvars}$, $\forall F_0 \exer{r_1} F_1 \dots \exer{\kwt{init\_gvars}} F_{k} \dots \exer{r_{k+u}} F_{k+u} \in \textit{exec}^{m s r}(R_s)$. $\exists F_0 \exer{\kwt{init\_gvars}} F_1' \dots \exer{r_{k-1}} F_{k}' \dots \exer{r_{k+u}} F_{k+u}' \in \textit{exec}^{m s r}(R_s)$. $F_{k} = F_{k}' \land F_{k+u} = F_{k+u}'$.

    \item By the condition (a) in (1), and the definition of rule $\kwt{recv\_ext}$, $\{r \mid r \ins \{r_{A_1},\dots, r_{A_n}\}^{\#} \land r = \kwt{recv\_ext}\}^{\#} = \{r \mid r \ins \{r_{B_1},\dots, r_{B_m}\}^{\#} \land r = \kwt{recv\_ext}\}^{\#}$.

    \item By (4), $\beta_{initG}$ and the definition of rule $\kwt{init\_gvars}$, $\vert \{r \mid r \ins \{r_{A_1},\dots, r_{A_n}\}^{\#} \land r = \kwt{init\_gvars}\}^{\#} \vert^{\#}= \vert\{r \mid r \ins \{r_{B_1},\dots, r_{B_m}\}^{\#} \land r = \kwt{init\_gvars}\}^{\#}\vert^{\#} = t$.

    \item By (5), there exist two executions $e_A' = F_0 \exer{r_{A_1}'} F_{A_1}' \exer{r_{A_2}'} \dots \exer{r_{A_n}'} F_{A_n}' \in \textit{exec}^{m s r}(R_s)$ and $e_B' = F_0 \exer{r_{B_1}'} F_{B_1}' \exer{r_{B_2}'} \dots \exer{r_{B_m}'} F_{B_m}' \in \textit{exec}^{m s r}(R_s)$, where $\forall i \in \{1,\dots,t\}. r_{A_i}' = r_{B_i}' = \kwt{init\_gvars} $ and $ [r_{A_{t+1}}',\dots,r_{A_{n}}'] = [r_{A_1},\dots,r_{A_n}]\backslash [r_{A_1}',\dots,r_{A_t}']$ and $ [r_{B_{t+1}}',\dots,r_{B_{m}}'] =$ 

    $ [r_{B_1},\dots,r_{B_m}]\backslash [r_{B_1}',\dots,r_{B_t}']$.

    \item By (1), (2), (5),  $\{V \mid \forall k \in \{1,\dots,n-1\}.\ V \ins F_{A_k}' \land V \not\ins F_{A_{k+1}}' \land  \textit{names}(\{V\})=\{\kwf{Call_e} \}    \}^{\#} = \{V \mid \forall k \in \{1,\dots,n-1\}.\ V \ins F_{B_k}'  \land V \not\ins F_{B_{k+1}}' \land  \textit{names}(\{V\})=\{\kwf{Call_e} \}    \}^{\#}$. 

    \item By (1), (3), (5), $\exists j,j' \in \{1,\dots,t\}.\ \exists V_A \ins F_{A_{j+1}}'.\ \exists V_B \ins F_{B_{j'+1}}'.\ V_A \not\ins F_{A_{j}}' \land V_B \not\ins F_{B_{j'}}' \land \textit{names}(\{V_A\})=\{\kwf{Gvar}^x \} \land \textit{names}(\{V_B\})=\{\kwf{Gvar}^x \} \rightarrow V_A = V_B $. 

    \item By (1), (3), (5),   $
                    \mathcal{E}(\ind(\chi,\gvars(V_{A_{n}}'))) \not= \mathcal{E}(\ind(\chi,\gvars(V_{B_{m}}')))
                  $

                where $V_{A_{n}}' \ins F_{A_n}', \textit{names}(\{V_{A_{n}}'\})=\{\kwf{Gvar}^x \}, V_{B_{m}}' \ins F_{B_m}',$ $ \textit{names}(\{V_{B_{m}}'\})=\{\kwf{Gvar}^x \}$.

    \item By (6), (8), $\forall V_A \ins F_{A_{t}}'.\ \textit{names}(\{V_{A}\})=\{\kwf{Gvar}^x \} \rightarrow \exists V_B \ins F_{B_{t}}'.\ V_A = V_B$ and $\forall V_B \ins F_{B_{t}}'.\ \textit{names}(\{V_{B}\})=\{\kwf{Gvar}^x \} \rightarrow \exists V_A \ins F_{A_{t}}'.\ V_A = V_B$.

    \item By (6), (10) and the definition of $\kwt{init\_gvars\_AB}$, there exists an execution $F_0  \exere{r_1} F_1 \dots \exere{r_{t}} F_{t}$ where $\forall r \in \{r_1,\dots,r_{t}\}.\ r = \kwt{init\_gvars\_AB}$ such that $$\forall V \ins F_{t}. \textit{names}(\{V\})=\{\kwf{Gvar_A}^x \} \rightarrow \exists V_A \ins F_{A_t}'. $$ $$\textit{names}(\{V_A\})=\{\kwf{Gvar}^x \} \land \myterms(V) = \myterms(V_A)$$ and $$\forall V \ins F_{t}. \textit{names}(\{V\})=\{\kwf{Gvar_B}^x \} \rightarrow \exists V_B \ins F_{B_t}'.$$ $$ \textit{names}(\{V_B\})=\{\kwf{Gvar}^x \} \land \myterms(V) = \myterms(V_B)$$

    \item By the definition of R' and Table \ref{table:rr_correspondence}, given transitions $F \exer{r} F' $ and $F_{A} \exere{F_R(r,\kwf{A})} F_{A}'$, $\forall V \ins F.\ \exists V_A \ins F_A.\ \myterms(V) = \myterms(V_A) \rightarrow \forall V' \ins F'.\ \exists V_A' \ins F_A'.\ \myterms(V') = \myterms(V_A')$.

    \item   Similar to (12), given transitions $F \exer{r} F' $ and $F_{B} \exere{f_R(r,\kwf{B})} F_{B}'$, $\forall V \ins F.\ \exists V_B \ins F_B.\ \myterms(V) = \myterms(V_B) \rightarrow \forall V' \ins F'.\ \exists V_B' \ins F_B'.\ \myterms(V') = \myterms(V_B')$.

    \item By (11), there exists an execution $e_{equ} = F_0  \exere{r_1} F_1 \dots \exere{r_{t}} F_{t} \dots \exere{r_{m+n-t}} F_{m+n-t} \exere{\kwt{compare\_AB}} F_{m+n-t+1} $ where  
    \begin{equation*}
    r_k:=
    \begin{cases}
        f_R(r_{A_k},\kwf{A}) & \textrm{if } k\in \{t+1,...,n\} \\
        f_R(r_{B_{k-n+t}},\kwf{B}) & \textrm{if } k\in \{n+1,...,m+n-t\}
    \end{cases}
    \end{equation*}

    \item By (11), (12), (13), (14), 
    $$
        \mathcal{E}(\ind(\chi,\gvars(V_{A_{m+n-t}}))) = \mathcal{E}(\ind(\chi,\gvars(V_{A_{n}}'))) 
    $$
    and
    $$
     \mathcal{E}(\ind(\chi,\gvars(V_{B_{m}}')))= \mathcal{E}(\ind(\chi,\gvars(V_{B_{m+n-t}})))
    $$
    where $V_{A_{m+n-t}},V_{B_{m+n-t}} \ins F_{m+n-t}, \textit{names}(\{V_{A_{m+n-t}}\})=\{\kwf{Gvar_A}^x \}, \textit{names}(\{V_{B_{m+n-t}}\})=\{\kwf{Gvar_B}^x \}$ and $V_{A_{n}}' \ins F_{A_n}', \textit{names}(\{V_{A_{n}}'\})=\{\kwf{Gvar}^x \}, V_{B_{m}}' \ins F_{B_m}',$ $ \textit{names}(\{V_{B_{m}}'\})=\{\kwf{Gvar}^x \}$.

    \item By (9), (15), 
    $$
         \mathcal{E}(\ind(\chi,\gvars(V_{A_{m+n-t}}))) \not= \mathcal{E}(\ind(\chi,\gvars(V_{B_{m+n-t}})))
    $$
    where $V_{A_{m+n-t}},V_{B_{m+n-t}} \ins F_{m+n-t}, \textit{names}(\{V_{A_{m+n-t}}\})=\{\kwf{Gvar_A}^x \}, \textit{names}(\{V_{B_{m+n-t}}$ $\})=\{\kwf{Gvar_B}^x \}$.

    \item By (16) and the definition of $\kwt{compare\_AB}$, 
    $$
        \mathcal{E}(\ind(\chi,\gvars(V_{A_{m+n-t+1}}))) \not= \mathcal{E}(\ind(\chi,\gvars(V_{B_{m+n-t+1}})))
    $$
    where $V_{A_{m+n-t+1}},V_{B_{m+n-t+1}} \ins F_{m+n-t+1},$ $ \textit{names}(\{V_{A_{m+n-t+1}}\})$ $=\{\kwf{Gvar_A}^x \}, \textit{names}(\{V_{B_{m+n-t+1}}$ $\})=\{\kwf{Gvar_B}^x \}$.

    \item By the definition of $\kwt{ext\_call\_AB}$, $\{\myterms(V) \mid V\ins F_{t} \land \textit{names}(\{V\})  = \{\kwf{Call_{Ae}}\}  \}^{\#} =  \{\myterms(V) \mid V\ins F_{t}  \land \textit{names}(\{V\}) = \{\kwf{Call_{Be}}\}  \}^{\#}$

    \item By (8), (14), (18), 
    $\{\myterms(V) \mid V\ins F_{m+n-t+1} \land \textit{names}(\{V\})  = \{\kwf{Call_{Ae}}\}  \}^{\#} =  \{\myterms(V) \mid V\ins F_{m+n-t+1}  \land \textit{names}(\{V\}) = \{\kwf{Call_{Be}}\}  \}^{\#}$

    \item By the definitions of $e_{equ}$ in (14) and (19), we have that $e_{equ} \in \textit{exec}^{m s r}(R_{equ})$ where $R_{equ} \vDash \beta_{R_{equ}}$.

    \item By (11), (17), (20), $\phi^{R_{equ}}(\chi\ind,x)$ does not hold, which is in contradiction to the assumption. Thus Lemma \ref{lemma: soundness2} proved.
    \end{enumerate}
\end{proof}


\begin{lemma}
    \label{lemma: soundness3}
    Let   $(\mathcal{T},\karroww, \gamma_0)$ be a K transition system that satisfies $\ks \vDash \alpha$.
    Let $R_{s} \vDash \beta_{R_{s}}$. Given a set of global variables $\eta$ of contract $x$ and a function $\IND$,
    if  $\phi_{inv}^{R_s}(\eta,\IND,x)$ holds, then $\phi_{inv}^{\ks}(\lambda_0,\eta,x)$ holds.
\end{lemma}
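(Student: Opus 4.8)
The plan is to argue by contradiction, reducing the KSolidity invariant $\phi_{inv}^{\ks}(\lambda_0,\eta,x)$ to the already-established multiset-rewriting invariant $\phi_{inv}^{R_s}(\eta,\IND,x)$ via the execution correspondence of Lemma~\ref{lemma:var_soundness2}. Suppose $\phi_{inv}^{R_s}(\eta,\IND,x)$ holds but $\phi_{inv}^{\ks}(\lambda_0,\eta,x)$ fails. By Definition~\ref{definition: k_satisfy_inv} there is an execution $\lambda_0 \newkarrowx{1} \cdots \newkarrowx{n} \lambda_n \in \mathit{exec}^{\ks}(\lambda_0)$ with $\sum_{v\in G_0}(v) \neq \sum_{v\in G_n}(v)$, where $G_i=\{v\mid \chi\in\eta \land (\chi,v)\in\mathbb{G}(\lambda_i,x)\}$. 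First I would invoke Proposition~\ref{proposition:transitionsystem} to recast this K-execution as a sequence of Solidity-process state transitions $s_0\Rightarrow\cdots\Rightarrow s_m$ with $\gamma_j=\lambda_{f(j)}$, so that its endpoints are exactly $\lambda_0$ and $\lambda_n$ (the latter pinned down as the transaction-ending configuration by $\alpha_{\rho_n}$).

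Next I would apply Lemma~\ref{lemma:var_soundness2}, whose hypotheses $\ks\vDash\alpha$ and $R_s\vDash\beta_{R_s}$ are precisely our assumptions, to obtain a matching execution $\emptyset\exer{r_1}F_1\exer{r_2}\cdots\exer{r_{m'}}F_{m'}\in\textit{exec}^{msr}(R_s)$, a valuation $\mathcal{E}$, and a monotone map $g$, together with conclusions (a) and (b) of that lemma. Specializing (a) and (b) to the contract $x$ (renaming the bound variable to avoid the clash in the lemma statement) yields an index $j_0$ and a fact $V^{(0)}\ins F_{j_0}$ with $\textit{names}(\{V^{(0)}\})=\{\kwf{Gvar}^x\}$, together with a fact $V^{(n)}\ins F_{m'}$, such that $\mathbb{G}(\lambda_0,x)\bi\gvars(V^{(0)})$ and $\mathbb{G}(\lambda_n,x)\bi\gvars(V^{(n)})$; that the matching fact at the initial configuration is a $\kwf{Gvar}^x$ fact follows from the base case of Lemma~\ref{lemma:var_soundness}, where at $F_{g(0)}$ the globals of every contract still reside in their $\kwf{Gvar}$ facts.

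The crux is to align these two facts with the endpoints demanded by $\phi_{inv}^{R_s}$. The fact $\kwf{Gvar}^x$ enters the execution only through an $\kwt{init\_gvars}$ step, and $\beta_{initG}$ guarantees this happens exactly once; let $F_j\exer{\kwt{init\_gvars}}F_{j+1}$ be that step, so $F_{j+1}$ is the first multiset containing a $\kwf{Gvar}^x$ fact $V_{j+1}$. Because none of $\kwt{init\_evars}$, the remaining $\kwt{init\_gvars}$ applications, or $\kwt{ext\_call}$ consumes or rewrites $\kwf{Gvar}^x$, this fact persists unchanged up to $F_{g(0)}=F_{j_0}$, so $V_{j+1}$ carries the same terms as $V^{(0)}$; dually the final $\kwf{Gvar}^x$ fact $V_n$ of $F_{m'}$ coincides with $V^{(n)}$. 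I would then transfer both sums across the bijection: since $A\bi B$ pairs each $(\chi,v)$ with a term $t$ satisfying $v=\mathcal{E}(t)$, and $\IND(\eta,\gvars(V))$ selects exactly the terms paired with the variables of $\eta$, Remark~\ref{remark:bi} gives $\sum_{v\in G_0}(v)=\sum_{t\in\IND(\eta,\gvars(V_{j+1}))}(\mathcal{E}(t))$ and $\sum_{v\in G_n}(v)=\sum_{t\in\IND(\eta,\gvars(V_n))}(\mathcal{E}(t))$. Applying $\phi_{inv}^{R_s}(\eta,\IND,x)$ to this execution and valuation $\mathcal{E}$ (its premise is witnessed by the $\kwt{init\_gvars}$ step above) equates the two right-hand sides, forcing $\sum_{v\in G_0}(v)=\sum_{v\in G_n}(v)$, a contradiction.

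The main obstacle I anticipate is this alignment step: verifying that the fact $V^{(0)}$ produced by Lemma~\ref{lemma:var_soundness2} at the initial configuration really is the first $\kwf{Gvar}^x$ fact created by $\kwt{init\_gvars}$, rather than a later copy whose value has already been altered, which hinges delicately on $\beta_{initG}$ and on tracking which rules can touch $\kwf{Gvar}^x$ before the transaction body runs; and, relatedly, confirming that $\IND(\eta,\cdot)$ and the $\bi$-bijection select the same term positions, so that restricting to $\eta$ preserves the summation. Everything else reduces to bookkeeping driven directly by the definitions of $\phi_{inv}^{\ks}$ and $\phi_{inv}^{R_s}$.
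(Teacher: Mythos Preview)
Your proposal is essentially the paper's own argument: proof by contradiction, invoke Lemma~\ref{lemma:var_soundness2} to obtain a matching $R_s$-execution with $\mathbb{G}(\lambda_0,x)\bi\gvars(V^{(0)})$ and $\mathbb{G}(\lambda_n,x)\bi\gvars(V^{(n)})$, align the initial fact with the unique $\kwt{init\_gvars}$ step for contract~$x$, transfer the two sums across the bijection via $\IND$, and apply $\phi_{inv}^{R_s}$ to reach a contradiction. The only cosmetic difference is that the paper first permutes all $t$ applications of $\kwt{init\_gvars}$ to the front of the execution (its step~(3)--(4)), whereas you argue directly that the $\kwf{Gvar}^x$ fact persists unchanged from its creation up to $F_{g(0)}$; both routes discharge exactly the alignment obstacle you flagged, and the paper's construction of $\ind$ from the bijection (its steps~(9)--(11)) is precisely your second flagged obstacle, handled in the same spirit.
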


\begin{proof}
\label{proof:soundness3}
    We prove this theorem by contradiction and firstly propose an assumption: there exists a function $\IND$ such that $\phi_{inv}^{R_s}(\IND,\eta,x)$ holds and $\phi_{inv}^{\ks}(\lambda_0,\eta,x)$ does not hold. 
    We also assume that there are $t$ contracts $C = \{x_1,\dots,x_t\}$ and $x \in C$. 
    \begin{enumerate}
        \item By the assumption and Definition \ref{definition: k_satisfy_inv}, there exists an execution $e = \lambda_0 \newkarrowx{1} \lambda_1 \newkarrowx{2} \dots \newkarrowx{n} \lambda_n \in \mathit{exec}^{\ks}(\gamma_0)$,  we have that
        $$
          \sum_{v\in G_0}(v) \not= \sum_{v\in G_n}(v)
        $$
        where $G_i = \{v \mid \chi \in \eta\ \land\ (\chi,v) \in \mathbb{G}(\lambda_i,x) \}$.
        \item By Lemma \ref{lemma:var_soundness2}, there exist $e_r = \emptyset \exer{r_1} F_1 \exer{r_2} \dots \exer{r_{m'}} F_{m'} \in \textit{exec}^{m s r}(R_s)$ and a valuation $\mathcal{E}$ and 
            \begin{enumerate}[label=(\alph*)]
                \item $ \exists j \in \{1,\dots,n\}.\ \forall x \in \mathbb{S}(\lambda_0).\ \exists V\ins F_{j}.\ \mathbb{G}(\lambda_0,x) \bi \gvars(V) $
                \item $ \forall x \in \mathbb{S}(\lambda_n).\ \exists V\ins F_{m'}.\ \mathbb{G}(\lambda_n,x) \bi \gvars(V)$
            \end{enumerate}
        \item By the definition of rule $\kwt{init\_gvars}$, $\forall F_0 \exer{r_1} F_1 \dots$ $\exer{\kwt{init\_gvars}} F_{k} \dots \exer{r_{k+u}} F_{k+u} \in \textit{exec}^{m s r}(R_s)$. 

        $\exists F_0 \exer{\kwt{init\_gvars}} F_1' \dots \exer{r_{k-1}} F_{k}' \dots \exer{r_{k+u}} F_{k+u}' \in \textit{exec}^{m s r}(R_s)$. $F_{k} = F_{k}' \land F_{k+u} = F_{k+u}'$.
        \item By (2), (3), there exists an execution $e_r' = \emptyset \exer{r_1'} F_1' \exer{r_2'} \dots \exer{r_{m'}'} F_{m'}' \in \textit{exec}^{m s r}(R_s)$ where $\forall r \in \{r_1',\dots,r_t'\}.\ r = \kwt{init\_gvars}$. And there exists a valuation $\mathcal{E}$ such that 
            \begin{enumerate}[label=(\alph*)]
                \item $ \exists j \in \{1,\dots,n\}.\ \forall x \in \mathbb{S}(\lambda_0).\ \exists V\ins F_{j}'.\ \mathbb{G}(\lambda_0,x) \bi \gvars(V) $
                \item $ \forall x \in \mathbb{S}(\lambda_n).\ \exists V\ins F_{m'}'.\ \mathbb{G}(\lambda_n,x) \bi \gvars(V)$
            \end{enumerate}
        \item By eliminating $\exists$ on $j$, we let $j=t$.
        \item By the definition of $\beta_{initG}$, $\exists k.\ 0 \leq k \leq t-1$  $$ F_{k}'' \exer{\kwt{init\_gvars}} F_{k+1}'' $$ and 
             $$\forall V_k \ins F_k'.\ \textit{names}(\{V_k\})\not=\{\kwf{Gvar}^x \}\land $$ $$ \exists V_{k+1} \ins F_{k+1}'.\exists V_t \ins F_{t}'.\ \textit{names}(\{V_{k+1}\})=\textit{names}(\{V_t\})=$$ $$\{\kwf{Gvar}^x \} \land V_{k+1} = V_t$$
        \item By (4), $\forall V \ins F_{k+1}'.\ \exists x' \in C. \ \textit{names}(\{V\} = \{\kwf{Gvar}^{x'} \}$.
        \item By (6), (7),  $ \forall x \in \mathbb{S}(\lambda_0).\ \exists V\ins F_{k+1}'.\ \mathbb{G}(\lambda_0,x) \bi \gvars(V) \land \textit{names}(\{V\} = \{\kwf{Gvar}^{x} \}  $.
        \item By the definition of $\bi$, there exists a function $\ind$ mapping from $\{\chi \mid (\chi,v) \in \mathbb{G}(\lambda_0,x)\}$ to $\{t \mid t \in \gvars(V)\land V\ins F_{k+1}' \land \textit{names}(\{V\} = \{\kwf{Gvar}^{x} \} \}$ and a function \IND$(\eta) = \{\ind(\chi) \mid \chi \in \eta\}$.
        \item By Definition \ref{def:LGE} and the definition of $\ks$, $\{\chi \mid (\chi,v) \in \mathbb{G}(\lambda_0,x)\} = \{\chi \mid (\chi,v) \in \mathbb{G}(\lambda_n,x)\}$.
        \item By (2), (8), (9), (10) and the definition of $\mathcal{R}$, there exists a function $\IND$ such that
        $$ \sum_{v\in G_0}(v) = \sum_{b \in \IND(\eta,\gvars(V_{k+1})) }(\mathcal{E}(b)) \land $$
        $$ \sum_{v\in G_n}(v) = \sum_{b \in \IND(\eta,\gvars(V_{m'})) }(\mathcal{E}(b)) $$
        where  $G_i = \{v \mid \chi \in \eta\ \land\ (\chi,v) \in \mathbb{G}(\lambda_i,x) \}$ and $V_i \ins F_i', \textit{names}(\{V_i\})=\{\kwf{Gvar}^x \}$.
        \item By (11),
        $$ \sum_{b \in \IND(\eta,\gvars(V_{k+1})) }(\mathcal{E}(b)) \not= \sum_{b \in \IND(\eta,\gvars(V_{m'})) }(\mathcal{E}(b)) $$
        where  $V_i \ins F_i', \textit{names}(\{V_i\})=\{\kwf{Gvar}^x \}$.
        \item By (6), (12),  $\phi_{inv}^{R_s}(\IND(\eta),x)$ does not hold, which is in contradiction to the assumption. Thus Lemma \ref{lemma: soundness3} proved.

    \end{enumerate}
\end{proof}

\begin{lemma}
    \label{lemma: soundness4}
    Let   $(\mathcal{T},\karroww, \gamma_0)$ be a K transition system that satisfies $\ks \vDash \alpha$.
    Let $R_{s} \vDash \beta_{R_{s}}$. Given a global variable $\chi$ of contract $x$ and a function $\ind$, if $\phi_{equ}^{R_s}(\chi,\ind,x)$ holds, then $\phi_{equ}^{\ks}(\lambda_0,\chi,x)$ holds.
\end{lemma}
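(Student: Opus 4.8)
The plan is to prove the statement by contradiction, mirroring the invariant case of Lemma~\ref{lemma: soundness3} but lifting it to the two-execution (hyperproperty) setting that the equivalence property requires. Assume $\phi_{equ}^{R_s}(\chi,\ind,x)$ holds while $\phi_{equ}^{\ks}(\lambda_0,\chi,x)$ fails. By Definition~\ref{definition: k_satisfy_equ}, the failure yields two KSolidity executions $e_A = \lambda_0 \newkarrowx{A_1} \dots \newkarrowx{A_n} \lambda_{A_n}$ and $e_B = \lambda_0 \newkarrowx{B_1} \dots \newkarrowx{B_m} \lambda_{B_m}$ that have the same multiset of transactions, $\{\mathbb{M}(\lambda_{A_i})\}^{\#} = \{\mathbb{M}(\lambda_{B_j})\}^{\#}$, yet with $v_A \neq v_B$ for the final values of $\chi$, where $(\chi,v_A)\in\mathbb{G}(\lambda_{A_n},x)$ and $(\chi,v_B)\in\mathbb{G}(\lambda_{B_m},x)$. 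The goal is to turn this pair of KSolidity witnesses into a pair of $R_s$ executions that violate $\phi_{equ}^{R_s}$.

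First I would translate each KSolidity execution into a multiset-rewriting execution of $R_s$, processing it transaction by transaction. Applying Lemma~\ref{lemma:var_soundness2} to the external transactions of $e_A$ and composing the per-transaction correspondences --- chaining them through the start/end global-variable guarantees (conditions (a) and (b) of that lemma) --- yields an execution $e_{rA}\in\textit{exec}^{msr}(R_s)$ with a valuation $\mathcal{E}_A$ and the bijections $\mathbb{G}(\lambda_0,x)\bi\gvars(V)$ and $\mathbb{G}(\lambda_{A_n},x)\bi\gvars(V')$ at the endpoints; the same construction applied to $e_B$ gives $e_{rB}$ and $\mathcal{E}_B$. The crux is then to massage these translations so they jointly meet the two preconditions of $\phi_{equ}^{R_s}$ (Definition~\ref{definition: r_satisfy_equ_ind}): precondition (b), that the two initialized $\kwf{Gvar}$ ground facts coincide, follows because both are translations of executions starting from the \emph{same} configuration $\lambda_0$, so the initial global-variable facts can be made identical; precondition (a), that the multisets of consumed $\kwf{Call_e}$ facts coincide, follows from the equality of the transaction multisets, since by the correspondence recorded in Table~\ref{table:tableCorrespondence} each transaction $\mathbb{M}(\lambda)=(c,r,f,mem)$ is witnessed by exactly one $\kwf{Call_e}$ fact carrying the matching recipient, function, caller and parameters.

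With both preconditions in place I would invoke $\phi_{equ}^{R_s}(\chi,\ind,x)$ on the common valuation to obtain $\mathcal{E}(\ind(\chi,\gvars(V_A)))=\mathcal{E}(\ind(\chi,\gvars(V_B)))$ for the final $\kwf{Gvar}$ facts $V_A,V_B$ of the two $R_s$ executions. Transporting this equality back through the end-of-execution bijections supplied by Lemma~\ref{lemma:var_soundness2} --- which identify $\mathcal{E}(\ind(\chi,\gvars(V_A)))$ with $v_A$ and $\mathcal{E}(\ind(\chi,\gvars(V_B)))$ with $v_B$ --- forces $v_A=v_B$, contradicting the negation of $\phi_{equ}^{\ks}$, and the lemma follows. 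As in Lemma~\ref{lemma: soundness3}, this last transfer also requires checking that the $\ind$ produced by the bijection agrees with the index layout fixed by $\mathcal{R}$.

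The main obstacle I anticipate is securing a \emph{single} valuation $\mathcal{E}$ together with a coordinated choice of fresh names, so that precondition (a) holds as an equality of ground facts rather than merely up to valuation. Because $\kwf{Call_e}$ facts contain the concrete caller and parameter terms, matched transactions across $e_A$ and $e_B$ must be translated using identical fresh names; this is arrangeable since the fresh names chosen by the $\kwt{ext\_call}$ and $\kwt{Fresh}$ rules are free and the two executions are independent members of $\textit{exec}^{msr}(R_s)$, but it forces me to first fix a bijection between the matched transactions of the two KSolidity executions (available from the equality of transaction multisets) and then thread that choice uniformly through both per-transaction translations and through the initialization. Establishing that this joint, fresh-name-consistent translation exists is the technical heart of the argument; the remaining work --- reordering the $\kwt{init\_gvars}$ applications to the front as in Lemma~\ref{lemma: soundness2}, and transferring values along $\bi$ --- is a routine adaptation of the corresponding steps in Lemmas~\ref{lemma: soundness2} and~\ref{lemma: soundness3}.
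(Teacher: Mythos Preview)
Your proposal is correct and follows essentially the same route as the paper's proof: argue by contradiction, translate the two KSolidity witnesses into $R_s$ executions via Lemma~\ref{lemma:var_soundness2}, normalize the $\kwt{init\_gvars}$ steps to the front so the two initial $\kwf{Gvar}^x$ facts coincide (precondition~(b)), use the $\mathbb{M}$--to--$\kwf{Call_e}$ correspondence through Table~\ref{table:tableCorrespondence} to equate the consumed-call multisets (precondition~(a)), and then transport the inequality $v_A\neq v_B$ along $\bi$ and $\ind$ to contradict $\phi_{equ}^{R_s}$. Your explicit attention to coordinating a single valuation and fresh-name choice across the two translations, and to chaining Lemma~\ref{lemma:var_soundness2} transaction by transaction, is in fact slightly more careful than the paper, which asserts the equality of $\kwf{Call_e}$ multisets (its step~21) without spelling out that alignment.
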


\begin{proof}
    We prove this theorem by contradiction and firstly propose an assumption: there exists a function $\ind$ such that $\phi_{equ}^{R_s}(\chi,\ind,x)$ holds and $\phi_{equ}^{\ks}(\lambda_0,\chi,x)$ does not hold. 
    We also assume that there are $t$ contracts $C = \{x_1,\dots,x_t\}$ and $x \in C$. 
  \begin{enumerate}
    \item By the assumption, there exist two executions $e_A = \lambda_0 \newkarrowx{A_1} \lambda_{A_1} \newkarrowx{A_2} \dots \newkarrowx{A_n} \lambda_{A_n} \in \mathit{exec}^{\ks}(\gamma_0)$,
    and $e_B = \lambda_0 \newkarrowx{B_1} \lambda_{B_1} \newkarrowx{B_2} \dots \newkarrowx{B_m} \lambda_{B_m} \in \mathit{exec}^{\ks}(\gamma_0)$ and 
    $$
      \{\mathbb{M}(\lambda_{A_i}) \mid  i \in \{ 1, \dots, n\}\}\ = \{\mathbb{M}(\lambda_{B_j}) \mid  j \in \{ 1, \dots, m\}\}\land
    $$
    $$
         v_A \not= v_B
    $$ where $(\chi,v_A) \in \mathbb{G}(\lambda_{A_n},x), (\chi,v_B) \in \mathbb{G}(\lambda_{B_m},x)$.

    \item By Lemma \ref{lemma:var_soundness2}, there exist $e_{Ar} = \emptyset \exer{r_{A_1}} F_{A_1} \exer{r_{A_2}} \dots \exer{r_{A_{n'}}} F_{A_{n'}} \in \textit{exec}^{m s r}(R_s)$ and a valuation $\mathcal{E}$ and 
            \begin{enumerate}[label=(\alph*)]
                \item $ \exists j \in \{1,\dots,n\}. \forall x \in \mathbb{S}(\lambda_0).\ \exists V\ins F_{A_j}.\ \mathbb{G}(\lambda_0,x) \bi \gvars(V) $.
                \item $ \forall x \in \mathbb{S}(\lambda_{A_n}).\ \exists V\ins F_{A_{n'}}.\ \mathbb{G}(\lambda_{A_n},x) \bi \gvars(V)$.
            \end{enumerate}

    \item By Lemma \ref{lemma:var_soundness2}, there exist $e_{Br} = \emptyset \exer{r_{B_1}} F_{B_1} \exer{r_{B_2}} \dots \exer{r_{B_{m'}}} F_{B_{m'}} \in \textit{exec}^{m s r}(R_s)$ and a valuation $\mathcal{E}$ and 
            \begin{enumerate}[label=(\alph*)]
                \item $ \exists j \in \{1,\dots,m\}.  \forall x \in \mathbb{S}(\lambda_0).\ \exists V\ins F_{B_j}.\ \mathbb{G}(\lambda_0,x) \bi \gvars(V) $.
                \item $ \forall x \in \mathbb{S}(\lambda_{B_m}).\ \exists V\ins F_{B_{m'}}.\ \mathbb{G}(\lambda_{B_m},x) \bi \gvars(V)$.
            \end{enumerate}

    \item By the definition of rule $\kwt{init\_gvars}$, $\forall F_0 \exer{r_1} F_1 \dots \exer{\kwt{init\_gvars}} F_{k} \dots \exer{r_{k+u}} F_{k+u} \in \textit{exec}^{m s r}(R_s)$. $\exists F_0 \exer{\kwt{init\_gvars}} F_1' \dots \exer{r_{k-1}} F_{k}' \dots \exer{r_{k+u}} F_{k+u}' \in \textit{exec}^{m s r}(R_s)$. $F_{k} = F_{k}' \land F_{k+u} = F_{k+u}'$.
        
    \item By (2), (3), (4), there exist executions $e_{Ar}' = \emptyset \exer{r_{A_1}'} F_{A_1}' \exer{r_{A_2}'} \dots \exer{r_{A_{n'}}'} F_{A_{n'}}' \in \textit{exec}^{m s r}(R_s)$ where $\forall r \in \{r_{A_1}',\dots,r_{A_t}'\}.\ r = \kwt{init\_gvars}$ and $e_{Br}' = \emptyset \exer{r_{B_1}'} F_{B_1}' \exer{r_{B_2}'} \dots \exer{r_{B_{m'}}'} F_{B_{m'}}' \in \textit{exec}^{m s r}(R_s)$ where $\forall r \in \{r_{B_1}',\dots,r_{B_t}'\}.\ r = \kwt{init\_gvars}$. And there exists a valuation $\mathcal{E}$ such that 
            \begin{enumerate}[label=(\alph*)]
                \item $ \exists j \in \{1,\dots,n\}. \forall x \in \mathbb{S}(\lambda_0).\ \exists V\ins F_{A_j}'.\ \mathbb{G}(\lambda_0,x) \bi \gvars(V) $.
                \item $ \forall x \in \mathbb{S}(\lambda_{A_n}).\ \exists V\ins F_{A_{n'}}'.\ \mathbb{G}(\lambda_{A_n},x) \bi \gvars(V)$.
                \item $ \exists j \in \{1,\dots,m\}. \forall x \in \mathbb{S}(\lambda_0).\ \exists V\ins F_{B_j}'.\ \mathbb{G}(\lambda_0,x) \bi \gvars(V) $.
                \item $ \forall x \in \mathbb{S}(\lambda_{B_m}).\ \exists V\ins F_{B_{m'}}'.\ \mathbb{G}(\lambda_{B_m},x) \bi \gvars(V)$.
            \end{enumerate}
    
    \item By eliminating $\exists$ on $j$, we let $j=t$.

    \item By (5), $\forall V \ins F_{A_t}'.\ \exists x' \in C. \ \textit{names}(\{V\} = \{\kwf{Gvar}^{x'} \}$.

    \item By (5), $\forall V \ins F_{B_t}'.\ \exists x' \in C. \ \textit{names}(\{V\} = \{\kwf{Gvar}^{x'} \}$.

    \item By (5), (6), (7) and the definition of $\bi$, there exists a valuation $\mathcal{E}$ and $\forall V_A \ins F_{A_t}'.\ \exists V_B \ins F_{B_t}'.\  \textit{names}(\{V_A\} )=  \textit{names}(\{V_B\}) =  \{\kwf{Gvar}^{x} \} \land \forall b_A \in V_A.\ \exists b_B \in V_B.\ \mathcal{E}(b_A) = \mathcal{E}(b_B)$.

    \item Similar to (9), there exists a valuation $\mathcal{E}$ and $\forall V_B \ins F_{B_t}'.\ \exists V_A \ins F_{A_t}'.\  \textit{names}(\{V_B\} )=  \textit{names}(\{V_A\}) =  \{\kwf{Gvar}^{x} \} \land \forall b_B \in V_B.\ \exists b_A \in V_A.\ \mathcal{E}(b_B) = \mathcal{E}(b_A)$.

    \item By (9), (10) and the definition of $\kwt{init\_gvars}$, $\forall V_A \ins F_{A_t}'.\  \textit{names}(\{V_A\}) =  \{\kwf{Gvar}^{x} \}  \rightarrow \exists V_B \ins F_{B_t}'.\ V_A = V_B.$

    \item By the definition of $\bi$, there exists a function $\ind$ mapping from $\{a \mid (a,v) \in \mathbb{G}(\lambda_{A_n},x)\}$ to $\{b \mid b \in \gvars(V)\land V\ins F_{A_t}' \land \textit{names}(\{V\} = \{\kwf{Gvar}^{x} \} \}$.

    \item Similar to (12), there exists a function $\ind$ mapping from $\{a \mid (a,v) \in \mathbb{G}(\lambda_{B_m},x)\}$ to $\{b \mid b \in \gvars(V)\land V\ins F_{B_t}' \land \textit{names}(\{V\} = \{\kwf{Gvar}^{x} \} \}$.

    \item By (12), (13) and the definition of $\mathcal{R}$, there exists a function $\ind$ such that
         $$
           v_A = \mathcal{E}(\ind(\chi,\gvars(V_{A_{n}}))) \not= \mathcal{E}(\ind(\chi,\gvars(V_{B_{m}}))) = v_B
        $$
    where $V_{A_{n}} \ins F_{A_{n'}}', \textit{names}(\{V_{A_{n}}\})=\{\kwf{Gvar}^x \}, V_{B_{m}} \ins F_{B_{m'}}',$ $ \textit{names}(\{V_{B_{m}}\})=\{\kwf{Gvar}^x \}$.

    \item By the definitions of $\mathbb{M}$ and $\ks$, if $\mathbb{M}(\lambda_i) \not= \emptyset$, the only possible trace for the transition $\lambda_{i-2} \newkarrowx{i-1} \lambda_{i-1} \newkarrowx{i} \lambda_{i} \newkarrowx{i+1} \dots \newkarrowx{i+u+8} \lambda_{i+u+8}$ is as follows:
        \begin{itemize}
            \item $\rho_{i-2}= \textsc{Function-Call}$ 
            \item $\rho_{i-1}= \textsc{Switch-Context}$ 
            \item $\rho_{i}= \textsc{Create-Transaction}$ 
            \item $\rho_{i+1}= \textsc{Internal-Function-Call}$ 
            \item $\rho_{i+2}= \textsc{Save-Cur-Context}$ 
            \item $\rho_{i+3}= \textsc{Call}$ 
            \item $\rho_{i+4}= \textsc{Init-Fun-Params}$ 
            \item $\rho_{i+5}= \textsc{Bind-Params}$ 
            \item $\rho_{i+6}= \textsc{Bind-Params}$ 
            \item ...
            \item $\rho_{i+u+6}= \textsc{Bind-Params-End}$ 
            \item $\rho_{i+u+7}= \textsc{processFunQuantifiers}$ 
            \item $\rho_{i+u+8}= \textsc{Call-Function-Body}$ 
        \end{itemize}
    \item  By Proposition~\ref{proposition:transitionsystem} , $[\rho_{i-2},\dots,\rho_{i+u+8}]$ in~(15) corresponds to a state transition from $s_{k}$ to $s_{k+1}$.
        Here,  $u$ represents the number of parameters in the function.

    \item By (16) and the definition of $\ks$ and Table \ref{table:tableCorrespondence}, $T(s_k) = 1$ and $c_k = \texttt{function}\ f_{c}(\texttt{d})\{\kwp{stmt}\}$.

    \item By (17) and the definition of $\mathcal{R}$, rule $\kwt{recv\_ext}$ is generated.

    \item By Proposition~\ref{proposition:transitionsystem} , $e_A$ corresponds to a sequence of state transitions: $s_{A_0} \Rightarrow s_{A_1} \dots \Rightarrow s_{A_{n''}}$ and $e_B$ corresponds to a sequence of state transitions: $s_{B_0} \Rightarrow s_{B_1} \dots \Rightarrow s_{B_{m''}}$.

    \item By (1), (17), $\vert \{s_{A_i} \mid i \in \{1,\dots,n''\} \land T(s_{A_i}) = 1\}^{\#} \vert^{\#} = \vert \{s_{B_i} \mid i \in \{1,\dots,m''\} \land T(s_{B_i}) = 1\}^{\#} \vert^{\#}$

    \item By (17), (20), and the definition of rule $\kwt{recv\_ext}$ ,
     $\{V \mid \forall k \in \{1,\dots,n'-1\}.\ V \ins F_{A_k}' \land V \not\ins F_{A_{k+1}}'\land  \textit{names}(\{V\})=\{\kwf{Call_e} \}    \}^{\#} = \{V \mid \forall k \in \{1,\dots,m'-1\}.\ V \ins F_{B_k}' \land V \not\ins F_{B_{k+1}}'\land  \textit{names}(\{V\})=\{\kwf{Call_e} \}    \}^{\#}$.

    \item By (11), (14), (21), $\phi_{equ}^{R_s}(\chi,\ind,x)$ does not hold, which is in contradiction to the assumption. Thus Lemma \ref{lemma: soundness4} proved. 
    
   \end{enumerate}
\end{proof}

\begin{theorem}{(Soundness for the invariant property)}
 \label{theorem: final_soundness}
    Let   $(\mathcal{T},\karroww, \gamma_0)$ be a K transition system that satisfies $\ks \vDash \alpha$.
    Let $R_{inv} \vDash \beta_{R_{inv}}$. Given a set of global variables $\eta$ of contract $x$ and a function $\IND$,
    if  $\phi^{R_{inv}}(\eta,\IND,x)$ holds, then $\phi_{inv}^{\ks}(\lambda_0,\eta,x)$ holds.
\end{theorem}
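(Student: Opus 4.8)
The plan is to obtain the theorem as a direct composition of the two intermediate soundness results already proved, using the independent-model invariant $\phi_{inv}^{R_s}$ as the bridge between the complementary model $R_{inv}$ and the KSolidity transition system $\ks$. Since $R_{inv}$, the independent model $R_s$, and $\ks$ are all generated from the same set of source codes $S=\{c_1,\dots,c_n\}$, I would first fix $R_s$ to be the independent model associated with $S$, exactly as in its definition. By construction $R_s$ carries the restrictions forcing $\kwt{init\_evars}$ and $\kwt{init\_gvars}$ to fire at most once per account, so that $R_s \vDash \beta_{R_s}$ holds; this is the same $R_s$ out of which $R_{inv}$ is built by rule replacement and whose semantics $\ks$ interprets.

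First I would invoke Lemma~\ref{lemma: soundness}: under the hypotheses $R_{inv}\vDash \beta_{R_{inv}}$ and $R_s\vDash\beta_{R_s}$, the assumed property $\phi^{R_{inv}}(\eta,\IND,x)$ of the complementary model transfers to the independent model, yielding $\phi_{inv}^{R_s}(\eta,\IND,x)$. This is the step where the correspondence between the two models is exploited, through the fact equalities $F \exer{\kwt{init\_gvars}} F' \wedge F \exeri{\kwt{init\_gvars\_inv}} F'' \Rightarrow F'=F''$ (and the analogous equalities for $\kwt{ext\_call\_inv}$ and $\kwt{ret\_ext\_inv}$), which themselves rest on Lemma~\ref{lemma:simpletrans}; these let any invariant-violating execution of $R_s$ be replayed as an execution of $R_{inv}$.

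Second I would invoke Lemma~\ref{lemma: soundness3}: under $\ks\vDash\alpha$ and $R_s\vDash\beta_{R_s}$, the property $\phi_{inv}^{R_s}(\eta,\IND,x)$ just obtained lifts to the semantic level, giving precisely the goal $\phi_{inv}^{\ks}(\lambda_0,\eta,x)$. This lemma relies on the faithfulness result Lemma~\ref{lemma:var_soundness2}, which guarantees that every KSolidity execution satisfying $\alpha$ is simulated by an execution of $R_s$ together with a bijection $\bi$ between the contract's global store and the global-variable terms of the matching $\kwf{Gvar}$ fact, so that the numerical sums over $\eta$ agree on both sides. Chaining the two implications gives $\phi^{R_{inv}}(\eta,\IND,x)\Rightarrow\phi_{inv}^{R_s}(\eta,\IND,x)\Rightarrow\phi_{inv}^{\ks}(\lambda_0,\eta,x)$, which is the statement.

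The main obstacle is not in any new calculation but in the bookkeeping of the intermediary: I must verify that a single $R_s$ can simultaneously discharge the side conditions of both lemmas and is genuinely the common independent model underlying both $R_{inv}$ and $\ks$. Concretely this means checking that the $R_s$ fixed at the outset is both the model whose rules $R_{inv}$ replaces to obtain $\kwt{init\_gvars\_inv}$, $\kwt{ext\_call\_inv}$, $\kwt{ret\_ext\_inv}$, and the model whose transitions correspond to $\ks$ via Proposition~\ref{proposition:transitionsystem}. Once $R_s$ is identified this way and $R_s \vDash \beta_{R_s}$ is confirmed from its built-in restrictions, both lemma applications are legitimate and the composition is immediate.
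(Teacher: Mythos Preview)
Your proposal is correct and matches the paper's own proof: the theorem is obtained by chaining Lemma~\ref{lemma: soundness} (transferring $\phi^{R_{inv}}$ to $\phi_{inv}^{R_s}$) with Lemma~\ref{lemma: soundness3} (lifting $\phi_{inv}^{R_s}$ to $\phi_{inv}^{\ks}$), using the common independent model $R_s$ as the bridge. Your additional remarks about ensuring $R_s\vDash\beta_{R_s}$ and that $R_s$ underlies both $R_{inv}$ and $\ks$ are sound bookkeeping that the paper leaves implicit.
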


\begin{proof}
 \begin{enumerate}
    \item By Lemma \ref{lemma: soundness}, if $\phi^{R_{inv}}(\eta,\IND,x)$ holds, then $\phi_{inv}^{R_s}(\eta,\IND,x)$ holds.
    \item By Lemma \ref{lemma: soundness3}, if  $\phi_{inv}^{R_s}(\eta,\IND,x)$ holds, then $\phi_{inv}^{\ks}(\lambda_0,\eta,x)$ holds .
    \item By (1), (2), Theorem \ref{theorem: final_soundness} proved.
\end{enumerate}
\end{proof}

\begin{theorem}{(Soundness for the equivalence property)}
 \label{theorem: final_soundness2}
    Let   $(\mathcal{T},\karroww, \gamma_0)$ be a K transition system that satisfies $\ks \vDash \alpha$.
    Let $R_{equ} \vDash \beta_{R_{equ}}$.  Given a global variable $\chi$ of contract $x$ and a function $\ind$, if $\phi^{R_{equ}}(\chi,\ind,x)$ holds, then $\phi_{equ}^{\ks}(\lambda_0,\chi,x)$ holds.
\end{theorem}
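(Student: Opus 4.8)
The plan is to prove Theorem~\ref{theorem: final_soundness2} by exactly the same two-step decomposition used for the invariant case in Theorem~\ref{theorem: final_soundness}, routing the equivalence property through the independent model $R_s$ as an intermediate stage. Concretely, I would factor the target implication $\phi^{R_{equ}}(\chi,\ind,x) \Rightarrow \phi_{equ}^{\ks}(\lambda_0,\chi,x)$ as the composition of two already-established lemmas: Lemma~\ref{lemma: soundness2}, which transports the equivalence property from the complementary model $R_{equ}$ \emph{down} to the independent model $R_s$, and Lemma~\ref{lemma: soundness4}, which transports it from $R_s$ \emph{up} to the KSolidity semantics $\ks$. Since both lemmas are phrased with the same parameters $\chi$, $\ind$, and $x$ that occur in the theorem, the two implications share their intermediate predicate $\phi_{equ}^{R_s}(\chi,\ind,x)$ and chain directly.

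First I would invoke Lemma~\ref{lemma: soundness2}. Its hypotheses are $R_{equ} \vDash \beta_{R_{equ}}$, which is given in the theorem, and $R_s \vDash \beta_{R_s}$, which holds by construction of the independent model (its built-in restrictions are precisely those collected in $\beta_{R_s}$, namely $\beta_{r_n}$, $\beta_{initE}$, $\beta_{initG}$); together with the assumed $\phi^{R_{equ}}(\chi,\ind,x)$ this yields $\phi_{equ}^{R_s}(\chi,\ind,x)$. Second I would invoke Lemma~\ref{lemma: soundness4}, whose hypotheses are the K transition system with $\ks \vDash \alpha$ (given) and the same $R_s \vDash \beta_{R_s}$; applied to the $\phi_{equ}^{R_s}(\chi,\ind,x)$ just obtained, it delivers $\phi_{equ}^{\ks}(\lambda_0,\chi,x)$, which is the goal. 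Composing the two implications closes the proof in three lines, in perfect parallel with the proof of Theorem~\ref{theorem: final_soundness}.

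The assembly itself is thus a one-line composition, and all of the substance lives in the two supporting lemmas rather than in the final theorem. I expect the genuine obstacle to be Lemma~\ref{lemma: soundness2}, the harder of the two bridges: from two \emph{separate} independent $R_s$-executions $e_A$ and $e_B$ that invoke the same multiset of transactions (the matching-$\kwf{Call_e}$ condition), it must synthesize a \emph{single} $R_{equ}$-execution. This requires reordering all initialization rules to the front so that the duplicated rule $\kwt{init\_gvars\_AB}$ can install identical $A$- and $B$-tagged global-variable valuations, then interleaving the $A$- and $B$-tagged rules obtained through the rule correspondence $f_R$ of Table~\ref{table:rr_correspondence}, and finally appending $\kwt{compare\_AB}$. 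The delicate points are showing that the reordering preserves the transition relation (via Lemma~\ref{lemma:simpletrans}) and that the ``same transactions'' assumption survives the transformation as the matching $\kwf{Call_{Ae}}$/$\kwf{Call_{Be}}$ multisets demanded by $\beta_{excAB}$. By comparison, Lemma~\ref{lemma: soundness4} rests on the configuration-to-fact correspondence of Lemma~\ref{lemma:var_soundness2} (and hence on the large inductive Lemma~\ref{lemma:var_soundness}), which carries the real weight there but is assumed available for this plan.
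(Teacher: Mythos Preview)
Your proposal is correct and mirrors the paper's proof exactly: the paper's three-line argument invokes Lemma~\ref{lemma: soundness2} to pass from $\phi^{R_{equ}}(\chi,\ind,x)$ to $\phi_{equ}^{R_s}(\chi,\ind,x)$, then Lemma~\ref{lemma: soundness4} to reach $\phi_{equ}^{\ks}(\lambda_0,\chi,x)$, and concludes by composition. Your additional remarks about where the real work lies (the synthesis of a single $R_{equ}$-execution in Lemma~\ref{lemma: soundness2}, and the reliance of Lemma~\ref{lemma: soundness4} on Lemma~\ref{lemma:var_soundness2}) are accurate and go beyond what the paper states at this point, but the assembly is identical.
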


\begin{proof}
 \begin{enumerate}
    \item By Lemma \ref{lemma: soundness2}, if $\phi^{R_{equ}}(\chi,\ind,x)$ holds, then $\phi_{equ}^{R_s}(\chi,\ind,x)$ holds.
    \item By Lemma \ref{lemma: soundness4}, if $\phi_{equ}^{R_s}(\chi,\ind,x)$ holds, then $\phi_{equ}^{\ks}(\lambda_0,\chi,x)$ holds.
    \item By (1), (2), Theorem \ref{theorem: final_soundness2} proved.
\end{enumerate}
\end{proof}



\begin{figure*}[p]
    \centering
    \begin{subfigure}{.47\textwidth}
        \input{rule/r-Require.tex}
        \input{rule/r-Out-of-Gas.tex}
        \input{rule/r-Revert.tex}
        \input{rule/r-Assert.tex}
        \input{rule/r-Exception-Propagation.tex}
        \input{rule/r-Transaction-Reversion.tex}
        \input{rule/r-Update-Exception-State.tex}
        \input{rule/r-Revert-State.tex}
        \input{rule/r-Transfer-Fund-Begin.tex}
        \input{rule/r-Transfer-Fund.tex}
        \\
    \end{subfigure}
    \begin{subfigure}{.47\textwidth}
        \input{rule/r-Revert-In-Contracts.tex}
        \input{rule/r-Delete-New-Contracts.tex}
        \input{rule/r-Send-Fund-Successful.tex}
        \input{rule/r-Send-Fund-Failed.tex}
    \end{subfigure}
    \caption{Definition of $\mathcal{K}_s$ (Part 1).}
    \label{fig:definerule1}
\end{figure*}

\begin{figure*}[p]
    \centering
    \begin{subfigure}{.47\textwidth}
        \input{rule/r-Send-Fund-Begin.tex}
        \input{rule/r-Write.tex}
        \input{rule/r-WriteAddress-GlobalVariables.tex}
        \input{rule/r-WriteAddress-LocalVariables.tex}
        \input{rule/r-Read.tex}
        \input{rule/r-ReadAddress-GlobalVariables.tex}
        \input{rule/r-ReadAddress-LocalVariables.tex}
        \input{rule/r-AllocateStateVariables.tex}
    \end{subfigure}
    \begin{subfigure}{.47\textwidth}
        \input{rule/r-New-Contract-Instance-Creation.tex}
        \input{rule/r-UpdateState-Main-Contract.tex}
        \input{rule/r-UpdateState-Function-Call.tex}
        \input{rule/r-AllocateStorage.tex}
        \input{rule/r-InitInstance-NoConstructor.tex}
        \input{rule/r-InitInstance-WithConstructor.tex}
        \input{rule/r-Decompose-Solidity-Call.tex}
        \input{rule/r-Function-Call.tex}
    \end{subfigure}
    \caption{Definition of $\mathcal{K}_s$ (Part 2).}
    \label{fig:definerule2}
\end{figure*}

\begin{figure*}[p]
    \centering
    \begin{subfigure}{.47\textwidth}
        \input{rule/r-AllocateStateVariables-End.tex}
        \input{rule/r-Allocate.tex}
        \input{rule/r-AllocateAddress-GlobalVariables.tex}
        \input{rule/r-AllocateAddress-LocalVariables.tex}
        \input{rule/r-Switch-Context.tex}
        \input{rule/r-Return-Context.tex}
    \end{subfigure}
    \begin{subfigure}{.47\textwidth}
        \input{rule/r-Internal-Function-Call.tex}
        \input{rule/r-Nested-Function-Call.tex}
        \input{rule/r-Clear-Recipient-Context.tex}
        \input{rule/r-Save-Cur-Context.tex}
        \\
    \end{subfigure}
    \caption{Definition of $\mathcal{K}_s$ (Part 3).}
    \label{fig:definerule3}
\end{figure*}

\begin{figure*}[p]
    \centering
    \begin{subfigure}{.47\textwidth}
        \input{rule/r-Update-Cur-Context.tex}
        \input{rule/r-Create-Transaction.tex}
        \input{rule/r-Propagate-Exception-True.tex}
        \input{rule/r-Propagate-Exception-False.tex}
    \end{subfigure}
    \begin{subfigure}{.47\textwidth}
        \input{rule/r-Clear-Caller-Context.tex}
        \input{rule/r-Call.tex}
        \input{rule/r-Init-Fun-Params.tex}
        \input{rule/r-Call-Function-Body.tex}
        \input{rule/r-Bind-Params.tex}
        \input{rule/r-Bind-Params-End.tex}
        \input{rule/r-Function-Body.tex}
        \input{rule/r1.tex}
        \input{rule/r5.tex}
        \input{rule/r6.tex}
        \input{rule/r-Exe-Statement.tex}
    \end{subfigure}
    \caption{Definition of $\mathcal{K}_s$ (Part 4).}
    \label{fig:definerule4}
\end{figure*}

\begin{figure*}[p]
    \centering
    \begin{subfigure}{.47\textwidth}
        \input{rule/r-Exe-Statement-Main-Contract.tex}
        \input{rule/r-Exe-Statement-End.tex}
        \input{rule/r-Less-GlobalVariables.tex}
        \input{rule/r-Less-LocalVariables.tex}
        \input{rule/r-Equal-GlobalVariables.tex}
        \input{rule/r-Equal-LocalVariables.tex}
        \\
    \end{subfigure}
    \begin{subfigure}{.47\textwidth}
        \input{rule/r-More-GlobalVariables.tex}
        \input{rule/r-More-LocalVariables.tex}
        \input{rule/r-Var-Declaration.tex}
        \input{rule/r-Var-Assignment.tex}
        \input{rule/r-Return-Value.tex}
        \input{rule/r-Return.tex}
    \end{subfigure}
    \caption{Definition of $\mathcal{K}_s$ (Part 5).}
    \label{fig:definerule5}
\end{figure*}

\begin{figure*}[p]
    \centering
    \begin{subfigure}{.47\textwidth}
        \input{rule/r-Gas-Cal.tex}
        \input{rule/r-Add-GlobalVariables.tex}
        \input{rule/r-Add-LocalVariables.tex}
        \input{rule/r-Sub-GlobalVariables.tex}
    \end{subfigure}
    \begin{subfigure}{.47\textwidth}
        \input{rule/r-Gas-Cal-Fail.tex}
        \input{rule/r-Sub-LocalVariables.tex}
        \input{rule/r-Mul-GlobalVariables.tex}
        \input{rule/r-Mul-LocalVariables.tex}
        \\
    \end{subfigure}
    \caption{Definition of $\mathcal{K}_s$ (Part 6).}
    \label{fig:definerule6}
\end{figure*}

\begin{figure*}[p]
    \centering
    \begin{subfigure}{.47\textwidth}
        \input{rule/r-Div-GlobalVariables.tex}
    \end{subfigure}
    \begin{subfigure}{.47\textwidth}
        \input{rule/r-Div-LocalVariables.tex}
    \end{subfigure}
    \caption{Definition of $\mathcal{K}_s$ (Part 7).}
    \label{fig:definerule7}
\end{figure*}



\clearpage

\end{document}